\numberwithin{equation}{section}
\newtheorem{conj}{Conjecture}
\newtheorem{theorem}[conj]{Theorem}
\newtheorem{proposition}[conj]{Proposition}
\newtheorem{lemma}{Lemma}[section]
\newtheorem{ass}{Assumption}[section]
\theoremstyle{remark}
\newtheorem{remark}{Remark}
\theoremstyle{definition}
\newtheorem{definition}{Definition}[section]
\theoremstyle{definition}
\newtheorem{model}{Model}
\noindent\adjustbox{margin=1ex,margin=0ex \medskipamount}\bgroup\minipage\linewidth\verbatim}%
\definecolor{mygreen}{rgb}{0,0.6,0}
\definecolor{mygray}{rgb}{0.5,0.5,0.5}
\definecolor{mymauve}{rgb}{0.58,0,0.82}
\lstdefinestyle{customc}{
  belowcaptionskip=1\baselineskip,
  breaklines=true,
  frame=L,
  xleftmargin=\parindent,
  language=R,
  showstringspaces=false,
  basicstyle=\footnotesize\ttclass,
  keywordstyle=\bfseries\color{green!40!black},
  commentstyle=\itshape\color{purple!40!black},
  identifierstyle=\color{blue},
  stringstyle=\color{orange},
}
\lstdefinestyle{customasm}{
  belowcaptionskip=1\baselineskip,
  frame=L,
  xleftmargin=\parindent,
  language=[x86masm]Assembler,
  basicstyle=\footnotesize\ttclass,
  commentstyle=\itshape\color{purple!40!black},
}
\newcommand{\distas}[1]{\mathbin{\overset{#1}{\kern\z@\sim}}}%
\newsavebox{\mybox}\newsavebox{\mysim}
\newcommand{\distras}[1]{%
  \savebox{\mybox}{\hbox{\kern3pt$\scriptstyle#1$\kern3pt}}%
  \savebox{\mysim}{\hbox{$\sim$}}%
  \mathbin{\overset{#1}{\kern\z@\resizebox{\wd\mybox}{\ht\mysim}{$\sim$}}}%
}
\newcommand\independent{\protect\mathpalette{\protect\independenT}{\perp}}
\def\independenT#1#2{\mathrel{\rlap{$#1#2$}\mkern2mu{#1#2}}}
\newcommand*\oline[1]{%
  \vbox{%
    \hrule height 0.5pt
    \kern0.25ex
    \hbox{%
      \kern-0.1em
      \ifmmode#1\else\ensuremath{#1}\fi
      \kern-0.1em
    }
  }
}
\newcommand{\T}{\top}
\newcommand{\tr}{{\rm tr}}
\def\wh{\widehat}
\def\wt{\widetilde}
\def\Cov{{\rm Cov}}
\def\Var{{\rm Var}} 
\def\i{\infty}
\def\diag{{\rm diag}}
\def\rank{{\rm rank}}
\def\op{{\rm op}}
\def\oZ{\oline{Z}}
\def\oX{\oline{X}}
\def\oW{\oline{W}}
\def\distrto{\overset{{\rm d}}{\longrightarrow}}
\def\rI{{\rm I }}
\def\rII{{\rm II}}
\def\rIII{{\rm III}}
\def\cO{\mathcal{O}}
\def\cN{\mathcal{N}}
\def\cE{\mathcal{E}}
\def\b1{\mathbbm{1}}
\def\bI{\mathbf{I}}
\def\bbO{\mathbb{O}}
\newcommand{\EE}{\mathbb{E}}
\newcommand{\PP}{\mathbb{P}}
\newcommand{\RR}{\mathbb{R}}
\def\cE{\mathcal{E}}
\def\oSigma{\bar{\Sigma}}
\def\uSigma{\underline{\Sigma}}
\def\eps{\varepsilon}
\def\cH{\mathscr{H}}
\title{High-Dimensional Invariant Tests of Multivariate Normality Based on Radial Concentration}
\author{Xin Bing \thanks{Department of 
Statistical Sciences, University of Toronto, CA. E-mail: \texttt{xin.bing@utoronto.ca}}
\hspace{2cm} 
Derek Latremouille\thanks{Division of Biostatistics, University of Toronto, CA. E-mail: \texttt{dlatremouille7@gmail.com}}}
\begin{document}

\maketitle

\begin{abstract}
    While the problem of testing multivariate normality has received considerable attention in the classical low-dimensional setting where the sample size $n$ is much larger than the feature dimension $d$ of the data, there is presently a dearth of existing tests which are valid in the high-dimensional setting where $d$ is of comparable or larger order than $n$. This paper studies the hypothesis testing problem of determining whether $n$ i.i.d. samples are generated from a $d$-dimensional multivariate normal distribution, in settings where $d$ grows with $n$ at some rate under a broad regime. To this end, we propose a new class of computationally efficient tests which can be regarded as a high-dimensional adaptation of the classical radial approach to testing normality. A key member of this class is a range-type test which, under a very general rate of growth of $d$ with respect to $n$, is proven to achieve both type I error-control and consistency for three important classes of alternatives; namely, finite mixture model, non-Gaussian elliptical, and leptokurtic alternatives. Extensive simulation studies demonstrate the superiority of our test compared to existing methods, and two gene expression applications demonstrate the effectiveness of our procedure for detecting violations of multivariate normality which are of potentially practical significance. 
\end{abstract} 
{\it Keywords:} Consistency, high-dimensional asymptotics, hypothesis testing, multivariate normality, type I error control.

\section{Introduction} \label{sec_intro}

 The multivariate normal model arguably constitutes the most important distributional family in statistics. Assuming normality of the observed data is ubiquitous, with use of this condition originating in classical statistical problems and continuing to have prominence in modern data analysis. 
 Consequently, the availability of tests and graphical diagnostics for assessing this assumption is crucial \cite{Thode_Book, ChenXia, Gnanadesikan, Henze2002}. However, while this problem has been extensively studied historically, resulting in the development of numerous procedures for testing this condition in the classical setting where the data dimension $d$ is small relative to the sample size $n$, there is a notable lack of analogous valid procedures in the high-dimensional setting, frequently characteristic of modern data analysis, where $d$ grows at some rate with $n$ \cite{ChenXia, Elliptical_GoF}. 
 In particular, as recently demonstrated in \cite{ChenXia}, classical normality tests typically exhibit type I error inflation as $d/n$ increases. The absence of valid multivariate normality tests in high-dimensional regimes has potentially serious practical consequences, as the performance of many procedures used to analyze high-dimensional data critically depends on the appropriateness of this assumption. For example, numerous methodologies developed in the high-dimensional setting for problems including one- and two-sample testing, gene-set and pathway analysis, and Gaussian graphical models for network inference are rendered invalid or exhibit marked degradation in performance when multivariate normality is violated \cite{Finegold, Hirose, Witten_GGM, Yang_Elliptical_Sphere, Chang, Ho, Maleki}.

 To address this issue, we seek to develop normality testing procedures which possess rigorous theoretical guarantees in the high-dimensional regime. Specifically, let $X_1,\ldots, X_n$ be \text{i.i.d.} copies of a random vector $X\in\RR^d$ with some unknown mean vector $\mu := \EE[X]$ and unknown covariance matrix $\Sigma := \Cov(X)$. We consider the problem of testing the null hypothesis 
\begin{equation}\label{def_null}
        \mathscr{H}_0:\quad X \sim \cN_d(\mu, \Sigma),
\end{equation} 
against general alternatives, in settings where the dimension $d = d(n)$ increases at some rate with the sample size $n \to \i$. This testing problem encompasses a broad range of high-dimensional methodologies, including Gaussian graphical models and network inference \cite{XiaCai, van_de_Geer, SILGGM}, one- and two-sample testing \cite{LedoitWolf2002, Nishiyama2013, Fisher2010}, covariance matrix estimation \cite{DunsonPati, Fisher2011_CovEstim}, MANOVA \cite{SriMANOVA, Schott2007}, gene-set and pathway analysis \cite{Ho, Maleki}, sparse linear regression \cite{JM2018}, 
discriminant analysis \cite{FanBook}, variable selection \cite{knockoff, forward_regress}, causal inference \cite{Klaassen2023, HD_Causal}, semi-supervised learning \cite{Couillet2018}, and canonical correlation analysis \cite{Canonical_Correlation_HD}. These approaches are widely applied in contexts where the number of variables may far exceed the number of samples, such as microarray and RNA-Seq gene expression studies, proteomics, finance, and brain imaging \cite{FanBook, XiaCai, Fritsch, SILGGM}. 

\subsection{Pre-Existing Literature} \label{preExisting_work}

The problem of testing multivariate normality has received extensive attention historically, particularly in the classical low-dimensional regime, making it difficult to provide a comprehensive review of the literature. Instead, we refer the interested reader to classical references such as \cite{Henze2002, Thode_Book, Shapiro_Review}, as well as the recent reviews provided in \cite{ChenXia, EbnerHenze2020}.

For our purposes, it suffices to note that the principal approach to developing tests for $\cH_0$ involves the use of test statistics and associated graphical diagnostics which encapsulate certain geometric properties of the data. As discussed in \cref{rem_invariance}, this is in part related to the fact that testing $\cH_0$ is classically treated as an invariant testing problem with respect to arbitrary non-singular affine transformation of $X$ \cite{Henze2002, Cox}. Let $\oX \in \mathbb{R}^d$ and $\wh \Sigma \in \mathbb{R}^{d \times d}$ respectively denote the sample mean and sample covariance matrix, and $\wh\Sigma^{1/2}$ denote the symmetric square root of $\wh\Sigma$.
The \textit{scaled radii},
\begin{align}\label{scaled_radii}
{R^*_i} :=   \| \wh \Sigma^{-1/2} (X_i - \oX)\|_2 \qquad \text{for } i \in [n]:= \{1,\ldots, n\},    
\end{align} 
arguably most commonly constitute the basis for classical tests and graphical diagnostics for multivariate normality. Well-known examples include Mardia's kurtosis test \cite{Mardia}, the uniformly most powerful test against outlier-type alternatives \cite{Ferguson, Wilks}, and multivariate adaptations of the Cram\'{e}r--von Mises \cite{Koizol82, Kolmogorov_Smirnov}, Shapiro-Wilk \cite{Thode_Book}, Kolmogorov-Smirnov \cite{Kolmogorov_Smirnov}, and Anderson-Darling tests \cite{AndersonDarling}, among others \cite{Henze2002, Thode_Book, Brenner}. Moreover, to complement these formal tests, a well-known diagnostic technique for assessing multivariate normality is based on quantile plots of the scaled radii \cite{Small, Brenner}. Beyond the aforementioned invariance criterion, the theoretical basis for testing $\cH_0$ using the scaled radii derives from the fact that when $n \gg d$, the joint behavior of $R^*_1, \ldots, R^*_n$ under $\cH_0$ is approximately equivalent to that of the Euclidean norms of i.i.d. realizations $Z_1,\ldots, Z_n$ of $\cN_d(0_d, \bI_d)$ \cite{Gnanadesikan}, and yields tests and diagnostics with desirable power properties against a broad array of pertinent alternatives \cite{Henze2002, Ferguson, Brenner, Barnett_Book}. 

 However, despite the abundance of existing tests of $\cH_0$, few of them, if any, are suitable for modern high-dimensional data \cite{ChenXia, Elliptical_GoF}. In particular, \cite{ChenXia} demonstrates that conventional tests of multivariate normality possess critical limitations beyond the low-dimensional setting, with existing methods exhibiting marked inflation of type I error or power-loss as $d / n$ increases.
    This can typically be attributed to difficulties in estimating the parameters $\mu$ and $\Sigma$, which, in the classical setting, are effectively estimated by $\oX$ and $\wh \Sigma$, respectively. For example, as discussed in \cref{rem_invariance}, any affine-invariant test of $\cH_0$, such as those based on the scaled radii \eqref{scaled_radii}, is not well-defined when $d \geq n$ due to the singularity of $\wh \Sigma$, and this issue cannot be remedied via a generalized inverse \cite{PiresBranco} or some regularized estimator \cite{ChenXia}. 

  Only recently, \cite{ChenXia} developed the first test of $\cH_0$ with a type I error-control guarantee in a regime where $d$ may increase at some rate with $n \to \i$, and demonstrated its superiority over classical tests. However, their test possesses several critical limitations. First, the type I error theory in \cite{ChenXia} is only established in the regime $d = o(\sqrt{n})$, and under  restrictive conditions on $\Sigma$. 
  Both requirements appear to be essential to the validity of their test, as the simulation studies in \cref{sec_sims_moderate}, \cref{sec_sims_high}, and \cite{Elliptical_GoF} show that their test exhibits significant type I error inflation when either $d/n$ increases or the required conditions on $\Sigma$ are violated. 
  Second, no theoretical guarantees regarding power are established for their proposed test. 
  Lastly, the test of \cite{ChenXia} is computationally intensive when either $d$ or $n$ is large (see \cref{rem_comp}) and does not satisfy fundamental invariance properties for the problem of testing $\cH_0$ (see \cref{rem_invariance}).


Finally, it is worth mentioning another recent work \cite{Elliptical_GoF}, which proposes a goodness-of-fit test for centered elliptical distributions and derives a type I error-control guarantee in a high-dimensional regime with $n \asymp d$. However, for  testing $\cH_0$ specifically, this implies that their test has low or trivial power against a general class of non-Gaussian elliptical alternatives. 


\subsection{Our Contributions} \label{sec_our_contributions}

    We summarize our main contributions in this section. 

    \subsubsection{A High-Dimensional Radial-Based Approach for Testing Multivariate Normality.} As discussed in \cref{preExisting_work}, existing tests of $\cH_0$, such as those based on the scaled radii $R^*_1,\ldots,R^*_n$ and the recent test of \cite{ChenXia}, are typically plagued by issues involving estimation of $\Sigma$ or its inverse as the dimension $d$ increases. Our first contribution is to introduce a new class of tests for $\cH_0$ which effectively adapts the classical radial-based approach so as to benefit from increasing dimensionality. Specifically, we show that, as long as $d$ exceeds a logarithmic factor of $n$, the {\em radii}
     \begin{equation}\label{def_Ri}
        R_i := \| X_i - \oX \|_2,\quad \text{for }i \in [n],
      \end{equation}
     after suitable normalization, behave similarly to $\{\|Z_i\|_2\}_{i\in[n]}$ under $\cH_0$ and mild conditions on $\Sigma$.  
     Thus, instead of using the scaled radii $R^*_i$ as is done classically, our proposed test statistics are based on the normalized radii $R_i$, thereby circumventing the challenging task of estimating $\Sigma^{-1}$.  

    To obtain the scale-type parameter used to normalize the radii, we first note that, for reasons discussed in \cref{app_Radii_SqRadii}, the test statistics are based on $R_i$ instead of $R_i^2$.    However, while a closed-form expression for the variance of $R_i^2$ can readily be derived, the variance of $R_i$ is analytically intractable in general.
    Thus, we adopt the \textit{dispersion index} of $\|X - \mu\|_2^2$, 
    \begin{equation}\label{def_Delta}
         \Delta_2 :=  \frac{\Var(\| X - \mu \|_2^2)}{\EE \| X - \mu \|_2^2 } =  \frac{\Var(\| X - \mu \|_2^2)}{\tr(\Sigma)},
    \end{equation} 
     to quantify the variance of $R_i$. Indeed, as proposed and established in a companion working paper, the dispersion index parameter $\Delta_2$ serves as a sharp \textit{generic} proxy for $\Var(\|X - \mu\|_2)$, in the sense that $\Var(\|X - \mu\|_2) \leq \Delta_2$, with equality achieved for some random vector $X$, and only requires the existence of the fourth moments of the coordinates of $X$. Moreover, this companion work establishes that $\Delta_2$ determines the asymptotic variance of the limiting distribution of $\| X - \mu \|_2$ as $d \to \i$ for a relatively general class of random vectors. 
    Due to both the marginal kurtosis and the dependence structure of the multivariate normal model, under $\cH_0$ the dispersion index $\Delta_2$ takes the form
     \begin{equation}\label{def_Delta_null}
        \Delta \equiv \frac{2\tr(\Sigma^2)}{\tr(\Sigma)},
    \end{equation}
    and the variance proxy for the radii $R_i$ is simply $(n - 1)\Delta / n$. When the distribution of $X$ is non-Gaussian, the dispersion index $\Delta_2$ in \eqref{def_Delta} will not generally be of the form in \eqref{def_Delta_null} due to discrepancies arising from either non-Gaussian kurtosis or dependence properties. 
    In \cref{sec_method}, we propose an estimator $\wh \Delta$ of $\Delta$ which can be computed efficiently and is shown to be ratio-consistent with a fast rate of convergence under both $\cH_0$ and a broad class of alternatives. 
    
    Equipped with the estimator $\wh\Delta$, let $R_{(1)} \le \cdots \le R_{(n)}$ be the ordered radii. Given a pair of \textit{symmetric} empirical quantiles $1\le \underline{q}<\bar q \le n$, and some  deterministic normalizing sequences $a_n, b_n \ge 0$, our proposed class of test statistics is of the form 
    \begin{equation}\label{test_stat_class}
         2a_n 
        ~ \wh \Delta^{-1/2}\bigl(R_{(\bar q)} - R_{(\underline q)}\bigr) - 2a_n   b_n .
    \end{equation} 
In comparison to the estimator $\wh \Delta$ of the dispersion index parameter in \eqref{def_Delta_null}, the quantile contrast $R_{(\bar q)} - R_{(\underline q)}$ is a distinct measure of dispersion of the distribution of the radii under both $\cH_0$ and non-Gaussian alternatives. Thus, test statistics of the class \eqref{test_stat_class} are characterized by a ratio of two scale-type estimators of the radial distribution. 
Test statistics defined by a ratio of two scale estimators, with one such estimator constructed via some symmetric contrast of order statistics, have an extensive history in the classical problem of testing univariate normality \cite{Pearson_Normal, Pearson_Stephens, Thode_Book}.  The effectiveness of such test statistics derives from their tractability, invariance properties, and the fact that the relationship between the two scale estimators exhibits under- or over-dispersion under a broad class of alternatives. 
Our proposed tests naturally inherit these advantages and further, as discussed in \cref{rem_invariance}, satisfy an important form of invariance for the problem of testing $\cH_0$. 

Using symmetric quantile contrasts in \eqref{test_stat_class} also eliminates a nuisance centering parameter in the marginal asymptotic distribution of $R_i$, which itself can be difficult to estimate at an adequate rate in high dimensions. The choice of quantiles $\bar q$ and $\underline{q}$ determines the normalizing sequences $a_n$ and $b_n$ in \eqref{test_stat_class}. For reasons discussed in \cref{sec_method} and \cref{app_general_theory_AppendixA}, in this paper we primarily consider the range-type specification of \eqref{test_stat_class}, corresponding to $\bar q = n$ and $\underline{q} = 1$, as well as the interquartile range specification, corresponding to $\bar q = \lfloor  3n/4 \rfloor$ and $\underline{q} = \lfloor  n/4 \rfloor$, with their respective normalizing constants provided in \cref{sec_method}. Other choices of quantile contrasts are discussed in \cref{rem_general_class} and \cref{app_general_theory_AppendixA}. Based on the established limiting distributions of both test statistics, we specify their rejection regions in \cref{sec_method}, and our proposed testing procedure combines these two tests using a Bonferroni correction.

Finally, we remark that although a similar idea of replacing Mahalanobis distance with Euclidean distance has been employed in high-dimensional two-sample testing problems \cite{ChenQin, BaiSarandasa}, where Hotelling’s $\text{T}^2$ statistic is traditionally used when $n > d$, applying it to our setting requires analyzing the joint distribution of $R_1, \ldots, R_n$, determining the appropriate normalization, deriving the properties of its estimator under $\cH_0$ and non-Gaussian alternatives, and constructing the final form of the test statistics along with the corresponding rejection regions. These steps, which also provide the basis for the accompanying graphical diagnostics for assessing multivariate normality and detecting outliers in high dimensions (see \cref{sec_real_data} for more detail), constitute our main methodological contributions.

    \subsubsection{Asymptotic Type I Error Control of the Proposed Testing Procedure}
   Our second contribution is to prove that the proposed test achieves valid asymptotic type I error control. To this end, we establish the asymptotic distribution of the proposed range-type test statistic under $\cH_0$ in a general high-dimensional regime where $n, d \to \infty$ (see, also, \cref{app_general_theory_AppendixA} for the limiting distribution of the interquartile range test statistic as well as more general test statistics of the class \eqref{test_stat_class} and their combination). \Cref{thm_range_limit} in \cref{sec_theory_null} presents a Gaussian approximation result that bounds the Kolmogorov distance between the proposed range test statistic and the normalized range of $n$ \text{i.i.d.} standard Gaussian random variables when $\Delta$ is known. A key quantity in our analysis is the \textit{effective rank}, $\rho_1(\Sigma^2)$, of the covariance matrix $\Sigma$ (see \cref{def_rhos}). Our results in \cref{thm_range_limit} are non-asymptotic in nature and are valid provided that $\rho_1(\Sigma^2) \gg \log^5(nd)$, which is a mild condition also ensuring that the Kolmogorov distance sufficiently small (see \cref{rem_cond_thm1}). When, for example, $\Sigma$ has bounded eigenvalues, the condition reduces to $d \gg \log^5 n$, thereby allowing $d$ to increase with $n$ at a particularly general rate. In conjunction with the ratio-consistency of the proposed estimator $\wh\Delta$ of $\Delta$ established in \cref{prop_Delta_Null}, \cref{thm_range_stat_limit} derives an analogous Gaussian approximation result for the proposed test statistic. As discussed in \cref{sec_theory_null}, this directly yields the proposed rejection region outlined in \cref{sec_method}, based on which \cref{thm_range_typeI} establishes theoretical type I error control of our test. To the best of our knowledge, our procedure is the first test of $\cH_0$ with theoretical control of the type I error when $d$ may grow proportionately to, or substantially exceed, $n$.
    Moreover, as discussed in \cref{rem_cond_thm1}, the condition on $\rho_1(\Sigma^2)$ is mild and encompasses standard assumptions on $\Sigma$ commonly used in high-dimensional analyses. 

    

    \subsubsection{Consistency of the Proposed Testing Procedure for a Broad Class of Alternatives} \label{sec_contributions_consistency} In addition to type I error control, \cite{Henze2002, EbnerHenze2020} argues that any test of multivariate normality ought to be accompanied by theory identifying relevant alternatives for which it is consistent. While general alternatives are of interest, recent theoretical developments on power in high-dimensional testing \cite{Kock} suggest that even when universal testing consistency is achievable for a problem in the low-dimensional setting, it may not be attainable for its high-dimensional analog. This indicates the importance developing tests prioritizing specific types of alternatives which are of greatest practical interest. 
    Our third contribution is thus to establish consistency of our proposed test in \cref{sec_theory_alter} against a broad class of alternatives which are of both theoretical and methodological relevance, including finite mixture, non-Gaussian elliptical, and leptokurtic alternatives. 
    
    The power analysis is based on the fact that, as $n,d \to \i$, the radii \eqref{def_Ri} have a distinct relationship with the null dispersion index $\Delta$ \eqref{def_Delta_null} under general non-Gaussian alternatives compared to that under $\cH_0$. Thus, a key step in proving consistency involves establishing the ratio-consistency of the estimator $\wh \Delta$ of $\Delta$ under the aforementioned alternatives, which is the content of \cref{prop_Delta_Alternatives}. Similar to the type I error theory, our consistency results in \cref{thm_loc_mix_sG,thm_cov_mix_sG,thm_ellip,thm_kurtosis} of \cref{sec_theory_alter} are derived in a general high-dimensional regime, only requiring that the relevant effective rank quantity exceeds a logarithmic factor of $nd$ in conjunction with a signal-to-noise ratio (SNR) condition -- both of which are specific to the type of alternative. Our theory shows that the SNR condition becomes less stringent as the effective rank of the relevant covariance matrix increases, hence revealing a \textit{blessing of dimensionality} effect for the power of our test. To the best of our knowledge, establishing consistency theory for tests of $\cH_0$ in high dimensions remains an open problem, and our work provides the first such results for important classes of alternatives. 

    In \cref{sec_sims_moderate,sec_sims_high}, we corroborate our theoretical guarantees through extensive simulation studies, which demonstrate that the proposed test achieves superior type I error control and power compared to leading existing tests of $\cH_0$, including the recently proposed high-dimensional normality test of \cite{ChenXia}, across both moderate- and high-dimensional settings. To further illustrate the practical utility of our test and its associated graphical diagnostics for high-dimensional data analysis, we analyze two gene expression datasets in \cref{sec_real_data} and \cref{app_real_data} as case studies, demonstrating how our methodology can effectively detect critical departures from $\cH_0$ in practice. While both datasets have been previously analyzed, our results offer new insights. 

    This paper is organized as follows. The proposed testing procedure is described in \cref{sec_method}. 
    Valid type I error control is established in \cref{sec_theory_null}, while consistency against pertinent classes of alternatives is developed in \cref{sec_theory_alter}. \cref{sec_sims_moderate} conducts simulation analyses to corroborate the type I error and power theory of \cref{sec_theory}, and provide comparison of our test's performance to that of leading pre-existing tests of $\cH_0$ (see, also, \cref{app_add_simulation} for additional simulation studies). \cref{sec_real_data} and \cref{app_real_data} demonstrate the use of our procedure in applied problems via the analysis of two gene expression datasets. All proofs are deferred to the Appendix. 

\paragraph{Notation.} For any distribution function $F:\RR\to [0,1]$ and any $\alpha \in [0,1]$, its $\alpha$-quantile is $F^{-1}(\alpha) := \inf\{x\in \RR: F(x) \ge \alpha\}$.
For any positive integer $d$, we write $[d] := \{1,\ldots, d\}$. For any number $x\ge 0$, we write its integer part as $\lfloor x\rfloor$.
For any vector $v$, $\|v\|_p$ denotes its $\ell_p$ norm for $1 \le p \le \i$ and $v_{(q)}$ denotes its $q^\text{th}$ smallest value for each $q\in [d]$. The vector $0_d$ (and $1_d$) contains entries all equal to 0 (and $1$). We use $\bI_d$ to denote the $d\times d$ identity matrix  and $\bbO^d$ to denote the set of $d \times d$ orthogonal matrices. For any $A \in \mathbb{R}^{m \times k}$, $\| A \|_\op$ denotes its operator norm and $\| A \|_F$ denotes its Frobenius norm. For the spectral decomposition $\Sigma = U \Lambda U^{\T}$ of any symmetric, positive semi-definite $\Sigma \in \mathbb{R}^{d \times d}$, the diagonal entries of $\Lambda$, $\lambda_1 \geq \cdots \ge \lambda_d \geq 0$, represent the eigenvalues in non-increasing order, and $\Sigma^{1/2}$ denotes its symmetric square root. 
For any two sequences $a_n$ and $b_n$, we write $a_n\lesssim b_n$ if there exists some constant $C$ such that $a_n \le Cb_n$. 
The notation $a_n\asymp b_n$ corresponds to $a_n \lesssim b_n$ and $b_n \lesssim a_n$. Additionally, $a_n = \omega(b_n)$ denotes the property that $a_n / b_n\to \i$ as $n \to \i$. Analogously, for a sequence of random variables $Y_n$, $Y_n = \omega_{\mathbb{P}}(a_n)$ means that $Y_n / a_n \to \i$ in probability as $n \to \i$. For any $a, b \in \mathbb{R}$, we write $a\wedge b = \min\{a, b\}$ and $a\vee b =\max\{a,b\}$. 
Finally, we use $c,c',C,C'$ to denote positive finite absolute constants that, unless otherwise indicated, can change from line to line.

\section{Methodology}\label{sec_method}


Building upon the motivation for the proposed class of test statistics \eqref{test_stat_class} in \cref{sec_our_contributions}, in this section we provide additional detail and discussion pertinent to the implementation of our test of $\cH_0$. 
Recall from \cref{sec_our_contributions}  that the proposed class of test statistics is of the form
\begin{equation}\label{eq_general_T}
    2a_n~ 
    \wh \Delta^{-1/2}\bigl(R_{(\bar q)} - R_{(\underline q)}\bigr) - 2 a_n b_n,
\end{equation}
where $R_1,\ldots, R_n$ are the radii as defined in \eqref{def_Ri}, $\underline{q}<\bar q$ are a pair of symmetric empirical quantiles, and  $a_n, b_n \ge 0$ are some deterministic normalizing sequences. 
The final procedure we propose for testing $\cH_0$ is a composite test involving two particular test statistics of the form \eqref{eq_general_T}. Due to the distinct choice of empirical quantiles associated with the two test statistics, their normalizing constants $a_n, b_n \geq 0$ and rejection regions are determined separately. However, the estimator $\wh \Delta$ of $\Delta$, proposed below, is applicable to any test statistic of the form \eqref{eq_general_T}.

\paragraph*{Estimation of the Dispersion Index.} As discussed in \cref{sec_our_contributions}, the dispersion index parameter $\Delta$ serves as a proxy of the variance of $R_i$ under $\cH_0$, and is a critical component of our test statistics. Thus, we seek an estimator of $\Delta$ which is ratio-consistent with a fast rate of convergence under both the null and a broad class of alternatives. 
To this end, we propose to estimate $\Delta$ by
\begin{equation}\label{def_Delta_hat}
    \wh \Delta =   \frac{2\widehat{\tr(\Sigma^2)}}{\tr(\wh \Sigma_{\text{D}})}, 
\end{equation}
where $\wh \Sigma_{\text{D}}$ is the sample covariance matrix $\wh \Sigma =(n - 1)^{-1} \sum_{i = 1}^n (X_i - \overline{X}) (X_i - \overline{X})^\T$ when $n > d$, or the centered Gramian matrix $\wh \Sigma_{\text{G}} \in \mathbb{R}^{n \times n}$ with  its $(i,j)$th entry equal to $(n - 1)^{-1}(X_i - \overline{X})^\T (X_j - \overline{X})$ for $i,j\in [n]$ when $n \leq d$, and
\begin{align}\label{def_tr_Sigma_hat}
     \widehat{\tr(\Sigma^2)} :=  {n - 1 \over n (n - 2) (n - 3)} \Bigl( (n - 1) (n - 2) \tr(\wh \Sigma_{\text{D}}^2) + \tr^2(\wh \Sigma_{\text{D}}) - {n \over n - 1} \sum_{i = 1}^n R_i^4 \Bigr)
\end{align}
is equivalent to a standard estimator of $\tr(\Sigma^2)$ developed in \cite{Chen2010}. 
As discussed in \cref{rem_comp}, the form we use in \eqref{def_tr_Sigma_hat} is 
designed to further accelerate computation when $n \ll d$ or $n \gg d$, owing to the specification of $\wh \Sigma_{\text{D}}$. In \cref{sec_theory}, the estimator $\wh \Delta$ is shown to be ratio-consistent under both $\cH_0$ (\cref{prop_Delta_Null}) and a broad class of alternatives (\cref{prop_Delta_Alternatives}).


Below, we propose two specific test statistics of the form \eqref{eq_general_T}, corresponding to two choices of $\bar q$ and $\underline q$, together with their associated sequences of normalizing constants $a_n, b_n \ge 0$.

\paragraph*{The Range-Type Test.}
    Motivated by the range test for univariate normality \cite{Pearson_Normal, Pearson_Stephens} as well as the uniformly most powerful test of $\cH_0$ against outlier-type alternatives in the classical $n > d$ setting \cite{Ferguson, Wilks, Barnett_Book}, our first proposed test statistic of the class \eqref{eq_general_T} is constructed using the range of the radii:
    \begin{equation}\label{def_T_range}
        T := 2 a_n  ~ \wh\Delta^{-1/2} \bigl(R_{(n)} -  R_{(1)}\bigr) -  2 a_n b_n,
    \end{equation}
    where the normalizing constants are specified by
    \begin{equation}\label{def_an_bn}
        a_n := \sqrt{2 \log n},\qquad b_n := a_n - {(\log \log n + \log 4 \pi) \over 2a_n}.
    \end{equation}
    See the discussion after \cref{thm_range_limit} for a detailed explanation of the choice of normalizing constants. The distributional properties of $T$ under $\cH_0$ are established in \cref{sec_theory_null}. Based on the theory of \cref{sec_theory}, we propose to reject the null hypothesis at level $\alpha \in (0,1)$  if and only if 
    \begin{equation}\label{reject_range}
        T \notin \left( \wh F^{-1}_{M,n}(\alpha/2), ~  \wh F^{-1}_{M,n}(1-\alpha/2)\right),
    \end{equation}
    where, for a specified number of Monte Carlo replications $M \in \mathbb{Z}^+$ and percentile $\alpha_0 \in (0,1)$, $\wh F^{-1}_{M,n}(\alpha_0)$ denotes the $\alpha_0$-quantile of the empirical distribution $\wh F_{M,n}$ of $M$ i.i.d. realizations of 
    \begin{equation}\label{def_Un}
        U_n  ~ =  ~ a_n\bigl(S_{(n)} - S_{(1)}\bigr) -  2 a_nb_n, 
    \end{equation} 
    with $S_{(1)}\le \cdots \le S_{(n)}$ denoting the order statistics of $S \sim \cN_n(0_n, \bI_n)$. \cref{thm_range_typeI} informs the determination of a suitable number of Monte Carlo replications $M$, and simulation analysis further indicates that $M \sim 10,000$ replications is sufficient.

  \paragraph*{The Interquartile-Range-Type Test.}  
As discussed further in \cref{app_sec_IQR}, for some alternatives, incorporating information near the central quantiles of the distribution of the radii can further increase efficiency in comparison to solely using the range-type test. To this end, we propose a second type of test statistic based on the interquartile range (IQR) of the radii, corresponding to $\bar q = \lfloor 3n/4 \rfloor$ and $\underline{q} = \lfloor n/4 \rfloor$ in \eqref{eq_general_T}, given by
    \begin{equation}\label{def_T_IR}
        T_* :=  2 \sqrt{n} \left[  \wh\Delta^{-1/2} \left( R_{(\lfloor  3n/4 \rfloor)} - R_{(\lfloor n/4 \rfloor)} \right) - \ \Phi^{-1}(3/4) \right].
    \end{equation}
    We write  $\phi(x)=\exp(-x^2/2) / \sqrt{2\pi}$ for the standard normal density, and denote its cumulative distribution and quantile functions by $\Phi(x)$ and $\Phi^{-1}(x)$, respectively. Due to the choice of $\bar q$ and $\underline{q}$, we have different normalizing constants $a_n = \sqrt{n}$ and $b_n = \Phi^{-1}(3/4)$ in \eqref{def_T_IR}. 
    We show in \cref{thm_IQR} of \cref{app_sec_IQR} that the distribution of $T_*$ has an explicit and known normal limit under $\cH_0$, based on which we propose the following rejection region. 
    For a given level $\alpha \in (0,1)$, the test based on $T_*$ rejects the null hypothesis if and only if 
    \begin{equation}\label{reject_IQR}
        T_* \notin \left(
            \sigma_* \Phi^{-1}(\alpha/2), ~ \sigma_* \Phi^{-1}(1-\alpha/2)
        \right),\quad \text{with }~  \sigma_*^{-1} :=  2 \phi(\Phi^{-1}(3/4)).
    \end{equation}

    Our final testing procedure is a composite test that combines the range-based statistic $T$ and the IQR-based statistic $T_*$, together with a Bonferroni correction. This proposal is motivated by the differential sensitivity of the two types of tests for various classes of alternatives, as discussed in \cref{rem_composite_test} and corroborated by the simulation results in \cref{app_Sim_RangeIQR}. Finally, we remark that the derivation of the asymptotic properties of $T$ and $T_*$ can be used to establish the basis of the complementary graphical diagnostics presented in \cref{sec_real_data}.

    \begin{remark}[Computational Complexity]\label{rem_comp}
        The computation involved in the proposed testing procedure consists of calculating $T$ and $T_*$ and performing the Monte Carlo approximation for the rejection region \eqref{reject_range}, thus yielding an overall complexity of $\cO(nd(n\wedge d) + Mn)$. 
        Conversely, the testing procedure of \cite{ChenXia} has computational complexity at least of order $\cO(M' d (n^2 + d^2))$, making it computationally intensive when either $n$ or $d$ is large. Here, $M'$ is the number of replications for simulating $n$ independent $d$-dimensional Gaussian random vectors required by their algorithm.
    \end{remark}

     \begin{remark}[Invariance Properties]\label{rem_invariance} 
    As mentioned in \cref{sec_intro}, testing $\cH_0$ is classically stipulated to be an invariant problem with respect to the group of non-singular affine transformations of $X$ in the absence of any problem-specific justification, due to the closure of the multivariate normal model under this transformation group \cite{Henze2002, EbnerHenze2020}.
        However, \cite{Cox} argues that there is sometimes a practical basis for restricting the required invariance to a narrower subgroup of transformations. This consideration is particularly important, and even necessary, in the high-dimensional setting when $d \geq n$ because any affine-invariant test statistic for $\cH_0$ in the classical setting is a function of  $\{(X_i - \overline{X})^\T \widehat{\Sigma}^{-1} (X_j - \overline{X}) \}_{i,j \in [n]}$ \cite{Henze2002}, and the singularity of $\wh{\Sigma}$ when $d \geq n$ cannot be resolved via a generalized inverse \cite{PiresBranco} or regularized estimator \cite{ChenXia}. Furthermore, methodology for high dimensional data, including those based on multivariate normality, often critically depend on assumptions (see, for example, \cref{rem_cond_thm1}) which preclude $\Sigma$ from being of low effective rank \cite{Sri13, Ma15, Heinavaara, Hub_GGM, CondNum_GGM, Info_GGM, GGM_JMLR, Chang, LatentVar_GGM, DAG_GGM}.   
         Thus, in lieu of the general affine transformation group, a suitable form of invariance for testing $\cH_0$ in high-dimensional settings is defined with respect to its similarity transformation subgroup \cite{ChenQin, Chen2010, BaiSarandasa} 
         \begin{equation} \label{invariance_cond}
            X \mapsto \sigma V X + w, \qquad \text{for any } \sigma > 0, V \in \bbO^d, w \in \RR^d,
        \end{equation}
        which our proposed test satisfies, whereas the principal existing test of $\cH_0$ in the high-dimensional setting proposed by \cite{ChenXia} does not. 
    \end{remark}

    \begin{remark} (Other Choices of Quantile Contrasts) \label{rem_general_class} While this paper focuses primarily on the range-type and IQR-type tests, both our method and theory accommodate more general extreme quasi-range and central quantile range based test statistics of the class \eqref{test_stat_class}, and combinations thereof. Asymptotic distributional properties of $T_*$ as well as more general test statistics of the class \eqref{test_stat_class} and their weighted combination are established in \cref{app_general_theory_AppendixA}. The advantages associated with different quantile contrast specifications are briefly discussed in \cref{rem_composite_test,rem_general_quantile_stats}, but deserve extensive investigation, which is thus left for future research. 
    \end{remark}

\section{Theoretical Guarantees}\label{sec_theory}

We provide theoretical guarantees for the proposed test in this section. Results pertaining to control of the type I error are stated in \cref{sec_theory_null}, while those characterizing consistency against different classes of alternatives are presented in \cref{sec_theory_alter}. Our theory uses the following notions of the \textit{effective rank} of a matrix. 
\begin{definition}[Effective Ranks]\label{def_rhos}
    For any non-null positive semi-definite matrix $A \in \RR^{d\times d}$, 
    define two notions of its effective rank via
   $
    \rho_1(A) := {\tr(A) / \|A\|_\op}$ and 
    $\rho_2(A) := {\tr^2(A) / \tr(A^2)}$.  
\end{definition}
\noindent The theoretical guarantees for the proposed test are based on $\rho_r(\Sigma^s)$ for $r,s \in \{1,2\}$, where we note that each of these quantities constitutes a bona fide effective rank of $\Sigma$ in the sense that, for each choice of $r, s \in \{1, 2\}$, $\rho_r(\Sigma^s)$ is invariant under the transformation group \eqref{invariance_cond} and satisfies $1\le \rho_r(\Sigma^s) \le \rank(\Sigma)$.  Relations between these effective ranks are formally established in \cref{lem_ranks} from which, for future reference, we remark that 
\begin{equation}\label{cond_rhos}
    \rho_1(\Sigma^2) \le \rho_1(\Sigma) \le \rho_2(\Sigma) \le \rho_1^2(\Sigma), \qquad  \rho_1(\Sigma^2)  \le  \rho_2(\Sigma^2) \le \rho_2(\Sigma).
\end{equation} 
Our theory for both the type I error control and power of the proposed test is developed in an asymptotic regime where the effective rank of some relevant covariance matrix exceeds a logarithmic factor of $nd$. As detailed in \cref{rem_cond_thm1}, this asymptotic regime encompasses a wide range of high-dimensional settings. 


\subsection{Type I Error-Control for the Proposed Testing Procedure}\label{sec_theory_null}

In this section, we establish type I error control for the testing procedure proposed in \cref{sec_method}. 
To study the range-type test statistic $T$ in \eqref{def_T_range}, we first derive its limiting distribution under $\cH_0$ when the true parameter $\Delta$ is used in place of $\wh \Delta$; that is, we first consider
$
    \bar T :=  2 a_n  \Delta^{-1/2} ( R_{(n)} - R_{(1)}) -2 a_n b_n,
$
with $a_n$ and $b_n$ given by \eqref{def_an_bn}. Recall $U_n$ from \eqref{def_Un}.

\begin{theorem}\label{thm_range_limit}
    Grant the null $\cH_0$ and suppose that 
    \begin{equation}\label{cond_Sigma}
        \rho_1(\Sigma^2) = \omega\left( \log^5(n d) \right) ,\quad \text{as }n\to \i.
    \end{equation}
    Then, there exists some absolute constant $C>0$ such that 
    \begin{equation}\label{rate_CLT_range}
        \sup_{t\in \RR} \left|
        \PP\left(  \bar T \le t\right) - \PP \left(
            U_n \le t
        \right) 
        \right| ~ \le~  C \left(
        \log^5(nd)  \over \rho_1(\Sigma^2)
        \right)^{1/4}  + C \left( \log n\over n\right).
    \end{equation}
    Moreover, we have 
    $
        \bar T \distrto  E+ E',
    $
    where $E$ and $E'$ are independent random variables with the same ditribution
     $   \PP\left\{
        E \le x
        \right\} = \exp(-\exp(-x)), \
    $
    for any $-\i < x < \i$.
\end{theorem}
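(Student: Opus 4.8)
The plan is to establish the Gaussian-approximation bound \eqref{rate_CLT_range} first, and then deduce the limiting-law statement as a routine corollary. For the main bound, the key idea is that $\bar T$ is a deterministic transformation of the order statistics of the radii $R_1,\ldots,R_n$, so I would decompose the argument into two stages: (i) show that, jointly, the standardized and recentered radii $(\|X_i-\mu\|_2 - m_d)/s_d$ for $i\in[n]$ are simultaneously close, in an appropriate coupling sense, to i.i.d.\ standard Gaussians $S_1,\ldots,S_n$, with an explicit error controlled by a power of $\log^5(nd)/\rho_1(\Sigma^2)$; and (ii) transfer this closeness through the range functional $v\mapsto a_n\Delta^{-1/2}(v_{(n)}-v_{(1)})$, which is Lipschitz (indeed $1$-Lipschitz up to the factor $a_n\Delta^{-1/2}$) in the sup-norm over coordinates, and then recenter by $2a_nb_n$. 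Here $m_d$ and $s_d$ are the appropriate centering and scaling; the companion-paper result cited in the introduction identifies $s_d^2 \asymp \Delta$, so $\Delta^{-1/2}(R_{(n)}-R_{(1)}) = \Delta^{-1/2}s_d\,(\widetilde S_{(n)}-\widetilde S_{(1)}) + \text{error}$, where $\widetilde S_i = (R_i-m_d)/s_d$, and $R_{(n)}-R_{(1)}$ only depends on $R_i - m_d$, so the centering $m_d$ drops out — this is precisely why the symmetric contrast was chosen.

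For stage (i), the crucial tool is the distance concentration phenomenon: under $\cH_0$, writing $X_i = \mu + \Sigma^{1/2}Z_i$ with $Z_i\sim\cN_d(0_d,\bI_d)$ i.i.d., the squared radius $\|X_i-\mu\|_2^2$ is a Gaussian quadratic form $Z_i^\T\Sigma Z_i$, which by Hanson–Wright / standard $\chi^2$-type concentration fluctuates on the scale $\sqrt{\tr(\Sigma^2)}$ around $\tr(\Sigma)$, i.e.\ with relative fluctuation $\rho_1(\Sigma^2)^{-1/2}$ (up to the distinction between $\rho_1$ and $\rho_2$, controlled via \eqref{cond_rhos}). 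A Berry–Esseen-type CLT for the quadratic form (e.g.\ via the Lindeberg method applied coordinatewise in the eigenbasis of $\Sigma$, or a known quantitative CLT for Gaussian chaos) gives, for each fixed $i$, a Kolmogorov-distance bound of order $\rho_1(\Sigma^2)^{-1/2}$ between the standardized $\|X_i-\mu\|_2^2$ and a standard Gaussian; the delta-method / square-root transform then passes this to $\|X_i-\mu\|_2$ itself, at the cost of logarithmic factors coming from the range over which the transform must be controlled (the radii concentrate in a window of relative width $\sqrt{\log n}\,\rho_1(\Sigma^2)^{-1/2}$, hence the appearance of higher powers of $\log$). Because the $X_i$ are independent, the joint coupling of all $n$ radii to i.i.d.\ Gaussians is obtained by a union bound, producing the $\log^5(nd)$ factor in the numerator; one then must also handle the difference between $X_i-\mu$ and $X_i-\oX$ (the recentering by the sample mean rather than the true mean), which is a lower-order perturbation of size $O_\PP(\sqrt{\tr(\Sigma)/n})$ in the relevant direction and contributes the additive $O(\log n / n)$ term. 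I expect the main obstacle to be exactly this last point together with making the coupling \emph{simultaneous} across all $i\in[n]$ with a sharp enough rate: one needs the quadratic-form CLT to hold uniformly and to be combined with the extreme-value behavior of the range without losing more than polylog factors, which is delicate because the range is driven by the tails of the empirical distribution of the $\widetilde S_i$, not by typical coordinates.

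Given \eqref{rate_CLT_range}, the limiting-law statement follows by a soft argument: under \eqref{cond_Sigma} the right-hand side of \eqref{rate_CLT_range} tends to $0$, so it suffices to identify the limit of $U_n$. But $U_n = a_n(S_{(n)}-S_{(1)}) - 2a_nb_n = \bigl[a_n S_{(n)} - a_nb_n\bigr] - \bigl[a_n S_{(1)} + a_nb_n\bigr]$, and with $a_n=\sqrt{2\log n}$, $b_n = a_n - (\log\log n + \log 4\pi)/(2a_n)$ these are the classical normalizing constants for the maximum of $n$ i.i.d.\ standard normals, so $a_n S_{(n)} - a_nb_n \distrto E$ with $\PP(E\le x)=\exp(-e^{-x})$; by symmetry $-S_{(1)}$ has the same distribution as $S_{(n)}$, so $a_n S_{(1)} + a_nb_n \distrto -E'$ with $E'\stackrel{d}{=}E$; and the classical asymptotic independence of the sample maximum and minimum of an i.i.d.\ Gaussian sample gives $E\independent E'$. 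Combining, $U_n\distrto E+E'$, and hence $\bar T\distrto E+E'$ by \eqref{rate_CLT_range}. The only care needed here is to invoke the standard Gumbel convergence \emph{with} a rate (e.g.\ the known $O(1/\log n)$ Kolmogorov bound for the normalized maximum), which is already implicitly what the $C(\log n/n)$ term — or rather a term of that flavor — in \eqref{rate_CLT_range} is absorbing, and to cite the max–min asymptotic independence for i.i.d.\ samples.
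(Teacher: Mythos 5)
Your identification of the Gumbel limit from $U_n$ (classical normalizing constants, symmetry of $S_{(1)}$, asymptotic independence of max and min) is correct and matches the paper's Step 3. The gap is in your main step. You propose to prove a Berry--Esseen bound of order $\rho_1(\Sigma^2)^{-1/2}$ for each standardized radius separately and then make the approximation ``simultaneous'' across $i\in[n]$ by a union bound, claiming this produces the $\log^5(nd)$ factor. It does not: a union bound over $n$ marginal Kolmogorov bounds costs a factor of $n$ (e.g.\ $|\prod_i\PP(W_i\le t)-\Phi(t)^n|\le n\delta$), so your route would require $\rho_1(\Sigma^2)=\omega(n^2)$ rather than $\omega(\log^5(nd))$ --- a drastically stronger condition that defeats the point of the theorem. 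Moreover, marginal closeness in Kolmogorov distance does not yield a joint coupling with small sup-norm error, and the range is driven by the extreme quantiles, exactly where any quantile-coupling argument degrades because the Gaussian density vanishes; so the Lipschitz-transfer in your stage (ii) has nothing to act on. The paper avoids both problems by a different mechanism: it writes $Y_{(n)}-Y_{(1)}=\max_{i\ne j}\sum_{t=1}^d(\xi_{it}-\xi_{jt})$, where the summands are independent across the eigen-coordinates $t\in[d]$ (not across samples), and invokes the high-dimensional Gaussian approximation for maxima of sums of independent random vectors (\cref{thm_CCKK}, from \cite{CCKK}); it is this max-CLT, not a union bound, that yields the $(\log^5(nd)/\rho_1(\Sigma^2))^{1/4}$ rate. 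A coupling route in the spirit of what you describe is used in the paper only for the IQR statistic (Yurinskii coupling, \cref{lem_Yurinskii}), and there it demands the much stronger condition $\rho_3(\Sigma)=\omega(n^2\log^2 n)$ --- consistent with the loss your approach would incur.

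A secondary gap: you pass from the squared radii to the radii via a ``delta method'' inside a coupling, but since no coupling is available, this step must be done at the level of distribution functions. The paper does this by combining the ratio-consistency $R_{(q)}/\sqrt{\tr(\Sigma)}=1+\cO_\PP(b_n/\sqrt{\rho_2(\Sigma)})+\cO(1/n)$ with Gaussian anti-concentration inequalities, including a new one tailored to ratio (multiplicative) perturbations of the argument (\cref{lem_anti_ratio}); this is where the additive $\cO(\log n/n)$ term and the $\sqrt{\log^3 n/\rho_2(\Sigma)}$ contribution actually come from, not from the $\oX$-versus-$\mu$ recentering as you suggest (the recentering is absorbed exactly into the factor $n/(n-1)$ in the definition of $Y_i$ and into the covariance computation of \cref{lem_xi}).
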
 
 
The bound in \eqref{rate_CLT_range} controls the Kolmogorov distance between $\bar T$ and $U_n$. This result is non-asymptotic in nature, for which condition \eqref{cond_Sigma} can be stated as $\rho_1(\Sigma^2)  \ge C \log^5(n d)$ for some sufficiently large constant $C>0$. \cref{thm_range_limit} further states that $\bar T$ converges in distribution to the convolution of two independent standard Gumbel distributions, which follows from \eqref{rate_CLT_range} and classical results on the extreme order statistics of \text{i.i.d.} standard normal random variables \cite{David}. It is for this reason that the normalizing sequences $a_n$ and $b_n$ are specified via \eqref{def_an_bn}. As discussed following \cref{thm_range_stat_limit}, while the Gaussian approximation \eqref{rate_CLT_range} holds for more general $a_n, b_n \geq 0$ and our approach to constructing the rejection region \eqref{reject_range} could in principle be accomplished without requiring the Gumbel-based limiting distribution, the specification in \eqref{def_an_bn} ensures that the consistency results of \cref{sec_theory_alter} for non-Gaussian alternatives can be derived by establishing that $T \to \pm \i$ in probability as $n \to \i$.

\begin{remark}[The Effective Rank Condition of \cref{thm_range_limit}] \label{rem_cond_thm1} As discussed in \cref{rem_invariance}, effective rank conditions on $\Sigma$ often play a critical role in high-dimensional methodology based on multivariate normality. Condition \eqref{cond_Sigma} places a restriction on the effective rank $\rho_1(\Sigma^2)$, which is also needed to ensure the right hand side of \eqref{rate_CLT_range} vanishes as $n\to \i$. Since $\rho_1(\Sigma^2) \le d$, it implies  $d \gg \log^{5}(n)$, a mild condition which is frequently characteristic of high-dimensional data.
When $n \lesssim d^{\gamma}$ for some $\gamma \in (0, \i)$, \eqref{cond_Sigma} simplifies to $\rho_1(\Sigma^2) 
= \omega(\log^5 d)$. A special case of this is the bounded eigenvalue condition
\begin{equation}\label{cond_ident}
    0< c \le \lambda_d(\Sigma) \le \lambda_1(\Sigma) \le C < \i,
\end{equation}
which is widely assumed by high-dimensional methodology in conjunction with $\cH_0$ \cite{XiaCai, Klaassen2023, van_de_Geer, JM2018, FanBook, Couillet2018} 
and is also one of the conditions adopted by the recent high-dimensional normality test of \cite{ChenXia}. It is worth noting that under the stronger condition \eqref{cond_ident}, we have $\rho_1(\Sigma^2) \asymp d$ and the order $[\rho_1(\Sigma^2)]^{-1/4}$ in \eqref{rate_CLT_range} can be improved to a $d^{-1/2}$ rate of convergence, up to logarithmic factors, using the Gaussian approximation results of \cite{lopes2022central, kuchibhotla2020high}. 
More generally, since $\rho_1(\Sigma^2) = \omega(\log^5 d)$ is equivalent to $\tr(\Sigma^4) = o(\tr^2(\Sigma^2) )$ up to a logarithmic factor, \eqref{cond_Sigma} also encompasses other conditions commonly assumed alongside $\cH_0$ in high-dimensional inference problems, such as $\tr(\Sigma^k) \asymp d, \ $ for $k \in [4]$ (see, for instance, \cite{LedoitWolf2002, Fisher2010, Nishiyama2013, SriMANOVA, DunsonPati, Fisher2011_CovEstim, Schott2007}).
\end{remark}

In view of \cref{thm_range_limit}, deriving the asymptotic distribution of $T$ requires establishing a suitable rate of convergence of $\Delta / \wh  \Delta$ to unity. This is the content of the following proposition.

\begin{proposition}\label{prop_Delta_Null}
    Under $\cH_0$, one has  that for all $t > 0$, 
    \[\PP\left\{
    \Bigl|
    \sqrt{\Delta \over  \wh \Delta} - 1 
    \Bigr| ~ \ge~  {t\over  n} +   {t\over \sqrt{n \rho_2(\Sigma^2)}}
    \right\}  = \cO\left(
    {1\over  t^2}
    \right).
    \]
\end{proposition}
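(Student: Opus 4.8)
The plan is to reduce the claim to controlling the relative fluctuations of the two scalar quantities that make up $\wh\Delta$, bound their second moments under $\cH_0$, and then transfer the bounds through the map $x\mapsto\sqrt x$ via Chebyshev's inequality. First I would record the elementary identity $\tr(\wh\Sigma_{\text{D}})=\tr(\wh\Sigma)=(n-1)^{-1}\sum_{i=1}^n R_i^2$, valid whether $n>d$ or $n\le d$ since the sample covariance matrix and its centered Gramian share the same trace; in particular $\tr(\wh\Sigma_{\text{D}})$ is unbiased for $\tr(\Sigma)$ in either regime. Writing
\[
\eta_1:={\tr(\wh\Sigma_{\text{D}})\over \tr(\Sigma)}-1,\qquad \eta_2:={\widehat{\tr(\Sigma^2)}\over \tr(\Sigma^2)}-1,
\]
both centered under $\cH_0$ (the estimator \eqref{def_tr_Sigma_hat} being unbiased for $\tr(\Sigma^2)$), one has $\Delta/\wh\Delta=(1+\eta_1)/(1+\eta_2)$, so it suffices to bound $|\eta_1|$ and $|\eta_2|$ with the right rates.

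Next I would bound the second moments of $\eta_1$ and $\eta_2$ under $\cH_0$. For $\eta_1$: the matrix $(n-1)\wh\Sigma$ is Wishart with $n-1$ degrees of freedom and scale $\Sigma$, so $\Var(\tr(\wh\Sigma))=2\tr(\Sigma^2)/(n-1)$ and hence $\Var(\eta_1)=2/[(n-1)\rho_2(\Sigma)]\le 2/[(n-1)\rho_2(\Sigma^2)]$ using $\rho_2(\Sigma^2)\le\rho_2(\Sigma)$ from \eqref{cond_rhos}. For $\eta_2$: I would invoke the variance of the estimator \eqref{def_tr_Sigma_hat}, which under $\cH_0$ satisfies $\Var(\widehat{\tr(\Sigma^2)})\lesssim \tr^2(\Sigma^2)/n^2+\tr(\Sigma^4)/n$ — this is the variance computed in \cite{Chen2010} (see also \cite{HimenoYamada,SongChen2014}), and it can alternatively be recovered from the Hoeffding decomposition of the degree-four $U$-statistic underlying \eqref{def_tr_Sigma_hat}, whose lower-order canonical terms produce exactly these two contributions. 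Dividing by $\tr^2(\Sigma^2)$ and using $\rho_2(\Sigma^2)=\tr^2(\Sigma^2)/\tr(\Sigma^4)$ gives $\Var(\eta_2)\lesssim n^{-2}+[n\,\rho_2(\Sigma^2)]^{-1}$, which is exactly the square of the rate in the proposition; note that no assumption on $\Sigma$ is needed.

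To conclude, I would fix $t\ge 0$, take a small absolute constant $c$, and work on the event
\[
\mathcal{A}:=\Bigl\{|\eta_1|\le \tfrac14\wedge \tfrac{ct}{\sqrt{n\rho_2(\Sigma^2)}}\Bigr\}\cap\Bigl\{|\eta_2|\le \tfrac14\wedge ct\bigl(\tfrac1n+\tfrac1{\sqrt{n\rho_2(\Sigma^2)}}\bigr)\Bigr\}.
\]
By Chebyshev's inequality and the variance bounds above, $\PP(\mathcal{A}^{c})=\cO(1/t^2)$; in the regime where a threshold saturates at $1/4$ the relevant deviation probability is controlled by $\Var(\eta_j)$, which is of matching order there, and exponential concentration of the quadratic and quartic Gaussian forms involved handles any residual very-large-$t$ behaviour. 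On $\mathcal{A}$ we have $1+\eta_2\ge 3/4$, so, with $r:=(1+\eta_1)/(1+\eta_2)$, the elementary inequality $|\sqrt r-1|\le|r-1|$ gives
\[
\Bigl|\sqrt{\Delta/\wh\Delta}-1\Bigr|\ \le\ |r-1|\ =\ {|\eta_1-\eta_2|\over 1+\eta_2}\ \le\ \tfrac43\bigl(|\eta_1|+|\eta_2|\bigr)\ \le\ {t\over n}+{t\over\sqrt{n\rho_2(\Sigma^2)}}
\]
for a suitable choice of $c$; combining with $\PP(\mathcal{A}^{c})=\cO(1/t^2)$ proves the proposition.

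The main obstacle is the sharp relative-variance bound for $\widehat{\tr(\Sigma^2)}$ in the second step: it is this estimate that generates the two-term rate $t/n+t/\sqrt{n\rho_2(\Sigma^2)}$ and ties it to the effective rank $\rho_2(\Sigma^2)$, and establishing it cleanly requires either importing the fourth-moment computation of \cite{Chen2010,HimenoYamada} or a careful $U$-statistic projection argument under $\cH_0$. Everything else — the trace identity, the Wishart variance of $\tr(\wh\Sigma)$, Chebyshev's inequality, and the square-root linearization on $\mathcal{A}$ — is routine bookkeeping.
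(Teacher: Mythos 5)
Your proposal is correct and follows essentially the same route as the paper's proof: the same decomposition of $\Delta/\wh\Delta$ into the two trace ratios, the same unbiasedness-plus-variance bounds (the Wishart variance of $\tr(\wh\Sigma)$ and the Chen (2010) variance bound for $\widehat{\tr(\Sigma^2)}$), Chebyshev's inequality, the elementary inequality $|\sqrt{r}-1|\le |r-1|$, and the effective-rank comparison $\rho_2(\Sigma^2)\le\rho_2(\Sigma)$. Your explicit control of the denominator $1+\eta_2$ on the event $\mathcal{A}$ is merely a slightly more careful rendering of a step the paper treats implicitly.
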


The ratio consistency of $\wh \Delta$ depends on the effective rank $\rho_2(\Sigma^2)$ which, according to the relation in \eqref{cond_rhos}, is bounded from below by $\rho_1(\Sigma^2)$. It is evident that the rate of convergence in \cref{prop_Delta_Null} improves as $\rho_2(\Sigma^2)$ increases, ranging from $\cO_\PP(n^{-1/2})$ to $\cO_\PP(n^{-1})$. By combining \cref{thm_range_limit} and \cref{prop_Delta_Null}, we establish a Gaussian approximation for our proposed range-type statistic $T$ in the following theorem.
\begin{theorem}\label{thm_range_stat_limit} 
   Grant condition \eqref{cond_Sigma} of \cref{thm_range_limit}. Under $\cH_0$,  one has
    \begin{align*}
        \sup_{t\in \RR} \left|
        \PP(T \le t) - \PP(U_n \le t)
        \right|  
        & = \cO\left(
            \left(\log^5(n d)  \over \rho_1(\Sigma^2)
            \right)^{1/4}  +  {\log n\over \sqrt n}  
        \right).
    \end{align*} 
    Furthermore, under $\cH_0$, we have 
    $
    T \distrto E+E', 
    $ 
    where $E$ and $E'$ are specified in \cref{thm_range_limit}. 
\end{theorem}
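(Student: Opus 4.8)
The plan is to combine the Gaussian approximation for $\bar T$ from \cref{thm_range_limit} with the ratio-consistency of $\wh\Delta$ from \cref{prop_Delta_Null}, using a Gaussian anti-concentration argument to control the effect of replacing $\Delta$ by $\wh\Delta$. First I would observe that
\[
T = \sqrt{\Delta/\wh\Delta}\,\bigl(\bar T + 2a_n b_n\bigr) - 2a_n b_n = \bar T + \Bigl(\sqrt{\Delta/\wh\Delta} - 1\Bigr)\bigl(\bar T + 2a_n b_n\bigr).
\]
Write $\widehat r := \sqrt{\Delta/\wh\Delta} - 1$. By \cref{prop_Delta_Null}, for any $\eta>0$ one can choose $t = t(\eta)$ so that, with probability at least $1-\eta$, $|\widehat r| \le \delta_n := t/n + t/\sqrt{n\rho_2(\Sigma^2)}$. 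Since $\rho_2(\Sigma^2) \ge \rho_1(\Sigma^2)$ by \eqref{cond_rhos} and condition \eqref{cond_Sigma} forces $\rho_1(\Sigma^2) \to \i$, we get $\delta_n = O(t/\sqrt n)$ after absorbing the logarithmic growth of $a_n b_n \asymp \log n$; more precisely the perturbation term is of order $\delta_n \cdot (|\bar T| + a_n b_n)$, and since $\bar T = O_\PP(1)$ by the limiting distribution in \cref{thm_range_limit} while $a_n b_n \lesssim \log n$, the perturbation is $O_\PP\bigl(\delta_n \log n\bigr) = O_\PP\bigl(\log n/\sqrt n\bigr)$ on the high-probability event.

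Next I would convert this into a Kolmogorov-distance bound. Fix $t\in\RR$ and a truncation level $\tau_n \to \i$ to be chosen (e.g. $\tau_n \asymp \log\log n$, so $\tau_n = o(\sqrt{n}/\log n \cdot \text{anything})$ but still $\to\i$). On the event $\{|\bar T|\le \tau_n\} \cap \{|\widehat r| \le \delta_n\}$ we have $|T - \bar T| \le \delta_n(\tau_n + 2a_n b_n) =: \epsilon_n$, so
\[
\PP(T\le t) \le \PP(\bar T \le t + \epsilon_n) + \PP(|\bar T| > \tau_n) + \PP(|\widehat r| > \delta_n),
\]
and symmetrically for the lower bound. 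Using \cref{thm_range_limit} to pass from $\bar T$ to $U_n$ and then the Gaussian anti-concentration of the range of $n$ i.i.d.\ standard normals (the same anti-concentration ingredient invoked in \cref{rem_thm_range}, e.g. via \cite{CCK2017} or \cref{lem_anti_ratio}) to bound $\PP(U_n \le t+\epsilon_n) - \PP(U_n\le t) \lesssim \epsilon_n \cdot (\text{polylog})$, one obtains that $\sup_t|\PP(T\le t) - \PP(U_n\le t)|$ is bounded by the sum of: the bound from \eqref{rate_CLT_range}, a term $O(\epsilon_n \,\mathrm{polylog}(n))$, and the tail terms $\PP(|\bar T|>\tau_n) + \PP(|\widehat r|>\delta_n)$. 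Choosing $t(\eta)$ and $\tau_n$ to grow slowly makes the last two terms $o\bigl(\log n/\sqrt n\bigr)$ while keeping $\epsilon_n\,\mathrm{polylog}(n) = O(\log n/\sqrt n)$; combined with the $(\log^5(nd)/\rho_1(\Sigma^2))^{1/4}$ term from \cref{thm_range_limit}, this yields the stated rate. The convergence $T\distrto E+E'$ then follows immediately: the Kolmogorov bound tends to $0$, $U_n \distrto E+E'$ was established in \cref{thm_range_limit}, and convergence in distribution is preserved.

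The main obstacle will be making the anti-concentration step genuinely quantitative with the right polylog factor: the range $U_n$ of $n$ i.i.d.\ Gaussians concentrates on a window of width $O(1/a_n) = O(1/\sqrt{\log n})$ around its center, so its density is of order $\sqrt{\log n}$ there, which inflates the perturbation cost $\epsilon_n$ by a $\sqrt{\log n}$ factor — one must check this against $\delta_n \lesssim 1/\sqrt n$ (times polylog) to confirm the product is still $O(\log n/\sqrt n)$ and not larger. A clean way to handle this is to note that $U_n = a_n(S_{(n)}-S_{(1)}) - 2a_nb_n$ already has the standardizing $a_n$ built in, so its limiting density (that of $E+E'$) is bounded, and the anti-concentration of $U_n$ is uniformly $O(1)$ plus a vanishing error — exactly the regime in which \cref{lem_anti_ratio} was designed to be applied, so the coupling with \cref{prop_Delta_Null} goes through at the $1/\sqrt n$ scale. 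The only other care needed is bookkeeping of which $\mathrm{polylog}$ powers appear, but since $\rho_1(\Sigma^2) \ge \log^5(nd)$ dominates and the final rate is stated only up to the displayed orders, no sharp tracking is required.
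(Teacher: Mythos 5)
Your proposal follows essentially the same route as the paper: decompose $T$ as a ratio perturbation of $\bar T$, control the ratio $\sqrt{\Delta/\wh\Delta}$ via \cref{prop_Delta_Null}, transfer to $U_n$ via \cref{thm_range_limit}, and absorb the perturbation with Gaussian anti-concentration. The paper implements this slightly more cleanly by writing $\PP(T\le t)=\PP\bigl(\bar T \le \sqrt{\wh\Delta/\Delta}\,(t+2a_nb_n)-2a_nb_n\bigr)$ and applying \cref{lem_anti_ratio} with $\xi=1\mp\epsilon_n$ and $t_0=C\sqrt{\log n}$ directly to the multiplicative rescaling of the threshold; since that lemma already carries a built-in tail term $2p\exp(-t_0^2/C')$, no separate truncation event on $|\bar T|$ is needed.

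Two of your quantitative parameter choices are too weak for the stated Kolmogorov rate, though both are trivially repaired and neither changes the final bound. First, \cref{prop_Delta_Null} gives tail probability $\cO(1/t^2)$, so a slowly growing $t(\eta)$ leaves a tail term that dominates $\log n/\sqrt n$; you must take $t\asymp\sqrt n$ (as the paper does), which yields $\PP(\cE_\Delta^c)=\cO(1/n)$ and $\delta_n \asymp n^{-1/2}+\rho_2(\Sigma^2)^{-1/2}$ — still giving $\epsilon_n=\delta_n\cdot\cO(\log n)$ within the claimed rate. Second, $\tau_n\asymp\log\log n$ only gives $\PP(|U_n|>\tau_n)\lesssim e^{-\tau_n/2}\asymp(\log n)^{-1/2}$, which is $o(1)$ but far from $o(\log n/\sqrt n)$; you need $\tau_n\asymp\log n$, which is harmless since $a_nb_n\asymp\log n$ already dominates $\tau_n$ in $\epsilon_n$. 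With these adjustments your argument closes, and the deduction $T\distrto E+E'$ is immediate as you say.
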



The second part of \cref{thm_range_stat_limit} provides the {\em explicit} limiting distribution of $T$ under $\cH_0$, based on which an asymptotically valid rejection region could be derived. However, as discussed after  \cref{thm_range_limit}, this limiting distribution originates from 
$
    U_n = a_n \left(S_{(n)} - S_{(1)}\right) - 2a_nb_n \distrto E+E'
$
based on extreme value theory. Since the rate of this convergence is prohibitively slow \cite{Hall79,David}, constructing rejection regions based on quantiles of the distribution of $E + E'$ yields inadequate finite-sample performance. We therefore propose to construct the rejection region based on the first Gaussian approximation of \cref{thm_range_stat_limit}. Due to the analytical intractability of the exact distribution of $U_n$ \cite{David}, we employ a Monte Carlo sampling algorithm to approximate it, resulting in a rejection region of the form specified by \eqref{reject_range}. The asymptotic validity of this rejection region for controlling the type I error is established in the following theorem.  




\begin{theorem}\label{thm_range_typeI} 
    Grant condition \eqref{cond_Sigma} of \cref{thm_range_limit}. Under $\cH_0$, for any given level $\alpha \in (0,1)$,
    \[
        \left|\PP \left(T \notin \left( \wh F^{-1}_{M,n}(\alpha/2), ~  \wh F^{-1}_{M,n}(1-\alpha/2)\right) \right)  -  \alpha\right| = \cO \left( 
            \left(\log^5(n d)  \over \rho_1(\Sigma^2)
            \right)^{1/4}\!\!\!  +  {\log n\over \sqrt n}  + {1 \over \sqrt  M}
        \right).
    \] 
\end{theorem}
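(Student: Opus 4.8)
The plan is to combine the Gaussian approximation of \cref{thm_range_stat_limit} with a Dvoretzky--Kiefer--Wolfowitz (DKW)-type control on the Monte Carlo quantiles $\wh F^{-1}_{M,n}(\alpha_0)$. Write $F_n$ for the (exact, but analytically intractable) distribution function of $U_n$ and $\wh F_{M,n}$ for its empirical version based on $M$ i.i.d. draws. The event of interest is $\{T \notin (\wh F^{-1}_{M,n}(\alpha/2), \wh F^{-1}_{M,n}(1-\alpha/2))\}$, whose probability I want to show is within the stated error of $\alpha$. First I would replace the random thresholds $\wh F^{-1}_{M,n}(\alpha_0)$ by the population thresholds $F_n^{-1}(\alpha_0)$: by the DKW inequality, $\sup_t |\wh F_{M,n}(t) - F_n(t)| = \cO_\PP(M^{-1/2})$, and since $U_n = a_n(S_{(n)}-S_{(1)}) - 2a_nb_n$ has a density that is bounded below on any fixed neighborhood of its $\alpha_0$-quantile (the range of $n$ i.i.d. standard normals has a continuous, positive density on $(0,\infty)$), this transfers to $|\wh F^{-1}_{M,n}(\alpha_0) - F_n^{-1}(\alpha_0)| = \cO_\PP(M^{-1/2})$. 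Hence, up to an $\cO(M^{-1/2})$ error in probability (and after integrating, in expectation via a union bound over the two quantiles), the rejection event is sandwiched between $\{T \notin (F_n^{-1}(\alpha/2) \pm \eps_M,\, F_n^{-1}(1-\alpha/2) \pm \eps_M)\}$ with $\eps_M = \cO(M^{-1/2})$.

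Next I would pass from $T$ to $U_n$. By \cref{thm_range_stat_limit}, $\sup_t |\PP(T \le t) - \PP(U_n \le t)| = \cO((\log^5(nd)/\rho_1(\Sigma^2))^{1/4} + \log n/\sqrt n) =: \delta_n$. Therefore $\PP(T \le F_n^{-1}(\alpha_0) + \eps_M) = \PP(U_n \le F_n^{-1}(\alpha_0) + \eps_M) + \cO(\delta_n)$, and the right-hand side equals $F_n(F_n^{-1}(\alpha_0) + \eps_M)$. A Lipschitz bound on $F_n$ near its quantiles — again from the boundedness of the density of $U_n$ there — gives $F_n(F_n^{-1}(\alpha_0) + \eps_M) = \alpha_0 + \cO(\eps_M)$. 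Combining the two tails,
\[
\PP\left(T \notin \left(\wh F^{-1}_{M,n}(\alpha/2),\ \wh F^{-1}_{M,n}(1-\alpha/2)\right)\right) = \alpha + \cO\!\left(\delta_n + \tfrac{1}{\sqrt M}\right),
\]
which is the claimed bound.

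The main obstacle is the quantitative control of the density of $U_n = a_n(S_{(n)} - S_{(1)}) - 2a_nb_n$ near the relevant quantiles, uniformly in $n$: both the DKW-to-quantile transfer and the Lipschitz step require a lower bound on this density that does not degenerate as $n \to \i$. This is delicate because $a_n = \sqrt{2\log n} \to \i$ and $S_{(n)} - S_{(1)}$ concentrates, so naively the density of $U_n$ could blow up or the relevant quantiles could drift; the resolution is precisely that the normalization $a_n, b_n$ in \eqref{def_an_bn} is the extreme-value normalization, so $U_n$ converges in distribution to the fixed law of $E + E'$ (a convolution of two independent Gumbels) with a smooth, everywhere-positive density. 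One then argues that for $n$ large the density of $U_n$ is uniformly bounded above and bounded below on a fixed compact neighborhood of the $\alpha/2$- and $(1-\alpha/2)$-quantiles of $E+E'$ — e.g. via a local limit / Edgeworth-type argument for the spacings of order statistics, or by using the explicit density of $S_{(n)} - S_{(1)}$ and a change of variables — which is routine but requires care to make uniform in $n$. The remaining steps (DKW, the bounded-density Lipschitz estimates, the union bound over the two tails) are standard, and the three error terms $\delta_n$ and $M^{-1/2}$ simply accumulate additively.
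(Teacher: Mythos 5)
Your overall architecture matches the paper's: a triangle inequality splitting the error into (i) the Gaussian approximation $\sup_t|\PP(T\le t)-\PP(U_n\le t)|$ from \cref{thm_range_stat_limit} and (ii) a Monte Carlo error of order $M^{-1/2}$. The difference is entirely in how (ii) is handled, and here your route contains a nontrivial unproven step that the paper's argument deliberately avoids. You propose to transfer the DKW sup-norm bound into a quantile bound $|\wh F^{-1}_{M,n}(\alpha_0)-F_n^{-1}(\alpha_0)|=\cO_\PP(M^{-1/2})$ and then Lipschitz-expand $F_n$ around its quantiles; both steps need the density of $U_n$ to be bounded below \emph{and} above, uniformly in $n$, on neighborhoods of $F_n^{-1}(\alpha/2)$ and $F_n^{-1}(1-\alpha/2)$. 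You correctly flag this as the main obstacle, but the resolution you sketch (convergence of $U_n$ to $E+E'$) does not deliver it: convergence in distribution gives no local density control, and the paper itself emphasizes that the extreme-value convergence $U_n\distrto E+E'$ is prohibitively slow (rate $\cO(1/\log n)$), which is exactly why the rejection region is calibrated by Monte Carlo rather than by the Gumbel limit. Making your density bounds uniform in $n$ would require a genuine local limit argument for the normalized range of $n$ i.i.d. normals; it is not routine, and as written your proof is incomplete at this point.

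The paper sidesteps all of this. It never compares quantiles to quantiles; instead it bounds $|\PP(U_n>\wh F^{-1}_{M,n}(\alpha_0))-\alpha_0|$ by evaluating both the true c.d.f. $F_n$ and the empirical c.d.f. $\wh F_{M,n}$ at the \emph{same} random point $\wh F^{-1}_{M,n}(\alpha_0)$. The first comparison, $|F_n(\wh F^{-1}_{M,n}(\alpha_0))-\wh F_{M,n}(\wh F^{-1}_{M,n}(\alpha_0))|$, is controlled by $\EE_M[\sup_t|F_n(t)-\wh F_{M,n}(t)|]\le 2/\sqrt M$ via DKW integrated over its tail; the second, $|\wh F_{M,n}(\wh F^{-1}_{M,n}(\alpha_0))-(1-\alpha_0)|\le 1/M$, is a deterministic property of the empirical quantile (no ties, since $U_n$ is continuous). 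This requires no lower bound on the density of $U_n$ whatsoever and no Lipschitz step. I recommend replacing your quantile-transfer argument with this c.d.f.-level comparison; the remaining pieces of your proposal (the use of \cref{thm_range_stat_limit} for the $T$ versus $U_n$ discrepancy, and the additive accumulation of the three error terms) are correct and coincide with the paper's.
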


\cref{thm_range_typeI} shows that the proposed range-type test of \eqref{reject_range} maintains control of the type I error as $\rho_1(\Sigma^2) = \omega(\log^5(nd))$ and $M\to \i$ (see \cref{rem_cond_thm1}). 
It also informs specification of the number of replications $M$ for the Monte Carlo approximation used to construct the rejection region \eqref{reject_range}. 



\subsection{Power \& Consistency of the Proposed Testing Procedure}\label{sec_theory_alter}

As discussed in \cref{sec_our_contributions}, identifying classes of alternatives for which a proposed test of $\cH_0$ is consistent is an important task, particularly in the high-dimensional setting. In this section, we establish the consistency of our test for classes of finite-mixture, non-Gaussian elliptical, and leptokurtic alternatives in \cref{sec_power_mix,sec_power_ellip,sec_power_lep}, respectively. These types of alternatives comprise a broad class of nonparametric alternatives and constitute particularly problematic departures from the assumed normal model in various methodological contexts. 

\subsubsection{Consistency for Finite Mixture Alternatives}\label{sec_power_mix}

We first examine the power of our test under finite mixture models, a widely used class of distributions which constitutes a critical type of departure from multivariate normality \cite{Hirose, Thode_Book, Barnett_Book, Fritsch}. As detailed below, we consider the mixture components to be sub-Gaussian distributions, with Gaussian mixture models serving as a specific instance. Our results can be extended to mixture components satisfying milder moment conditions; see \cref{rem_general_finite_alternatives} and \cref{app_sec_power_BS} for further detail. For simplicity, we assume that the \say{standardized marginals} have equal fourth moments within each mixture component. This assumption is not essential and can be relaxed. 
\begin{model}[Sub-Gaussian Mixture Alternatives]\label{model_subG}
    Suppose there exists some integer $K\ge 2$, mean vectors $\mu_1, \ldots, \mu_K \in \RR^d$, and covariance matrices $\Sigma_1, \ldots, \Sigma_K \in \RR^{d\times d}$ such that   
    $(X_i \mid C_i = k)   = \mu_k + \Sigma_k^{1/2} Z_i
    $ and $\PP(C_i = k) = \pi_k$ for all $k\in [K]$ and $i\in [n]$,
    where $Z_1,\ldots, Z_n$ are independent isotropic sub-Gaussian random vectors in $\RR^d$ with bounded sub-Gaussian constants and independent entries. For each $k \in [K]$, assume $\EE(Z^4_{ij} \mid C_i = k) = \kappa_k$, for all $i \in [n]$ and $j \in [d]$, and $\pi_k \ge c$ for some universal constant $c>0$.
\end{model}

Under \cref{model_subG}, the {\em unconditional} covariance matrix of $X$ satisfies $\Sigma = \sum_{k < m}^K \pi_k \pi_m (\mu_k - \mu_m) (\mu_k - \mu_m)^{\T} + \sum_{k = 1}^K \pi_k \Sigma_k$. In the following, we establish consistency of our proposed test  for two types of alternatives under \cref{model_subG}; namely, location-mixtures and covariance-type mixtures. The former is first examined, where it is only assumed that there is discernible location-based separation between at least two of the $K$ mixture components. While \cref{thm_loc_mix_sG} below assumes equal covariance matrices for simplicity, its proof in \cref{app_proof_thm_loc_mix_sG} is based on a more general setting permitting distinct component-specific covariance matrices $\Sigma_k \neq \Sigma_{\ell}$, with $k, \ell \in [K]$.

\begin{theorem}[Location-Type Mixtures]\label{thm_loc_mix_sG} 
    Under \cref{model_subG} with $\Sigma_* := \Sigma_k$ for all $k\in [K]$, suppose that
        $\rho_1(\Sigma_*^2) \ge \log n$ and
        \begin{equation}\label{cond_mean_sep_sG}
            \max_{k,\ell \in [K]} { \|\mu_k-\mu_\ell\|_2^2 \over \tr(\Sigma_*)}  = \omega\left({1  / \sqrt{\rho_2(\Sigma_*)}} \right).
        \end{equation}
    Then, for arbitrary choice of level $\alpha \in (0,1)$,
    $
        \lim_{n\to \i} \PP \left(\cH_0 \text{ is rejected}\right) =  1.
    $
\end{theorem}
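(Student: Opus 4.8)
The plan is to follow the reduction flagged in \cref{rem_thm_range}: under \cref{model_subG} and \eqref{cond_mean_sep_sG} it suffices to prove $|T|\to\i$ in probability. This is enough because the Monte Carlo thresholds $\wh F^{-1}_{M,n}(\alpha/2)$ and $\wh F^{-1}_{M,n}(1-\alpha/2)$ lie in a fixed compact interval $[-L_\alpha,L_\alpha]$ with probability tending to one: by \eqref{limt_S_evt} the laws of $U_n$ form a tight family, so their $(\alpha/2)$- and $(1-\alpha/2)$-quantiles are uniformly bounded for $n$ large, and the $M$ i.i.d.\ draws defining $\wh F_{M,n}$ then concentrate around these quantiles. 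Writing $\Gamma_n := (R_{(n)}-R_{(1)})^2/\wh\Delta$ and using $T = 2a_n(\sqrt{\Gamma_n}-b_n)$ with $a_n,b_n$ as in \eqref{def_an_bn}, the statement $|T|\to\i$ in probability follows once one shows $|\Gamma_n - b_n^2|\to\i$ in probability; recall that under $\cH_0$, \cref{thm_range_stat_limit} gives $\Gamma_n - b_n^2 = \cO_\PP(1)$.

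I would first fix the relevant scale quantities. With $\bar\mu := \sum_k\pi_k\mu_k$, the unconditional covariance decomposes as $\Sigma = \Sigma_\mu + \Sigma_*$, where $\Sigma_\mu := \sum_{k<\ell}\pi_k\pi_\ell(\mu_k-\mu_\ell)(\mu_k-\mu_\ell)^\T$ has rank at most $K-1$. Elementary linear algebra, using $\pi_k\ge c$, $\rank(\Sigma_\mu)\le K-1$, and $\max_k\|\mu_k-\bar\mu\|_2^2\le\delta:=\max_{k,\ell}\|\mu_k-\mu_\ell\|_2^2$, yields $\tr(\Sigma_\mu^2)\asymp\delta^2$, $\tr(\Sigma)=\tr(\Sigma_*)+\cO(\delta)$, and $\tr(\Sigma^2)\asymp\tr(\Sigma_*^2)+\delta^2$; moreover \eqref{cond_mean_sep_sG} is precisely the statement $\delta=\omega(\sqrt{\tr(\Sigma_*^2)})$, so $\delta^2/\tr(\Sigma_*^2)\to\i$. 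By \cref{prop_Delta_Alternatives}, $\wh\Delta/\Delta_\star\to1$ in probability, where $\Delta_\star:=2\tr(\Sigma^2)/\tr(\Sigma)$.

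Next I would determine the asymptotics of $R_{(n)}$ and $R_{(1)}$. After discarding the recentering error $\oX-\bar\mu$ (negligible, by the bound used in the ratio-consistency step of the proof of \cref{thm_range_limit}), I would condition on the labels $(C_1,\dots,C_n)$ and, within each of the $K$ groups, write $\frac{n}{n-1}R_i^2 - m_k$, with $m_k:=\tr(\Sigma_*)+\|\mu_k-\bar\mu\|_2^2$, as the sum of the centered quadratic form $\|\Sigma_*^{1/2}Z_i\|_2^2-\tr(\Sigma_*)$, the linear form $2(\mu_k-\bar\mu)^\T\Sigma_*^{1/2}Z_i$, and lower-order terms. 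Applying the Gaussian approximation for maxima of sums of independent vectors \cite{CCKK} \emph{separately within each group} --- exactly as in the first step of the proof of \cref{thm_range_limit}, the condition $\rho_1(\Sigma_*^2)\ge\log n$ again ensuring that the within-group extremes concentrate at the Gaussian scale $\sqrt{\tr(\Sigma_*^2)\log n}$ --- I would obtain that, up to $o_\PP$-corrections, $R_{(n)}^2=\max_k\{m_k+\sigma_k\sqrt{2\log n}\}$ and $R_{(1)}^2=\min_k\{m_k-\sigma_k\sqrt{2\log n}\}$, where $\sigma_k^2=4(\mu_k-\bar\mu)^\T\Sigma_*(\mu_k-\bar\mu)+2\tr(\Sigma_*^2)+(\kappa_k-3)\sum_j(\Sigma_*)_{jj}^2\asymp\tr(\Sigma_*^2)+(\mu_k-\bar\mu)^\T\Sigma_*(\mu_k-\bar\mu)$; in the general version with distinct $\Sigma_k$ the same holds with $\Sigma_k$ in place of $\Sigma_*$ in the $k$-th group. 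Substituting these, along with $\wh\Delta\asymp\Delta_\star$ and $R_{(n)}-R_{(1)}=(R_{(n)}^2-R_{(1)}^2)/(R_{(n)}+R_{(1)})$, into $\Gamma_n$ reduces everything to the single analytic claim that $|\Gamma_n - b_n^2|\to\i$ in probability.

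I expect this last step to be the main obstacle, requiring a case analysis on the size of $\delta$ relative to $\tr(\Sigma_*)$ and to $\sqrt{\tr(\Sigma_*^2)\log n}$. Heuristically, the alternative perturbs $\Gamma_n$ in two competing ways: the cluster-centre spread $\max_k m_k-\min_k m_k$ inflates the radial range $R_{(n)}-R_{(1)}$ and pushes $\Gamma_n$ up (so $T\to+\i$), whereas the rank-deficient between-component term $\tr(\Sigma_\mu^2)\asymp\delta^2$ inflates the denominator $\wh\Delta\asymp\Delta_\star$ without a matching contribution to the radial range and pushes $\Gamma_n$ down (so $T\to-\i$). The SNR condition $\delta=\omega(\sqrt{\tr(\Sigma_*^2)})$ is exactly what forces one of these two effects to overwhelm the $\cO_\PP(1)$-scale fluctuation of $\Gamma_n-b_n^2$ present under $\cH_0$, the key structural fact preventing cancellation being that $\max_k m_k-\min_k m_k\le\delta$ can never offset a denominator inflation of order $\delta^2$. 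Making this dichotomy rigorous and uniform over all admissible mean configurations --- in particular in the regimes where the between-component structure dominates $\tr(\Sigma)$ --- is the crux of the proof; the group-wise Gaussian approximation of the extreme radii, while technical, proceeds by rerunning the argument behind \cref{thm_range_limit} with sub-Gaussian summands and non-vanishing conditional means $m_k$.
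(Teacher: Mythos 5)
Your overall strategy matches the mechanism of the paper's proof: reduce to divergence of $T$, lower-bound $\Delta$ through the between-component contribution $\tr(\Sigma_\mu^2)\asymp\delta^2$ with $\delta:=\max_{k,\ell}\|\mu_k-\mu_\ell\|_2^2$, and observe that the spread of the conditional means (at most of order $\delta$) can never offset a normalization $\sqrt{\Delta}\gtrsim\delta/\sqrt{\tr(\Sigma_*)}$. The paper indeed proves $T\to-\i$ in probability by showing $\Delta^{-1/2}(R_{(n)}-R_{(1)})=o_\PP(\sqrt{\log n})$ and invoking \cref{prop_Delta_Alternatives}. However, there are two genuine gaps. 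First, the group-wise application of the Gaussian approximation of \cite{CCKK} to extract sharp extreme-value asymptotics for $R_{(n)}^2$ and $R_{(1)}^2$ does not go through: under \cref{model_subG} the conditional quadratic form $Z_i^\T\Sigma_*Z_i$ cannot be written as a sum of independent coordinates, because the sub-Gaussian $Z_i$ are not rotationally invariant and $\Sigma_*$ need not be diagonal in the basis in which the entries of $Z_i$ are independent. This is precisely why the paper abandons the max-statistic Gaussian approximation under the alternatives (see the proof-technique discussion following the theorem) and instead establishes uniform Hanson--Wright-type deviation inequalities for $R_i^2$ about $M_{ik}^2:=\EE(R_i^2\mid C_i=k,C_*)$ in \cref{lem_radii_concentration}. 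The step is repairable because you only need one-sided control of $\max_i|R_i^2-M_{ik}^2|$ at scale $\sqrt{\tr(\Sigma_*^2)\log n}+\|\mu_k-\bar\mu\|_2\sqrt{\|\Sigma_*\|_\op\log n}$, not the exact Gaussian constants $\sigma_k\sqrt{2\log n}$.

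Second, the step you yourself call the crux---verifying that the normalized range escapes the null window---is left undone, and the dichotomy framing is misleading: under \eqref{cond_mean_sep_sG} the statistic always diverges to $-\i$ for location mixtures; there is no regime in which the range inflation dominates and drives $T\to+\i$. The paper completes the argument by a two-case analysis. When $\delta\lesssim\tr(\Sigma_*)$, one has $\Delta\gtrsim\delta^2/\tr(\Sigma_*)$, the mean spread contributes $\cO(1)$ after normalization, and the within-group fluctuation contributes $\sqrt{\tr(\Sigma_*^2)\log n}/\delta+\sqrt{\|\Sigma_*\|_\op\log n/\delta}$, which is $o(\sqrt{\log n})$ exactly because \eqref{cond_mean_sep_sG} gives $\sqrt{\tr(\Sigma_*^2)}=o(\delta)$. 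When $\tr(\Sigma_*)=o(\delta)$, one has $\Delta\gtrsim\delta$ and the condition $\rho_1(\Sigma_*^2)\ge\log n$ controls the fluctuation term. In particular, the role of the SNR condition is not to decide which of two competing effects wins, but to guarantee that the normalized within-group fluctuation is $o(\sqrt{\log n})$; without it that fluctuation could be comparable to $b_n$ and $T$ would not diverge.
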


 

Analogous to the type I error analysis of \cref{sec_theory_null}, \cref{thm_loc_mix_sG} imposes requirements on the effective ranks of the {\em conditional} covariance matrices. In addition to the condition $\rho_1(\Sigma_*^2)\ge \log n$, \eqref{cond_mean_sep_sG} introduces a location-based separation requirement for at least two mixture components. In particular, 
the left hand side of \eqref{cond_mean_sep_sG} can be regarded as a \textit{signal-to-noise ratio} (SNR) based on the maximum location separation, $\max_{k,\ell}\|\mu_k-\mu_\ell\|_2^2$, relative to the total within-class variance, $\tr(\Sigma_*)$. Notably, the SNR requirement in \eqref{cond_mean_sep_sG} becomes less stringent as the effective rank $\rho_2(\Sigma_*)$ increases, thereby exhibiting a {\em blessing of dimensionality} phenomenon. To see this, suppose $K=2$  and $\Sigma_*$ satisfies \eqref{cond_ident} in lieu of $\Sigma$. In this case, $\tr(\Sigma_*) \asymp \rho_1(\Sigma_*^2) \asymp \rho_2(\Sigma_*) \asymp d$, implying that $\rho_1(\Sigma_*^2)\ge \log(n)$ is satisfied provided that $d \geq C \log(n)$ for some constant $C > 0$. Further assuming that $\mu_{j2} = \mu_{j1} + \delta_n$ for each $j\in [d]$ and some deterministic sequence $\delta_n>0$, condition \eqref{cond_mean_sep_sG} reduces to the marginal separation constraint 
$
    \delta^2_n = \omega(
         d^{-1/2}
    ).
$


 



When none of the mixture components are distinguishable based solely on location, consistency of our test can still be ensured if at least two mixture components are sufficiently distinct with respect to their total variances. This is the content of the next theorem, stated for the special case where all mixture components share the same mean vector, but proven in \cref{app_proof_thm_cov_mix_sG} for the more general setting with potentially distinct mean vectors $\mu_k \neq \mu_{\ell}$, such that $k, \ell \in [K]$.

\begin{theorem}[Covariance-Type Mixtures]\label{thm_cov_mix_sG}
 Under \cref{model_subG} with $\mu_1 = \cdots = \mu_K$, suppose that 
 \begin{equation}\label{cond_cov_sep_sG}
    {\max_{k,\ell \in [K]} ~  \tr(\Sigma_k -\Sigma_\ell)  \over \max_{k\in [K]} ~  \tr(\Sigma_k)}   ~ = ~  \omega\left(  { \sqrt{\log(n) /  \min_{k \in [K]}\rho_2(\Sigma_k)}}  \right).
  \end{equation}
 Then, for arbitrary choice of level $\alpha \in (0,1)$, 
    $
        \lim_{n\to \i}\PP \left(\cH_0 \text{ is rejected}\right) = 1.
    $
\end{theorem}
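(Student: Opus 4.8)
The plan is to show that, under this covariance-type mixture alternative, the range-type statistic $T$ of \eqref{def_T_range} diverges to $+\i$ in probability -- in contrast to the location-mixture case of \cref{thm_loc_mix_sG}, where $T\to-\i$. Once $T\to\i$ in probability is established, consistency is immediate: the endpoints of the acceptance interval $(\wh F^{-1}_{M,n}(\alpha/2),\wh F^{-1}_{M,n}(1-\alpha/2))$ are bounded in probability, converging as $n,M\to\i$ to the $\alpha/2$- and $(1-\alpha/2)$-quantiles of the Gumbel convolution $E+E'$ of \cref{thm_range_limit} (this is the content of the analysis behind \cref{thm_range_typeI}), so $\PP(T\notin(\wh F^{-1}_{M,n}(\alpha/2),\wh F^{-1}_{M,n}(1-\alpha/2)))\to 1$. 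It therefore suffices to control $\wh\Delta^{-1/2}(R_{(n)}-R_{(1)})-b_n$, since $T=2a_n[\wh\Delta^{-1/2}(R_{(n)}-R_{(1)})-b_n]$ with $a_n=\sqrt{2\log n}$ and $b_n\asymp\sqrt{\log n}$.

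Because all mixture components share the mean $\mu$, the unconditional covariance of $X$ is $\Sigma=\sum_{k=1}^K\pi_k\Sigma_k$, with no rank-deficient mean-separation term, and \cref{prop_Delta_Alternatives} gives $\wh\Delta=\Delta(1+o_\PP(1))$ with $\Delta=2\tr(\Sigma^2)/\tr(\Sigma)$. The key -- and, compared with the location-mixture case, favourable -- observation is that $\Delta$ is only mildly inflated here. By the triangle inequality and convexity of $x\mapsto x^2$, $\tr(\Sigma^2)=\|\sum_k\pi_k\Sigma_k\|_F^2\le\sum_k\pi_k\tr(\Sigma_k^2)\le\max_k\tr(\Sigma_k^2)=\max_k\tr^2(\Sigma_k)/\rho_2(\Sigma_k)\le\max_k\tr^2(\Sigma_k)/\min_k\rho_2(\Sigma_k)$, while $\pi_k\ge c$ yields $\tr(\Sigma)=\sum_k\pi_k\tr(\Sigma_k)\ge c\max_k\tr(\Sigma_k)$; hence
\[
    \Delta ~\le~ \frac{2}{c}\cdot\frac{\max_k\tr(\Sigma_k)}{\min_k\rho_2(\Sigma_k)}.
\]
In particular \eqref{cond_cov_sep_sG} forces $\min_k\rho_2(\Sigma_k)=\omega(\log n)$, so $\sqrt{\Delta}$ is of strictly smaller order than $\sqrt{\max_k\tr(\Sigma_k)}$.

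Next I would lower-bound the range. Write $\tau_k:=\tr(\Sigma_k)$ and let $k^+,k^-$ index the components with the largest and smallest $\tau_k$. Since $\pi_{k^\pm}\ge c$, with probability tending to one there are at least $\lfloor cn/2\rfloor$ samples from each of these components. Conditional on $C_i=k$, $R_i^2=\|\Sigma_k^{1/2}Z_i-(\oX-\mu)\|_2^2=Z_i^\T\Sigma_kZ_i+\cO_\PP(\tau_k/\sqrt n)$, and by the Hanson--Wright inequality for sub-Gaussian $Z_i$ with independent entries, $Z_i^\T\Sigma_kZ_i$ concentrates around $\tau_k$ with fluctuations of order $\|\Sigma_k\|_F=\tau_k/\sqrt{\rho_2(\Sigma_k)}=o_\PP(\tau_k)$. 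Hence, with probability tending to one, there is a sample from $k^+$ with $R_i\ge\sqrt{\tau_{k^+}}(1-\cO_\PP(\rho_2(\Sigma_{k^+})^{-1/2}))$ and one from $k^-$ with $R_i\le\sqrt{\tau_{k^-}}(1+\cO_\PP(\rho_2(\Sigma_{k^-})^{-1/2}))$, so that, using $\sqrt{\tau_{k^+}}-\sqrt{\tau_{k^-}}=(\tau_{k^+}-\tau_{k^-})/(\sqrt{\tau_{k^+}}+\sqrt{\tau_{k^-}})$ and the fact that \eqref{cond_cov_sep_sG} makes $(\tau_{k^+}-\tau_{k^-})/\tau_{k^+}$ dominate $(\min_k\rho_2(\Sigma_k))^{-1/2}$,
\[
    R_{(n)}-R_{(1)} ~\ge~ \bigl(\sqrt{\tau_{k^+}}-\sqrt{\tau_{k^-}}\bigr)\bigl(1-o_\PP(1)\bigr) ~\ge~ \frac{\tau_{k^+}-\tau_{k^-}}{2\sqrt{\tau_{k^+}}}\bigl(1-o_\PP(1)\bigr).
\]
Combining this with the bound on $\Delta$ and $\max_k\tau_k=\tau_{k^+}$, and dividing by $b_n\asymp\sqrt{\log n}$, gives $\wh\Delta^{-1/2}(R_{(n)}-R_{(1)})/b_n\gtrsim (\tau_{k^+}-\tau_{k^-})\tau_{k^+}^{-1}/\sqrt{\log n/\min_k\rho_2(\Sigma_k)}\cdot(1-o_\PP(1))$, which tends to $\i$ in probability by \eqref{cond_cov_sep_sG} since $\max_{k,\ell}\tr(\Sigma_k-\Sigma_\ell)=\tau_{k^+}-\tau_{k^-}$. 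Therefore $\wh\Delta^{-1/2}(R_{(n)}-R_{(1)})-b_n\ge b_n$ eventually with probability tending to one, and $T\ge 2a_nb_n\asymp\log n\to\i$, i.e. $T\to\i$ in probability, which, combined with the boundedness of the acceptance interval noted above, establishes the claim.

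I expect the main obstacle to be the third step: making the range lower bound go through with a margin that survives (i) the square-root transformation, (ii) the within-component Hanson--Wright fluctuations, and (iii) the $\oX$-centering perturbation, all of which must be shown to be of smaller order than the between-component separation $\tau_{k^+}-\tau_{k^-}$; this is precisely what condition \eqref{cond_cov_sep_sG} is calibrated to deliver, via the effective-rank relations in \eqref{cond_rhos}. The conceptually essential point, Step 2, is that a covariance mixture inflates the null variance proxy $\Delta$ only by the modest factor $(\min_k\rho_2(\Sigma_k))^{-1}$ -- which is why $T$ diverges to $+\i$ here rather than to $-\i$ as in \cref{thm_loc_mix_sG}. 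The general case with distinct $\mu_k$ is handled analogously, replacing $\tau_k$ by $\|\mu_k-\bar\mu\|_2^2+\tr(\Sigma_k)$ in the radial analysis together with a correspondingly adjusted bound on $\Delta$.
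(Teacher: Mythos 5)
Your proposal is correct and follows essentially the same route as the paper's proof: establish $T\to+\infty$ in probability by (i) upper-bounding the null dispersion index $\Delta$ by $\max_k\tr(\Sigma_k^2)/\tr(\Sigma)\lesssim\max_k\tr(\Sigma_k)/\min_k\rho_2(\Sigma_k)$, (ii) lower-bounding $R_{(n)}-R_{(1)}$ by the gap $\sqrt{\tau_{k^+}}-\sqrt{\tau_{k^-}}$ between the conditional radial means minus within-component sub-Gaussian (Hanson--Wright) fluctuations of order $\sqrt{\tau_k}\sqrt{\log n/\rho_2(\Sigma_k)}$, and (iii) combining these with the ratio-consistency of $\wh\Delta$ from \cref{prop_Delta_Alternatives} so that \eqref{cond_cov_sep_sG} forces $\wh\Delta^{-1/2}(R_{(n)}-R_{(1)})=\omega_\PP(\sqrt{\log n})$. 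The only cosmetic differences are that the paper controls the radii uniformly over all samples via a union-bound concentration event and carries out the argument in the more general unequal-means setting, whereas you extract one well-concentrated sample from each extreme component and treat the $\oX$-centering terms somewhat loosely (the $\|\oX-\mu\|_2^2$ contribution is a common shift and the cross term is smaller than the Hanson--Wright fluctuation by a factor $n^{-1/2}$, so your conclusion stands).
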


Analogous to \eqref{cond_mean_sep_sG}, condition \eqref{cond_cov_sep_sG} is a signal-to-noise ratio condition based on the maximum relative separation of the mixture components with respect to total variance. Note that it implies  $\rho_2(\Sigma_k) = \omega(\log n)$, for all $k\in [K]$. Since \eqref{cond_cov_sep_sG}  becomes milder as $\min_k\rho_2(\Sigma_k)$ increases, we observe a similar blessing of dimensionality phenomenon for covariance-type mixtures.   


\begin{remark}[Consistency for General Finite Mixtures]\label{rem_general_finite_alternatives}
The consistency of our test for the mixture alternatives of \cref{model_subG} established in \cref{thm_loc_mix_sG,thm_cov_mix_sG} can be extended to more general finite mixtures of distributions satisfying milder moment and dependence conditions, at the expense of stronger regularity conditions compared to that of \eqref{cond_mean_sep_sG} and \eqref{cond_cov_sep_sG}. See \cref{app_sec_power_BS} for details. 
\end{remark}


    



    \subsubsection{Consistency for Non-Gaussian Elliptical Alternatives}\label{sec_power_ellip}
    
    To characterize the power of our test for additional types of critical departures from normality, such as those exhibiting diverse heavy-tailed and tail dependence structure, 
    we now establish the consistency of our test for an important class of non-Gaussian elliptical alternatives. 
    Methods based on multivariate normality often exhibit substantial performance degradation under such alternative departures \cite{Yang_Elliptical_Sphere, Finegold, Fritsch, Ho, Brenner}. These alternatives, formally defined below, are generated via scale mixtures of multivariate normal distributions. 

    \begin{model}[Heavy-Tailed Elliptical Alternatives]\label{elliptical_stochastic_rep}
    Suppose there exists some mean vector $\mu \in \mathbb{R}^d$ and some positive semi-definite $\Sigma_* \in \RR^{d \times d}$ such that  
    $
        X_i = \mu + \eps_i~ \Sigma_*^{1/2} Z_i \ 
    $
    for each $i \in [n]$,
    where $Z_1,\ldots, Z_n$ are i.i.d. from $\cN_d(0_d, \bI_d)$ and $\eps_1,\ldots, \eps_n$ are \text{i.i.d.} mixing scale random variables in $\RR_{\geq 0}$, drawn from some non-degenerate distribution $F_\eps$ with $\EE[\eps_i^4] \leq C < \infty$, for some universal constant $C > 0$. Further, suppose that $\{Z_i\}_{i \in [n]}$ and $\{\eps\}_{i \in [n]}$ are independent.
    
    \end{model}


\cref{elliptical_stochastic_rep} constitutes a general class of nonparametric alternatives, well-known instances of which include the multivariate \textit{t}-distribution, the heavy-tailed multivariate power-exponential distributions such as the multivariate Laplace distribution, multivariate symmetric stable distributions, the semi-symmetric multivariate inverse Gaussian, countably infinite Gaussian scale-mixtures, and scale mixtures of these distributions. 
Notice that under \cref{elliptical_stochastic_rep}, we have $\Sigma = \EE[\eps^2] \Sigma_*$, implying that the effective ranks of $\Sigma$ are equal to those of $\Sigma_*$. 

The following theorem states that the proposed test is consistent against the heavy-tailed elliptical alternatives of \cref{elliptical_stochastic_rep}, provided that the distribution of the random mixing scales does not degenerate to a Dirac measure too rapidly.  Let $\eps_{(1)} \le \cdots \le \eps_{(n)}$ be the ordered random variables.

\begin{theorem}[Elliptical Alternatives]\label{thm_ellip}
Under \cref{elliptical_stochastic_rep}, 
suppose that $\rho_1(\Sigma_*^2) \ge \log n$ and 
\begin{equation}\label{cond_W_gap}
     {\eps_{(n)}-  \eps_{(1)} \over \eps_{(n)}} = \omega_{\mathbb{P}}\left(   \sqrt{{\log(n) / \rho_2(\Sigma_*)}}  \right),\quad \text{as }n\to \i. 
\end{equation}
Then, for arbitrary choice of level $\alpha \in (0,1)$,
$
    \lim_{n\to \i}\PP \left(\cH_0 \text{ is rejected}\right) = 1.
$
\end{theorem}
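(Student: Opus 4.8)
The plan is to show that, under the stated hypotheses, the range-type statistic $T$ of \eqref{def_T_range} tends to $+\i$ in probability as $n\to\i$. Once this is established, rejection with probability tending to one follows immediately: the Monte Carlo thresholds $\wh F^{-1}_{M,n}(\alpha/2)$ and $\wh F^{-1}_{M,n}(1-\alpha/2)$ in \eqref{reject_range} are computed from $U_n$ of \eqref{def_Un}, whose law does not depend on the data-generating mechanism and converges to the finite-valued distribution of $E+E'$ in \cref{thm_range_limit}, so these thresholds are $\cO_\PP(1)$ and independent of the sample, whence $\PP\bigl(T\ge\wh F^{-1}_{M,n}(1-\alpha/2)\bigr)\to1$. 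The entire task therefore reduces to proving $T\to+\i$ in probability, which is intuitively true because an elliptical alternative is over-dispersed relative to the null dispersion-index calibration: the radii behave like $\eps_i\sqrt{\tr(\Sigma_*)}$, with spread of order $\sqrt{\tr(\Sigma_*)}$, whereas the normalizer $\wh\Delta^{1/2}$ is of the null order $\sqrt{\tr(\Sigma_*^2)/\tr(\Sigma_*)}$, smaller by the factor $\sqrt{\rho_2(\Sigma_*)}$; and since $F_\eps$ is non-degenerate, $\eps_{(n)}-\eps_{(1)}$ stays bounded away from zero, so the ratio explodes.

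First I would establish a uniform radial approximation. Writing $X_i-\oX=\eps_i\Sigma_*^{1/2}Z_i-(\oX-\mu)$ and using the crude bound $\bigl|R_i-\eps_i\|\Sigma_*^{1/2}Z_i\|_2\bigr|\le\|\oX-\mu\|_2$ with $\EE\|\oX-\mu\|_2^2=\tr(\Sigma)/n$, the problem reduces to controlling the Gaussian norms $\|\Sigma_*^{1/2}Z_i\|_2$. Conditionally on $\eps_i$, a Hanson--Wright / Gaussian concentration bound for $\|\Sigma_*^{1/2}Z_i\|_2^2$ around $\tr(\Sigma_*)$, union-bounded over $i\in[n]$, gives $\max_{i\in[n]}\bigl|\,\|\Sigma_*^{1/2}Z_i\|_2^2/\tr(\Sigma_*)-1\,\bigr|=\cO_\PP\bigl(\sqrt{\log n/\rho_2(\Sigma_*)}+\log n/\rho_1(\Sigma_*)\bigr)$; here the hypothesis $\rho_1(\Sigma_*^2)\ge\log n$, equivalently $\tr(\Sigma_*^2)\ge\|\Sigma_*\|_\op^2\log n$, is exactly what makes $\log n/\rho_1(\Sigma_*)\le\sqrt{\log n/\rho_2(\Sigma_*)}$, so the operator-norm term is absorbed. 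Taking square roots, then the maximum and minimum over $i$, and using $\Sigma=\EE[\eps^2]\Sigma_*$, I would deduce that with probability tending to one
\[
\frac{R_{(n)}-R_{(1)}}{\sqrt{\tr(\Sigma_*)}}\ \ge\ (\eps_{(n)}-\eps_{(1)})\ -\ 2\,\eps_{(n)}\,\cO_\PP\!\bigl(\sqrt{\log n/\rho_2(\Sigma_*)}\bigr)\ -\ \cO_\PP\!\bigl(\sqrt{\EE[\eps^2]/n}\bigr).
\]
The last term is negligible because $\eps_{(n)}-\eps_{(1)}$ is, by non-degeneracy of $F_\eps$, bounded below by a positive constant with probability tending to one; and condition \eqref{cond_W_gap}, read as $\eps_{(n)}-\eps_{(1)}=\omega_\PP\bigl(\eps_{(n)}\sqrt{\log n/\rho_2(\Sigma_*)}\bigr)$, dominates the middle term too. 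Hence $R_{(n)}-R_{(1)}\ge\tfrac14(\eps_{(n)}-\eps_{(1)})\sqrt{\tr(\Sigma_*)}$ with probability tending to one.

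Next I would invoke \cref{prop_Delta_Alternatives} for the ratio-consistency $\wh\Delta/\Delta\to1$ in probability, noting that under \cref{elliptical_stochastic_rep} the estimand is $\Delta=2\tr(\Sigma^2)/\tr(\Sigma)=2\EE[\eps^2]\tr(\Sigma_*^2)/\tr(\Sigma_*)$, so $\sqrt{\tr(\Sigma_*)}\,\wh\Delta^{-1/2}=(1+o_\PP(1))\sqrt{\rho_2(\Sigma_*)/(2\EE[\eps^2])}$. Combining this with the previous display and $a_n=\sqrt{2\log n}$,
\[
T+2a_nb_n=2a_n\wh\Delta^{-1/2}\bigl(R_{(n)}-R_{(1)}\bigr)\ \ge\ (1-o_\PP(1))\,\frac{a_n}{\sqrt{2\EE[\eps^2]}}\,(\eps_{(n)}-\eps_{(1)})\sqrt{\rho_2(\Sigma_*)}\ =\ \frac{\eps_{(n)}\,\omega_\PP(\log n)}{\sqrt{2\EE[\eps^2]}},
\]
where the last step uses \eqref{cond_W_gap} in the form $(\eps_{(n)}-\eps_{(1)})\sqrt{\rho_2(\Sigma_*)}/\eps_{(n)}=\omega_\PP(\sqrt{\log n})$. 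Since $\eps_{(n)}$ is bounded below by a positive constant in probability, $\EE[\eps^2]$ is a fixed positive constant, and $2a_nb_n=4\log n-\log\log n-\log 4\pi=\cO(\log n)$, this forces $T\to+\i$ in probability, which completes the argument.

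The step I expect to be the main obstacle is the uniform radial approximation above: controlling the $n$ Gaussian-chaos fluctuations of $\|\Sigma_*^{1/2}Z_i\|_2$ and the re-centering error $\|\oX-\mu\|_2$ simultaneously, and verifying that under the stated effective-rank condition both are of strictly smaller order than the $\eps$-gap in \eqref{cond_W_gap}. The clean fact that makes the bookkeeping close is precisely that $\rho_1(\Sigma_*^2)\ge\log n$ forces the operator-norm part of the Hanson--Wright bound to be dominated by its Frobenius part, which is exactly the scale on the right-hand side of \eqref{cond_W_gap}. As in \cref{thm_loc_mix_sG,thm_cov_mix_sG}, I would not expect this route to yield a rate of power convergence, since the $Y_i$ of \cref{rem_thm_range} are no longer sums of independent centered variables under this alternative and the Gaussian-approximation machinery of \cref{sec_theory_null} is unavailable.
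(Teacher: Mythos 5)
Your overall strategy — reduce to $T\to+\infty$ in probability, lower-bound $R_{(n)}-R_{(1)}$ via per-sample concentration of the radii conditionally on the scales, use $\rho_1(\Sigma_*^2)\ge\log n$ to absorb the operator-norm part of the Hanson--Wright bound into the Frobenius part, and finish with \cref{prop_Delta_Alternatives} and \eqref{cond_W_gap} — is the same as the paper's, and most of the bookkeeping is right. But there is a genuine gap in how you handle the sample mean. The crude bound $\bigl|R_i-\eps_i\|\Sigma_*^{1/2}Z_i\|_2\bigr|\le\|\oX-\mu\|_2$ leaves a residual $\cO_\PP(\sqrt{\tr(\Sigma)/n})$ in your lower bound for the range that does not cancel between $R_{(n)}$ and $R_{(1)}$; after normalization by $\Delta^{1/2}$ it contributes a term of order $\sqrt{\rho_2(\Sigma_*)/n}$, which the hypotheses do not control. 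You dismiss it by asserting that non-degeneracy of $F_\eps$ forces $\eps_{(n)}-\eps_{(1)}$ to be bounded below by a positive constant, but the theorem is explicitly meant to cover sequences of alternatives in which $F_\eps$ depends on $n$ and approaches a Dirac mass (see the $\mathrm{Unif}(\sigma_0,\sigma_0+\delta_n)$ example following the statement), where that assertion fails. Concretely, take $\Sigma_*=\bI_d$ with $d=n^3$ and $F_\eps=\mathrm{Unif}(1,1+n^{-1})$: condition \eqref{cond_W_gap} holds since $n^{-1}=\omega(\sqrt{\log n/n^3})$, your signal term is $\sqrt{\rho_2}\,(\eps_{(n)}-\eps_{(1)})\asymp n^{1/2}$, yet your centering residual is $\sqrt{\rho_2/n}=n$, so your lower bound is vacuous while the theorem still asserts consistency.

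The fix is to treat the centering exactly rather than by the triangle inequality, which is what the paper does: condition on $\eps_*=(\eps_1,\dots,\eps_n)$ and use that $X_i-\oX\mid\eps_*\sim\cN_d(0,\nu_i\Lambda)$ with $\nu_i=\tfrac{(n-1)^2}{n^2}\eps_i^2+\tfrac1{n^2}\sum_{j\ne i}\eps_j^2$ as in \eqref{def_nu}. The sample-mean effect is then folded into the per-sample scale $\nu_i$, whose common component cancels in the pairwise difference $\nu_i-\nu_j=\tfrac{n-2}{n}(\eps_i-\eps_j)(\eps_i+\eps_j)$, so no $\sqrt{\tr(\Sigma)/n}$ residual ever appears; the only fluctuation left is the conditional Hanson--Wright term $\nu_i\sqrt{\tr(\Sigma_*^2)\log n}$, which is exactly the scale matched by \eqref{cond_W_gap}. (Equivalently, one could expand $R_i^2=\eps_i^2\|\Sigma_*^{1/2}Z_i\|^2-2\eps_iZ_i^\T\Sigma_*^{1/2}(\oX-\mu)+\|\oX-\mu\|^2$ and note that the last term is common to all $i$ and the cross term is of much smaller order than $\|\oX-\mu\|_2\cdot R_i$.) With that repair, the rest of your argument — the role of $\rho_1(\Sigma_*^2)\ge\log n$, the ratio-consistency of $\wh\Delta$, the use of $\eps_{(n)}^2\ge n^{-1}\sum_i\eps_i^2\to\EE[\eps^2]$, and the final comparison against $2a_nb_n=\cO(\log n)$ — goes through as in the paper.
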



Condition \eqref{cond_W_gap} directly parallels the signal-to-noise ratio constraints of the finite-mixture alternatives considered in \eqref{cond_mean_sep_sG} and \eqref{cond_cov_sep_sG}, revealing an analogous blessing-of-dimensionality effect. Under mild conditions on the order of growth of $\eps_{(n)}$ such as $\eps_{(n)} = o_\PP(\sqrt{\rho_2(\Sigma_*)/\log n})$, \eqref{cond_W_gap} allows the distribution of the mixing scale random variable $\eps_i$ to approach a Dirac measure, thereby permitting the distribution of $X$ to converge to that of the null model. For example, consider the variance-inflation continuous scale-mixture alternative where we take $F_\eps$ to be $\text{Unif}(\sigma_0, \sigma_0 + \delta_n)$ in \cref{elliptical_stochastic_rep} for some $\sigma_0 > 0$, a positive sequence $\delta_n = o(1)$, and $\Sigma_*$ satisfying $\tr(\Sigma_*^{k}) \asymp d$ for $k \in [3]$ so that $\rho_2(\Sigma_*) \asymp d$.
Noting that  $\eps_{(n)} = \cO(1)$ with probability one and the fact that $(\eps_{(n)} -  \eps_{(1)})$ has the same distribution as $\delta_n (\bar \eps_{(n)} -  \bar \eps_{(1)})$, where $\bar \eps_1,\ldots, \bar \eps_n$ are \text{i.i.d.} from $\text{Unif}(0, 1)$, condition \eqref{cond_W_gap} simplifies to
$
\delta_n =  \omega( \sqrt{\log(n) / d} )$,
which, as $d/\log(n)$ increases, allows the distribution of $X$ to converge to a multivariate Gaussian distribution more rapidly.

\subsubsection{Consistency for Leptokurtic Alternatives}\label{sec_power_lep}

Having developed consistency theory pertaining to alternative classes constituting departures from $\cH_0$ which are essentially multivariate in nature, we now consider the asymptotic power of our test for alternatives whose discrepancy with the null model arise at the univariate level. In particular, we consider the class of univariate-based departures associated with excess kurtosis marginals. 

\begin{model}[Leptokurtic Alternatives] \label{kurtosis_stochastic_rep}
    Suppose there exists a vector $\mu \in \RR^d$, an orthogonal matrix $U \in \bbO^{d}$, and a diagonal matrix $\Lambda^{1/2} \in \RR^{d \times d}$ with non-negative diagonal entries such that 
    $
    X_i = \mu + U \Lambda^{1/2} Z_i,
    $
    where $\{Z_i \}_{i \in [n]}$ are \text{i.i.d.} random vectors consisting of independent sub-Gaussian entries with bounded sub-Gaussian constants. Furthermore, suppose these entries satisfy   $\EE[Z_{ij}] = 0$, $\EE[Z^2_{ij}] = 1$, and $\EE[Z^4_{ij}] = 3 + \delta_n$, for some deterministic sequence $\delta_n > 0$.  
\end{model}

 The quantity $\delta_n$ in \cref{kurtosis_stochastic_rep} is known as the {\em excess kurtosis} of each $Z_{ij}$. The multivariate normal distribution is a limiting case of \cref{kurtosis_stochastic_rep} when $\delta_n \to 0$. The following result establishes the consistency of our testing procedure for \cref{kurtosis_stochastic_rep}. 

\begin{theorem}[Leptokurtic Alternatives]\label{thm_kurtosis}
Under \cref{kurtosis_stochastic_rep}, assume $\rho_1(\Sigma^2) = \omega( \log^5(n d))$ and 
   $\delta_n  = \omega(1/ \log n)$ as $n \to \i$.
Then, for any $\alpha \in (0,1)$,
$
    \lim_{n\to \i}\PP \left(\cH_0 \text{ is rejected}\right) = 1.
$
\end{theorem}


    In contrast to the consistency theory developed for alternatives of the preceding sections, the condition on the signal $\delta_n$ 
     does not depend on an effective rank of $\Sigma$. As elucidated by the proofs of both \cref{thm_range_limit,thm_kurtosis}, this is a consequence of the fact that the alternatives of \cref{kurtosis_stochastic_rep} possess critical dependence and moment properties which are nearly identical to that of $\cH_0$. This yields analogous concentration properties in the asymptotic distribution of the radii as well as the rate of convergence of $\wh \Delta / \Delta$ to unity (see \cref{prop_Delta_Alternatives}). However, due to the presence of non-zero excess kurtosis $\delta_n$, a normalization discrepancy arises from using $\wh \Delta$ in place of 
    $
    \Delta_2 = (2 + \delta_n) \tr(\Sigma^2) / \tr(\Sigma),
    $ 
    which is the correct dispersion index parameter \eqref{def_Delta} under \cref{kurtosis_stochastic_rep}, as opposed to $\Delta$ in \eqref{def_Delta_null}. Since the normalization discrepancy $\Delta_2 / \Delta = (2 + \delta_n) / 2$ is relative in nature and is only compared with the normalizing sequences $a_n$ and $b_n$ in \eqref{def_T_range}, it exhibits a dimension-free effect in perturbing the limiting distribution of $T$ under the null. 

Given the current absence of consistency theory for testing $\cH_0$ in the high-dimensional setting, the suitability of $\delta_n=\omega(1/\log n)$ can be appreciated by considering the power of conventional \textit{nonparametric} procedures for the two-sample testing problem in a high-dimensional setting, under the invariance structure \eqref{invariance_cond} discussed in \cref{rem_invariance}. Despite aiming to detect general distributional differences, such procedures often have trivial power for detecting discrepancies based on kurtosis, even when such marginal differences are non-vanishing \cite{Zhu, Sarkar}. In contrast, \cref{thm_kurtosis} establishes that our test does not suffer from an analogous issue for the class of univariate kurtosis-based departures from $\cH_0$. This is corroborated by simulation analysis in \cref{sec_sims_moderate}, where our test has higher power for alternatives generated via independent $\chi^2_{\nu}$ random variables than the recent high-dimensional normality test of \cite{ChenXia}. Finally, the condition $\delta_n  = \omega( 1 / \log n)$ in \cref{thm_kurtosis} can be relaxed to $\delta_n  = \omega 
( n^{-1/2} )$ under a stronger condition on the effective rank; see \cref{rem_general_class} and \cref{app_sec_IQR} for further detail.

\begin{remark}[Extensions of \cref{thm_kurtosis}] \label{rem_kurtosis} 
    Consistency for \textit{platykurtic} alternatives, with $\EE[Z_{ij}^4] = 3 - \delta_n \ $ for a sequence $\delta_n \in (0, 2)$, can be similarly established. 
    The sub-Gaussian assumption in \cref{kurtosis_stochastic_rep} can be relaxed to one of bounded eighth-moments, but may require a stronger regularity condition on the effective rank. \cref{thm_kurtosis} is stated under common excess kurtosis $\delta_n$ for simplicity but can be generalized so as to allow distinct excess kurtosis parameters $\EE Z_{ij}^4 = 3 + \delta_{n,j}$, $j \in [d]$.
\end{remark}

\subsubsection{Ratio Consistency of the Dispersion Index Estimator under Alternatives}

Proof of the consistency results in \cref{thm_loc_mix_sG,thm_cov_mix_sG,thm_ellip,thm_kurtosis} depends on the ratio-consistency of our estimator $\wh \Delta$ as specified in \eqref{def_Delta_hat} for the null dispersion index $\Delta$ given by \eqref{def_Delta_null}. The following proposition formally establishes the rate of convergence of $\wh\Delta/\Delta$ to unity under the alternatives specified by Models \ref{model_subG}, \ref{elliptical_stochastic_rep}, and \ref{kurtosis_stochastic_rep}. We note that it can be generalized to incorporate a broader class of alternative models; see \cref{Bai-Sarandasa}, \cref{finiteMixture_stochastic_rep}, and \cref{app_Delta_Alternative_proof}, for example. 

\begin{proposition}\label{prop_Delta_Alternatives}
    Under either \cref{model_subG}, \cref{elliptical_stochastic_rep}, or \cref{kurtosis_stochastic_rep}, one has 
    ${\wh \Delta / \Delta} = 1 + \cO_\PP(n^{-1/2}).
    $
\end{proposition}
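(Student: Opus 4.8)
The plan is to factor the ratio into two separately controlled pieces,
\[
    {\wh \Delta \over \Delta} \;=\; \frac{\widehat{\tr(\Sigma^2)}\,/\,\tr(\Sigma^2)}{\tr(\wh \Sigma_{\text{D}})\,/\,\tr(\Sigma)},
\]
where $\Sigma$ is the common \emph{unconditional} covariance of $X_i$ under whichever of \cref{model_subG}, \cref{elliptical_stochastic_rep}, or \cref{kurtosis_stochastic_rep} is assumed, and to show that the numerator and denominator on the right each equal $1 + \cO_\PP(n^{-1/2})$; the claim then follows from the identity $\tfrac{1+u}{1+v} - 1 = \tfrac{u-v}{1+v}$ with $u, v = \cO_\PP(n^{-1/2})$. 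Two reductions apply uniformly across the three models. First, since $\wh \Sigma \in \RR^{d \times d}$ and its dual Gramian $\wh \Sigma_{\text{G}} \in \RR^{n \times n}$ share the same nonzero eigenvalues, one has $\tr(\wh \Sigma_{\text{D}}) = \tr(\wh \Sigma) = (n-1)^{-1}\sum_{i=1}^n \|X_i - \oX\|_2^2$ and $\tr(\wh \Sigma_{\text{D}}^2) = \tr(\wh \Sigma^2)$ regardless of the sign of $n - d$, so $\widehat{\tr(\Sigma^2)}$ in \eqref{def_tr_Sigma_hat} may be analyzed in a single representation; second, $\wh \Delta$ is translation invariant, so I would take $\mu = 0$ without loss of generality.

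For the denominator, $\tr(\wh \Sigma)$ is exactly unbiased for $\tr(\Sigma)$, and writing it as an average of $n$ i.i.d.\ terms plus a negligible $\cO(n^{-1})$ centering correction gives $\Var(\tr(\wh \Sigma)) \lesssim n^{-1} \Var(\|X\|_2^2) = n^{-1} \Delta_2 \tr(\Sigma)$, with $\Delta_2$ the dispersion index \eqref{def_Delta}. A short moment computation shows $\Var(\|X\|_2^2) \lesssim \tr^2(\Sigma)$ under each model: under \cref{model_subG} using that the between-component mean dispersion $\Sigma_\mu := \sum_{k < m} \pi_k \pi_m (\mu_k - \mu_m)(\mu_k - \mu_m)^\T$ and the component covariances satisfy $\tr(\Sigma_\mu) \vee \tr(\Sigma_k) \lesssim \tr(\Sigma)$ and $\|\Sigma_k\|_\op \lesssim \|\Sigma\|_\op$; under \cref{elliptical_stochastic_rep} using only $\EE[\eps^4] < \infty$; and under \cref{kurtosis_stochastic_rep} using $\Var(\|X\|_2^2) = (2 + \delta_n)\tr(\Sigma^2)$ with $\delta_n$ bounded. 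Hence $\Var(\tr(\wh \Sigma))/\tr^2(\Sigma) = \cO(n^{-1})$, and Chebyshev's inequality yields $\tr(\wh \Sigma)/\tr(\Sigma) = 1 + \cO_\PP(n^{-1/2})$.

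The numerator is the substantive part. The estimator $\widehat{\tr(\Sigma^2)}$ in \eqref{def_tr_Sigma_hat} is, algebraically, the U-statistic estimator of \cite{Chen2010}, for which $\EE[\widehat{\tr(\Sigma^2)}] = \tr(\Sigma^2)$ holds for any i.i.d.\ sample with $\EE\|X\|_2^4 < \infty$ and any mean: the specific combination of $\tr(\wh \Sigma^2)$, $\tr^2(\wh \Sigma)$ and $\sum_i R_i^4$ is exactly the one that cancels both the mean contamination and the fourth-moment (kurtosis) terms. Thus unbiasedness holds verbatim under all three models with $\Sigma$ the unconditional covariance, which is precisely why $\wh \Delta$ targets $\Delta$ rather than the model-specific dispersion index (e.g.\ $\Delta_{2,\delta_n} = (2 + \delta_n)\tr(\Sigma^2)/\tr(\Sigma)$ under \cref{kurtosis_stochastic_rep}). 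It then remains to show $\Var(\widehat{\tr(\Sigma^2)}) \lesssim \tr^2(\Sigma^2)/n$. By the Hoeffding decomposition of a bounded-degree U-statistic, $\Var(\widehat{\tr(\Sigma^2)}) \lesssim \zeta_1/n + \zeta_2/n^2 + \zeta_3/n^3 + \zeta_4/n^4$, where $\zeta_c$ is the variance of the degree-$c$ projection kernel; since each $\zeta_c$ with $c \ge 2$ is trivially $\cO(\tr^2(\Sigma^2))$, the whole bound reduces to proving $\zeta_1 \lesssim \tr^2(\Sigma^2)$. Up to constants (and with $\mu = 0$), $\zeta_1 \lesssim \EE[(X^\T \Sigma X)^2] + (\tr(\Sigma^2)/\tr(\Sigma))^2 \EE\|X\|_2^4$, and I would bound each term model by model: under \cref{kurtosis_stochastic_rep} and \cref{elliptical_stochastic_rep} directly, noting that in these product-type quadratic forms each sample enters to total degree at most four, so only four moments of $X$ (equivalently $\EE[\eps^4] < \infty$) are needed; and under \cref{model_subG} additionally using that $\rank(\Sigma_\mu) \le K - 1$ is bounded, whence $\tr^2(\Sigma_\mu) \lesssim \tr(\Sigma_\mu^2) \le \tr(\Sigma^2)$ and the extra contributions from the between-component means remain $\cO(\tr^2(\Sigma^2))$. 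Chebyshev's inequality then gives $\widehat{\tr(\Sigma^2)}/\tr(\Sigma^2) = 1 + \cO_\PP(n^{-1/2})$, which combined with the denominator bound completes the argument. Alternatively, the requisite variance bound can be obtained by adapting the analysis of \cite{Chen2010, SongChen2014}, conditioning on the latent labels under \cref{model_subG} and on the mixing scales under \cref{elliptical_stochastic_rep}.

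The hard part will be the uniform control of $\zeta_1$, equivalently the leading-order variance of $\widehat{\tr(\Sigma^2)}$, across the three heterogeneous models: under \cref{model_subG} a crude bound would allow the between-component mean separation to inflate $\Var(\widehat{\tr(\Sigma^2)})$ beyond $\tr^2(\Sigma^2)/n$, and this must be precluded via the constant-rank structure of $\Sigma_\mu$; while under \cref{elliptical_stochastic_rep}, where only four moments of the mixing scale are available, one must verify that no sample is forced to appear to degree exceeding four in the polynomial moments of $X$ entering $\Var(\widehat{\tr(\Sigma^2)})$.
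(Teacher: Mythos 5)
Your proposal is correct in substance and follows the same skeleton as the paper's proof: exact unbiasedness of both $\widehat{\tr(\Sigma^2)}$ and $\tr(\wh\Sigma_{\text{D}})$ under arbitrary i.i.d.\ sampling with four moments, Chebyshev's inequality, and variance bounds of order $\tr^2(\Sigma^2)/n$ and $\tr^2(\Sigma)/n$ respectively, followed by a Taylor/ratio expansion. Where you diverge is in how the variance of $\widehat{\tr(\Sigma^2)}$ is controlled. The paper does not use a Hoeffding decomposition: for \cref{model_subG} it conditions on the latent labels $C_*$, applies the law of total variance, and bounds $\Var(\EE(\cdot\mid C_*))$ and $\EE(\Var(\cdot\mid C_*))$ term by term across the three sums in \eqref{decomp_hat_tr_sq}, exploiting the reduction from $\cO(n^{2m})$ to $\cO(n^{2m-1})$ non-null covariance summands; for \cref{elliptical_stochastic_rep} it passes to the unconditional isotropic representation $X_i \stackrel{\rm d}{=}\sqrt{\EE[\eps^2]}\,\Lambda_*^{1/2}S_i$ and invokes the exact variance formula of \cite{HimenoYamada} together with rotational invariance of $S_i$; for \cref{kurtosis_stochastic_rep} it cites the closed-form variance expressions of \cite{Chen2010,HimenoYamada} directly. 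Your projection-based route is a legitimate and arguably cleaner packaging of the same computation (the $X_i$ are unconditionally i.i.d.\ under all three models, so the U-statistic machinery applies), but be aware that the statement that $\zeta_c$ for $c\ge 2$ is ``trivially'' $\cO(\tr^2(\Sigma^2))$ conceals most of the actual labour: those terms are exactly the fourth-moment inner-product bounds $\EE[(X_i^\T X_j)^4]\lesssim \tr^2(\Sigma^2)$ and their relatives, which the paper describes as the non-trivial, technically involved core of the argument and proves over several pages (via \cite{ChenQin}-style expansions and, for the mixtures, bounds such as $\max_k\|\mu_k-\mu_1\|_2^8 \lesssim \tr^2(\Sigma^2)$ deduced from the decomposition of $\tr(\Sigma^2)$ rather than from your constant-rank argument for $\Sigma_\mu$ — both work). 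Your two flagged danger points — preventing the between-component mean separation from inflating the variance, and verifying that each sample enters the relevant polynomial moments to degree at most four so that $\EE[\eps^4]<\infty$ suffices under \cref{elliptical_stochastic_rep} — are precisely the issues the paper's computations resolve, so the proposal identifies the right obstacles even though it defers their resolution.
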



Since \cref{prop_Delta_Alternatives} makes no assumptions regarding the effective ranks of the covariance-type matrices under the alternatives, the rate is of order $\cO_\PP(n^{-1/2})$, coinciding with the worst-case scenario in \cref{prop_Delta_Null} under the null. For establishing consistency, this rate is sufficient, but if stronger conditions on the effective ranks are imposed, the rate in \cref{prop_Delta_Alternatives} can be improved.

\section{Simulation Studies}\label{sec_sims_moderate}

In this section, we compare our proposed test with existing procedures in settings where $n$ is proportional to or larger than $d$. Since the procedure of \cite{ChenXia} is the primary available method for testing $\cH_0$ in high dimensions, we perform a direct comparison with their simulation results. In particular, we examine the type I error and power of our test for these examples relative to the \textit{Chen-Xia test} of \cite{ChenXia}, as well as the classical tests which \cite{ChenXia} identifies as possessing the best performance in high dimensions; namely, the extended Friedman-Rafsky test \cite{ExtendedFriedman}, the multivariate Shapiro-Wilk test \cite{MultivariateShapiro}, and Fisher's test \cite{ChenXia}. We note that the latter three tests are of modified form as per \cite{ChenXia}, with the sample covariance matrix replaced by a regularized estimator in their respective test statistics. Additional simulation studies for settings with $n \ll d$ are deferred to \cref{sec_sims_high}, where the test proposed by \cite{ChenXia} has limited applicability.

In evaluating the type I error, we set the mean vector to be $\mu = 0_d$ and the covariance matrices as follows:
 (a) $\Sigma_1 = \bI_d$; (b) $\Sigma_2 =  (\rho^{|i - j|})_{i,j \leq d}$ with $\rho = 0.5$; (c)  $\Sigma_3 = (\Sigma^* + \delta ~ \bI_d) / (1 + \delta)$, where $\Sigma^* = (\sigma^*_{ij})_{i, j \in [d]}$, with $\sigma^*_{jj} = 1$ for $j \in [d]$, $\sigma^*_{ij} = \sigma^*_{ji} \sim \text{Unif}[0,1] * \text{Bernoulli}(0.02)$ for $i < j$, and  $\delta = \max \{- \lambda_{\text{min}}(\Sigma^*), 0 \} + 0.05$; (d) $\Sigma_4 =  W W^\T / d  $, where $W \in \mathbb{R}^{d \times d}$ has \text{i.i.d.} standard normal entries. 
The choices of $\Sigma_1$, $\Sigma_2$, $\Sigma_3$ are those considered in \cite{ChenXia}, whereas $\Sigma_4$ is an additional conventional covariance structure.  As in \cite{ChenXia}, we consider $d\in \{20, 100, 300\}$ and $n\in \{100,150\}$ with significance level $\alpha = 0.05$. The results of our test are based on 10,000 replications, whereas those of the other tests are based on 1000 replications due to computational constraints (see \cref{rem_comp}).

\cref{tab_type_I_low_d} reports the averaged type I errors of each method. We find that 
the size of our test is maintained at the appropriate level in all settings. However, both the test of \cite{ChenXia} and the extended Friedman–Rafsky test show substantial type I error inflation under $\Sigma_4$ when $d$ is comparable to $n$, highlighting the importance of the covariance and sample-size conditions $(n \ll \sqrt{d})$ required for the theoretical guarantees in \cite{ChenXia}, as discussed in \cref{preExisting_work}. A similar size-distortion issue for the test of \cite{ChenXia} is also observed in the simulation studies of \cite{Elliptical_GoF} and ours in \cref{sec_sims_high}. 

\begin{table}[ht!]
\centering
\caption{Type I errors of each method for $\Sigma_j$ with $1\le j\le 4$. \textbf{Bold} figures indicate inflation of the type I error beyond the acceptable 0.1 threshold, as stipulated by \cite{ChenXia}.}
\label{tab_type_I_low_d}
\resizebox{\textwidth}{!}{
\renewcommand{\arraystretch}{1}{
    \begin{tabular}
    {|p{0.5 cm}|p{4cm}|p{1.5cm}|p{1.5cm}|p{1.5cm}|p{1.5cm}|p{1.5cm}|p{1.5cm}|}
\hline
\multicolumn{2}{|c|}{} & \multicolumn{3}{c}{$n = 100$} & \multicolumn{3}{|c|}{$n = 150$} \\
\cline{3-8}
\multicolumn{2}{|c|}{} &$d = 20$ &$d = 100$ &$d = 300$ &$d = 20$ &$d = 100$ &$d = 300$\\
\hline
\multirow{5}{*}{$\Sigma_1$} & Our Test  &0.048 & 0.045 & 0.046 &0.046 &0.052 &0.047\\
& Chen-Xia Test &0.043 &0.043 & 0.051 & 0.039 & 0.048 & 0.056 \\
& Friedman-Rafksy Test &0.043 &0.048 & 0.043 &0.048 & 0.054 & 0.037 \\
& Shapiro-Wilk Test & 0.064 & 0.046 & 0.051 & 0.04 & 0.048 & 0.053 \\
& Fisher's Test &0.06 &0.043 &0.044 &0.037 &0.047 &0.051 \\
\hline
\multirow{5}{*}{$\Sigma_2$} & Our Test &0.059 &0.049 & 0.049 &0.063 &0.051 & 0.049 \\
& Chen-Xia Test &0.059 & 0.043 & 0.064 &0.059 & 0.058 & 0.049\\
& Friedman-Rafksy Test &0.043 & 0.051 & 0.048  &0.039 & 0.07 & 0.069\\
& Shapiro-Wilk Test &0.056 &0.062 &\textbf{0.107} &0.054 &0.069 &0.062\\
& Fisher's Test &0.053 &0.066 &\textbf{0.104} &0.044 &0.07 & 0.056 \\
\hline
\multirow{5}{*}{$\Sigma_3$} & Our Test & 0.05 & 0.05 &0.046 &0.046 &0.052 &0.048 \\
& Chen-Xia Test &0.059 & 0.052 & 0.063 &0.048 & 0.053 & 0.049 \\
& Friedman-Rafksy Test &0.054 & 0.048 & \textbf{0.197} &0.048 & 0.04 & \textbf{0.152} \\ 
& Shapiro-Wilk Test & 0.047 & 0.058 &0.059 &0.034 &0.043 &0.073 \\
& Fisher's Test & 0.052 & 0.053 & 0.053 & 0.038 & 0.044 & 0.061 \\
\hline
\multirow{5}{*}{$\Sigma_4$} & Our Test & 0.056 & 0.048 & 0.049 & 0.064 & 0.047 & 0.05 \\  
& Chen-Xia Test & 0.072 & \textbf{0.417} & \textbf{0.586} & 0.05 & \textbf{0.115} & \textbf{0.788} \\  
& Friedman-Rafksy Test & 0.054 & \textbf{0.885} & \textbf{0.996} & 0.042 & 0.057 & \textbf{0.999} \\
& Shapiro-Wilk Test & 0.056 & 0.062 & 0.052 & 0.048 & 0.07 & 0.078 \\
& Fisher's Test & 0.05 & 0.062 & 0.044 & 0.044 & 0.066 & 0.068\\
\hline
    \end{tabular}
}}
\end{table}

To compare the power of different methods, we adopt the same alternatives considered in \cite{ChenXia} as presented in Tables \ref{tab_power_low_d_mix} -- \ref{tab_power_low_d_Gauss-t}, along with additional location-mixture alternatives in Table \ref{tab_pow_balanced_LocMix}.
Specifically, let $\Sigma_k$ with $k\in \{1,2,3\}$ be the covariance matrices introduced above and consider $d\in \{20, 100, 300\}$ and $n\in \{100, 150\}$. 
\cref{tab_power_low_d_mix} summarizes the empirical power of each method under the two-component Gaussian scale-mixture alternatives 
$
    0.5 \cN_d(0_d, (1 + a_d) \Sigma_k) + 0.5 \cN_d(0_d, (1 - a_d) \Sigma_k)
$
with $ a_d :=  1.8/\sqrt{d}$ and $k\in \{1,2,3\}$. In \cref{tab_power_low_d_t} we present the empirical power of each method under multivariate \textit{t}-distribution alternatives $t_{d}(0_d,  \Sigma_k, \nu_d)$ with $\nu_d := d / 2$ degrees of freedom (\text{d.o.f.}) and $k\in \{1,2,3\}$.  The third type of alternative  is given by $X = \Sigma_3^{1/2} (Y - \nu 1_d) / \sqrt{2 \nu}$ with $Y_j$ for $j\in [d]$ being \text{i.i.d.} from $\chi_{\nu}^2$, which becomes closer to $\cH_0$ as $\nu$ increases.   \cref{tab_power_low_d_chi} compares the power of our test with that of \cite{ChenXia} and the extended Friedman-Rafksy test  for $n = d = 100$ and $\nu \in \{3,5,10,20\}$. 
\cref{tab_power_low_d_Gauss-t} presents results for mixed marginal alternatives, where $(1 - \pi_t)d$ dimensions are generated from a standard multivariate normal distribution and $\pi_t d$ dimensions are generated from a multivariate-\textit{t} distribution $t_{\pi_t d}(0_{\pi_t d}, \Sigma_1, 25)$ with 25 d.o.f. for $\pi_t \in \{0.5, 0.4, 0.3, 0.2, 0.1\}$ and $n = d = 100$. 
Finally, Table \ref{tab_pow_balanced_LocMix} compares the power of our test to that of \cite{ChenXia} for two-component location-mixture alternatives with $\mu_1 = 0_d$, $\mu_2 = (2.15d^{-1/4}) 1_d$, mixing weights  $(\pi_1, \pi_2) = (0.5, 0.5)$, and covariance matrix $\Sigma_k$ for $k \in [3]$. 
The results in Tables \ref{tab_power_low_d_mix} -- \ref{tab_pow_balanced_LocMix} indicate superior power of our proposed test compared to the existing methods, while also maintaining better overall control of the type I error, as per \cref{tab_type_I_low_d}. 


\begin{table}[ht!]
\centering
\caption{Power comparison under the two-component Gaussian scale-mixture alternatives of \cite{ChenXia}.}
\label{tab_power_low_d_mix}
\resizebox{\textwidth}{!}{
\renewcommand{\arraystretch}{1}{
    \begin{tabular}{ |p{0.5 cm}|p{4cm}|p{1.5cm}|p{1.5cm}|p{1.5cm}|p{1.5cm}|p{1.5cm}|p{1.5cm}|    }
\hline
\multicolumn{2}{|c|}{} & \multicolumn{3}{c}{$n = 100$} & \multicolumn{3}{|c|}{$n = 150$} \\
\cline{3-8}
\multicolumn{2}{|c|}{} &$d = 20$ &$d = 100$ &$d = 300$ &$d = 20$ &$d = 100$ &$d = 300$\\
\hline
\multirow{5}{*}{$\Sigma_1$} & Our Test  & 0.9998 & 0.9999 & 1 &1  &1  &1  \\
& Chen-Xia Test &0.458 & 0.817 & 0.816 & 0.663 & 0.958 & 0.948 \\
& Friedman-Rafksy Test & 0.066 & 0.039 & 0.053 & 0.074 & 0.037 & 0.057 \\ 
& Shapiro-Wilk Test &0.559 &0.125 &0.08 &0.687 &0.171 &0.105 \\
& Fisher's Test &0.56 &0.127 &0.08 &0.698 &0.168 &0.103\\
\hline
\multirow{5}{*}{$\Sigma_2$} & Our Test &0.957 & 0.962 & 0.962 & 0.993  & 0.996  & 0.995  \\
& Chen-Xia Test & 0.153 & 0.619 & 0.719 & 0.267 & 0.672 & 0.819\\
& Friedman-Rafksy Test & 0.063 & 0.056 & 0.049 & 0.104 & 0.064 & 0.065\\
& Shapiro-Wilk Test &0.463 &0.177 &0.192 &0.633 &0.197 &0.127 \\
& Fisher's Test &0.473 &0.183 &0.192 &0.638 &0.198 &0.123 \\
\hline
\multirow{5}{*}{$\Sigma_3$} & Our Test & 0.999 &0.998 &0.999 &1 &1 &1  \\
& Chen-Xia Test & 0.45 & 0.754 & 0.869 & 0.64 & 0.908 & 0.947 \\
& Friedman-Rafksy Test &0.065 & 0.045 & 0.217 & 0.075 & 0.038 & 0.149 \\
& Shapiro-Wilk Test &0.532 &0.19 &0.105 &0.701 &0.189 &0.139 \\
& Fisher's Test &0.55 &0.193 &0.098 &0.705 &0.185 &0.132 \\
\hline
    \end{tabular}
}}
\end{table}

\begin{table}[ht!]
\centering
\caption{Power comparison under the multivariate-\textit{t} alternatives of \cite{ChenXia}. }
\label{tab_power_low_d_t}
\resizebox{\textwidth}{!}{
\renewcommand{\arraystretch}{1}{
    \begin{tabular}{ |p{0.5 cm}|p{4cm}|p{1.5cm}|p{1.5cm}|p{1.5cm}|p{1.5cm}|p{1.5cm}|p{1.5cm}|    }
\hline
\multicolumn{2}{|c|}{} & \multicolumn{3}{c}{$n = 100$} & \multicolumn{3}{|c|}{$n = 150$} \\
\cline{3-8}
\multicolumn{2}{|c|}{} &$d = 20$ &$d = 100$ &$d = 300$ &$d = 20$ &$d = 100$ &$d = 300$\\
\hline
\multirow{5}{*}{$\Sigma_1$} & Our Test  & 0.9998 & 0.9997 &0.9997 &1 &1 & 1 \\
& Chen-Xia Test &0.585 &0.913 &0.93 &0.799 &0.985 &0.992 \\
& Friedman-Rafksy Test &0.067 &0.037 &0.064 &0.102 &0.039 &0.047 \\
& Shapiro-Wilk Test &0.863 &0.212 &0.09 &0.965 &0.29 &0.123 \\
& Fisher's Test &0.862 &0.208 &0.087 &0.967 &0.301 &0.121 \\
\hline
\multirow{5}{*}{$\Sigma_2$} & Our Test & 0.972 & 0.972  &0.971 &0.995 & 0.996  &0.997 \\
& Chen-Xia Test &0.202 &0.713 &0.86 & 0.322 & 0.864 &0.942\\
& Friedman-Rafksy Test &0.118 &0.054 &0.052 &0.15 &0.046 &0.06\\
& Shapiro-Wilk Test &0.754 &0.266 &0.213 &0.923 &0.309 &0.171 \\
& Fisher's Test &0.758 &0.272 &0.209 &0.926 &0.312 &0.16\\
\hline
\multirow{5}{*}{$\Sigma_3$} & Our Test &0.999  &0.998 &0.999 &1 &1 &1 \\
& Chen-Xia Test &0.565 &0.879 &0.949 &0.741 &0.979 &0.982
 \\
& Friedman-Rafksy Test &0.067 &0.048 &0.184 &0.11 &0.035 &0.112 \\
& Shapiro-Wilk Test &0.849 &0.288 &0.106 &0.966 &0.303 &0.168 \\
& Fisher's Test &0.856 &0.285 &0.108 &0.965 &0.31 &0.173\\
\hline
    \end{tabular}
}}
\end{table}

\begin{table}[H]
\centering
 \caption{Power comparison under the standardized $\chi_\nu^2$ coordinates alternatives of \cite{ChenXia}.}
\label{tab_power_low_d_chi}
\renewcommand{\arraystretch}{1}{
    \begin{tabular}{|l|c|c|c|c|}
\hline
\multicolumn{1}{|c|}{} &$\nu = 3$ &$\nu = 5$ &$\nu = 10$ &$\nu = 20$ \\
\hline
Our Test &0.995 &0.899 &0.471 & 0.191 \\
Chen-Xia Test &0.451 &0.252 &0.106 &0.065\\
Friedman-Rafksy Test &0.094 &0.068 &0.065 &0.051\\
\hline
\end{tabular}
}\vspace{-5mm}
\end{table}


\begin{table}[H]
\centering
 \caption{Power comparison for alternatives with a $\pi_t$ proportion of non-Gaussian dimensions \cite{ChenXia}.}
\label{tab_power_low_d_Gauss-t}
\renewcommand{\arraystretch}{1}{
    \begin{tabular}{|l|c|c|c|c|c|}
\hline
\multicolumn{1}{|c|}{} &$\pi_t = 0.5$ &$\pi_t = 0.4$ &$\pi_t = 0.3$ &$\pi_t = 0.2$ & $\pi_t = 0.1$ \\
\hline
Our Test &0.964 &0.808 &0.491 & 0.191 & 0.065 \\
Chen-Xia Test &0.481 &0.259 &0.129 &0.07 & 0.042\\
Friedman-Rafksy Test &0.037 &0.052 &0.048 &0.05 & 0.041\\
\hline
\end{tabular}
}
\end{table}

\begin{table}[H]
\centering
\caption{Power comparison under the Gaussian location-mixture alternatives.}
\label{tab_pow_balanced_LocMix}
\resizebox{\textwidth}{!}{
\renewcommand{\arraystretch}{1}{
    \begin{tabular}
    {|p{0.6 cm}|p{3.5cm}|p{1.5cm}|p{1.5cm}|p{1.5cm}|p{1.5cm}|p{1.5cm}|p{1.5cm}|}
\hline
\multicolumn{2}{|c|}{} & \multicolumn{3}{c}{$n = 100$} & \multicolumn{3}{|c|}{$n = 150$} \\
\cline{3-8}
\multicolumn{2}{|c|}{} &$d = 20$ &$d = 100$ &$d = 300$ &$d = 20$ &$d = 100$ &$d = 300$\\
\hline
\multirow{2}{*}{$\Sigma_1$} & Our Test  & 0.828 & 0.916 & 0.945 &0.948  &0.983 & 0.992 \\
& Chen-Xia Test & 0.063 & 0.068 & 0.055 & 0.069  & 0.155 & 0.05 \\
\hline
\multirow{2}{*}{$\Sigma_2$} & Our Test  & 0.532 & 0.569 & 0.627 & 0.704  & 0.732 & 0.787 \\
& Chen-Xia Test & 0.048 & 0.103 & 0.068 & 0.049  & 0.124 & 0.084 \\
\hline
\multirow{2}{*}{$\Sigma_3$} & Our Test  & 0.784 & 0.809 & 0.842 & 0.927  & 0.94 & 0.951 \\
& Chen-Xia Test & 0.046 & 0.091 & 0.062 & 0.065  & 0.117 & 0.043 \\
\hline
    \end{tabular}
}}
\end{table}

  \section{Real Data Analysis}\label{sec_real_data}

As discussed in \cref{sec_intro}, \cref{rem_invariance}, and \cref{sec_theory_alter}, violations of the normality assumption $\cH_0$ can have severe consequences for conventional methodologies used in high-dimensional data analysis. In this section, we present a genomic application to demonstrate the proposed test's capacity to detect critical departures from the assumed multivariate normal model. The formal test of $\cH_0$ is complemented by associated graphical diagnostics, which we present to illustrate their use in aiding the identification of the potential source of the detected departure. In \cref{app_real_data_lungCancer} we present a second gene expression application, where we compare the analysis based on our methodology to the findings reported by \cite{ChenXia} based on their test of $\cH_0$.

Gene co-expression network analysis is an active area of research and application in modern biology and frequently involves data where $n \ll d$ \cite{SILGGM}. The network structure is characterized by conditional independence relationships among genes. A principal approach for estimating this structure is based on Gaussian Graphical Models (GGMs), which leverage the relationship between the precision matrix $\Sigma^{-1}$ and conditional independence under $\cH_0$.
However, as discussed before, the performance of GGMs can be highly sensitive to violations of their model assumptions, which typically include $\cH_0$ along with regularity conditions on $\Sigma$ such as \eqref{cond_ident} \cite{XiaCai}.  
 We thus demonstrate the utility of our proposed methodology in performing diagnostic analysis for gene co-expression network inference based on GGMs.
 
 As an illustrative example, we consider the analysis in \cite{SILGGM}, which implements several state-of-the-art GGM methodologies for high-dimensional data and applies them for estimating large-scale gene co-expression networks. In particular, \cite{SILGGM} analyzes a study on the genetic basis of childhood asthma, involving $n = 258$ patients and $d = 1953$ genes. Their goal is to identify which genes are connected to the \textit{CLK1} gene, a hub gene known to be associated with asthma. To this end, the authors use the estimated global network structure to extract a biologically meaningful local sub-network centered around \textit{CLK1}. The resulting local network, including the top 20 most significant connections to \textit{CLK1}, is shown in \cref{fig:sub_GeneNetwork}.

However, our test rejects $\cH_0$ at the $0.05$ significance level. To ascertain potential sources of departure from the assumed model, we examine the graphical plots pertaining to the empirical distribution of the radii $\{ R_i \}_{i \in [n]}$ displayed in \cref{fig:asthma_diagnostic_plots}, where the ordered \textit{standardized radii}
\[
    V_i := 2 \wh\Delta^{-1/2}\bigl(R_i - \tr^{1/2}(\wh \Sigma_{\text{D}})\bigr),\qquad \text{for each $ i \in [n]$},
\]
are also plotted against the corresponding standard normal quantiles. This plot is informally justified by the marginal convergence of the $2 \Delta^{-1/2} ( R_i - \tr^{1/2}(\Sigma) )$ variates, for $i\in [n]$, to the standard normal distribution, their approximate independence, the ratio-consistency of $\wh\Delta$, and the consistency of $\tr^{1/2}(\wh \Sigma_{\text{D}})$, under $\cH_0$ and standard conditions on $\Sigma$ (see \cref{rem_cond_thm1}, for example). The asymptotic normality and approximate independence can be deduced from our proof of \cref{thm_range_limit} (see, also, \cref{thm_IQR} and its proof), while the consistency properties are a consequence of the proof of \cref{prop_Delta_Null}. Thus, analogous to the classical use of empirical c.d.f. and quantile plots for the scaled radii \eqref{scaled_radii} \cite{Small, Brenner}, we use the graphical diagnostics of \cref{fig:asthma_diagnostic_plots} 
as a supplementary tool to assess potential sources of departures from $\cH_0$ detected by our test.

\begin{figure}[!htb]
\centering
\vspace{-2mm}
\begin{minipage}{0.325\textwidth}
    \centering
    \includegraphics[width=\linewidth]{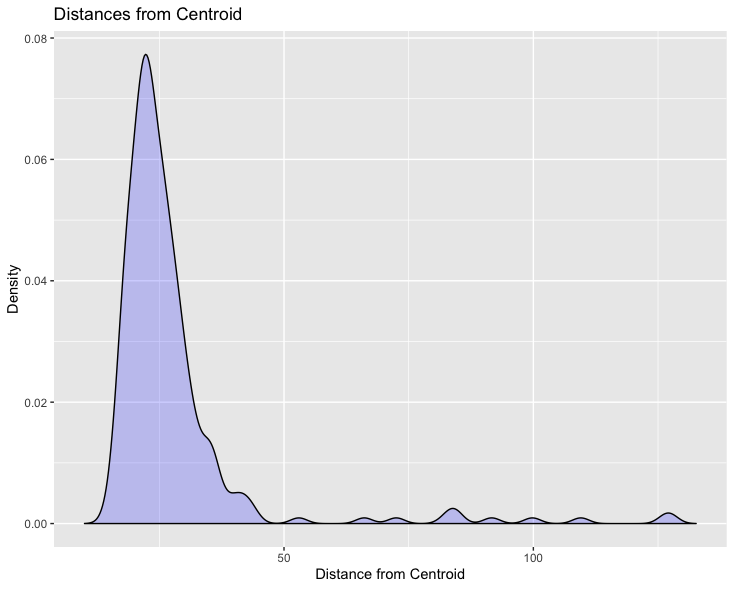}
\end{minipage}\hfill
\begin{minipage}{0.325\textwidth}
    \centering
    \includegraphics[width=\linewidth]{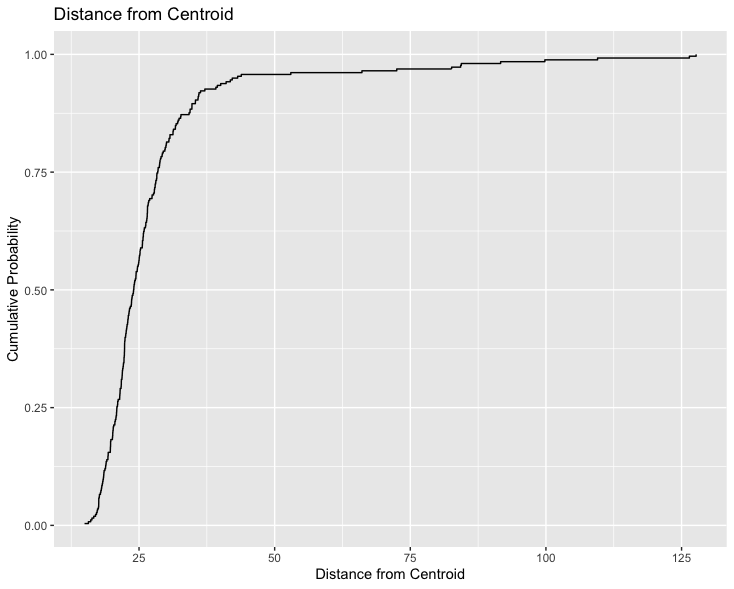}
\end{minipage}\hfill
\begin{minipage}{0.325\textwidth}
    \centering
    \includegraphics[height = 0.205\textheight,keepaspectratio=false]{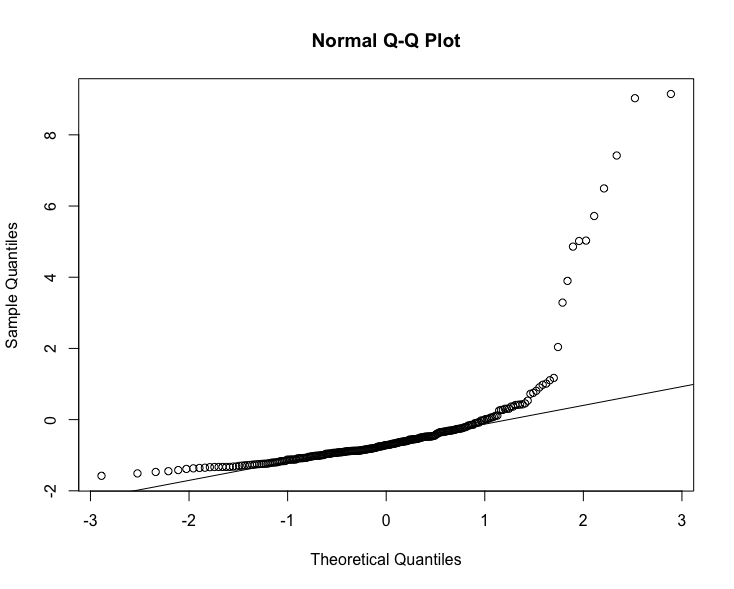}
\end{minipage}
\vspace{-3mm}
\caption{Left: Density estimate for the radii $\{ R_i \}_{i \in [n]}$. Middle: Empirical c.d.f. of the radii $\{ R_i \}_{i \in [n]}$. Right: Normal QQ plot for the standardized radii $\{ V_i \}_{i \in [n]}$.}
\label{fig:asthma_diagnostic_plots}
\end{figure}



\cref{fig:asthma_diagnostic_plots} suggests that the radii possess an empirical distribution with a notably heavy upper tail. Specifically, 11 samples -- approximately $4.3\%$ of the data -- deviate markedly from the expected behavior under $\cH_0$ and the bulk of the distribution, as determined by the gap criterion for outlier assessment \cite{Barnett_Book}. This is corroborated by inspection of marginal univariate and bivariate plots of the genes as well as a formal outlier-detection procedure we develop in a companion work. Thus, to examine the effect of these extreme samples on the results obtained by \cite{SILGGM}, we perform their analysis again after removing these observations. This exploratory-type comparative analysis is presented for illustration purposes. 
Note that without these 11 samples, our test fails to reject $\cH_0$.

\cref{fig:sub_GeneNetwork} contrasts the estimated local graph structure for the \textit{CLK1} gene obtained using the original data with that inferred when the extreme samples are absent. First, we note that when the complete dataset is used, 37 significant edges are inferred, whereas only 15 edges for the \textit{CLK1} gene are detected when the extreme samples are not present. Secondly, when we compare the network consisting of the top 20 most significant edges in \cref{fig:sub_GeneNetwork}, as presented in \cite{SILGGM}, we find that only $55\%$ of these genes appear in the set of significant genes identified when the extreme observations are omitted. As discussed in \cref{app_realData_network_supp} in further detail, this difference in the genes identified may correspond to potentially biologically meaningful differences in the relationships between the expression of certain genes with childhood asthma and its comorbidities. These considerations further illustrate the fact that the marked discrepancies in network structures inferred based on the presence of the extreme samples, as depicted in \cref{fig:sub_GeneNetwork}, may significantly affect the practical interpretation obtained in the gene co-expression network analysis of \cite{SILGGM}.




    


Overall, this exploratory-type analysis briefly demonstrates how our methodology can detect potentially critical departures from $\cH_0$, and can be used to guide follow-up analysis in conjunction with domain knowledge and recommended practices \cite{Barnett_Book, Gnanadesikan} for conducting analysis in the presence of, for example, potential outlier or contaminated mixture based violations of the assumed model.

\begin{figure}[!htb]
\centering
\begin{minipage}{0.325\textwidth}
    \centering
    \includegraphics[width=\linewidth]{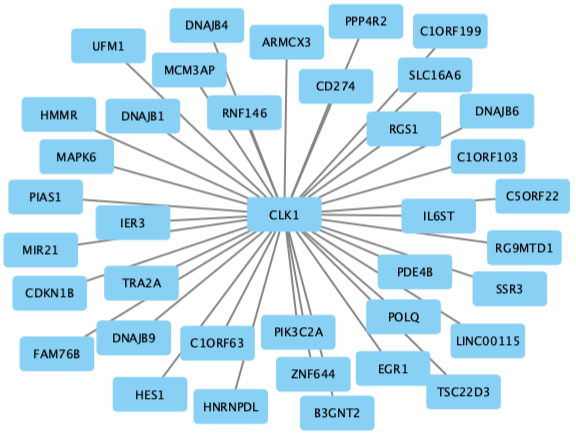}
\end{minipage}\hfill
\begin{minipage}{0.325\textwidth}
    \centering
    \includegraphics[width=\linewidth]{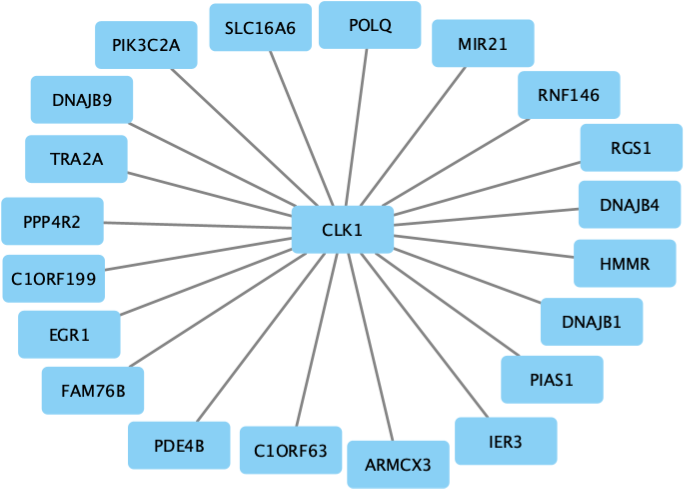}
\end{minipage}\hfill
\begin{minipage}{0.325\textwidth}
    \centering
    \includegraphics[width=\linewidth]{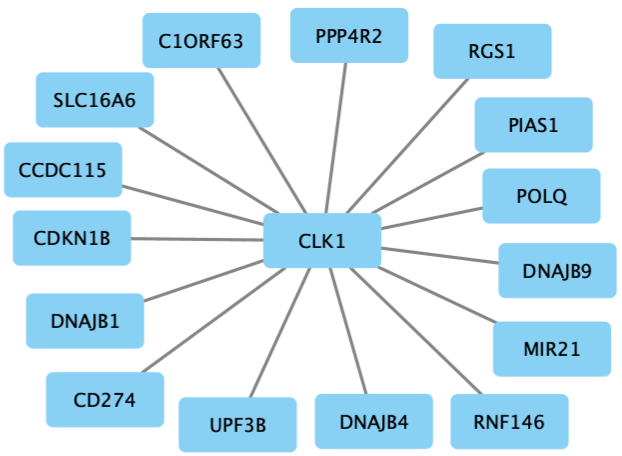}
\end{minipage}

\caption{Left: Significant edges inferred from the original data. Middle: Top 20 edges inferred from the original data. Right: Significant edges inferred when the extreme samples are absent.}
\label{fig:sub_GeneNetwork}
\end{figure}

\paragraph{Acknowledgments} The authors thank Kengo Kato for helpful discussion pertaining to the application of Gaussian approximation results to the range-type test statistic, as well as David Brenner for guidance and feedback regarding the direction of the project.




{\small 
\setlength{\bibsep}{0.85pt}{
    \bibliographystyle{abbrvnat}
    \bibliography{ref}
}}

\newpage

\appendix



\newcommand{\anon}{0}



\if1\anon
{ 
  \title{\bf Supplement to ``High-Dimensional Invariant Tests of Multivariate Normality Based on Radial Concentration''}
  \author{Xin Bing\\
    Department of Statistical Sciences, University of Toronto\\
    and \\
    Derek Latremouille \\ Division of Biostatistics, University of Toronto}
  \maketitle
} \fi

\if0\anon
{
  \bigskip
  \bigskip
  \bigskip
  \begin{center}
    {\LARGE\bf Supplement to ``High-Dimensional Invariant Tests of Multivariate Normality Based on Radial Concentration''}
\end{center}
  \medskip
} \fi

\bigskip

\appendix

The Appendix is structured as follows. 
\cref{app_sec_IQR} considers tests of the proposed class \eqref{test_stat_class} based on central quantile contrasts and combinations thereof, and develops associated asymptotic theory pertaining to type I error control in a high-dimensional $n,d \to \i$ regime. In particular, the type I error control theory for the interquartile range type test statistic $T_*$ specified by \eqref{def_T_IR} is developed. \cref{app_quasiRange_theory} presents methodology for implementing tests based on arbitrary extreme quasi-range specifications of the proposed class \eqref{test_stat_class}.     
\cref{app_add_simulation} presents additional simulation studies for our test and the test of \cite{ChenXia} when $n \ll d$ as well as some simulation results which are pertinent to  justification of using the radii instead of the squared radii in our test statistics (\cref{app_Radii_SqRadii}) and \cref{rem_composite_test} (\cref{app_Sim_RangeIQR}). \cref{app_real_data_lungCancer} presents an additional gene expression application for the purpose of comparing the results of our methodology to that of \cite{ChenXia}. \cref{app_realData_network_supp} provides further information pertaining to the gene co-expression network analysis of \cref{sec_real_data}. \cref{app_sec_power_BS} presents consistency results for a class of finite mixture models which generalizes the finite mixture alternatives considered in \cref{sec_power_mix}. Finally, \cref{app_proofs} contains the proofs of the results presented in \cref{sec_theory}, \cref{app_sec_IQR}, and \cref{app_sec_power_BS}.

\section{General Asymptotic Theory for the Proposed 
Class of Tests}\label{app_general_theory_AppendixA}

\subsection{Asymptotic Distribution of the IQR and Central Quantile Contrast Test Statistics}\label{app_sec_IQR}

    As discussed in \cref{rem_general_class}, while the main paper primarily addressed the theoretical properties of the range-type test based on $T$ from the general class of proposed test statistics \eqref{test_stat_class}, the analysis of \cref{sec_theory_null} can be extended so as to yield tests based on any finite number of central quantile contrast specifications in \eqref{test_stat_class} with associated asymptotic type I error-control guarantees under a high-dimensional $n, d \to \i$ regime. 
    

    The following theorem establishes that $T_*$ as defined in \eqref{def_T_IR} has an explicit and known normal limiting distribution under the null hypothesis. To state the result, we first define an additional notion of effective rank to complement those defined in \cref{def_rhos}: 
    \begin{align}
        \rho_3(\Sigma) := {\tr^3(\Sigma^2) \over \tr^2(\Sigma^3)}.
    \end{align}
    The relationship of $\rho_3(\Sigma)$ to those of \cref{def_rhos} is formally established in \cref{lem_ranks}, and the effective rank condition \eqref{effectiveRank_IQR} of \cref{thm_IQR} is further discussed in \cref{Rem_cond_thm_IQR}.


    \begin{theorem}\label{thm_IQR}
        Under $\cH_0$, suppose that, as $n \to \i$, either
        \begin{equation}\label{effectiveRank_IQR}
        \rho_1(\Sigma^2) = \omega(n) \ \ \ \ \text{or} \ \ \ \ \rho_3(\Sigma) = \omega(n^2 \log^2 n).
        \end{equation}
        Then, defining $\sigma_* = [2 \phi(\Phi^{-1}(0.75)) ]^{-1}$, one has 
        $
            T_* \distrto \cN
            (0, \sigma^2_*). 
        $
    \end{theorem}
    \begin{proof}
        The proof appears in \cref{app_proof_thm_IQR}.
    \end{proof}

     \cref{thm_IQR} justifies the usage of the rejection region specified in \eqref{reject_IQR}. 
    While the proof of \cref{thm_IQR} follows the same general structure as that of \cref{thm_range_limit}, the techniques used are different. Specifically, due to the use of the interquartile range as opposed to the range, the first step of the proof involves bounding the difference 
    \[
            \sup_{t\in \RR}\left|\PP\left(
				 Y_{(\lfloor 3n/4\rfloor)} - Y_{(\lfloor n/4\rfloor)}  \le t
				\right) - \PP(\wt U_{n} \le t)\right|,
    \]
    where $\wt U_n \sim \cN(2\Phi^{-1}(3/4), \sigma_*^2/n)$.
    In this case, the Gaussian approximation theory used for the range-type statistic $T$ is not applicable. Instead, in the case that $\rho_1(\Sigma^2) = \omega(n)$, our arguments rely on newly derived theory, stated in \cref{thm_order_clt}, pertaining to the asymptotic joint distribution of any finite number of central order statistics of $Y_1,\ldots, Y_n$, which is of interest in its own right. In particular, since \(Y_1, \ldots, Y_n\) are not independent and the distribution of each \(Y_i\) only converges to a particular absolutely continuous distribution $F$ (which, in this case, is Gaussian) in the limit, our new theory generalizes the classical result that only applies to the asymptotic joint distribution of a finite number of central order statistics from \(n\) \text{i.i.d.} realizations of $F$. On the other hand, when $\rho_3(\Sigma) = \omega(n^2 \log^2 n)$, we invoke Yurinskii coupling with respect to the sup-norm, as stated in \cref{lem_Yurinskii}, in conjunction with the 1-Lipschitz property of order statistics with respect to the sup-norm as established in \cref{lem_Lipschitz}. This coupling argument yields a Gaussian approximation from which the desired quantile convergence properties follow.

\begin{remark}[The Effective Rank Condition in \cref{thm_IQR}] \label{Rem_cond_thm_IQR}

In contrast to the theory developed in \cref{sec_theory} for the range-type test based on $T$ (requiring $d/\log^\gamma(n) \to \i$ for some constant $\gamma > 0$), the type I error-control for the IQR-type test based on $T_*$ involves the so-called \textit{high-dimensional medium sample size}, or \textit{ultra high-dimensional}, asymptotic regime, where $d / n \to \i$ at some rate as $n \to \i$. This asymptotic regime is commonly considered in establishing the theoretical properties of methodology developed for the numerous types of modern data-analysis applications involving a sample size which is of smaller order than the number of variables under consideration; see, for instance, \cite{TwoSampleCov_HDMSS, Zhu, AoshimaReview, Fan_2008, ShenReview}. 
In consideration of \cref{rem_cond_thm1} and \cref{lem_ranks}, when $d \gg n$, \eqref{effectiveRank_IQR} holds under a standard condition such as \eqref{cond_ident}, and when $d \gg n^2 \log^2 n$ it holds under a condition such as $\tr(\Sigma^k) \asymp d, \ $ for $k \in [4]$ (see, for instance, \cite{LedoitWolf2002, Fisher2010, Nishiyama2013, SriMANOVA, DunsonPati, Fisher2011_CovEstim, Schott2007}). However, as discussed in \cref{rem_composite_test}, the test based on $T_*$ exhibits sound empirical performance across both $n \ll d$ and $n \gtrsim d$ settings.  
\end{remark}

    \begin{remark} [A Composite Test in Practice] \label{rem_composite_test}
       Recall that our proposed test combines the range-type statistic $T$ and the IQR-type statistics $T_*$, with a Bonferroni correction. The proposal of such a combined test is motivated by the differential sensitivity of the two types of tests arising for various classes of alternatives, as exemplified empirically in the simulation analyses of \cref{tab_pow_unbalanced_mix,tab_pow_balanced_mix} in \cref{app_Sim_RangeIQR}.  For instance, the range-type test typically possesses higher sensitivity to location-type mixtures, unbalanced mixture alternatives, and departures from $\cH_0$ associated with the presence of a small proportion of samples of atypical distance to the centroid in the data. As illustrated in \cref{tab_pow_unbalanced_mix}, the latter phenomenon arises in mixture models with a relatively high degree of imbalance in the mixture proportions as well as outlier-contaminated data. As discussed in \cref{sec_our_contributions} and \cref{sec_method}, beyond its analogy to the range test of univariate normality \cite{Pearson_Normal, Pearson_Stephens, Serfling}, the use of the extremal radii in consideration of such departures is comparable to the classical uniformly most powerful test of multivariate normality against outlier-type alternatives, which is based the extreme scaled radius $R^*_{(n)}$ \cite{Ferguson, Wilks, Barnett_Book, Thode_Book, Gnanadesikan, Syed}. On the other hand, dispersion of the radii at the level of moderate deviations is often more effectively encapsulated by central quantile contrasts. This motivates using an additional statistic from the proposed class \eqref{test_stat_class} based on central quantiles, such as the IQR-type $T_*$, to complement the range-type $T$, so as to increase efficiency for detecting certain types of alternatives \cite{David, Lockhart}. In fact, consistency of the IQR test for sub-Gaussian covariance-type mixtures with suitably \textit{balanced} mixture proportions can be established under a slightly milder condition than that of \eqref{cond_cov_sep_sG} in \cref{thm_cov_mix_sG}. The superior power of $T_*$ compared to $T$ under \textit{balanced} finite scale-mixture alternatives, as shown in \cref{tab_pow_balanced_mix}, highlights its effectiveness for covariance-type mixture models with relatively homogeneous mixture proportions. 
     Finally, it is worth mentioning that, as suggested by our numerical experiments in \cref{app_Sim_RangeIQR}, the proposed composite test often outperforms the two individual tests, particularly in higher dimensions, and is only slightly inferior than the better of the two individual tests in other cases. Furthermore, while the use of this composite test is theoretically justified  when $n \ll d$ due to the inclusion of the IQR-type test based on $T_*$, the simulation analyses of \cref{sec_sims_moderate,sec_sims_high,app_Sim_RangeIQR} further support its use in cases where either $n \gtrsim d$ or $n \ll d$, as the empirical type I error is well-controlled and high power is attained against a broad class of alternatives.  
    \end{remark}

    
    \begin{remark} [Extension to More General Central Quantile Tests] \label{rem_general_quantile_stats}
    Let $Q  \le  \lfloor n/2 \rfloor$ denote a specified number of quantiles and  $1/2 < \pi_1^* < \cdots < \pi_{Q}^* < 1$ be any given upper-percentiles.  A general central quantile based test statistic can be defined via
            \begin{align}\label{Central Quantile class}
                T_{*, \pi^*} := 
                2 \sqrt{n} \sum_{q = 1}^{Q} \Bigl[ \wh\Delta^{-1/2}  \bigl( R_{(\lfloor \pi_q^* n \rfloor)} - R_{(\lfloor(1 - \pi_q^*)n \rfloor)} \bigr)  -  \Phi^{-1}(\pi_q^*) \Bigr],
            \end{align}
        and its asymptotic distribution under $\cH_0$ can be derived in analogy to that of $T_*$. The advantages associated with different test statistics of this general class deserve full investigation, which is thus left for future research. 
    \end{remark}

\subsection{Asymptotic Distribution of Extreme Quasi-Range Test Statistics}\label{app_quasiRange_theory}

The methodology and supporting theory for the range-type test of \cref{reject_range} can be generalized to accommodate arbitrary extreme \textit{quasi-range} test statistics of the proposed class \eqref{test_stat_class}. Specifically, for any integer $q \le n/2$ that is constant with respect to $n$, the $q^{\text{th}}$-order quasi-range test statistic corresponds to \eqref{eq_general_T} with $\bar q = n - q + 1$, $\underline{q} = q$, and $a_n,b_n$ specified by \eqref{def_an_bn}. The associated decision rule is given by the procedure used to construct the rejection region specified in \eqref{reject_range} and \eqref{def_Un}, except with the Monte Carlo distribution based on the $q^{\text{th}}$-order quasi-range instead of the range. That is, for this specification of $\bar q$, $\underline{q}$, $a_n$, and $b_n$, we define the corresponding quasi-range test statistic via
\begin{align}\label{Quasi Range Test Statistic}
                T_q := 2 a_n  ~ \wh\Delta^{-1/2} \bigl(R_{(n - q + 1)} -  R_{(q)}\bigr) -  2 a_n b_n.
    \end{align}
The null hypothesis $\cH_0$ is then rejected at level $\alpha \in (0,1)$ iff 
\begin{equation}\label{reject_quasiRange}
        T_q \notin \left( \wh F^{-1}_{M,n,q}(\alpha/2), ~  \wh F^{-1}_{M, n, q}(1-\alpha/2)\right),
    \end{equation}
    where, for a specified number of Monte Carlo replications $M \in \mathbb{Z}^+$ and percentile $\alpha_0 \in (0,1)$, $\wh F^{-1}_{M,n, q}(\alpha_0)$ denotes the $\alpha_0$-quantile of the empirical distribution $\wh F_{M,n,q}$ of $M$ i.i.d. realizations of 
    \begin{equation}
        U_{n,q}  ~ =  ~ a_n\bigl(S_{(n - q + 1)} - S_{(q)}\bigr) -  2 a_nb_n, 
    \end{equation} 
    with $S_{(1)}\le \cdots \le S_{(n)}$ denoting the order statistics of $S \sim \cN_n(0_n, \bI_n)$. The general proof techniques used to establish the results of \cref{sec_theory_null} pertaining to the range-type test based on $T$ in conjunction with the recently developed Gaussian approximation for general extreme order statistics \cite{Gauss_approx_extremeOrderStats} can be used to derive analogous asymptotic type I error theory for the test specified by \eqref{reject_quasiRange} in a high-dimensional regime.

\section{Additional Simulation Studies} \label{app_add_simulation}

\subsection{Empirical Performance of the Proposed Test when $n \ll d$}\label{sec_sims_high}



    


In this section, we examine the performance our proposed test in settings where $n \ll d$. Here, we primarily restrict the simulation analysis to our test alone. This is due to the fact that, while the recently proposed test of \cite{ChenXia} is the principal test of $\cH_0$ with valid type I error control in a high-dimensional $n, d \to \i$ setting, the theoretical guarantee for their test requires that $d \ll \sqrt{n}$, and the computationally-intensive nature of their procedure (see \cref{rem_comp}) renders extensive comparison to our test across a broad range of $d \gg n$ settings infeasible. Nonetheless, we begin by conducting a small-scale simulation analysis to examine the type I error of their test as $d/n$ increases, based on 500 Monte Carlo replications for their test and 10,000 replications for ours. \cref{tab_type_I_CX_hd} presents the empirical type I error of their test for $n \in \{40, 100 \}$ and $d \in \{600, 1000\}$ under $\Sigma_2$ as specified in \cref{sec_sims_moderate}. Note that, despite restricting $d \leq 1000$, their test exhibits noticeable size distortion. Moreover, if we were to consider $\Sigma_2$ with $\rho = 0.9$ instead of $\rho = 0.5$, the type I error of their test increases to nearly 1 across all $(n,d)$ configurations considered while it remains well-controlled for our test. Failure to maintain adequate control of the type I error was also identified as an issue for their test in the simulation studies of \cref{sec_sims_moderate} and \cite{Elliptical_GoF}. 

\begin{table}[H]
\centering
\caption{Comparison of the type I error of our test to that of the Chen-Xia test \cite{ChenXia} under covariance matrix $\Sigma_2$ specified in \cref{sec_sims_moderate} when $n \ll d$. \textbf{Bold} figures indicate inflation of the type I error beyond the acceptable 0.1 threshold, as stipulated by \cite{ChenXia}.}
\label{tab_type_I_CX_hd}
    \centering
    \renewcommand{\arraystretch}{1}{
    \begin{tabular}{ |c|l|c|c|c|c| }
        \hline
        \multicolumn{2}{|c|}{} & \multicolumn{2}{c|}{$n = 40$} & \multicolumn{2}{c|}{$n = 100$} \\
        \hline
        \multicolumn{2}{|c|}{}  & $d = 600$ &$d = 1000$ &$d = 600$ &$d = 1000$ \\
        \hline
        \multirow{2}{*}{$\Sigma_2$} & Our Test & 0.051 & 0.047 & 0.048 & 0.05  \\
        & Chen-Xia Test &\textbf{0.288} & \textbf{0.322} & 0.087  & 0.094 \\
        \hline
    \end{tabular}
    }
\end{table}

 
We now examine the performance of our test under a broader set of $n \ll d$ settings. The type I error of our testing procedure is examined first. We set $\mu = 0_d$ without loss of generality due to the invariance properties of our test (see \cref{rem_invariance}) and consider each of the following choices for the covariance matrix $\Sigma$: $\Sigma_1$, $\Sigma_2$ with $\rho = 0.9$, and $\Sigma_4$ as in \cref{sec_sims_moderate}, as well as $\Sigma_5  = \diag(\lambda_1,\ldots,\lambda_d)$ with $\lambda_j  = 0.93^j$ for $j\in [d]$.

\cref{tab_type_I_high_d} displays the empirical type I errors of the proposed test for $d\in \{2000, 5000, 10000\}$ and $n\in \{50, 100, 250\}$ across 10,000 replications. We see that the type I error of our test is well-controlled at the $\alpha = 0.05$ level in all settings. 
 

\begin{table}[H]
\centering
\caption{Empirical type I errors of our test based on $10,000$ replications.}
\label{tab_type_I_high_d}
\renewcommand{\arraystretch}{1.1}{
    \begin{tabular}{cc|ccc|ccc|ccc|}
\hline
\multicolumn{2}{|c|}{} & \multicolumn{3}{c|}{$n = 50$} & \multicolumn{3}{c|}{$n = 100$} & \multicolumn{3}{c|}{$n = 250$} \\
\hline
\multicolumn{2}{|c|}{$d$} &$2000$ &$5000$ &$10,000$ &$2000$ &$5000$ &$10,000$ &$2000$ &$5000$ &$10,000$\\
\hline
\multicolumn{2}{|c|}{$\Sigma_1$} &0.048 &0.051 &0.048 &0.048 & 0.05 & 0.048 & 0.048 & 0.048 & 0.052 \\
\hline
\multicolumn{2}{|c|}{$\Sigma_2$} &0.046 &0.052 &0.051 & 0.05 & 0.049 &  0.049 & 0.051 & 0.051 &  0.043 \\
\hline
\multicolumn{2}{|c|}{$\Sigma_4$} &0.05 &0.051 &0.051 &0.042 &0.051  & 0.039 & 0.05 & 0.058  & 0.041 \\
\hline
\multicolumn{2}{|c|}{$\Sigma_5$} &0.051 &0.049 &0.05 &0.049  &0.052 & 0.052 & 0.053 & 0.054 & 0.056 \\
\hline
\end{tabular}
}
\end{table}


To evaluate the power of our testing procedure, we consider the following four alternatives, which correspond to Theorems \ref{thm_loc_mix_sG}, \ref{thm_cov_mix_sG}, \ref{thm_ellip}, and \ref{thm_kurtosis}, respectively.  
\begin{itemize}
    \setlength{\itemsep}{0pt}
    \item[(1)] {\em Loc-mixture:} $X\sim 0.5 \cN_d(0_d, \bI_d) + 0.5 \cN_d(a_d 1_d, \bI_d)$ with $a_d   =  (2.15) d^{-1/4}$.

    \item[(2)] {\em Cov-mixture:} $X\sim 0.5 \cN_d(0_d, (1 + a_d) \bI_d) + 0.5 \cN_d(0_d, (1 - a_d) \bI_d)$ with $a_d = 1.4/\sqrt{d}$.


    \item[(3)] {\em Multivariate-t:} $X$ follows the multivariate $t$-distribution $t_{d}(0_d,  \bI_d, \nu_d)$ with $\nu_d = d$. 
    
    \item[(4)] {\em $\chi^2$ marginals:}  $X_j \sim \chi^2_6$ independently for $j\in [d]$. 

\end{itemize}

\cref{tab_power_high_d} reports the empirical power of our testing procedure under each of the above alternatives. We observe that the empirical power tends to increase with the sample size. The results for the location-mixture (1) and covariance-mixture (2) examples indicate that the test can reliably detect both location- and covariance-based signals which are of relatively low strength marginally. We note that the signal-to-noise ratio quantities introduced in \cref{sec_power_mix} are set to decay at a $\delta \asymp d^{-1/2}$ rate for both of these examples. The performance of the test for the multivariate \textit{t}-distribution alternative of (3) suggests sensitivity of our test for non-Gaussian elliptical alternatives, even when the univariate and low-dimensional marginal distributions are approximately normal and the covariance matrix is scale-identity, thereby demonstrating its high sensitivity for detecting non-linear dependence structure. Finally, the simulation results obtained for the chi-squared marginal model (4) indicates that our test, despite being multivariate in nature, has the capacity to detect kurtosis-based departures from $\cH_0$ which arise at the univariate level.

\begin{table}[H]
\centering 
\caption{Empirical power of our test based on $10,000$ replications.}
\label{tab_power_high_d}
\renewcommand{\arraystretch}{1.1}{
\resizebox{\textwidth}{!}{
     \begin{tabular}{cc|ccc|ccc|ccc|}\hline
\multicolumn{2}{|c|}{} & \multicolumn{3}{c|}{$n = 50$} & \multicolumn{3}{c}{$n = 100$} & \multicolumn{3}{|c|}{$n = 250$} \\
\hline
\multicolumn{2}{|c|}{$d$} &$2000$ &$5000$ &$10,000$ &$2000$ &$5000$ &$10,000$ &$2000$ &$5000$ &$10,000$\\
\hline
\multicolumn{2}{|c|}{Loc-mixture} & 0.764 & 0.766 & 0.774 &0.967 & 0.967 & 0.971 & 1 & 1 & 1 \\
\hline
\multicolumn{2}{|c|}{Cov-mixture} & 0.793 & 0.799 & 0.803 & 0.967 & 0.966 & 0.968 & 1 & 0.999 & 0.999 \\
\hline
\multicolumn{2}{|c|}{Multivariate-$t$} & 0.756 & 0.749 & 0.757 & 0.928 & 0.926  & 0.943  & 0.999  & 0.999  & 0.999  \\
\hline\multicolumn{2}{|c|}{$\chi^2$ marginals} & 0.743  & 0.747  & 0.748 & 0.928 & 0.931 & 0.929 & 0.999 & 0.998 & 0.999 \\
\hline
\end{tabular}
}
}
\end{table}




\subsection{Comparing the Radii and Squared Radii Based Tests} \label{app_Radii_SqRadii}

The proof of \cref{thm_range_limit} implicitly supports the option of testing $\cH_0$ using a statistic analogous to $T$ based on the range of the {\em squared} radii $R_1^2,\ldots, R_n^2$ instead of the radii $R_1,\ldots,R_n$. In this section  we conduct extensive simulation analyses to support the use of the  proposed test over its counterpart based on the squared radii. 

The range-type test statistic based on the squared radii is defined via
\[ T_2 := a_n \left[\left(2 \widehat{\tr(\Sigma^2)}\right)^{-1/2}\left(R^2_{(n)} - R^2_{(1)}\right)  - 2 b_n \right], \]
whereas the IQR-type statistic based on the squared radii is 
\[T_{*,2} := \sqrt{n} \left[\left(2 \widehat{\tr(\Sigma^2)}\right)^{-1/2}\left(R^2_{(\lfloor  3n/4 \rfloor)} - R^2_{(\lfloor n/4 \rfloor)} \right) - 2 \Phi^{-1}(0.75) \right]. 
\]
Their rejection rules are identical to that specified by \eqref{reject_range} and \eqref{reject_IQR}, respectively. In the following subsections \cref{sec_sims_sqRadii_moderate} and \cref{sec_sims_sqRadii_high}, \textit{squared radii} refers to the composite test involving $T_2$ and $T_{*,2}$, with a Bonferroni correction applied to control the overall type I error. 

\cref{sec_sims_sqRadii_moderate} and \cref{sec_sims_sqRadii_high} compares the empirical type I error rate and power of the squared radii test to that of our proposed test under the simulation settings considered in \cref{sec_sims_moderate} and \cref{sec_sims_high}, respectively. In contrast to our proposed test, we find that the test based on the squared radii exhibits a persistent size distortion issue under $\cH_0$, with an empirical type I error rate $\wh \alpha_* > 0.05$ exceeding the nominal $\alpha = 0.05$ level across the entire range of $(n,d)$ and covariance matrix configurations considered; see \cref{tab_sqRadii_TypeI} and \cref{tab_sqRadii_TypeI_high}. In several cases, the squared radii test exhibits a particularly high degree of type I error inflation, with $\wh \alpha_* > 0.1$, whereas our proposed test does not. On the other hand, the power of the squared radii test is comparable to that of our proposed test across the alternatives considered in \cref{sec_sims_moderate} and \cref{sec_sims_high}. 

A heuristic theoretical justification for this is as follows: Since each squared radius $R_i^2$ under $\cH_0$ has an exact distribution equal to that of a linear combination of $d$ independent $\chi_1^2$ random variables, use of the square root transformation improves the Gaussian approximation to its distribution, hence providing better finite-sample control of the type I error. This improvement is analogous to the fact that the $\chi_d$ distribution provides a better normal approximation than the $\chi^2_d$ distribution \cite{Johnson_chi} due to the reduction of right skewness and kurtosis.

\subsubsection{Comparison under Simulation Settings in \cref{sec_sims_moderate}} \label{sec_sims_sqRadii_moderate}

\begin{table}[H]
\centering
\caption{Type I errors under the examples of \cite{ChenXia} as well as that under the null model with covariance matrix $\Sigma_4$. \textbf{Bold} figures indicate inflation of the type I error beyond the acceptable 0.1 threshold.}
\label{tab_sqRadii_TypeI}
\resizebox{\textwidth}{!}{
    \begin{tabular}
    {|p{0.9 cm}|p{4cm}|p{1.5cm}|p{1.5cm}|p{1.5cm}|p{1.5cm}|p{1.5cm}|p{1.5cm}|}
\hline
\multicolumn{2}{|c|}{} & \multicolumn{3}{c}{$n = 100$} & \multicolumn{3}{|c|}{$n = 150$} \\
\cline{3-8}
\multicolumn{2}{|c|}{} &$d = 20$ &$d = 100$ &$d = 300$ &$d = 20$ &$d = 100$ &$d = 300$\\
\hline
\multirow{2}{*}{$\Sigma_1$} & Our Test  &0.048 & 0.045 & 0.046 &0.046 &0.052 &0.047\\
& Squared Radii &0.07 & 0.058 & 0.055 &0.077 &0.058 &0.052 \\
\hline
\multirow{2}{*}{$\Sigma_2$} & Our Test &0.059 &0.049 & 0.049 &0.063 &0.051 & 0.049 \\
& Squared Radii & \textbf{0.138} & 0.063 & 0.055 & \textbf{0.143} &0.072 &0.057 \\
\hline
\multirow{2}{*}{$\Sigma_3$} & Our Test & 0.05 & 0.05 &0.046 &0.046 &0.052 &0.048 \\
& Squared Radii &0.088 & 0.057 & 0.052 & 0.094 &0.058 &0.052 \\
\hline
\multirow{2}{*}{$\Sigma_4$} & Our Test & 0.056 & 0.048 & 0.049 & 0.064 & 0.047 & 0.05 \\ 
& Squared Radii &\textbf{0.168} & 0.065 & 0.059 & \textbf{0.161} &0.07 &0.054 \\
\hline
    \end{tabular}
}
\end{table}

\begin{table}[H]
\centering
\caption{Power under the two-component Gaussian scale-mixture alternatives of \cite{ChenXia}.}
\label{tab_sqRadii_mixPow}
\resizebox{\textwidth}{!}{
    \begin{tabular}
    {|p{0.9 cm}|p{4cm}|p{1.5cm}|p{1.5cm}|p{1.5cm}|p{1.5cm}|p{1.5cm}|p{1.5cm}|}
\hline
\multicolumn{2}{|c|}{} & \multicolumn{3}{c}{$n = 100$} & \multicolumn{3}{|c|}{$n = 150$} \\
\cline{3-8}
\multicolumn{2}{|c|}{} &$d = 20$ &$d = 100$ &$d = 300$ &$d = 20$ &$d = 100$ &$d = 300$\\
\hline
\multirow{2}{*}{$\Sigma_1$} & Our Test  & 0.9999 & 0.9998 & 1 &1  &1  &1  \\
& Squared Radii  & 0.9998 & 0.9999 & 0.9999 & 1 & 1 & 1 \\
\hline
\multirow{2}{*}{$\Sigma_2$} & Our Test &0.957 & 0.962 & 0.962 & 0.993  & 0.996  & 0.995  \\
& Squared Radii & 0.934 & 0.947 & 0.948 & 0.986 & 0.991 & 0.992 \\
\hline
\multirow{2}{*}{$\Sigma_3$}& Our Test & 0.999 &0.998 &0.999 &1 &1 &1  \\
& Squared Radii & 0.9992 & 0.998 & 0.997 & 1 & 1 & 1 \\
\hline
    \end{tabular}
}
\end{table}


\begin{table}[H]
\centering
\caption{Power under the multivariate \textit{t}-distribution alternatives of \cite{ChenXia}.} 
\label{tab_sqRadii_tPow}
\resizebox{\textwidth}{!}{
    \begin{tabular}
    {|p{0.9 cm}|p{4cm}|p{1.5cm}|p{1.5cm}|p{1.5cm}|p{1.5cm}|p{1.5cm}|p{1.5cm}|}
\hline
\multicolumn{2}{|c|}{} & \multicolumn{3}{c}{$n = 100$} & \multicolumn{3}{|c|}{$n = 150$} \\
\cline{3-8}
\multicolumn{2}{|c|}{} &$d = 20$ &$d = 100$ &$d = 300$ &$d = 20$ &$d = 100$ &$d = 300$\\
\hline
\multirow{2}{*}{$\Sigma_1$} & Our Test  & 0.9998 & 0.9997 &0.9997 &1 &1 & 1 \\
& Squared Radii & 0.9997 & 0.9995 & 0.9997 & 1 & 1 & 1 \\
\hline
\multirow{2}{*}{$\Sigma_2$} & Our Test & 0.972 & 0.972  &0.971 &0.995 & 0.996  &0.997 \\
& Squared Radii & 0.977 & 0.967 & 0.964 & 0.996 & 0.994 & 0.995 \\
\hline
\multirow{2}{*}{$\Sigma_3$} & Our Test &0.999  &0.998 &0.999 &1 &1 &1 \\
& Squared Radii & 0.998 & 0.997 & 0.997 & 0.9997 & 1 & 0.9997 \\
\hline
    \end{tabular}
}
\end{table}

\begin{table}[H]
\centering
 \caption{Power comparison under the standardized $\chi_\nu^2$ coordinates alternative of \cite{ChenXia}, with $n = d = 100$ and covariance matrix $\Sigma_3$.}
\renewcommand{\arraystretch}{1.1}{
    \begin{tabular}{|l|c|c|c|c|}
\hline
\multicolumn{1}{|c|}{} &$\nu = 3$ &$\nu = 5$ &$\nu = 10$ &$\nu = 20$ \\
\hline
Our Test &0.995 &0.89 &0.471 & 0.191 \\
Squared Radii & 0.992 &0.882 &0.454 & 0.177\\
\hline
\end{tabular}
}
\end{table}

\begin{table}[H]
\centering
 \caption{Power comparison for alternatives with a $\pi_t$ proportion of non-Gaussian dimensions \cite{ChenXia}.}
\label{tab_power_low_d_Gauss-t}
\renewcommand{\arraystretch}{1.1}{
    \begin{tabular}{|l|c|c|c|c|c|}
\hline
\multicolumn{1}{|c|}{} &$\pi_t = 0.5$ &$\pi_t = 0.4$ &$\pi_t = 0.3$ &$\pi_t = 0.2$ & $\pi_t = 0.1$ \\
\hline
Our Test &0.964 &0.808 &0.491 & 0.191 & 0.065 \\
Squared Radii & 0.955 & 0.791 & 0.468 & 0.177 & 0.067 \\
\hline
\end{tabular}
}
\end{table}

\begin{table}[H]
\centering
\caption{Power comparison under the Gaussian location-mixture alternatives.}
\label{tab_pow_balanced_LocMix}
\resizebox{\textwidth}{!}{
\renewcommand{\arraystretch}{1.1}{
    \begin{tabular}
    {|p{0.6 cm}|p{3.5cm}|p{1.5cm}|p{1.5cm}|p{1.5cm}|p{1.5cm}|p{1.5cm}|p{1.5cm}|}
\hline
\multicolumn{2}{|c|}{} & \multicolumn{3}{c}{$n = 100$} & \multicolumn{3}{|c|}{$n = 150$} \\
\cline{3-8}
\multicolumn{2}{|c|}{} &$d = 20$ &$d = 100$ &$d = 300$ &$d = 20$ &$d = 100$ &$d = 300$\\
\hline
\multirow{2}{*}{$\Sigma_1$} & Our Test  & 0.828 & 0.916 & 0.945 &0.948  &0.983 & 0.992 \\
& Squared Radii & 0.872 & 0.936 & 0.961 & 0.956  & 0.988 & 0.994 \\
\hline
\multirow{2}{*}{$\Sigma_2$} & Our Test  & 0.532 & 0.569 & 0.627 & 0.704  & 0.732 & 0.787 \\
& Squared Radii & 0.585 & 0.605 & 0.673 & 0.729  & 0.748 & 0.817 \\
\hline
\multirow{2}{*}{$\Sigma_3$} & Our Test  & 0.784 & 0.809 & 0.842 & 0.927  & 0.94 & 0.951 \\
& Squared Radii & 0.832 & 0.838 & 0.869 & 0.936  & 0.939 & 0.954 \\
\hline
    \end{tabular}
}}
\end{table}

\subsubsection{Comparison under Simulation Settings in \cref{sec_sims_high}}  \label{sec_sims_sqRadii_high}

\begin{table}[H]
\centering
\caption{Empirical type I errors based on 10,000 replications. \textbf{Bold} figures indicate inflation of the type I error beyond the acceptable 0.1 threshold.} 
\label{tab_sqRadii_TypeI_high}
\Large
\resizebox{\textwidth}{!}{
    \begin{tabular}
    {|c|l|ccc|ccc|ccc|}
\hline
\multicolumn{2}{|c|}{} & \multicolumn{3}{c|}{$n = 50$} & \multicolumn{3}{c|}{$n = 100$} & \multicolumn{3}{c|}{$n = 250$} \\
\cline{3-11}
\multicolumn{2}{|c|}{} &$d = 2000$ &$d = 5000$ &$d = 10,000$ &$d = 2000$ &$d = 5000$ &$d = 10,000$ &$d = 2000$ &$d = 5000$ &$d = 10,000$ \\
\hline
\multirow{2}{*}{$\Sigma_1$}  & Our Test &0.048 &0.051 &0.048 &0.048 & 0.05 & 0.048 & 0.048 & 0.048 & 0.052 \\
& Squared Radii &0.055 &0.053 & 0.053 & 0.054 & 0.053 & 0.054 & 0.053 & 0.053 & 0.052 \\
\hline
\multirow{2}{*}{$\Sigma_2$} & Our Test &0.046 &0.052 &0.051 & 0.05 & 0.049 &  0.049 & 0.051 & 0.051 &  0.043 \\
& Squared Radii &0.055 &0.056 & 0.054 & 0.057 & 0.056 & 0.051 & 0.056 & 0.056 & 0.053 \\
\hline
\multirow{2}{*}{$\Sigma_4$} & Our Test &0.05 &0.051 &0.051 &0.042 &0.051  & 0.039 & 0.05 & 0.058  & 0.041 \\
& Squared Radii &0.054 &0.052 & 0.053 & 0.052 & 0.053 & 0.053 & 0.053 & 0.051 & 0.057 \\
\hline
\multirow{2}{*}{$\Sigma_5$} & Our Test &0.051 &0.049 &0.05 &0.049  &0.052 & 0.052 & 0.053 & 0.054 & 0.056 \\
& Squared Radii &0.074 & 0.072 & 0.069 & 0.083 & 0.08 & 0.084 & \textbf{0.103} & \textbf{0.11} & \textbf{0.102} \\
\hline
    \end{tabular}
}
\end{table}

\begin{table}[H]
\centering
\caption{Empirical power based on $10,000$ replications.} 
\Large
\resizebox{\textwidth}{!}{
    \begin{tabular}
    {|c|l|ccc|ccc|ccc|}
\hline
\multicolumn{2}{|c|}{} & \multicolumn{3}{c|}{$n = 50$} & \multicolumn{3}{c|}{$n = 100$} & \multicolumn{3}{c|}{$n = 250$} \\
\cline{3-11}
\multicolumn{2}{|c|}{} &$d = 2000$ &$d = 5000$ &$d = 10,000$ &$d = 2000$ &$d = 5000$ &$d = 10,000$ &$d = 2000$ &$d = 5000$ &$d = 10,000$ \\
\hline
\multirow{2}{*}{Loc-Mixture}  & Our Test & 0.764 & 0.766 & 0.774 &0.967 & 0.967 & 0.971 & 1 & 1 & 1 \\
& Squared Radii & 0.823 & 0.822 & 0.834 &0.974 & 0.976 & 0.981 & 1 & 1 & 1 \\
\hline
\multirow{2}{*}{Cov-Mixture} & Our Test & 0.793 & 0.799 & 0.803 & 0.967 & 0.966 & 0.968 & 1 & 0.999 & 0.999 \\
& Squared Radii & 0.729 & 0.737 & 0.728 & 0.952 & 0.954 &  0.953 & 1 & 0.999 & 1 \\
\hline
\multirow{2}{*}{Multivariate-$t$} & Our Test & 0.756 & 0.749 & 0.757 & 0.928 & 0.926  & 0.943  & 0.999  & 0.999  & 0.999  \\
& Squared Radii &0.668 &0.679 & 0.671 & 0.904 & 0.902 & 0.912 & 0.999 & 0.998 & 0.998 \\
\hline
\multirow{2}{*}{$\chi^2$ Marginals} & Our Test & 0.743  & 0.747  & 0.748 & 0.928 & 0.931 & 0.929 & 0.999 & 0.998 & 0.999 \\
& Squared Radii & 0.674 & 0.668 & 0.681 & 0.903 & 0.899 & 0.906 & 0.998 & 0.997 & 0.999 \\
\hline
    \end{tabular}
}
\end{table}

\subsection{Power Comparison: Range Versus IQR Tests} \label{app_Sim_RangeIQR}

As discussed in \cref{rem_composite_test}, in this section we report an empirical power comparison for the proposed combined test, the range-type test (\cref{sec_method}), the IQR-type test (\cref{app_sec_IQR}), and the test of \cite{ChenXia} under \textit{unbalanced} two-component Gaussian \textit{location-mixture} alternatives in \cref{tab_pow_unbalanced_mix} and \textit{balanced} two-component Gaussian \textit{covariance-mixture} alternatives in \cref{tab_pow_balanced_mix}. We find that the range test outperforms the IQR test and the test of \cite{ChenXia} for the former type of alternative, while the IQR test has higher power for alternatives of the latter type. Overall, the combined test tends to have the highest power, indicating the benefit of using both the range- and IQR-based tests together. 


\begin{table}[H]
\centering
\caption{Power comparison under \textit{unbalanced} two-component Gaussian \textit{location-mixture} alternatives, with $\mu_1 = 0_d$, $\mu_2 = 1_d$, and mixture proportions $(\pi_1, \pi_2) = (0.95, 0.05)$.}
\label{tab_pow_unbalanced_mix}
\resizebox{\textwidth}{!}{
    \begin{tabular}
    {|p{0.6 cm}|p{3cm}|p{1.5cm}|p{1.5cm}|p{1.5cm}|p{1.5cm}|p{1.5cm}|p{1.5cm}|}
\hline
\multicolumn{2}{|c|}{} & \multicolumn{3}{c}{$n = 100$} & \multicolumn{3}{|c|}{$n = 150$} \\
\cline{3-8}
\multicolumn{2}{|c|}{} &$d = 20$ &$d = 100$ &$d = 300$ &$d = 20$ &$d = 100$ &$d = 300$\\
\hline
\multirow{4}{*}{$\Sigma_1$} & Combined Test  &0.214 & 0.963 & 0.993 &0.256  &0.987 & 0.9995 \\
& IQR Test &0.049 & 0.061 & 0.312 & 0.051 & 0.076 & 0.426 \\
& Range Test & 0.274 & 0.975 & 0.988 & 0.321 & 0.993 & 0.998 \\ 
& Chen-Xia Test &0.056 &0.152 &0.418 &0.048 &0.26 &0.606 \\
\hline
\multirow{4}{*}{$\Sigma_2$} & Combined Test & 0.127 & 0.76  & 0.989 &0.158 &0.853  & 0.999   \\
& IQR Test & 0.057 & 0.06 &  0.183 &  0.07  & 0.077 & 0.265 \\
& Range Test & 0.155 &  0.832 & 0.989 & 0.178 & 0.909 &  0.998 \\ 
& Chen-Xia Test &0.064 &0.128 &0.415 &0.076 &0.14 &0.506 \\
\hline
\multirow{4}{*}{$\Sigma_3$} & Combined Test & 0.209 & 0.933  & 0.994 & 0.249 &0.975 & 0.9994  \\
& IQR Test & 0.052 & 0.068 & 0.267 & 0.05 &  0.075   & 0.375 \\
& Range Test & 0.267 &  0.96 & 0.99 &  0.35 & 0.989 & 0.998 \\  
& Chen-Xia Test &0.052 &0.146 &0.562 &0.05 & 0.182 &0.614 \\
\hline
    \end{tabular}
}
\end{table}

\begin{table}[H]
\centering
\caption{Power comparison under the \textit{balanced} two-component Gaussian \textit{covariance-mixture} alternatives of \cite{ChenXia} (see \cref{sec_sims_moderate}). All tests are considered at the $\alpha = 0.05$ level.}
\label{tab_pow_balanced_mix}
\resizebox{\textwidth}{!}{
    \begin{tabular}
    {|p{0.6cm}|p{3cm}|p{1.5cm}|p{1.5cm}|p{1.5cm}|p{1.5cm}|p{1.5cm}|p{1.5cm}|}
\hline
\multicolumn{2}{|c|}{} & \multicolumn{3}{c}{$n = 100$} & \multicolumn{3}{|c|}{$n = 150$} \\
\cline{3-8}
\multicolumn{2}{|c|}{} &$d = 20$ &$d = 100$ &$d = 300$ &$d = 20$ &$d = 100$ &$d = 300$\\
\hline
\multirow{4}{*}{$\Sigma_1$} & Combined Test  &0.9999 & 0.9998 & 1 & 1  & 1 & 1\\
& IQR Test & 0.9997 & 0.9999 & 1 & 1 & 1 & 1 \\
& Range Test &0.823 & 0.841 & 0.852 & 0.865 & 0.892 & 0.898 \\
& Chen-Xia Test &0.458 & 0.817 & 0.816 & 0.663 & 0.958 & 0.948 \\  
\hline
\multirow{4}{*}{$\Sigma_2$} & Combined Test & 0.957 & 0.962 &0.962 &0.993  &0.996  &0.995  \\
& IQR Test &0.957 & 0.962 & 0.965 & 0.995 & 0.996 & 0.995 \\
& Range Test &0.499 & 0.531 & 0.542 & 0.529 & 0.584 & 0.597 \\
& Chen-Xia Test & 0.153 & 0.619 & 0.719 & 0.267 & 0.672 & 0.819\\   
\hline
\multirow{4}{*}{$\Sigma_3$} & Combined Test &0.999 & 0.998 & 0.999 & 1 & 1 & 1  \\
& IQR Test & 0.999 & 0.998 & 0.998 & 1 & 1 & 0.9999 \\
& Range Test &0.766 & 0.739 & 0.746 & 0.802 & 0.791 & 0.803 \\
& Chen-Xia Test & 0.45 & 0.754 & 0.869 & 0.64 & 0.908 & 0.947 \\ 
\hline
    \end{tabular}
}
\end{table}

\section{Real Data Analysis: An Additional Application and Supplementary Detail for \cref{sec_real_data}} \label{app_real_data}

\subsection{The Lung Cancer Gene Expression Application Considered in \cite{ChenXia}} \label{app_real_data_lungCancer}

\indent Microarray gene expression data frequently involves a sample size in the tens to low hundreds, with the expression levels of up to thousands or tens of thousands of genes included in the analysis, which is often based on the multivariate normal model. For the purpose of comparison, we consider the lung cancer gene expression data of \cite{Gordon2002}, whose multivariate normality was tested in \cite{ChenXia} using their high-dimensional test of $\cH_0$. The data consists of $n = 150$ patients and $d = 12,533$ genes, and is considered in \cite{ChenXia} because it has been analyzed using variable-selection and discrimination methods. 

Our test rejects $\cH_0$ at the $\alpha = 0.05$ level. While we note that, as discussed in \cref{preExisting_work} and \cref{sec_sims_moderate}, the test of \cite{ChenXia} encounters issues in controlling the type I error when $n \ll d$, our findings corroborate their conclusion regarding $\cH_0$. Moreover, our graphical diagnostics, as introduced in \cref{sec_real_data}, reveal additional pertinent structure in the data. In particular, the plots pertaining to the empirical distribution of the radii $\{ R_i \}_{i \in [n]}$, the \textit{standardized radii} $\{ V_i \}_{i \in [n]}$ as defined in \cref{sec_real_data}, and the interpoint distances $\{ \| X_i - X_j \|_2 \}_{i < j \in [n]}$ displayed in \cref{fig:Lung_Cancer} aid in the identification of two samples which are of anomalous distance from all other observations. 

Detection of outliers is crucial in the analysis of high-dimensional data, as many procedures used to address diverse scientific problems exhibit severe performance degradation in their presence, but identifying these anomalous observations in a rigorous manner is challenging \cite{Hirose, Fritsch, Barnett_Book}. By leveraging pertinent distance-based information contained in the sample, our proposed test and associated graphical diagnostics can assist in formally detecting such observations. These observations can then be further examined to determine whether steps such as sensitivity analysis or omission of the samples are warranted, based on domain knowledge and recommended best practices for conducting analysis in the presence of outliers \cite{Stephens, Barnett_Book, Gnanadesikan}.

\begin{figure}[!htb]
\centering
\vspace{-2mm}
\begin{minipage}{0.48\textwidth}
    \centering
    \includegraphics[width=\linewidth]{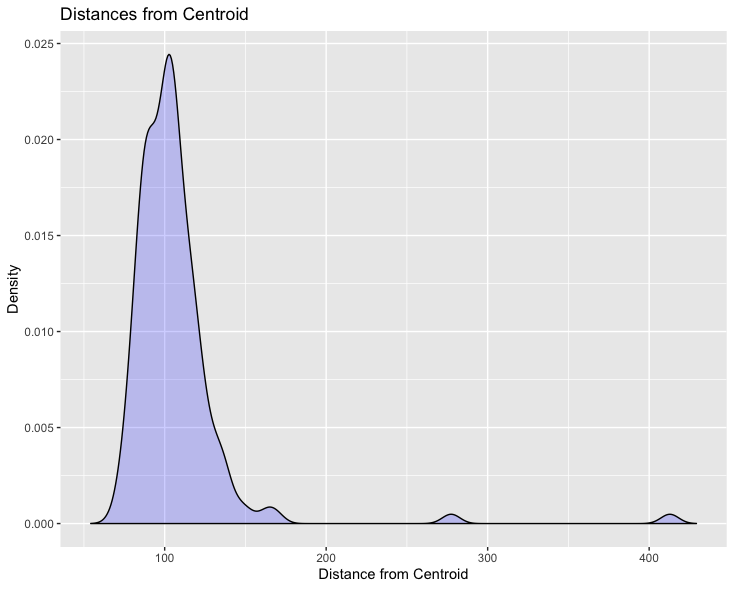}
\end{minipage}\hspace{5mm}
\begin{minipage}{0.48\textwidth}
    \centering
    \includegraphics[width=\linewidth]{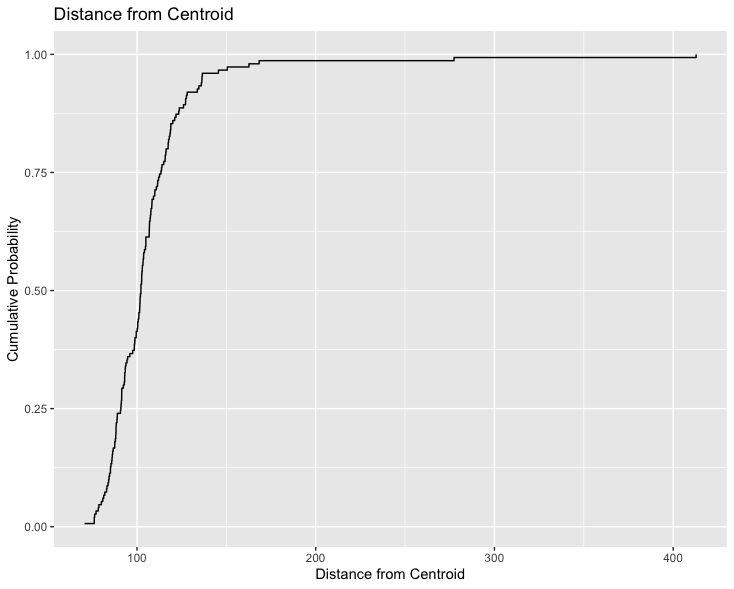}
\end{minipage}
\begin{minipage}{0.48\textwidth}
    \centering
    \includegraphics[width=\linewidth]{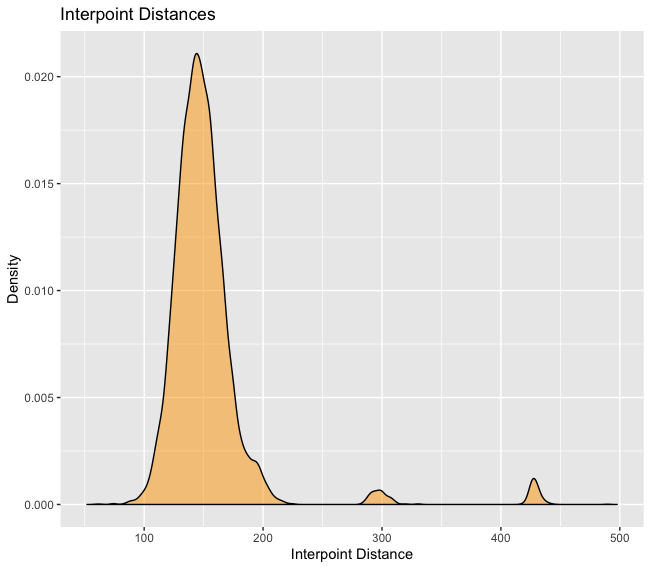}
\end{minipage} \hspace{0mm}
\begin{minipage}{0.48\textwidth}
    \centering
    \includegraphics[height = 0.325\textheight,keepaspectratio=false,width=1.1\linewidth]{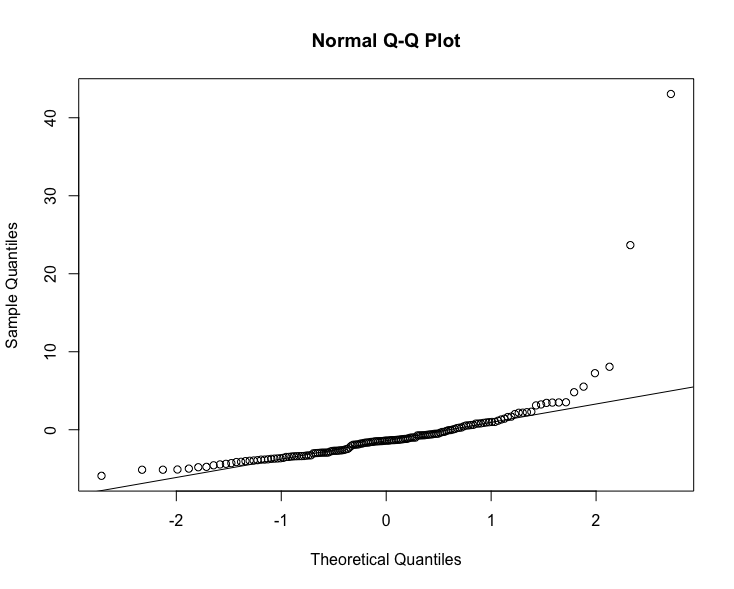}
\end{minipage}
\vspace{-3mm}
\caption{Panel 1: Density for the radii $\{ R_i \}_{i \in [n]}$. Panel 2: E.C.D.F. of the radii $\{ R_i \}_{i \in [n]}$. Panel 3: Density for the interpoint distances. Panel 4: Normal QQ plot for the standardized radii $\{ V_i \}_{i \in [n]}$.}
\label{fig:Lung_Cancer}
\end{figure}

\subsection{Supplementary Detail for the Application of \cref{sec_real_data}} \label{app_realData_network_supp}

As discussed in \cref{sec_real_data}, the gene co-expression network analysis of \cite{SILGGM} based on the childhood asthma data of \cite{Asthma_data1, Asthma_data2} is sensitive to the presence of 11 samples we determined to be anomalous and incongruous with the assumed model $\cH_0$. For reference, the inferred network structures both including and excluding these extreme samples is replicated in \cref{fig:sub_GeneNetwork_2}. We again note that when the complete dataset is used, 37 significant edges are inferred, whereas only 15 edges for the \textit{CLK1} gene are detected when the extreme samples are not present. Secondly, when we compare the network consisting of the top 20 most significant edges in \cref{fig:sub_GeneNetwork}, as presented in \cite{SILGGM}, we find that only $55\%$ of these genes appear in the set of significant genes identified when the extreme observations are omitted. This difference in the genes identified may correspond to potentially biologically meaningful differences in the relationships between the expression of certain genes with childhood asthma and its comorbidities. For example, the \textit{CDKN1B}, \textit{CCDC115}, \textit{CD274}, and \textit{UPF3B} genes are connected to the \textit{CLK1} gene after removal of these extreme samples, in contrast to the fact that these edges are not among the 20 most significant edges from the analysis of \cite{SILGGM} using the original data, as depicted in \cref{fig:sub_GeneNetwork}. These additional genes are associated with asthma, including in children in particular, \cite{CDKN1B_2, CDKN1B_1, CD274_1, CD274_2} as well as body height and other developmental issues \cite{Genehancer_UPF3B, UPF3B_1, CCDC115_1}\footnote{\url{https://www.genecards.org/cgi-bin/carddisp.pl?gene=UPF3B}} for which childhood asthma is a risk factor \cite{Asthma_Growth, Asthma_develop}. These considerations further illustrate the fact that the marked discrepancies in network structures inferred based on the presence of the extreme samples, as depicted in \cref{fig:sub_GeneNetwork}, may significantly affect the practical interpretation obtained in the gene co-expression network analysis of \cite{SILGGM}.

\begin{figure}[!htb]
\centering
\begin{minipage}{0.325\textwidth}
    \centering
    \includegraphics[width=\linewidth]{figs/Asthma_FullNetworkFinal.png}
\end{minipage}\hfill
\begin{minipage}{0.325\textwidth}
    \centering
    \includegraphics[width=\linewidth]{figs/Asthma_Top20Network.png}
\end{minipage}\hfill
\begin{minipage}{0.325\textwidth}
    \centering
    \includegraphics[width=\linewidth]{figs/NoOutlier_Network.png}
\end{minipage}

\caption{Left: Significant edges inferred from the original data. Middle: Top 20 edges inferred from the original data. Right: Significant edges inferred when the extreme samples are absent.}
\label{fig:sub_GeneNetwork_2}
\end{figure}


\section{Consistency for Finite Mixture Alternatives under Mild Moment Conditions}\label{app_sec_power_BS}

In addition to the finite mixture of sub-Gaussian alternatives considered in \cref{model_subG} of \cref{sec_power_mix}, in the following we consider finite mixtures generated via mixture components from a more general class of distributions.

\begin{definition}[Bai-Sarandasa Type Distributions]\label{Bai-Sarandasa}
    We say that a random vector $Y\in \RR^d$ has a distribution of \emph{Bai-Sarandasa type}, denoted by $Y \sim \ \mathcal{B}_{d,m}(\mu, \Gamma)$, if there exists some integer $m \geq d$,  some mean vector $\mu \in \mathbb{R}^d$, and some matrix $\Gamma \in \mathbb{R}^{d \times m}$ such that:
    \begin{enumerate} 
        \item[(i)] $Y =  \mu + \Gamma  Z$ for some random vector $Z \in \mathbb{R}^m$ with $\EE[Z] = 0_d$, $\EE[ZZ^\T] = \bI_m$, $\EE[Z_\ell^4] = \kappa$, and $\EE [Z_\ell^8] \leq C < \infty$ for each $\ell \in [m]$ and for some constants $\kappa,C>0$.
        
        \item[(ii)] For any $\ell_1 \neq  \cdots \neq \ell_r \in [m]$ with $r \in [8]$ and exponents $\alpha_1,\ldots,\alpha_r \in \mathbb{Z}^+$ satisfying $\sum_{k = 1}^r \alpha_k \leq 8$, $\EE ( \prod_{k = 1}^r Z_{\ell_k}^{\alpha_k} ) = \prod_{k = 1}^r \EE[Z_{\ell_k}^{\alpha_k}]$ holds.
    \end{enumerate} 
\end{definition}
For any $Y \sim \mathcal{B}_{d,m}(\mu, \Gamma)$, we have $\EE[Y] = \mu$ and $\Cov(Y) = \Gamma \Gamma^{\T}$. This class of distributions routinely serves as a generic family of multivariate models in high-dimensional testing problems \cite{songchen_regress, songchen_glm, Chen2010, ChenQin, SongChen2014, empiricallikeli, ChenBanded}, and consists of nonparametric factor-analytic type models, where coordinates of the latent factor vectors $Z_i \in \RR^m$, $i \in [n]$, are not required to be independent. Adopting these distributions for mixture components provides a broad class of mixture alternatives, as the distribution, dependence, and number of latent factors used in each mixture component can be heterogeneous. In particular, this alternative class includes Gaussian mixture models and multivariate skew-normal mixtures \cite{Aosh2018}, the families of parametric mixture alternatives which conform most closely to the null multivariate normal model, but which have mixture components satisfying far stronger structural conditions than that specified by \cref{Bai-Sarandasa}.

The class of alternatives we consider below are finite mixture distributions whose mixture components are of Bai-Sarandasa type.

\begin{model}[Bai-Sarandasa Mixture Alternatives]\label{finiteMixture_stochastic_rep}
There exists some integer $K \geq 2$, some vectors $\mu_k\in \RR^d$ and some matrices $\Gamma_k\in \RR^{d\times m_k}$ with integers $m_k\ge d$ for $k\in [K]$ such that 
\begin{align*}
  X_1,\ldots, X_n  ~ \overset{\text{i.i.d.}}{\sim} ~ \sum_{k = 1}^K \pi_k \ \mathcal{B}_{d,m_k}(\mu_k, \Gamma_k),
\end{align*}
where the mixing probabilities satisfy $\min_{k\in [K]}\pi_k \geq c$ for some universal constant $c>0$. 
\end{model}

The following theorem establishes consistency under location-type Bai-Sarandasa mixtures. As discussed in \cref{sec_power_mix}, the condition of identical component-conditional covariance matrices $\Sigma_k = \Sigma_*$, for $k \in [K]$, can be relaxed so as to allow $\Sigma_k \neq \Sigma_{\ell}$; see the proof of \cref{thm_loc_mix} in \cref{app_proof_thm_loc_mix} for more detail.

\begin{theorem}[Location-Type Mixtures]\label{thm_loc_mix} 
    Under \cref{finiteMixture_stochastic_rep} with $\Sigma_*:= \Gamma_k \Gamma_k^\T$ for all $k\in [K]$, suppose that
    $\rho_1(\Sigma_*) = \omega(n)$ and
    \begin{equation}\label{cond_mean_sep}
        \max_{k,\ell \in [K]}{\|\mu_k-\mu_\ell\|_2^2\over \tr(\Sigma_*) }  = \omega\left({1\over \min\{\rho_1(\Sigma_*)/n, \sqrt{\rho_2(\Sigma_*)/n}~\}}\right).
    \end{equation}
    Then, for arbitrary choice of level $\alpha \in (0,1)$,
    \[ 
        \lim_{n\to \i} \PP \left(\cH_0 \text{ is rejected}\right) =  1.
    \]
\end{theorem}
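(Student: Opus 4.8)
The plan is to follow the blueprint of the proof of \cref{thm_loc_mix_sG} sketched above: reduce the consistency claim to showing that, under \cref{finiteMixture_stochastic_rep} and \eqref{cond_mean_sep}, the range-type statistic $T$ of \eqref{def_T_range} satisfies $T\to-\infty$ in probability. Once this is established the conclusion is immediate, since $a_n=\sqrt{2\log n}\to\infty$ and, as in \cref{thm_range_limit}, $U_n\distrto E+E'$, so the Monte Carlo cut-offs $\widehat F^{-1}_{M,n}(\alpha/2)$ and $\widehat F^{-1}_{M,n}(1-\alpha/2)$ in \eqref{reject_range} stay bounded as $n,M\to\infty$, whence $\PP\bigl(T\notin(\widehat F^{-1}_{M,n}(\alpha/2),\widehat F^{-1}_{M,n}(1-\alpha/2))\bigr)\to 1$. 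Throughout I would condition on the latent label vector $C=(C_1,\ldots,C_n)$; since $\min_k\pi_k\ge c$, a Bernstein bound places $\asymp n$ samples in each mixture component with probability $1-o(1)$, and it suffices to argue on that event.

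The first step is a uniform control of the squared radii. Writing $X_i-\overline X=(\mu_{C_i}-\bar\mu)+(\Gamma_{C_i}Z_i-\overline{\Gamma Z})$ with $\bar\mu=n^{-1}\sum_j\mu_{C_j}$ and expanding $R_i^2=\|X_i-\overline X\|_2^2$ produces the deterministic term $\|\mu_{C_i}-\bar\mu\|_2^2$, a term linear in $Z_i$ (and $\overline Z$), and a quadratic form in $Z_i$ whose conditional mean is $(1-1/n)\tr(\Sigma_*)$. The aim is to show that, with probability $1-o(1)$,
\[
R_i^2=\tr(\Sigma_*)+\|\mu_{C_i}-\bar\mu\|_2^2+\eta_i\quad\text{for all }i\in[n],\qquad \max_{i\in[n]}|\eta_i|\le L,
\]
where $L$ is a deviation scale of order $\sqrt{\tr(\Sigma_*^2)}+\sqrt{\|\Sigma_*\|_\op}\,\max_{k,\ell}\|\mu_k-\mu_\ell\|_2$ up to a factor polynomial in $n$. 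Because the latent coordinates $Z_{i\ell}$ are no longer independent — only uncorrelated up to order eight via item (ii) of \cref{Bai-Sarandasa} — and carry only bounded eighth moments, sub-Gaussian concentration is unavailable; instead I would bound the eighth conditional moment of the centered quadratic and linear forms using the moment-factorization property together with $\EE Z_{i\ell}^8\le C$, and then combine Markov's inequality with a union bound over $i\in[n]$, exactly as in \cref{app_proof_thm_loc_mix_sG}. Carrying out this moment computation with the correct dependence on $\tr(\Sigma_*^2)$, $\|\Sigma_*\|_\op$ and the $\mu_k$'s — and, in the general case of distinct component covariances $\Sigma_k$ treated in \cref{app_proof_thm_loc_mix}, tracking the component-wise analogues of these quantities — is the step I expect to be the main obstacle.

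The second step bounds the normalizer from below. The unconditional covariance is $\Sigma=\Sigma_*+B$ with $B=\sum_{k<\ell}\pi_k\pi_\ell(\mu_k-\mu_\ell)(\mu_k-\mu_\ell)^\T\succeq 0$, so $\tr(\Sigma)\le\tr(\Sigma_*)+\tfrac12\max_{k,\ell}\|\mu_k-\mu_\ell\|_2^2$, while $\tr(\Sigma^2)\ge\tr(B^2)\ge c^4\max_{k,\ell}\|\mu_k-\mu_\ell\|_2^4$ since every cross term in $\tr(B^2)$ is nonnegative. Combining this with the extension of \cref{prop_Delta_Alternatives} to \cref{finiteMixture_stochastic_rep} — which gives $\widehat\Delta/\Delta=1+o_\PP(1)$ from the unbiasedness of $\widehat{\tr(\Sigma^2)}$ and $\tr(\widehat\Sigma_{\text{D}})$ together with Chebyshev's inequality under the bounded-eighth-moment assumption — yields, writing $M^2:=\max_{k,\ell}\|\mu_k-\mu_\ell\|_2^2$,
\[
\widehat\Delta\ \gtrsim_\PP\ \frac{\tr(B^2)}{\tr(\Sigma)}\ \gtrsim\ \frac{M^4}{\tr(\Sigma_*)+M^2}.
\]

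Finally I would combine the two. From the first step, with $m_k^2:=\tr(\Sigma_*)+\|\mu_k-\bar\mu\|_2^2$ and $\max_k\|\mu_k-\bar\mu\|_2^2\asymp M^2$ (triangle inequality around $\bar\mu$), and using $\min_k m_k^2\ge\tr(\Sigma_*)\gg L$ under the effective-rank hypotheses, the range splits as
\[
R_{(n)}-R_{(1)}\ \le\ \bigl(\max_k m_k-\min_k m_k\bigr)+\frac{2L}{\sqrt{\min_k m_k^2-L}}\ \lesssim_\PP\ \frac{M^2}{\sqrt{\tr(\Sigma_*)\vee M^2}}+\frac{L}{\sqrt{\tr(\Sigma_*)}}.
\]
Dividing by $\widehat\Delta^{1/2}$ and invoking the lower bound of the second step, the first summand contributes $O_\PP(1)$, while the second is $b_n$ times a factor that condition \eqref{cond_mean_sep} forces to $o(1)$; indeed, rewriting its right-hand side via $\|\Sigma_*\|_\op=\tr(\Sigma_*)/\rho_1(\Sigma_*)$ and $\tr(\Sigma_*^2)=\tr^2(\Sigma_*)/\rho_2(\Sigma_*)$ shows that \eqref{cond_mean_sep} is exactly $M^2=\omega\bigl(\max\{\,n\|\Sigma_*\|_\op,\ \sqrt{n\,\tr(\Sigma_*^2)}\,\}\bigr)$, which is precisely what is needed (the powers of $n$ — absent from the analogous \eqref{cond_mean_sep_sG} — being the price of moment/union-bound arguments in place of sub-Gaussian concentration, cf.\ \cref{rem_general_finite_alternatives}). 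Hence $\widehat\Delta^{-1/2}(R_{(n)}-R_{(1)})=o_\PP(b_n)$, so $T=2a_n[\widehat\Delta^{-1/2}(R_{(n)}-R_{(1)})-b_n]=-2a_nb_n(1+o_\PP(1))\to-\infty$ in probability, which would finish the argument.
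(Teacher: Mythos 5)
Your overall architecture coincides with the paper's: reduce the claim to $T\to-\infty$ in probability, lower-bound the null dispersion index through the between-component part of $\Sigma$, upper-bound $R_{(n)}-R_{(1)}$ by a uniform concentration of the radii about component-dependent centers $M_{ik}$, and close with the ratio-consistency of $\wh\Delta$ from \cref{prop_Delta_Alternatives}. Your single lower bound $\Delta\gtrsim M^4/(\tr(\Sigma_*)+M^2)$ in fact unifies the paper's two cases ($\delta\lesssim\tr(\oSigma)$ versus $\tr(\oSigma)=o(\delta)$), and your rewriting of \eqref{cond_mean_sep} as $M^2=\omega\bigl(\max\{n\|\Sigma_*\|_\op,\ \sqrt{n\,\tr(\Sigma_*^2)}\,\}\bigr)$ is exactly the form in which the condition enters the final bookkeeping.

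The one step that would fail as written is the concentration step. You propose to bound the \emph{eighth} conditional moment of the centered quadratic form $Z_i^\T\Gamma_{C_i}^\T\Gamma_{C_i}Z_i-\tr(\Sigma_*)$ and then apply Markov with a union bound. But the eighth moment of a quadratic form in $Z_i$ involves product moments of the entries of $Z_i$ of total degree sixteen, whereas item (ii) of \cref{Bai-Sarandasa} only controls and factorizes product moments of total degree at most eight; nothing in \cref{finiteMixture_stochastic_rep} bounds higher moments. (The eighth moment of the \emph{linear} form $a^\T Z_i$ is available, but not that of the quadratic form.) What the paper does instead -- and what is actually responsible for the factors of $n$ and $\sqrt n$ in \eqref{cond_mean_sep} -- is to use only the conditional variance $\sigma_{ik}^2=\Var(R_i^2\mid C_i=k,C_*)\lesssim\tr(\oSigma^2)+\delta\|\oSigma\|_\op$ computed in \cref{lem_mean_var}, apply Chebyshev's inequality at threshold $\epsilon=\sqrt{n\log n}$, and union-bound over $i\in[n]$, giving $\max_i|R_i^2-M_{iC_i}^2|\lesssim\sigma_{ik}\sqrt{n\log n}$ on an event of probability $1-O(1/\log n)$. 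Substituting this deviation scale for your $L$ makes your final combination go through verbatim and reproduces \eqref{cond_mean_sep} (including the required $\Delta^{-1/2}(R_{(n)}-R_{(1)})=o_\PP(\sqrt{\log n})$, so that the quantile-contrast term is $o_\PP(2a_nb_n)$), so the flaw is localized and repairable; but as stated your Step 1 relies on moments the model does not possess.
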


\begin{proof}
    Its proof appears in \cref{app_proof_thm_loc_mix}.
\end{proof}

Compared to \cref{thm_cov_mix_sG}, the SNR condition in \eqref{cond_mean_sep} is stronger for general \cref{finiteMixture_stochastic_rep} alternatives as a consequence of the relaxed moment conditions on the mixture components. To observe the effect of dimensionality in \eqref{cond_mean_sep}, in the simple example considered after \cref{thm_cov_mix_sG} where 
$\mu_{j2} = \mu_{j1} + \delta_n$ for all $j\in [d]$,
\eqref{cond_mean_sep} reduces to 
$
    \delta^2_n = \omega(
         \sqrt{n / d}
    ),
$
implying that the marginal distinguishability condition on $\delta_n$ gets milder when $d$ is larger in order with respect to $n \to \i$. 

Similar to \cref{thm_cov_mix_sG}, we also have the following consistency results under covariance-type Bai-Sarandasa mixtures. As with \cref{thm_cov_mix_sG}, the result is stated for mixture components with an identical mean vector  $\mu_1 = \cdots = \mu_K$, but is proven in \cref{app_proof_thm_cov_mix} under a relaxed condition allowing distinct mean vectors $\mu_k \neq \mu_{\ell}$.

\begin{theorem}[Covariance-Type Mixtures]\label{thm_cov_mix}
 Under \cref{finiteMixture_stochastic_rep} with $\mu_1 = \cdots = \mu_K$, suppose that
 \begin{equation}\label{cond_cov_sep}
    {\max_{k,\ell\in [K]} \tr(\Sigma_k-\Sigma_\ell) \over \max_{k\in [K]}  \tr(\Sigma_k) }   ~ = ~  \omega\left(  { \sqrt{\log n} \over 
        \min_{k\in[K]} \min\{\rho_1(\Sigma_k)/n, \sqrt{\rho_2(\Sigma_k)/n}~\}}  \right).
  \end{equation}
 Then, for arbitrary choice of level $\alpha \in (0,1)$, 
    \[ 
        \lim_{n\to \i}\PP \left(\cH_0 \text{ is rejected}\right) = 1.
    \]
\end{theorem}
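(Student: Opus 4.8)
The plan is to prove that, under \cref{finiteMixture_stochastic_rep}, the range-type statistic $T$ of \eqref{def_T_range} satisfies $T\to+\i$ in probability; since for any fixed $\alpha$ the cutoffs $\wh F^{-1}_{M,n}(\alpha/2),\wh F^{-1}_{M,n}(1-\alpha/2)$ stay bounded (converging as $M\to\i$ to the quantiles of the law $E+E'$ of \cref{thm_range_limit}), this forces $\PP(\cH_0\text{ is rejected})\to1$. Write $\Sigma=\Cov(X)$, $\Delta=2\tr(\Sigma^2)/\tr(\Sigma)$, and $D=\max_{k,\ell}\tr(\Sigma_k-\Sigma_\ell)=\tr(\Sigma_{k^*})-\tr(\Sigma_{k_*})$ for $k^*\in\arg\max_k\tr(\Sigma_k)$, $k_*\in\arg\min_k\tr(\Sigma_k)$; since $\min_k\pi_k\ge c$, $\tr(\Sigma)\asymp\max_k\tr(\Sigma_k)$ and $\tr(\Sigma^2)\asymp\max_k\tr(\Sigma_k^2)$. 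As preliminaries, the counts $n_k=|\{i:C_i=k\}|$ satisfy $n_k\asymp\pi_kn\gtrsim cn$ for all $k$ with probability tending to one (Bernstein), and the whole argument proceeds on this event, i.e.\ conditionally on $C_1,\dots,C_n$ — this is the device that reduces the mixture to per-component Bai--Sarandasa computations; note that, because the $Y_i$ of \cref{rem_thm_range} then have component-dependent means, the Gaussian-approximation machinery of \cref{thm_range_limit} is unavailable, exactly as for \cref{thm_loc_mix_sG}. For general mixture means the only change is that $\tr(\Sigma_k)$ gets replaced throughout by $\tr(\Sigma_k)+\|\mu_k-\bar\mu\|_2^2$ (with $\bar\mu=\sum_k\pi_k\mu_k$), which is why the argument in fact covers $\mu_k\neq\mu_\ell$; under $\mu_1=\cdots=\mu_K=:\mu$ it reverts to the above.

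First I would establish the \cref{finiteMixture_stochastic_rep}-analog of \cref{prop_Delta_Alternatives}, namely $\wh\Delta/\Delta=1+\cO_\PP(n^{-1/2})$. Since $\widehat{\tr(\Sigma^2)}$ of \eqref{def_tr_Sigma_hat} is unbiased for $\tr(\Sigma^2)$ and $\tr(\wh\Sigma_{\text{D}})$ is unbiased for $\tr(\Sigma)$ for arbitrary i.i.d.\ data with four finite moments, by Chebyshev it suffices to show $\Var(\widehat{\tr(\Sigma^2)})\lesssim\tr(\Sigma^2)^2/n$ and $\Var(\tr(\wh\Sigma_{\text{D}}))\lesssim\tr(\Sigma)^2/n$. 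I would obtain these by conditioning on the labels, applying Definition~\ref{Bai-Sarandasa}(i)--(ii) within each component so that all products of $Z$-coordinates of order $\le8$ factor, and then adding the between-label contribution, which is of smaller order because $D\le\max_k\tr(\Sigma_k)\lesssim\tr(\Sigma)$. This eighth-moment bookkeeping — the mixture version of the computation already done for \cref{prop_Delta_Alternatives} — is the most laborious part of the proof.

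Next I would lower-bound the range of the squared radii. Using that the maximum of a finite set is at least its average, which is at least its minimum, $R_{(n)}^2\ge n_{k^*}^{-1}\sum_{i:C_i=k^*}R_i^2$ and $R_{(1)}^2\le n_{k_*}^{-1}\sum_{i:C_i=k_*}R_i^2$; decomposing $R_i^2=\|X_i-\mu\|_2^2-2(X_i-\mu)^\T(\oX-\mu)+\|\oX-\mu\|_2^2$, the average over a component $k$ equals $\tr(\Sigma_k)\bigl(1+\cO(1/n)\bigr)+n^{-1}\tr(\Sigma)+(\text{fluctuation})$, where the $n^{-1}\tr(\Sigma)$ term (from $\|\oX-\mu\|_2^2$) is common to all $k$ and cancels in the difference, and the fluctuation is controlled by Chebyshev via the conditional bound $\Var(\|X_i-\mu\|_2^2\mid C_i=k)\lesssim\|\Sigma_k\|_\op\tr(\Sigma_k)$ (from $\Var(Z^\T MZ)=(\kappa_k-3)\sum_\ell M_{\ell\ell}^2+2\tr(M^2)$ with $M=\Gamma_k^\T\Gamma_k$ and $\sum_\ell M_{\ell\ell}^2\le\|M\|_\op\tr(M)$). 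Hence $R_{(n)}^2-R_{(1)}^2\ge D(1-o(1))-\cO_\PP\bigl(\sqrt{\max_k\|\Sigma_k\|_\op\tr(\Sigma_k)/n}\,\bigr)$, which by \eqref{cond_cov_sep} is $\ge D/2$ with probability tending to one. To pass to the range of the $R_i$ I need $R_{(n)}+R_{(1)}\asymp\sqrt{\tr(\Sigma)}$: the lower bound follows from the previous display, the upper bound from a uniform bound $\max_iR_i^2\lesssim\tr(\Sigma)$ obtained by a union bound and Markov's inequality on the fourth moment of $\|X_i-\mu\|_2^2-\tr(\Sigma_{C_i})$ (finite by the eighth-moment hypothesis), which is legitimate precisely because \eqref{cond_cov_sep} forces $\rho_1(\Sigma_k),\rho_2(\Sigma_k)\gg n\cdot\mathrm{polylog}(n)$. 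Consequently $R_{(n)}-R_{(1)}=(R_{(n)}^2-R_{(1)}^2)/(R_{(n)}+R_{(1)})\gtrsim D/\sqrt{\tr(\Sigma)}$, up to the error above.

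Assembling the pieces, using $R_{(n)}-R_{(1)}\gtrsim D/\sqrt{\tr(\Sigma)}$, $\wh\Delta/\Delta=1+\cO_\PP(n^{-1/2})$, and $\Delta=2\tr(\Sigma^2)/\tr(\Sigma)$,
\[
T=2a_n\wh\Delta^{-1/2}\bigl(R_{(n)}-R_{(1)}\bigr)-2a_nb_n\ \gtrsim\ 2a_n\,\frac{D}{\sqrt{2\tr(\Sigma^2)}}\,(1+o_\PP(1))-2a_nb_n .
\]
Since $a_nb_n\asymp\log n$ and, using $\tr(\Sigma^2)\asymp\max_k\tr(\Sigma_k^2)$ together with the elementary bound $\max_k\tr(\Sigma_k)/\sqrt{\max_k\tr(\Sigma_k^2)}\ge\min_k\sqrt{\rho_2(\Sigma_k)}$, condition \eqref{cond_cov_sep} yields $a_nD/\sqrt{\tr(\Sigma^2)}=\omega(\log n)$, we conclude $T\to+\i$ in probability, which completes the proof. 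The main obstacle is the variance bound of the second step, with a secondary difficulty in tracking the error terms of the third step at a rate fine enough that \eqref{cond_cov_sep} — stronger than the sub-Gaussian hypothesis \eqref{cond_cov_sep_sG} of \cref{thm_cov_mix_sG} by roughly a factor of $\sqrt n$ inside the signal-to-noise requirement — is exactly what is needed; this $\sqrt n$ loss is the price of replacing the exponential concentration of quadratic forms available under \cref{model_subG} by moment (Chebyshev/Markov) bounds.
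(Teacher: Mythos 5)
Your overall strategy matches the paper's: show $T\to+\infty$ in probability by (i) proving ratio-consistency of $\wh\Delta$ for $\Delta$ under \cref{finiteMixture_stochastic_rep} (this is exactly what the paper does in the mixture case of \cref{prop_Delta_Alternatives}, and you correctly identify the eighth-moment bookkeeping as the heavy lifting), and (ii) lower-bounding $\Delta^{-1/2}(R_{(n)}-R_{(1)})$ by $\omega_\PP(\sqrt{\log n})$. Where you genuinely diverge is in step (ii): the paper conditions on the labels, applies Chebyshev to each individual $R_i^2$ at level $\sigma_{ik}\sqrt{n\log n}$ and takes a union bound over $i\in[n]$ (the event $\cE$ in \eqref{def_event}), then compares the conditional means $M_{ik}$ across components; you instead use $R_{(n)}^2\ge$ (average over component $k^*$) and $R_{(1)}^2\le$ (average over component $k_*$), so that the relevant fluctuation is that of a component \emph{mean} of $n_k\asymp n$ terms, of order $\sqrt{\Var/n}$ rather than $\sqrt{\Var\cdot n\log n}$. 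This averaging device is cleaner, avoids the union bound entirely, and in fact would prove the result under a weaker separation condition than \eqref{cond_cov_sep}; your closing diagnosis that the $\sqrt n$ gap between \eqref{cond_cov_sep} and \eqref{cond_cov_sep_sG} reflects the union-bound-plus-Chebyshev route is accurate for the paper's proof, though your own route does not actually incur that loss.

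One step does not close as written. You bound $\Var(\|X_i-\mu\|_2^2\mid C_i=k)\lesssim\|\Sigma_k\|_\op\tr(\Sigma_k)$ and then assert that \eqref{cond_cov_sep} forces $D\gg\sqrt{\|\Sigma_k\|_\op\tr(\Sigma_k)/n}$, i.e. $D/\tr(\oSigma)\gg 1/\sqrt{n\,\rho_1(\Sigma_k)}$. But \eqref{cond_cov_sep} only supplies $D/\tr(\oSigma)\gg\sqrt{\log n}\,\max\{n/\rho_1(\Sigma_k),\sqrt{n/\rho_2(\Sigma_k)}\}$, and the comparison
\[
\sqrt{\log n}\,\max\Bigl\{\tfrac{n}{\rho_1},\sqrt{\tfrac{n}{\rho_2}}\Bigr\}\ \gtrsim\ \tfrac{1}{\sqrt{n\rho_1}}
\]
requires $\rho_1(\Sigma_k)\lesssim n^3\log n$ or $\rho_2(\Sigma_k)\lesssim n^2\rho_1(\Sigma_k)\log n$; neither is implied by the hypothesis (take a spiked spectrum with $\rho_1\asymp\sqrt d$, $\rho_2\asymp d$ and $d\gg n^6$). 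The repair is immediate: in $\Var(Z^\T MZ)=2\tr(M^2)+(\kappa-3)\sum_\ell M_{\ell\ell}^2$ with $M=\Gamma_k^\T\Gamma_k$, use $\sum_\ell M_{\ell\ell}^2\le\|M\|_F^2=\tr(M^2)=\tr(\Sigma_k^2)$ rather than $\|M\|_\op\tr(M)$, giving $\Var\lesssim\tr(\Sigma_k^2)$ and a required comparison $D/\tr(\oSigma)\gg 1/\sqrt{n\,\rho_2(\Sigma_k)}$, which is trivially dominated by the $\sqrt{n\log n/\rho_2(\Sigma_k)}$ part of \eqref{cond_cov_sep}. Two smaller points to tighten in a full write-up: the fluctuation of the cross term $-2(\bar X_k-\mu)^\T(\oX-\mu)$ is not covered by $\Var(\|X_i-\mu\|_2^2\mid C_i=k)$ and must be bounded separately (its conditional mean is $-2\tr(\Sigma_k)/n$, whose between-component difference is $\cO(D/n)=o(D)$, and its standard deviation is $\cO(\sqrt{\tr(\oSigma^2)}/n)$, which is harmless); and the claim that the Monte Carlo cutoffs are tight should be justified by $U_n\distrto E+E'$ together with the Dvoretzky--Kiefer--Wolfowitz bound, as in the proof of \cref{thm_range_typeI}.
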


\begin{proof}
    Its proof appears in \cref{app_proof_thm_cov_mix}.
\end{proof}

Due to the relaxed moment conditions on Bai-Sarandasa mixture components, condition \eqref{cond_cov_sep} puts stronger requirement on the maximum relative difference in total variance than \eqref{cond_cov_sep_sG} in \cref{thm_cov_mix_sG}.
In the simple example discussed following \cref{thm_cov_mix_sG} where $[\Sigma_1]_{jj} = [\Sigma_2]_{jj} + \delta_n$ for all $j\in[d]$ and some sequence $\delta_n > 0$, condition \eqref{cond_cov_sep} simplifies to 
$
      \delta_n  ~=  ~    \omega(\sqrt{{ n \log(n) / d} }),
$
which becomes less stringent as $d/(n \log n)$ increases. 

\subsection{An Example of Dependent Factors under a Bai-Sarandasa Type Distribution}\label{app_example_Bai_Sarandasa}  
    The second condition on the latent factors $Z$ in \cref{Bai-Sarandasa}  is satisfied when $Z$ consists of coordinates which are either independent or possess some mild form of dependence. An example where the condition is satisfied when the $(Z_\ell)_{\ell \in [m]}$ are not independent is as follows: Suppose $(T_\ell)_{\ell \in [m]} \independent U$, where $(T_\ell)_{\ell \in [m]}$ are independent random variables satisfying the first condition of \cref{Bai-Sarandasa} in lieu of $(Z_\ell)_{\ell \in [m]}$, and $\PP(U = -1) = \PP(U = 1) = 1/2$. Letting $Z_\ell = U T_\ell$ for $\ell \in [m]$, it can be verified that both the marginal- and product-moment conditions of \cref{Bai-Sarandasa} hold and that the $Z_\ell$ are not independent.

\section{Proofs} \label{app_proofs}

\subsection{A Basic Lemma on Effective Ranks of $\Sigma$}

The following lemma establishes relationships among the following effective ranks of $\Sigma$:
\[
    \rho_1(\Sigma) := {\tr(\Sigma) \over \|\Sigma\|_\op},\qquad \rho_2(\Sigma) := {\tr^2(\Sigma) \over \tr(\Sigma^2)},\qquad \rho_3(\Sigma) := {\tr^3(\Sigma^2) \over \tr^2(\Sigma^3)}. 
\]
\begin{lemma}\label{lem_ranks}
    Let $\rho_1(\Sigma), \rho_2(\Sigma), \rho_3(\Sigma)$ be defined as above. Provided that $\|\Sigma\|_\op >0$, one has 
    \begin{enumerate}
        \item[(1)] $1 \leq \sqrt{\rho_3(\Sigma)} \le \rho_2(\Sigma^2) \le \rho_3(\Sigma)\le \rho_2(\Sigma) \leq \rank(\Sigma),$
        \item[(2)] ${\rho_1^2(\Sigma)/ d}\le \rho_3(\Sigma)\le \rho_1^{3/2}(\Sigma),$
        \item[(3)] $\rho_1(\Sigma^2) \le \rho_1(\Sigma) \le \rho_2(\Sigma) \le \rho_1^2(\Sigma),$
        \item[(4)] $\rho_3^{1/4}(\Sigma) \leq \rho_1(\Sigma^2) \leq \rho_3(\Sigma)$.
    \end{enumerate} 
\end{lemma}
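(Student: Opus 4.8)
The plan is to reduce every inequality in \cref{lem_ranks} to an elementary fact about the eigenvalues $\lambda_1 \ge \cdots \ge \lambda_d \ge 0$ of $\Sigma$. Write $r := \rank(\Sigma)$, so that $\lambda_1 = \|\Sigma\|_\op > 0$ and $\lambda_i = 0$ for $i > r$, and set $p_k := \tr(\Sigma^k) = \sum_{i=1}^d \lambda_i^k$ for $k \ge 1$. Then $\rho_1(\Sigma) = p_1/\lambda_1$, $\rho_2(\Sigma) = p_1^2/p_2$, $\rho_3(\Sigma) = p_2^3/p_3^2$, $\rho_1(\Sigma^2) = p_2/\lambda_1^2$, $\rho_2(\Sigma^2) = p_2^2/p_4$, and every quantity in sight is strictly positive since $\Sigma \neq 0$. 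I would first record two families of building blocks. Block (a), \emph{operator-norm monotonicity}: since $0 \le \lambda_i \le \lambda_1$, one has $p_k \le \lambda_1^{\,k-j} p_j$ for all integers $k \ge j \ge 1$, and $\lambda_1^{\,k} \le p_k$ for all $k$. Block (b), \emph{Cauchy--Schwarz / log-convexity of $k \mapsto p_k$}: $p_2^2 \le p_1 p_3$, $p_3^2 \le p_2 p_4$, and $p_1^2 \le r\,p_2 \le d\,p_2$, obtained by writing, e.g., $p_2 = \sum_i \lambda_i^{1/2}\lambda_i^{3/2} \le (\sum_i \lambda_i)^{1/2}(\sum_i \lambda_i^3)^{1/2}$.

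With these in hand, clearing denominators turns each claimed inequality into a one- or two-step deduction. For part (3), $\rho_1(\Sigma^2)\le\rho_1(\Sigma)$ and $\rho_1(\Sigma)\le\rho_2(\Sigma)$ are both equivalent to $p_2 \le \lambda_1 p_1$, while $\rho_2(\Sigma)\le\rho_1^2(\Sigma) \iff \lambda_1^2 \le p_2$, all instances of block (a). For part (4), $\rho_1(\Sigma^2)\le\rho_3(\Sigma) \iff p_3 \le \lambda_1 p_2$, and $\rho_3^{1/4}(\Sigma)\le\rho_1(\Sigma^2) \iff \lambda_1^8 \le p_2 p_3^2$, which follows from $p_2\ge\lambda_1^2$ and $p_3\ge\lambda_1^3$, again block (a). For part (1): $\rho_2(\Sigma^2)\le\rho_3(\Sigma)\iff p_3^2 \le p_2 p_4$, $\rho_3(\Sigma)\le\rho_2(\Sigma)\iff p_2^2 \le p_1 p_3$, and $\rho_2(\Sigma)\le\rank(\Sigma)\iff p_1^2 \le r\,p_2$ are all block (b); $\sqrt{\rho_3(\Sigma)}\le\rho_2(\Sigma^2)\iff p_4^2 \le p_2 p_3^2$, for which I would chain $p_4 \le \lambda_1 p_3$ and then $\lambda_1^2 \le p_2$; and $1 \le \sqrt{\rho_3(\Sigma)}$ is then immediate from the already-proven $\sqrt{\rho_3(\Sigma)} \le \rho_2(\Sigma^2) \le \rho_3(\Sigma)$. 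For part (2): $\rho_1^2(\Sigma)/d \le \rho_3(\Sigma) \iff p_1^2 p_3^2 \le d\,\lambda_1^2 p_2^3$, obtained by multiplying $p_1^2 \le d\,p_2$ with $p_3^2 \le \lambda_1^2 p_2^2$; and $\rho_3(\Sigma)\le\rho_1^{3/2}(\Sigma)$ follows by raising $p_2^2 \le p_1 p_3$ to the power $3/2$ to get $p_2^3 \le p_1^{3/2}p_3^{3/2}$, whence $\rho_3(\Sigma) = p_2^3/p_3^2 \le p_1^{3/2}/p_3^{1/2} \le p_1^{3/2}/\lambda_1^{3/2} = \rho_1^{3/2}(\Sigma)$ by $p_3 \ge \lambda_1^3$.

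I do not anticipate a genuine obstacle: the whole argument rests on Cauchy--Schwarz and the trivial bound $\lambda_i \le \lambda_1$. If anything requires a moment's thought, it is the two ``two-step'' inequalities --- $\sqrt{\rho_3(\Sigma)} \le \rho_2(\Sigma^2)$ in part (1) and $\rho_3(\Sigma) \le \rho_1^{3/2}(\Sigma)$ in part (2) --- where one must combine a log-convexity bound with an operator-norm bound in the right order rather than apply a single building block; the only bookkeeping point is to keep $\rank(\Sigma)$ (not $d$) in the step giving $\rho_2(\Sigma) \le \rank(\Sigma)$, while noting that the genuine ``$d$'' appearing in part (2) is indeed what is used there. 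I would finish with a quick check against the extreme spectra (all nonzero eigenvalues equal; a single nonzero eigenvalue), confirming that each inequality is tight in the anticipated regime.
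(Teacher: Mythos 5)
Your proposal is correct and follows essentially the same route as the paper: reduce each inequality to power sums of the eigenvalues and apply Cauchy--Schwarz together with $\lambda_i \le \lambda_1$, including the same two-step arguments for $\sqrt{\rho_3(\Sigma)} \le \rho_2(\Sigma^2)$ and $\rho_3(\Sigma) \le \rho_1^{3/2}(\Sigma)$. The only cosmetic difference is that the paper obtains part (4) as a corollary of parts (1) and (3) applied to $\Sigma$ and $\Sigma^2$, whereas you verify it directly; both are immediate.
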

\begin{proof}
    Consider the eigenvalues of $\Sigma$ ordered $\lambda_1\ge \cdots \ge \lambda_d\ge 0$. First, note that $\rho_3(\Sigma) \geq 1$ trivially and $\rho_2(\Sigma) \leq \text{rank}(\Sigma)$ via direct application of the Cauchy-Schwarz inequality. Next, the inequality $\rho_3(\Sigma) \le \rho_2(\Sigma)$ can be seen from 
    \begin{align*}
        \sqrt{\rho_3(\Sigma)\over \rho_2(\Sigma)} = {\tr^2(\Sigma^2) \over \tr(\Sigma) \tr(\Sigma^3)} = {(\sum_j\lambda_j^2)^2 \over (\sum_j \lambda_j)(\sum_j\lambda_j^3)}\le 1,
    \end{align*}
    using the Cauchy-Schwarz inequality in the last step. Similarly, we have 
    \begin{align*}
        {\rho_2(\Sigma^2)\over \rho_3(\Sigma)} =  {(\sum_j\lambda_j^3)^2 \over (\sum_j \lambda_j^2)(\sum_j\lambda_j^4)}\le 1.
    \end{align*}
    To complete the proof of the first chain of inequalities, note that  $\tr(\Sigma^4) \le \lambda_1 \tr(\Sigma^3)$, and thus
    \begin{align*}
        \rho_2(\Sigma^2) \ge {\tr^2(\Sigma^2) \over \tr(\Sigma^3) \lambda_1} = {\tr^{3/2}(\Sigma^2) \over \tr(\Sigma^3)} \sqrt{\tr(\Sigma^2) \over \lambda_1^2} \ge \sqrt{\rho_3(\Sigma)}.
    \end{align*}
For the second set of inequalities, first use the fact that $\tr(\Sigma^3) \le \tr(\Sigma^2) \|\Sigma\|_\op$ to obtain
    \[
        \rho_3(\Sigma) \ge {\tr(\Sigma^2) \over \|\Sigma\|_\op^2} = {\sum_j \lambda_j^2 \over \lambda_1^2} \ge {(\sum_j\lambda_j)^2\over d\lambda_1^2} = {\rho_1^2(\Sigma) \over d}.
    \]
    On the other hand, by $(\sum_j\lambda_j^2)^2 \le (\sum_j\lambda_j)(\sum_j\lambda_j^3)$,
    \[
        \rho_3(\Sigma) \le \sqrt{\tr^3(\Sigma) \over \tr(\Sigma^3)} \le \sqrt{\tr^3(\Sigma) \over \|\Sigma\|_\op^3} = \rho_1^{3/2}(\Sigma).
    \]
    The third chain of inequalities follows by noting that $\lambda_1^2 \le \tr(\Sigma^2) \le \lambda_1\tr(\Sigma)$. Finally, the chain of inequalities (4) follows from application of inequalities (1) and (3), thereby completing the proof.
\end{proof}

\subsection{Proof of \cref{thm_range_limit}: Gaussian Approximation for the Range-Type Test Statistic with the Population Dispersion Index Parameter}\label{app_proof_thm_range_limit}

	\begin{proof}
		Recall that $R_{(q)}$ is the $q^\text{th}$ order statistics of $R_i = \|X_i - \oX\|_2$ for $i\in [n]$. 
		Further recall that 
		\begin{equation*}
			a_n = \sqrt{2\log n},\qquad b_n = a_n- {\log\log n + \log(4\pi) \over 2a_n}.
		\end{equation*}
		Our proof consists of the following principal steps: 
		\begin{enumerate}
			\item  First, defining the random vector $Y = (Y_1, \ldots, Y_n)^\T$ via 
			\[
			Y_i := {1\over \sqrt{2 \tr(\Sigma^2)}}\left(  {n\over n-1}R_i^2 - \tr(\Sigma) \right),\qquad \forall\ i\in [n],
			\]
			we establish the limiting distributions of $a_n (Y_{(n)} - b_n)$ and $a_n (Y_{(1)} + b_n)$, and bound
			\[
					\sup_{t\in \RR} \left| \PP\left(
				Y_{(n)} - Y_{(1)} \le t
				\right) - \PP\left(
				S_{(n)} - S_{(1)}\le t
				\right) \right|
			\]
            from above.
			
			\item Secondly, we establish the ratio-consistency of $R_{(q)}$ for $\sqrt{\tr(\Sigma)}$, for each $q \in \{1, n\}$; in particular, 
			\begin{equation}\label{eq_radii_ratio_consistency}
				{R_{(q)} \over \sqrt{\tr(\Sigma)}}  =  1 + \cO_\PP\left(
				{b_n \over  \sqrt{\rho_2(\Sigma)}}
				\right) + \cO\left(1\over n\right).
			\end{equation}
			
			\item Finally, we use the ratio consistency property of \textbf{Step 2} to further bound
			\[
					\sup_{t\in \RR} \left| \PP\left(
				\bar T \le t
				\right) - \PP\left(
			 U_n \le t
				\right) \right|
			\]
			from above, from which we then establish $\bar T\distrto E+E'$.
			
		\end{enumerate}
		
		\paragraph{Proof of Step 1:}
		Recall that the spectral decomposition of $\Sigma$ is $\Sigma = U \Lambda U^{\T}$ with $\Lambda = \diag(\lambda_1, \ldots, \lambda_d)$ and $U \in \bbO^{d}$. 
		Under $\cH_0$, there exist $Z_1,\ldots, Z_n \in \RR^d$ which are \text{i.i.d.} realizations of $\cN_d(0_d, \bI_d)$ such that, by the rotational invariance of standard Gaussian random vectors,
		\begin{align}\label{def_xi}\nonumber
			Y_i  &=  {1\over \sqrt{2 \tr(\Sigma^2)}}\left( {n\over n-1} \| \Lambda^{1/2}(Z_i- \oZ) \|^2 -  \tr(\Sigma) \right)\\\nonumber
			& = \sum_{j=1}^d  {\lambda_j  \over \sqrt{2 \tr(\Sigma^2)}}  \left( {n \over n-1}(Z_{ij} - \oZ_j)^2 - 1\right)\\
			& =: \sum_{j=1}^d \xi_{ij},
		\end{align}
		where $\oZ_j := n^{-1}\sum_{i=1}^n Z_{ij}$. 
		In \cref{lem_xi}, we verify that, for any $i,i'\in [n]$ and $j\in [d]$,
		\begin{equation}\label{eq_moments_xi}
		\EE[\xi_{ij}]= 0, \qquad \Cov(\xi_{ij},\xi_{i'j}) = \frac{ \lambda^2_j}{\tr(\Sigma^2)}  1_{\{i = i'\}}.
	\end{equation}
		Moreover, observe that  $\xi_{ij}$ is independent of $\xi_{ij'}$ for any $i\in [n]$ and any $j\ne j'$.  Since
		\[
			Y_{(n)} = \max_{i \in[n]} {1\over \sqrt{ d}} \sum_{j=1}^d  \xi_{ij}\sqrt{d},
		\]
		we seek to invoke \cref{thm_CCKK} to bound 
		$
			\sup_{t\in \RR} | \PP(
				Y_{(n)} \le t
			) - \PP(
			S_{(n)} \le t
			) |$.
		To do so, we first verify the Conditions E and M in Assumptions \ref{ass_E} \& \ref{ass_M}. Since 
		\begin{equation}\label{dist_Z_Zbar}
			\sqrt{n \over n-1}(Z_{ij}-\bar Z_j) \sim \cN\left(
			0,  1
			\right),
		\end{equation}
		we know $(n/(n-1)) (Z_{ij}-\bar Z_j)^2$
		is sub-exponential, implying that $
			\EE \exp(
			|\xi_{ij}| \sqrt{d} / B_d
			) \le 2
		$
		holds for 
		\begin{equation}\label{def_B_d}
			B_d = C \sqrt{d \lambda_1^2 \over \tr(\Sigma^2)} \overset{\eqref{def_rhos}}{=} C \sqrt{d   \over \rho_1(\Sigma^2)},
		\end{equation}
		where $C>0$ is an absolute constant. Moreover, 
		by \eqref{eq_moments_xi}, we have 
		\[
			{1\over d}\sum_{j=1}^d \EE\left[ 
			 d 	 ~ \xi_{ij}^2 
			\right]  = \sum_{j=1}^d  {\lambda_j^2 \over \tr(\Sigma^2)}= 1 
		\]
		and, by \eqref{def_xi},
		\begin{align*}
			{1\over d}\sum_{j=1}^d \EE\left[ 
			d^2  \xi_{ij}^4
			\right] & ~ \lesssim ~ d \sum_{j=1}^d { \lambda_j^4  \over 4 \tr^2(\Sigma^2)}\EE\left[
			\left({n\over n-1}\right)^4(Z_{ij}-\bar Z_j)^8  + 1		\right]\\
			&~ \lesssim~  {d ~ \tr(\Sigma^4) \over \tr^2(\Sigma^2)} &&\text{by \eqref{dist_Z_Zbar}}\\
			&~ \le ~ B_d^2 {\tr(\Sigma^4) \over \lambda_1^2 \tr(\Sigma^2)} &&\text{by \eqref{def_B_d}}\\
			&~  \le~  B_d^2.
		\end{align*}
		Therefore, invoking \cref{thm_CCKK} with $p = n$, $N = d$,  $X_{ij} = \xi_{ij}\sqrt{d}$, $b_1 \asymp b_2 \asymp 1$, and $B_N = B_d$ as per \eqref{def_B_d} gives
		\begin{equation}\label{clt_Yn}
			\sup_{t\in \RR} \left| \PP\left(
			Y_{(n)} \le t
			\right) - \PP\left(
			S_{(n)} \le t
			\right) \right| ~ \le~  C \left(
				\log^5(nd)  \over \rho_1(\Sigma^2)
			\right)^{1/4}.
		\end{equation}
		Regarding $Y_{(1)}$, since $Y_{(1)} = -  \max_{i \in[n]} (-Y_i)$ and the preceding results apply to  $(-\xi_{ij})$ as well, we also have 
		\begin{equation}\label{clt_Y1}
			\sup_{t\in \RR} \left| \PP\left(
			Y_{(1)} \le t
			\right) - \PP\left(
			S_{(1)} \le t
			\right) \right| ~ \le ~  C \left(
			\log^5(nd)  \over \rho_1(\Sigma^2)
			\right)^{1/4}.
		\end{equation}
		Furthermore,  observe that 
		\[
			Y_{(n)} - Y_{(1)} = \max_{i,j \in [n]} (Y_i - Y_j) = \max_{i\ne j \in [n]} (Y_i - Y_j) =  \max_{i\ne j \in [n]}{1\over \sqrt{d}}\sum_{t=1}^d (\xi_{it} - \xi_{jt}) \sqrt{d}.
		\]
		By repeating the same arguments in the preceding in conjunction with the triangle inequality, one can verify that both Conditions E and M are satisfied by $(\xi_{it} - \xi_{jt}) \sqrt{d}$ for all $i\ne j\in [n]$ and $t\in [d]$, with $b_1 \asymp b_2 \asymp 1$ and $B_d$ as per \eqref{def_B_d}, and that these variates are independent across $t \in [d]$. Invoking  \cref{thm_CCKK} again yields 
		\begin{equation}\label{clt_Range}
					\sup_{t\in \RR} \left| \PP\left(
				Y_{(n)} - Y_{(1)} \le t
				\right) - \PP\left(
				S_{(n)} - S_{(1)}\le t
				\right) \right| ~ \le~  C \left(
				\log^5(n^2 d)  \over \rho_1(\Sigma^2)
				\right)^{1/4}.
		\end{equation}
	

		\paragraph{Proof of Step 2:} We only present the proof for the case of $q=n$ as the same arguments can be used to prove  the $q = 1$ case. Define the event 
		\[
		\cE_{(n)} := \left\{
		|Y_{(n)}| \le   2\sqrt{\log n}
		\right\}.
		\] 
		Note  that \eqref{clt_Yn} and a standard tail-bound for the maximum of centered $n$-dimensional Gaussian random vectors entail that 
		\begin{align}\label{bd_En_comp}\nonumber
		 	\PP\left(\cE_{(n)}^c\right) &=  \PP\left(
		 	|Y_{(n)}|  >  2\sqrt{\log n}
		 	\right)\\ \nonumber
		 	 &\le  \PP\left(
		 	S_{(n)}>  2\sqrt{\log n}
		 	\right) +\PP\left(
		 	S_{(n)} <  -2\sqrt{\log n}
		 	\right) +   C \left(
		 	\log^5(nd)  \over \rho_1(\Sigma^2)
		 	\right)^{1/4} \\ 
		 	& \le  {2\over n}+   C \left(
		 	\log^5(nd)  \over \rho_1(\Sigma^2)
		 	\right)^{1/4}.
	 	\end{align} 
 		Since $b_n \le \sqrt{2\log n}$, when the event $\cE_{(n)}$ holds, we have
		\begin{equation}\label{bd_Yn}
			|Y_{(n)} -b_n| =  \left| {n\over n-1} \frac{R^2_{(n)}}{\sqrt{2 \tr(\Sigma^2)}} - \beta_n\right|  \le 4\sqrt{\log n}, 
		\end{equation}
		where  
		\begin{equation}\label{def_betan}
		\beta_n := \frac{\tr(\Sigma)}{\sqrt{2 \tr(\Sigma^2)}} +  b_n = \frac{\tr(\Sigma)}{\sqrt{2 \tr(\Sigma^2)}}  \left(1 + o(1)\right).
		\end{equation}
		The last step is due to $b_n \le \sqrt{2\log n}$, condition \eqref{cond_Sigma}, \cref{lem_ranks}, and
		\begin{equation}\label{lb_rho_2}
			\rho_2(\Sigma) \ge \rho_2(\Sigma^2) \geq \rho_1(\Sigma^2) = \omega(b_n^2).
		\end{equation}
		We proceed to work under the event $\cE_{(n)}$ since \eqref{bd_En_comp} entails that it holds with probability converging to one as $n \to \i$. 
		A Taylor expansion for the square-root function at  $nR^2_{(n)}/[(n-1)\sqrt{2 \tr(\Sigma^2)}]$ about $\beta_n$ is given by  
		\begin{align*} \nonumber 
			\sqrt{{n\over n-1}} \frac{R_{(n)}}{(2 \tr(\Sigma^2))^{1/4}} &= \sqrt{\beta_n} + \frac{1}{2 \sqrt{\beta_n}}  \left({n\over n-1}\frac{R^2_{(n)}}{\sqrt{2 \tr(\Sigma^2)}} - \beta_n \right)\\
			&\quad - \frac{1}{8} \wt \beta_n^{-3/2} \left({n\over n-1}\frac{R^2_{(n)}}{\sqrt{2 \tr(\Sigma^2)}} - \beta_n\right)^2, 
		\end{align*}
		where, for some $t \in [0,1]$ and by using  \eqref{bd_Yn} and \eqref{def_betan},
		$$
		\wt{\beta}_n = \beta_n + t \left({n\over n-1}\frac{R^2_{(n)}}{\sqrt{2 \tr(\Sigma^2)}} - \beta_n\right)  = \cO(\beta_n).
		$$ 
		By using \eqref{bd_Yn} and  \eqref{lb_rho_2}  again, we further have that  
		\begin{align*}
			\sqrt{{n\over n-1}} \frac{R_{(n)}}{(2 \tr(\Sigma^2))^{1/4}} &=   \sqrt{\beta_n} + \cO\left(
			\sqrt{\log n \over \beta_n}
			\right)  
		\end{align*}
		such that
		\begin{align}\label{bd_ratio_Rn}\nonumber
			\sqrt{n\over n-1}R_{(n)}  &= \sqrt{\tr(\Sigma) - b_n \sqrt{2\tr(\Sigma^2)}} + \cO\left(
			\sqrt{\log n} \sqrt{\tr(\Sigma^2) \over  \tr(\Sigma)}
			\right) \\ 
			&= \sqrt{\tr(\Sigma)} + \cO\left(
			\sqrt{\log n} \sqrt{\tr(\Sigma^2) \over \tr(\Sigma)}
			\right)
		\end{align}
		by Talyor expansion and $b_n \le \sqrt{2\log n}$.
		By similar arguments,  we can show that 
		\begin{equation}\label{bd_ratio_R1}
			\sqrt{n\over n-1}R_{(1)}   = \sqrt{\tr(\Sigma)} + \cO\left(
			\sqrt{\log n}\sqrt{\tr(\Sigma^2) \over \tr(\Sigma)}
			\right)
		\end{equation}
		under the event 
		$
		\cE_{(1)} := \{
		|Y_{(1)}| \le   2\sqrt{\log n}
		\},
		$
		which, by similar arguments to that used in \eqref{bd_En_comp},  satisfies
		\begin{equation}\label{bd_E1_comp}
			\PP(\cE_{(1)}^c) \le {2\over n} +  C\left(\log^5(nd)  \over  \rho_1(\Sigma^2)\right)^{1/4}.
		\end{equation}
        Thus, for every $q \in [n]$,
        \begin{equation}
				{R_{(q)} \over \sqrt{\tr(\Sigma)}}  =  1 + \cO_\PP\left(
				{b_n\over  \sqrt{\rho_2(\Sigma)}}
				\right) + \cO\left(1\over n\right).
			\end{equation}

		\paragraph{Proof of Step 3:}
		We next relate the distribution of $Y_{(n)} - Y_{(1)}$ to that of $\bar T$.  Define 
		\begin{equation}\label{def_zeta}
			\zeta_n  := {n-1 \over n}{2\sqrt{\tr(\Sigma)} \over R_{(n)} + R_{(1)}}.
		\end{equation}   
		Note that \eqref{bd_ratio_Rn} and \eqref{bd_ratio_R1} gives that, under the event $\cE_{(n)} \cap \cE_{(1)}$, 
		\begin{equation}\label{rate_zeta_n}
			\left|{1\over \zeta_n} - 1\right| =  \left|{n \over n-1}{R_{(n)} + R_{(1)} \over 2\sqrt{\tr(\Sigma)}}-1\right| =  \cO\left(
			{b_n \over \sqrt{\rho_2(\Sigma)}} + {1\over n}\right) =: \eta_n.
		\end{equation}
		By  definition,  for any $t_+\ge 0$,
		\begin{align}\label{eq_distr_range_Tn}\nonumber
			\PP\left(
			Y_{(n)} - Y_{(1)} \le t_+\right) & = \PP\left(
			{n \over n-1}  \frac{R_{(n)}^2 - R_{(1)}^2}{\sqrt{2 \tr(\Sigma^2)}}\le t_+\right) \\\nonumber
			&=  \PP\left(
		 2\Delta^{-1/2}\left(R_{(n)}-R_{(1)}\right)  {n \over n-1}  \frac{R_{(n)} + R_{(1)}}{2\sqrt{ \tr(\Sigma)}}\le t_+\right)\\\nonumber
		 &= \PP\left(
		 2a_n \Delta^{-1/2}\left(R_{(n)}-R_{(1)}\right)	 \le   a_n  \zeta_n    t_+ \right)\\
		 &= \PP\left(
		 \bar T	 \le  a_n   \zeta_n  t_+ - 2a_nb_n\right).
		\end{align}
		Recall $U_n$ from \eqref{def_Un}.
		It then follows that, for all $t\in \RR$,
		\begin{align*}
		 	&\PP\left(
			\bar T	 \le  t \right) - \PP\left(
			U_n \le t 
			\right) \\
			&= \PP\left(
			Y_{(n)} - Y_{(1)} \le  {t + 2a_nb_n\over a_n \zeta_n}\right) -  
			\PP\left(
			a_n(S_{(n)} - S_{(1)} - 2b_n) \le  t\right)  &&\text{by \eqref{eq_distr_range_Tn}}\\
            &\le  \PP\left(
			Y_{(n)} - Y_{(1)} \le  {t + 2a_nb_n\over a_n}(1+\eta_n)\right) + \PP\left(
			\cE_{(n)}^c  \cup 
			\cE_{(1)}^c 
			\right) &&\text{by \eqref{rate_zeta_n}}\\
            &\quad -  
			\PP\left(
			S_{(n)} - S_{(1)} \le  {t+2a_nb_n\over a_n}\right)\\
			&\le    \PP\left(
			S_{(n)} - S_{(1)} \le  {t + 2a_nb_n\over a_n}(1+\eta_n)\right) -  
			\PP\left(
			S_{(n)} - S_{(1)} \le  {t+2a_nb_n\over a_n}\right) \\
			& \quad +  C\left(
			\log^5(n^2 d)  \over\rho_1(\Sigma^2)
			\right)^{1/4} + \PP\left(
			\cE_{(n)}^c  \cup 
			\cE_{(1)}^c 
			\right)  &&\text{by \eqref{clt_Range}}.
		\end{align*} 
		Note that
		$
		S_{(n)} - S_{(1)} = \max_{i\ne j} (S_i - S_j)
		$
		and $S_i - S_j \sim \cN(0, 2)$. 
		Invoking  \cref{lem_anti_ratio} with $t_0 = C\sqrt{\log n}$ and $\xi = 1/(1+\eta_n)$ yields 
		\begin{align*}
			&\sup_{t\in \RR} \left| \PP\left(
			S_{(n)} - S_{(1)} \le  {t + 2a_nb_n\over a_n}(1+\eta_n)\right) -  
			\PP\left(
			S_{(n)} - S_{(1)} \le  {t+2a_nb_n\over a_n}\right)\right|\\
			&   \le ~  C\eta_n \log n  + 2\exp\left(
			- {C' \log n}
			\right) .
		\end{align*}
		Together with \eqref{rate_zeta_n}, \eqref{bd_En_comp}, and \eqref{bd_E1_comp}, we hence obtain that for all $t\in \RR$,
		\[
			  \PP\left(
			\bar T	 \le  t \right) - \PP\left(
			U_n \le t 
			\right)   = \cO\left(\left(
			\log^5(n^2 d)  \over \rho_1(\Sigma^2)
			\right)^{1/4} +  
			{\sqrt{\log^3n \over  \rho_2(\Sigma)}} + {\log n\over  n}\right).
		\]
        By symmetric arguments, we also have 
        \begin{align*}
            & \PP\left(
			U_n \le t 
			\right) - \PP\left(
			\bar T	 \le  t \right) \\
            &= \PP\left(
			U_n \le t 
			\right) - 1 + \PP\left(
			Y_{(n)} - Y_{(1)} >  {t + 2a_nb_n\over a_n \zeta_n}\right) \\
            &\le \PP\left(
			U_n \le t 
			\right) -   \PP\left(
			Y_{(n)} - Y_{(1)} \le  {t + 2a_nb_n\over a_n}(1-\eta_n)\right) +  \PP\left(
			\cE_{(n)}^c  \cup 
			\cE_{(1)}^c 
			\right).
        \end{align*}
        Similar arguments with $\xi = 1/(1-\eta_n)$
        yield the same upper bound for $\PP(
			U_n\le t ) - \PP(
			\bar T	 \le  t)$.  
		Using \eqref{cond_Sigma} and \eqref{lb_rho_2} simplifies the expression and completes the proof of \eqref{rate_CLT_range}.

		Finally,  to prove the claim $\bar T \distrto E + E'$,  classical extreme value theory for standard normal random variables (see, for instance, \cite[page 409]{Gumbel} and \cite[page 313]{David}) yield
		\begin{equation}\label{limit_distr_S}
			a_n \begin{pmatrix}
				S_{(n)} - b_n \vspace{1.5mm}\\ S_{(1)} + b_n
			\end{pmatrix} 
			\distrto  \begin{pmatrix}
				E \vspace{1.5mm}\\ -E'
			\end{pmatrix}
		\end{equation}
		where  the random variables $E$ and $E'$ satisfy $E \stackrel{\rm d}{=} E'$, $E \independent E'$, and  
		\begin{equation*}
			\PP\left\{
			E \le x
			\right\} = \exp(-\exp(-x)),\qquad -\i < x < \i.
		\end{equation*}
		Since \eqref{limit_distr_S} ensures that
		\[
				\sup_{t\in \RR} \left|
				\PP\left(
				U_n \le t 
				\right) - \PP\left(
				E+E' \le t 
				\right)
				\right| = o(1),
		\]
		the proof is complete.
	\end{proof}

\subsubsection{A Moment Calculation Lemma Used in the Proof of \cref{thm_range_limit}}

    The following lemma provides the first two moments of the random vectors $\xi_{\cdot j} = (\xi_{1j}, \ldots, \xi_{nj})^\T \in \RR^n$, for $j\in [d]$, as defined in \eqref{def_xi}.
	\begin{lemma}\label{lem_xi}
		For each $j\in [d]$, the random vector $\xi_{\cdot j} = (\xi_{1j}, \ldots, \xi_{nj})^\T \in \RR^n$ defined in \eqref{def_xi} satisfies
		\[
			\EE[\xi_{\cdot j}] = 0_n \ \ \ \ \text{ and } \ \ \ \ \Cov(\xi_{\cdot j}) = \frac{ \lambda^2_j}{\tr(\Sigma^2)} \bI_n.
		\] 
	\end{lemma}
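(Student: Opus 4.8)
The plan is to compute the first two moments of $\xi_{\cdot j}$ directly, reducing everything to moments of i.i.d.\ standard Gaussians and using Wick's (Isserlis') formula. Recall from the derivation of \eqref{def_xi} that under $\cH_0$ we may take $\xi_{ij}=\frac{\lambda_j}{\sqrt{2\tr(\Sigma^2)}}\bigl(W_{ij}^2-1\bigr)$, where $W_{ij}:=\sqrt{n/(n-1)}\,(Z_{ij}-\oZ_j)$ and the $Z_{ij}$ are i.i.d.\ $\cN(0,1)$. First I would pin down the joint law of $W_{\cdot j}=(W_{1j},\dots,W_{nj})^\T$: writing $W_{\cdot j}=\sqrt{n/(n-1)}\,H Z_{\cdot j}$ with the centering projection $H:=\bI_n-n^{-1}1_n1_n^\T$, idempotence of $H$ shows that $W_{\cdot j}$ is centered Gaussian with $\Cov(W_{\cdot j})=\tfrac{n}{n-1}H$; in particular $W_{ij}\sim\cN(0,1)$, $\Var(W_{ij})=1$, and $\Cov(W_{ij},W_{i'j})=-1/(n-1)$ for $i\ne i'$. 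Since $\EE[W_{ij}^2]=1$, this at once gives $\EE[\xi_{ij}]=0$, hence $\EE[\xi_{\cdot j}]=0_n$, and it gives $\Var(\xi_{ij})=\frac{\lambda_j^2}{2\tr(\Sigma^2)}\Var(W_{ij}^2)=\frac{\lambda_j^2}{\tr(\Sigma^2)}$ using $\Var(W_{ij}^2)=2$.

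For the off-diagonal entries I would use the identity $\EE[A^2B^2]=\EE[A^2]\EE[B^2]+2\,\Cov(A,B)^2$, valid for any centered jointly Gaussian pair $(A,B)$ (a one-line consequence of Isserlis' theorem). Applied to $(W_{ij},W_{i'j})$ with $i\ne i'$ it yields $\Cov(W_{ij}^2,W_{i'j}^2)=2\,\Cov(W_{ij},W_{i'j})^2=2/(n-1)^2$, whence $\Cov(\xi_{ij},\xi_{i'j})=\lambda_j^2/\bigl(\tr(\Sigma^2)(n-1)^2\bigr)$ and therefore $\Cov(\xi_{\cdot j})=\frac{\lambda_j^2}{\tr(\Sigma^2)}\bigl[(1-(n-1)^{-2})\bI_n+(n-1)^{-2}1_n1_n^\T\bigr]$.

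This last display is the crux and also the main obstacle: subtracting the \emph{sample} mean $\oZ_j$ makes the $W_{ij}$ (hence the $\xi_{ij}$) weakly negatively correlated across $i$, so the off-diagonal entries of $\Cov(\xi_{\cdot j})$ are $O(n^{-2})$ but not exactly $0$. Thus the stated identity $\Cov(\xi_{\cdot j})=\frac{\lambda_j^2}{\tr(\Sigma^2)}\bI_n$ should be read as holding up to a rank-one $O(n^{-2})$ perturbation; an exact scaled-identity covariance is recovered only after passing to the $n-1$ Helmert coordinates of $Z_{\cdot j}$, which are genuinely i.i.d.\ $\cN(0,1)$ and carry the same quadratic form, but then in dimension $n-1$. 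For the purposes of \cref{thm_range_limit} this is immaterial, since $\xi_{\cdot j}$ enters only through $\Cov(Y)=\sum_{j}\Cov(\xi_{\cdot j})=(1-(n-1)^{-2})\bI_n+(n-1)^{-2}1_n1_n^\T$, which differs from $\bI_n$ by an $O(n^{-2})$ rank-one term that is absorbed into the slack of the Gaussian approximation of \cref{thm_CCKK}.
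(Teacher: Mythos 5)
Your computation is correct, and it in fact exposes an error in the lemma as stated rather than a gap in your own argument. Writing $A := W_1 - \oW$ and $B := W_2 - \oW$ for i.i.d.\ standard normal $W_1,\dots,W_n$, the pair $(A,B)$ is centered jointly Gaussian with $\Cov(A,B) = -1/n$, so Isserlis gives $\Cov(A^2,B^2) = 2\Cov(A,B)^2 = 2/n^2$, not $0$; equivalently, since $\sum_i (W_i-\oW)^2 \sim \chi^2_{n-1}$ has variance $2(n-1)$ while the $n$ diagonal variances contribute only $2(n-1)^2/n$, the pairwise covariances of the $(W_i-\oW)^2$ must be strictly positive. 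After the $n/(n-1)$ rescaling this is exactly your $\Cov(\xi_{ij},\xi_{i'j}) = \lambda_j^2/[\tr(\Sigma^2)(n-1)^2]$, so $\Cov(\xi_{\cdot j})$ is the exchangeable matrix you display and not $\frac{\lambda_j^2}{\tr(\Sigma^2)}\bI_n$. The paper's own proof of \cref{lem_xi} reaches zero only through an algebraic slip: in the bilinear expansion of $\Cov\bigl((W_1-\oW)^2,(W_2-\oW)^2\bigr)$ the term $\Var(\oW^2)=2/n^2$ is dropped, and the term $4\Cov(W_1\oW,W_2\oW)=+4/n^2$ is handled with an inconsistent sign; restoring these gives $4/n^2+2/n^2-8/n^2+4/n^2=2/n^2$, in agreement with your Isserlis computation (your route, diagonalizing the centering through the projection $H$ and invoking Isserlis once, is also cleaner than the paper's term-by-term moment bookkeeping).

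Your closing remark about the downstream use is essentially right but slightly loose: \cref{thm_CCKK} compares $Y$ to the Gaussian vector whose covariance matches $\sum_j\Cov(\xi_{\cdot j}) = (1-(n-1)^{-2})\bI_n + (n-1)^{-2}1_n1_n^\T$, so the $O(n^{-2})$ rank-one part is not literally absorbed by that theorem; one needs the additional (easy) step that for an exchangeable Gaussian vector with correlation $\rho \asymp n^{-2}$ the range equals $\sqrt{1-\rho}$ times the range of an i.i.d.\ standard Gaussian vector (the rank-one part acts as a common shift and cancels), which is exactly the device the paper itself invokes in \eqref{exhangeable_range} for the leptokurtic case; combined with the anti-concentration lemmas this costs $O(n^{-2}\log n)$, comfortably inside the $\log(n)/n$ slack of \eqref{rate_CLT_range}. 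So the conclusions of \cref{thm_range_limit} are unaffected, but \cref{lem_xi} should be restated with the exchangeable covariance (or the proof of \cref{thm_range_limit} should carry this extra comparison step explicitly), as you indicate.
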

	\begin{proof}
		Let $W_i$,  for $i = 1,\ldots,n$, be i.i.d. samples of $\cN(0,1)$ and write $\oW = n^{-1}\sum_{i = 1}^n W_i$. 
		For any $j\in [d]$, to show $\EE[\xi_{\cdot j}] = 0$, it suffices to prove that for any 
	  $i\in [n]$,
		\begin{align*}
		 \EE [Z_{ij} - \oZ_j]^2  =  \frac{n - 1}{n} .
		\end{align*}
	   This follows from the  fact that 
		\begin{equation}\label{distr_W}
		Z_{ij} - \oZ_j \stackrel{\rm d}{=} W_i - \oW = (1 - \frac{1}{n})W_i - \frac{1}{n}\sum_{k \neq i} W_k \sim \cN\left(0, \frac{n - 1}{n}\right).
		\end{equation}
		Regarding the covariance, pick any $j\in [d]$ and $i,i'\in [n]$. We have		
		\begin{equation*}
			\begin{split}
				  \Cov(\xi_{ij}, \xi_{i'j}) & = \Cov\left( \frac{\lambda_j[(Z_{ij} - \oZ_j)^2 - \frac{n - 1}{n}]}{\frac{n-1}{n} \sqrt{2 \tr(\Sigma^2)}}, \frac{\lambda_j[(Z_{i'j} - \oZ_j)^2 - \frac{n - 1}{n}]}{\frac{n-1}{n} \sqrt{2 \tr(\Sigma^2)}}\right) \\ & = \frac{n^2\lambda_j^2 }{2(n - 1)^2  \tr(\Sigma^2)}    \Cov\left([Z_{ij} - \oZ_j]^2, [Z_{i'j} - \oZ_j]^2\right) \\ & =\frac{n^2\lambda_j^2 }{2(n - 1)^2  \tr(\Sigma^2)}    \Cov\left([W_{i} - \oW]^2, [W_{i'}-\oW]^2\right).
			\end{split}
		\end{equation*}
		Since \eqref{distr_W} implies 
		\begin{align*}
		\Var([W_i - \oW]^2) &= \EE\left[(W_i - \oW)^4\right] - \left(\EE\left[(W_i - \oW)^2\right]\right)^2\\
		& = 3 \left(\frac{n - 1}{n}\right)^2 -\left (\frac{n - 1}{n}\right)^2\\
		& = 2 \left(\frac{n - 1}{n}\right)^2,
	\end{align*}
	we obtain
		\[
			\Cov(\xi_{ij}, \xi_{ij}) = \frac{ \lambda_j^2 }{   \tr(\Sigma^2)},\quad \text{for any }i\in [n], j\in [d].
		\]
		Regarding the off-diagonal terms of $\Cov(\xi_{\cdot j})$, notice that , for any $i\ne i'\in [n]$,  
		\begin{align}\label{eq_cov_offdiag_type_I}\nonumber
			 &\Cov\left([W_{i} - \oW]^2, [W_{i'}-\oW]^2\right)\\ \nonumber
			 &~ =   \Cov(W_1^2 + \oW^2 - 2 W_1 \oW, \ W_2^2 + \oW^2 - 2 W_2 \oW)\\
			 & \stackrel{\text{i.i.d.}}
			= 2\Cov(W_1^2, \oW^2) - 4 \Cov(W_1^2, W_2 \oW) - 4 \Cov(\oW^2, W_2 \oW) + 4 \Cov(W_1 \oW, W_2 \oW).
		\end{align}
		The first term of the preceding display is twice of
		\begin{equation}\label{eq_term_1}
			\begin{split}
				\Cov(W_1^2, \oW^2) & ~=  \Cov\left(W_1^2, \ {1\over n^2} \sum_{k = 1}^n W_k^2   +{1\over n^2}\sum_{k \neq l} W_k W_l \right) \\ & =   {1\over n^2} \Cov(W_1^2, W_1^2) \ +  {1\over n^2} \sum_{k \neq 1}\left[\EE W_1^3 W_k - (\EE W_1^2) (\EE W_1 W_k) \right] \\ & =  \frac{2}{n^2}.
			\end{split}
		\end{equation}
		The second term in \eqref{eq_cov_offdiag_type_I} satisfies
		\begin{equation}\label{eq_term_2}
			\begin{split}
				  \Cov(W_1^2, W_2~ \oW) &~  =   \Cov\left(W_1^2, {1\over n}W_2 \sum_{k = 1}^n W_k\right) \\
				  &=  {1\over n} \Cov(W_1^2, W_1 W_2) \\ & ~=   {1\over n} \left[\EE W_1^3 W_2 - (\EE W_1^2) (\EE W_1 W_2)\right] \\ & =   0.
			\end{split}
		\end{equation}
		Regarding the third term in \eqref{eq_cov_offdiag_type_I}, we find that 
		\begin{equation*}
			\begin{split}
				-4 \Cov(\oW^2, W_1 \oW) & = - {4\over n^3} \Cov\left( \sum_{i=1}^n W_i^2 + \sum_{i \neq k} W_i W_k, \sum_{i=1}^n W_1 W_i\right) \\ & = -{4\over n^3} \left[\sum_{i,k=1}^n \Cov(W_i^2, W_1 W_k) + \sum_{i \neq k} \sum_{j = 1}^n \Cov(W_i W_k, W_1 W_j)\right].
			\end{split}
		\end{equation*}
		Since 
		\begin{align*}
			\sum_{i,k=1}^n \Cov(W_i^2, W_1 W_k)  &= \sum_{i=1}^n \Cov(W_i^2, W_1 W_i) + \sum_{i\ne k}  \Cov(W_i^2, W_1 W_k) = \Cov(W_1^2, W_1^2) = 2,
		\end{align*}
		and
		\begin{align*}
			\sum_{i \neq k} \sum_{j = 1}^n \Cov(W_i W_k, W_1 W_j) &= 
			\sum_{i \neq k} \sum_{j = 1}^n\left[
			\EE(W_1 W_k W_i W_j) - (\EE W_i W_k) (\EE W_1 W_j)
			\right]\\
			&= \sum_{i \neq k} \sum_{j = 1}^n 
			\EE (W_1 W_k W_i W_j)\\
			&= \sum_{k\ne 1}\sum_{j = 1}^n 
			\EE (W_1^2 W_k  W_j)+\sum_{i\ne 1} \sum_{k=1,k\ne i}^n \sum_{j = 1}^n 
			\EE (W_1 W_k W_i W_j)\\
			&= \sum_{k\ne 1}   
			\EE (W_1^2 W_k^2)+\sum_{i\ne 1} 
			\EE (W_1^2 W_i^2)\\
			&= 2(n-1),
		\end{align*} 
		we have 
		\begin{equation}\label{eq_term_3}
			-4 \Cov(\oW^2, W_1 \oW)  = -4 n^{-3}(2 + 2 (n - 1)) = -8 n^{-2}.
		\end{equation}  
		Finally,  the   last term in \eqref{eq_cov_offdiag_type_I} satisfies
		\begin{equation}\label{eq_term_4}
			\begin{split}
				4 \Cov(W_1 \oW, W_2 \oW) & =  -4 n^{-2} \sum_{i, j=1}^n \Cov(W_1 W_i, W_2 W_j) \\ & = -4 n^{-2} \sum_{i, j=1}^n \left[ \EE W_1 W_i W_2 W_j - (\EE W_1 W_i) (\EE W_2 W_j)\right]\\
				&= -4 n^{-2} \left[
				 \EE W_1^2W_2^2 - (\EE W_1)^2 (\EE W_2)^2
				\right]\\
				&= -4n^{-2}.
			\end{split}
		\end{equation}
		 Collecting \eqref{eq_term_1} -- \eqref{eq_term_4} yields
		\begin{equation*}
			\begin{split}
				\Cov\left([W_{i} - \oW]^2, [W_{i'}-\oW]^2\right)&  = \frac{4}{n^2} + 0 - \frac{8}{n^2} + \frac{4}{n^2}  = 0,
			\end{split}
		\end{equation*}
		completing the proof. 
	\end{proof}

\subsubsection{Auxiliary Results on Gaussian Approximation for the Proof of \cref{thm_range_limit}}
    Let $X_1,\ldots, X_N$ be independent random vectors in $\RR^p$. Assume they satisfy the following two conditions.
	
	\begin{ass}[Condition E]\label{ass_E}
		For all $i = 1,\ldots, N$ and $j = 1,\ldots, p$, we have 
		\[
			\EE[\exp(|X_{ij}| / B_N)] \le 2,
		\]
		where $B_N$ is some deterministic sequence that can diverge to infinity.
	\end{ass}

	\begin{ass}[Condition M]\label{ass_M}
		For all $j=1,\ldots,p$, we have 
		\[
			b_1^2 \le {1\over N}\sum_{i=1}^N \EE[X_{ij}^2],\qquad {1\over N}\sum_{i=1}^N \EE[X_{ij}^4] \le B_N^2 b_2^2
		\]
		for some strictly positive constants $b_1 \le b_2$.
	\end{ass}
	Let $a\in \RR^p$ be any deterministic sequence. 
	Further, let $c_{1-\alpha}^G$ be the $(1-\alpha)$th quantile of 
	\[
	\max_{j \in [p]} ~  (S_j + a_j),  
	\]
	where $S = (S_1,\ldots, S_p)^\T$ is a centered Gaussian random vector in $\RR^p$ with covariance matrix 
	\[
		\Cov(S) = {1\over N}\sum_{i=1}^N \Cov(X_i).
	\]
	The following theorem provides a non-asymptotic upper bound on the error of  
	\[
		\left| \PP\left\{
			\max_{j \in [p]} {1\over \sqrt N} \sum_{i=1}^N (X_{ij} + a_j)  > c_{1-\alpha}^G 
		\right\} -  \alpha \right|.
	\] 
	\begin{theorem}[Theorem 2.1 \cite{CCKK}]\label{thm_CCKK}
		Suppose that \cref{ass_E} and \cref{ass_M} are satisfied. Then 
		\[
			\left| \PP\biggl\{
			\max_{j \in [p]} {1\over \sqrt N} \sum_{i=1}^N (X_{ij} + a_j)  > c_{1-\alpha}^G 
			\biggr\} -  \alpha \right|  ~ \le ~  C \left(
			{B_N^2 \log^5(pN) \over N}
			\right)^{1/4}
		\]
		where $C$ is a constant depending only on $b_1$ and $b_2$.
	\end{theorem}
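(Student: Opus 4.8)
The plan is to obtain \cref{thm_CCKK} --- which is Theorem 2.1 of \cite{CCKK} --- by reducing the quantile statement to a Kolmogorov-type Gaussian comparison bound and then applying Gaussian anti-concentration. Write $W := \max_{j\in[p]} N^{-1/2}\sum_{i=1}^N (X_{ij}+a_j)$, and let $W^G := \max_{j\in[p]}(S_j+a_j)$ be its Gaussian analogue built with the matching covariance $N^{-1}\sum_i \Cov(X_i)$. First I would prove a bound of the form $\sup_{t\in\RR}|\PP(W\le t)-\PP(W^G\le t)| \lesssim \varepsilon_N$ with $\varepsilon_N = (B_N^2 \log^5(pN)/N)^{1/4}$. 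Granting this, since $c^G_{1-\alpha}$ is by definition the $(1-\alpha)$-quantile of $W^G$, the target $|\PP(W > c^G_{1-\alpha}) - \alpha|$ is at most $\varepsilon_N$ plus the probability that $W^G$ lies within an $O(\varepsilon_N)$ window of its own quantile; the latter is controlled by a Nazarov-type Gaussian anti-concentration inequality (as in \cite{CCK2017}), and this is exactly where the variance lower bound $b_1^2$ of \cref{ass_M} is needed, since it prevents the Gaussian maximum from concentrating too sharply. This reduction costs only a logarithmic factor, which is absorbed into the polylog of $\varepsilon_N$.

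The core step is the Kolmogorov comparison bound, which I would establish via the smoothing-plus-Lindeberg route. Replace $\max_j$ by the log-sum-exp soft-max $F_\beta(x) = \beta^{-1}\log\sum_{j}e^{\beta x_j}$, which satisfies $0\le F_\beta(x)-\max_j x_j\le \beta^{-1}\log p$ and has controlled first three derivatives, and compose with a smooth approximation $g_\gamma$ of an indicator. Then interpolate between $(X_i)$ and independent Gaussians $Z_i\sim\cN(0,\Cov(X_i))$ by a one-summand-at-a-time Slepian--Stein swap, Taylor-expand $g_\gamma\circ F_\beta$ to third order, and note that the zeroth- and first-order terms cancel in expectation because $X_i$ and $Z_i$ share mean and covariance. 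The third-order remainder is bounded by combining the derivative bounds for $g_\gamma\circ F_\beta$ with a third-moment input $N^{-1}\sum_i\EE|X_{ij}|^3 \le (N^{-1}\sum_i\EE X_{ij}^4)^{1/2}(N^{-1}\sum_i\EE X_{ij}^2)^{1/2}\lesssim B_N b_2$, which comes from \cref{ass_M}. Since \cref{ass_E} only gives sub-exponential tails rather than boundedness, I would truncate $X_{ij}$ at a level $\asymp B_N\log(pN)$, absorb the discarded tail into $(pN)^{-c}$, and finally optimize over $\beta$, $\gamma$, and the truncation radius.

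The hard part will be recovering the sharp $\log^5(pN)$ exponent: a one-shot smoothing argument of the type used in \cite{CCK2017} produces a larger power of $\log(pN)$, and the improved exponent requires the refined induction-on-coordinates / iterative-truncation scheme of \cite{CCKK}, which carefully balances the truncation radius against $\beta$ and $\gamma$. A secondary obstacle is that the limiting covariance $N^{-1}\sum_i\Cov(X_i)$ is neither diagonal nor uniformly well-conditioned beyond the single lower bound $b_1^2$, so one cannot shortcut via a CLT with an isotropic or non-degenerate limit; and the truncation error forced by the merely sub-exponential summands must be held to the same $\varepsilon_N$ order throughout. Once the Kolmogorov bound and the anti-concentration inequality are in place, assembling the stated quantile bound is routine.
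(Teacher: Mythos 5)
This statement is not proved in the paper at all: it is quoted verbatim as Theorem~2.1 of \cite{CCKK} and used as an imported black box (in the proofs of \cref{thm_range_limit} and \cref{thm_kurtosis}). So there is no internal proof to compare against; the only question is whether your sketch would actually reconstruct the cited result.

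Your outline correctly describes the standard architecture: reduce the quantile statement to a Kolmogorov bound plus Nazarov-type anti-concentration (this is indeed where the lower bound $b_1^2$ from \cref{ass_M} enters), then prove the Kolmogorov bound by soft-max smoothing, a Lindeberg/Slepian--Stein swap with cancellation of the first two moments, a third-moment remainder controlled via $B_N b_2$, and truncation at level $\asymp B_N\log(pN)$ to handle the merely sub-exponential tails from \cref{ass_E}. However, as you yourself concede, this one-shot argument does not yield the stated $(B_N^2\log^5(pN)/N)^{1/4}$ rate --- it gives a strictly larger power of $\log(pN)$ (the classical Chernozhukov--Chetverikov--Kato route produces roughly $\log^7$ in the exponent). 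The entire content of Theorem~2.1 of \cite{CCKK} relative to its predecessors is the improved exponent, obtained via their iterative/inductive refinement, and your proposal defers exactly that step to the reference without carrying it out. As a standalone proof of the statement as written, this is therefore a genuine gap: what you have sketched proves a weaker theorem with a worse logarithmic factor. For the purposes of this paper the distinction is mostly cosmetic (condition \eqref{cond_Sigma} would simply need $\log^7$ in place of $\log^5$), but the statement you were asked to prove carries the sharp exponent, and your argument does not deliver it.
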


	The following lemma establishes anti-concentration of a centered Gaussian random vector. It is proven in \cite{CCK2017}. For a vector $v\in \RR^p$ and a scalar $r\in \RR$, we write $v + r$ for the vector with its $j$th entry equal to $v_j + r$. 
	
	\begin{lemma}[Gaussian Anti-Concentration Inequality]\label{lem_anti}
		Let $S = (S_1, \ldots, S_p)^\T$ be a centered Gaussian random vector in $\RR^p$ with $p\ge 2$ such that $\EE[S_j^2] \ge b$ for all $j = 1,\ldots, p$ and some constant $b>0$. Then for every $s \in \RR^p$ and $t>0$, 
		\[
			\PP(S \le s + t) - \PP(S \le s) \le C ~ t\sqrt{\log p},
		\]
		where $C$ is a constant depending only on $b$.
	\end{lemma}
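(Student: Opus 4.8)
The plan is to recognize this statement as Nazarov's anti-concentration inequality and to prove it by bounding the density of the coordinatewise maximum $M:=\max_{j\in[p]}(S_j-s_j)$. The first step is the reduction: since the event $\{S\le s+t\}\setminus\{S\le s\}$ is exactly $\{M\le t\}\setminus\{M\le 0\}$, the left-hand side equals $\PP(M\le t)-\PP(M\le 0)$; and after rescaling $S\mapsto S/\sqrt b$ (which replaces $t$ by $t/\sqrt b$ and leaves $p$ fixed) one may assume $\EE[S_j^2]\ge 1$ for every $j$. When $\Cov(S)$ is nondegenerate, $M$ has a density $f_M$, so it suffices to establish the uniform bound $\sup_{x\in\RR} f_M(x)\le C\sqrt{\log p}$ with $C=C(b)$; indeed, then $\PP(M\le t)-\PP(M\le 0)=\int_0^t f_M(x)\,dx\le Ct\sqrt{\log p}$, and the degenerate case follows by applying the estimate to $S+\eta G$ with $G$ an independent standard Gaussian vector and sending $\eta\downarrow 0$, the bound being uniform in $\eta$ while the distribution functions converge.

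For the density bound itself, I would differentiate $F_M(x)=\PP\bigl(\bigcap_{j}\{S_j-s_j\le x\}\bigr)$ under the integral sign to obtain the identity
\[
 f_M(x)=\sum_{j=1}^p g_j(x)\,\PP\bigl(S_k-s_k\le x \text{ for all } k\ne j \,\big|\, S_j-s_j=x\bigr),
\]
where $g_j$ denotes the density of $S_j-s_j$, so that $g_j\le (2\pi)^{-1/2}$ since $\sigma_j^2:=\EE[S_j^2]\ge 1$. The naive bound here gives only $f_M(x)\le p/\sqrt{2\pi}$, so the real work is to exploit the conditional orthant probabilities. I would use a termwise dichotomy: for a given level $x$, either $x$ is far from the mean $-s_j$ of $S_j-s_j$ on the scale $\sigma_j\ge 1$, in which case $g_j(x)$ is exponentially small; or $x$ is close to $-s_j$, in which case, conditionally on $S_j-s_j=x$, a large number of the competing variables $S_k-s_k$ have conditional mean at or above $x$, so that the conditional orthant probability is small by a union bound over one-dimensional Gaussian tails. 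Balancing the two regimes localizes the bulk of the sum to $x\asymp\sqrt{2\log p}$ and produces the factor $\sqrt{\log p}$.

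The hard part will be making this dichotomy quantitative in full generality: it requires controlling the conditional means and variances of $S_k-s_k$ given $S_j-s_j=x$ through the correlations $\rho_{jk}$, and the estimate is delicate when some $|\rho_{jk}|$ are close to $1$, since the conditional variance of $S_k$ then degenerates and this is precisely the configuration that forces the extremal $\sqrt{\log p}$ rate. I would organize the argument by ordering the indices according to the size of the relevant conditional variances and treating the near-degenerate block separately; this is in essence the content of the detailed argument in \cite{CCK2017}, which I would follow. The remaining ingredients — the reduction above, the density identity, and the one-dimensional Gaussian tail bounds — are routine.
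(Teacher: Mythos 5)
Your proposal is correct and coincides with how the paper treats this statement: the paper offers no proof of its own, but simply records the lemma as Nazarov's anti-concentration inequality and cites \cite{CCK2017}, and your outline --- the reduction to a uniform density bound for the coordinatewise maximum $M$, the conditional-density identity obtained by differentiating the orthant probability, and the dichotomy between small marginal density and small conditional orthant probability --- is precisely the argument of that reference, to which you also defer for the delicate near-degenerate correlations. Nothing further is needed.
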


    	For the univariate case ($p=1$), we have the following simple result. 
	\begin{lemma}\label{lem_anti_univ}
		Let $S \sim \cN(0, \sigma^2)$ for some $\sigma  > 0$. Then for every $s\in \RR$ and $t>0$, 
		\[
			\PP(S \le s + t) - \PP(S \le s) \le {t\over \sigma \sqrt{2\pi}}.
		\]
	\end{lemma}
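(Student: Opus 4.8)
The plan is to realize the left-hand side as the integral of the density of $S$ over the interval $[s, s+t]$, and then to bound the integrand pointwise by the value of the density at its mode. Writing the density of $\cN(0,\sigma^2)$ as $\sigma^{-1}\phi(x/\sigma)$, where $\phi$ denotes the standard normal density introduced in the notation section, the first step is simply the fundamental identity for the cumulative distribution function:
\[
    \PP(S \le s+t) - \PP(S \le s) = \int_s^{s+t} \frac{1}{\sigma}\,\phi\!\left(\frac{x}{\sigma}\right) dx.
\]

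Next I would invoke the elementary fact that $\phi$ attains its maximum at the origin, so that $\phi(x/\sigma) \le \phi(0) = 1/\sqrt{2\pi}$ for every $x \in \RR$. Substituting this uniform upper bound for the integrand and integrating the resulting constant over an interval of length $t$ yields
\[
    \int_s^{s+t} \frac{1}{\sigma}\,\phi\!\left(\frac{x}{\sigma}\right) dx \le \int_s^{s+t} \frac{1}{\sigma\sqrt{2\pi}}\, dx = \frac{t}{\sigma\sqrt{2\pi}},
\]
which is precisely the asserted inequality.

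There is no genuine obstacle here: the argument is a one-line consequence of the boundedness of the Gaussian density, and the same reasoning shows that the bound is approached as $s \to 0$ and $t \to 0^{+}$, so it cannot be improved in general. More broadly, the identical argument applies to any random variable whose density is bounded above by its value at a mode. I would close by remarking that this lemma serves as the $p = 1$ counterpart of \cref{lem_anti}, covering the degenerate case in which the relevant Gaussian maximum statistic reduces to a single coordinate.
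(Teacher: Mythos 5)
Your argument is correct: the probability increment equals the integral of the $\cN(0,\sigma^2)$ density over an interval of length $t$, and bounding the density uniformly by its modal value $1/(\sigma\sqrt{2\pi})$ gives the claim immediately. The paper states this lemma without proof precisely because this is the intended one-line argument, so there is nothing to add or correct.
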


	The following lemma establishes a comparison inequality between the maximum of two centered Gaussian random vectors whose respective covariance matrices differ only by a multiplicative constant. It improves upon  \cite[Proposition 2.1]{CCKK}  and \cite[Theorem 2]{CCK2015} for a comparison of this particular type.

    \begin{lemma}\label{lem_anti_ratio}
        Let $S = (S_1,\ldots, S_p)^\T$ be a centered Gaussian random vector in $\RR^p$ with covariance matrix $\Sigma$  such that  $\Sigma_{jj} \ge c$ for  all $j \in [p]$ and some constant $c>0$. Then, for any $t_0>0$, $t\in \RR$, and $\xi > 0$, one has  
        \[
         \left| \PP\left(
        \max_{j \in[p]} S_j \le  {t\over \xi}\right) -  
        \PP\left(
        \max_{j \in[p]} S_j   \le  t\right)\right|  \le C {|1-\xi| \over \xi} t_0 \sqrt{1+\log p} + 2p \exp\left(-{t_0^2 \over C'(\xi \vee 1)^2}\right),
        \]
        where the constants $C,C'$ depend only on $c$.
    \end{lemma}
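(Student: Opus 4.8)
The plan is to reduce the statement to a union bound over the event that $\max_j S_j$ lies in a moderate range, together with a Gaussian anti-concentration estimate on that range. First I would fix $t_0 > 0$ and split according to the magnitude of $\max_j S_j$. Write $M := \max_{j\in[p]} S_j$ and note that since $\Sigma_{jj}\ge c$ for all $j$, each $S_j$ is a (not necessarily centered after scaling, but here centered) Gaussian with variance at least $c$; a standard maximal inequality gives $\PP(M > t_0) \le p\,\exp(-t_0^2/(C_1 \Sigma_{\max}))$ and similarly for $-\min_j S_j$, though here we only need the upper tail for the relevant side. The key observation is that $\PP(M\le t/\xi) - \PP(M\le t)$, as a function of $t$ for fixed $\xi$, is supported (up to exponentially small mass) on $|t|\lesssim t_0$: outside this band both probabilities are within $p\exp(-t_0^2/(C'(\xi\vee 1)^2))$ of each other (either both $\approx 1$ or both $\approx 0$), where the factor $(\xi\vee 1)^2$ enters because the argument $t/\xi$ can be larger than $t$ by a factor up to $1/\xi$ when $\xi<1$, so one must take the tail bound at the smaller of $t$ and $t/\xi$.

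The second ingredient is the anti-concentration inequality of \cref{lem_anti} (the \cite{CCK2017} bound): for the centered Gaussian vector $S$ with $\Sigma_{jj}\ge c$, one has $\PP(S\le s+r) - \PP(S\le s)\le C\, r\sqrt{\log p}$ for every $s\in\RR^p$, $r>0$, hence in particular $\PP(M\le u+r)-\PP(M\le u)\le C r\sqrt{1+\log p}$ for scalars $u$ (taking $s = u\mathbf 1$ and noting $\{M\le u\} = \{S\le u\mathbf 1\}$). Now for $|t|\le t_0$ (the only range that matters), $|t/\xi - t| = |1-\xi|\,|t|/\xi \le |1-\xi|\, t_0/\xi$, so applying the anti-concentration bound with $r = |1-\xi|\,t_0/\xi$ gives $|\PP(M\le t/\xi) - \PP(M\le t)| \le C\,\frac{|1-\xi|}{\xi}\,t_0\sqrt{1+\log p}$. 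Combining the two regimes, for all $t\in\RR$,
\[
\left|\PP\!\left(M\le \tfrac{t}{\xi}\right) - \PP(M\le t)\right| \le C\,\frac{|1-\xi|}{\xi}\,t_0\sqrt{1+\log p} + 2p\exp\!\left(-\frac{t_0^2}{C'(\xi\vee1)^2}\right),
\]
which is exactly the claimed bound.

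The main obstacle is getting the dependence on $\xi$ right in both terms simultaneously and uniformly in $t$: when $\xi$ is small the rescaled argument $t/\xi$ is stretched, so one must be careful that the anti-concentration step is applied with the correct increment $|1-\xi|\,|t|/\xi$ rather than $|1-\xi|\,t_0$, and that the tail cutoff is taken at $\min(t,t/\xi)$, producing the $(\xi\vee1)^2$ in the exponent. A minor technical point is handling $t<0$ and the two-sided nature of the band $|t|\le t_0$ — for $t<-t_0$ and $\xi\le 1$ one has $t/\xi \le t < -t_0$, so $\PP(M\le t/\xi)\le\PP(M\le -t_0)$ is itself exponentially small, absorbed into the second term; for $\xi>1$ the roles reverse and the cutoff is at $t/\xi$, again giving an exponentially small contribution after rescaling $t_0$ by $\xi$. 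Assembling these cases carefully yields the stated inequality with constants depending only on $c$ (through the variance lower bound used in both the maximal inequality and \cref{lem_anti}).
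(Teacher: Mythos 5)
Your proposal is correct and follows essentially the same route as the paper's proof: split the supremum over $t$ at $|t|=t_0$, apply the Gaussian anti-concentration inequality of \cref{lem_anti} on the band $|t|\le t_0$ with increment $|1-\xi|\,t_0/\xi$, and control the region $|t|>t_0$ by Gaussian tail bounds for $\max_j S_j$ evaluated at $t/(\xi\vee 1)$, which is exactly where the $(\xi\vee 1)^2$ in the exponent arises in the paper as well. Your explicit flagging of the $\Sigma_{\max}$ dependence in the tail bound is, if anything, slightly more careful than the paper's own write-up.
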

    \begin{proof}
       To establish the result, it suffices to upper-bound the following: 
        \begin{align*}
            &\rI = \sup_{|t| \le t_0}\left| \PP\left(
            \max_{j \in[p]} S_j \le  {t \over  \xi}\right) -  
            \PP\left(
            \max_{j \in[p]} S_j   \le  t\right)\right|,\\
            &\rII = \sup_{|t| > t_0}\left| \PP\left(
            \max_{j \in[p]} S_j \le  {t \over  \xi}\right) -  
            \PP\left(
            \max_{j \in[p]} S_j   \le  t\right)\right|.
        \end{align*}
        To bound $\rI$, application of the anti-concentration property in \cref{lem_anti} for $p\ge 2$ or \cref{lem_anti_univ} for $p =1$ gives 
        \[
        \rI \le \sup_{|t| \le t_0} C~{t\over \xi}|1-\xi|\sqrt{1 + \log p} \le C' {|1-\xi|\over \xi} t_0\sqrt{1 + \log p}.
        \]
        Regarding $\rII$, note that 
        \begin{align*}
            \rII  &\le  \sup_{t > t_0} \PP\left(
            \max_{j \in[p]} S_j > {t\over \xi \vee 1}
            \right) +  \sup_{t <  -t_0} \PP\left(
            \max_{j \in[p]} S_j \le {t\over \xi \wedge 1}
            \right)\\
            &\le \sup_{t > t_0} \PP\left(
            \max_{j \in[p]} S_j > {t\over \xi \vee 1}
            \right) + \sup_{t \ge   t_0} \PP\left(
            - \max_{j \in[p]} S_j \ge {t\over \xi \wedge 1}
            \right)\\
            &\le 2 \sup_{t > t_0} \PP\left(
            \max_{j \in[p]} S_j > {t\over \xi \vee 1}
            \right)  \\
            &\le 2p \exp\left(-{t_0^2 \over 2(\xi \vee 1)^2}\right).
        \end{align*} 
        Combining the bounds of $\rI$ and $\rII$ completes the proof. 
    \end{proof}

\subsection{Proof of \cref{prop_Delta_Null}: Ratio Consistency of the Dispersion Index Estimator under the Null Hypothesis}\label{app_proof_prop_Delta_Null}

 \begin{proof}
    Recall that 
    $$
     {\Delta \over \wh \Delta}= {\tr(\wh \Sigma) \over \tr(\Sigma)}  {\tr(\Sigma^2)\over \widehat{\tr(\Sigma^2)}}. 
    $$ 	 	
    In bounding the relative error in $\tr(\wh \Sigma) / \tr(\Sigma)$, Chebyshev's inequality in conjunction with the facts (see \cite{HimenoYamada}) 
    $\EE[\tr(\wh \Sigma)] = \tr(\Sigma)$ and 
    $$\Var\left(\tr(\wh \Sigma)\right) = \EE\left(\tr^2(\wh \Sigma)\right) - \left[\EE\left(\tr(\wh \Sigma)\right)\right]^2 = \frac{2}{n - 1} \tr(\Sigma^2),
    $$
    yields that for all $t > 0$,
    \begin{align*}
    \PP\left\{\left| {\tr(\wh \Sigma) \over \tr(\Sigma)} - 1\right| \ge {t\over \sqrt{\rho_2(\Sigma)}}
    \right\}  \le  {2 \over (n-1) ~ t^2}.
    \end{align*}
    To control $\widehat{\tr(\Sigma^2)}/\tr(\Sigma^2)$, we first note that $\EE[\widehat{\tr(\Sigma^2)}] = \tr(\Sigma^2)$ and $$\Var\left(\wh{\tr(\Sigma^2)}\right) = \cO\left(\frac{\tr(\Sigma^4)}{n} + \frac{\tr^2(\Sigma^2)}{n^2}\right)$$ from Proposition A.2 of \cite{Chen2010}.  Chebyshev's inequality  then entails that for all $t > 0$,
    $$ 
    \PP\left\{\left| 	\frac{\wh{\tr(\Sigma^2)}}{\tr(\Sigma^2)} - 1\right| \ge {t\over \sqrt{\rho_2(\Sigma^2)}} + {t\over \sqrt n}
    \right\}  = \cO\left(
    {1\over n  t^2}
    \right).
    $$
    The preceding two upper-tail bounds in conjunction with the fact that $|a^2-1| \ge |a-1|$ for $a\ge 0$ entail that, for all $t\in (0,1)$,   
    \begin{align*}
        \PP\left\{
        \left|
            \sqrt{\Delta \over \wh \Delta} - 1 
        \right| \ge {t\over \sqrt n} + {t\over \sqrt{\rho_2(\Sigma^2)} }+ {t\over \sqrt{\rho_2(\Sigma)}}
        \right\}  = \cO\left(
        {1\over n  t^2}
        \right).
    \end{align*}  
    Finally, using   $\rho_2(\Sigma) \ge \rho_2(\Sigma^2)$  from  \cref{lem_ranks} completes the proof. 
 \end{proof}

\subsection{Proof of \cref{thm_range_stat_limit}}\label{app_proof_thm_range_stat_limit}
	
\begin{proof}
    By definition,  for any $t\in \RR$, 
    \[
        \PP(T \le t) = \PP\left(
        \bar T \le \sqrt{\wh \Delta / \Delta}~\left(t + 2a_nb_n\right) - 2a_nb_n
        \right).
    \]
    For some constant $C>0$, let 
    \[
        \cE_{\Delta} = \left\{
         \Bigl| 1- \sqrt{\Delta / \wh\Delta} \right| \le \epsilon_n
        \Bigr\},\quad \text{with }\quad   \epsilon_n = {C\over \sqrt{\rho_2(\Sigma^2)}} + {C\over \sqrt n}.
    \]
    Invoking \cref{prop_Delta_Null} with $t = \sqrt{n}$ yields $\PP(\cE_\Delta^c) =\cO(1/n)$.
    By repeating the arguments in the proof of \cref{thm_range_limit}, we find that 
    \begin{align*}
        &\PP(T \le t) - \PP(U_n \le t) \\
        &~ \le ~  
        \PP\left(
        \bar T \le {1 \over 1-\epsilon_n}\left(t + 2a_nb_n\right) - 2a_nb_n
        \right) - \PP(U_n \le t)  + \PP(\cE^c_\Delta)\\
        &~ \le ~ C \left(
        \log^5(nd)  \over \rho_1(\Sigma^2)
        \right)^{1/4} + \PP(\cE^c_\Delta)  &&\text{by \cref{thm_range_limit}}  \\
        &\qquad +  
        \PP\left(
        U_n \le  {1 \over 1-\epsilon_n} \left(t + 2a_nb_n\right) - 2a_nb_n
        \right) - \PP(U_n \le t) \\
        &~ \le ~ C \left(
        \log^5(nd)  \over \rho_1(\Sigma^2)
        \right)^{1/4}  + {C'\over n}\\
        &\qquad +  
        \PP\left(
        S_{(n)} - S_{(1)} \le {1 \over 1-\epsilon_n}\left({t\over a_n} + 2b_n\right)  
        \right) - \PP\left(	S_{(n)} - S_{(1)} \le {t\over a_n} + 2b_n\right)  .
    \end{align*}
    \cref{lem_anti_ratio} with $\xi = 1-\epsilon_n$ and $t_0 = C\sqrt{\log n}$ implies that, for all $t\in \RR$,
    \begin{align*}
        & 
        \PP(T \le t) - \PP(U_n \le t) \\ 
        &~~ \le  ~   C \left(
        \log^5(nd)  \over \rho_1(\Sigma^2)
        \right)^{1/4} +C'\left(  {\log n\over n}\right) +  C {\epsilon_n \over 1-\epsilon_n} \log n + 2 \exp\left(-{C'\log n }\right)\\
        &~~ \le ~ C \left(
        \log^5(nd)  \over \rho_1(\Sigma^2)
        \right)^{1/4} +  C {\log n\over \sqrt{\rho_2(\Sigma^2)}} + C'{\log n\over \sqrt{n}}. 
    \end{align*} 
    Since a symmetric argument proves the upper-bound for the reverse direction, using $\rho_2(\Sigma^2) \ge  \rho_1(\Sigma^2)$ from \eqref{cond_rhos} completes the proof. 
\end{proof}

\subsection{Proof of \cref{thm_range_typeI}}\label{app_proof_thm_range_typeI}

    \begin{proof}  
        For arbitrary $\alpha_0\in (0,1)$, let 
        $\wh F^{-1}_{M, n}(\alpha_0)$ be the $\alpha_0$ quantile of $M$ i.i.d. copies of $U_n$, and let $\wh F_{M,n}$ denote its associated empirical cumulative density function. Further, let $F_n^{-1}(\alpha_0)$ be the $\alpha_0$ quantile of the distribution of $U_n$, whose c.d.f. is denoted by $F_n$. To establish the result, we note that it is sufficient to bound $\left| \PP\left( T > \wh F^{-1}_{M, n}(\alpha_0) \right) - (1 - \alpha_0) \right|$ from above, for every $\alpha_0 \in (0,1)$. By the triangle inequality, we have
        \begin{align*}
            \left|  
                \PP\left(
                    T > \wh F^{-1}_{M, n}(\alpha_0) 
                \right) - (1 - \alpha_0)
            \right| 
            &\le ~ \left|  
                \PP\left(
                    T > \wh F^{-1}_{M, n}(\alpha_0) 
                \right) - \PP\left(
                    U_n > \wh F^{-1}_{M, n}(\alpha_0) 
                \right) 
            \right|\\ 
            &\qquad + \left|  
                \PP\left(
                    U_n > \wh F^{-1}_{M, n}(\alpha_0) 
                \right) - (1 - \alpha_0)
            \right|.
        \end{align*}
        The first term can be bounded by invoking \cref{thm_range_stat_limit}, while the second term equals 
        \begin{align*}
             &\left| 1 -  F_n\left(\wh F^{-1}_{M, n}(\alpha_0)\right) - (1 - \alpha_0) \right|\\ 
            &\le  \left|  F_n\left(\wh F^{-1}_{M, n}(\alpha_0)\right) -  \EE_M\left[ \wh F_{M,n}\left(\wh F^{-1}_{M, n}(\alpha_0)\right) \right]\right| + \left|\EE_M\left[ \wh F_{M,n}\left(\wh F^{-1}_{M, n}(\alpha_0)\right) \right] - \alpha_0 \right|\\
            &\le \EE_M\left[  \sup_{t\in \RR} \left| F_n(t) - \wh F_{M,n}(t) \right|\right]   + \left|\EE_M\left[ \wh F_{M,n}\left(\wh F^{-1}_{M, n}(\alpha_0)\right) - \alpha_0 \right]\right|,
        \end{align*}
        where $\EE_M$ denotes the expectation with respect to $M$ \text{i.i.d.} copies of $U_n$. By the Dvoretzky–Kiefer–Wolfowitz inequality, we know that for all $\epsilon\ge 0$,
        \begin{equation*}
            \PP\left\{
            \sup_{t\in \RR}
               \left| F_n(t) -  \wh F_{M,n}(t)\right| 
               > \epsilon
            \right\} \le 2 e^{-2M\epsilon^2}
        \end{equation*}
        which implies  
        \begin{align}\label{bd_CDF_exp}\nonumber
            \EE_M\left[  \sup_{t\in \RR} \left| F_n(t) - \wh F_{M,n}(t) \right|\right] & \le \epsilon + \int_{\epsilon}^{\i} 2e^{-2Mt^2} {\rm d} t\\\nonumber
            &\le \epsilon + {1\over 2M \epsilon }e^{-2M\epsilon^2}\\
            & \le   {2 \over \sqrt M} &&\text{by }\epsilon = 1/\sqrt{M}.
        \end{align}
        On the other hand, we know that (see, for instance, \cite{David}),
        \begin{equation}\label{bd_emp_quantile_1}
            \wh F_{M,n}\left(\wh F^{-1}_{M, n}(\alpha_0)\right) \ge \alpha_0,\qquad \text{almost surely}.
        \end{equation}
        Since $U_n$ has a probability density function, we know that, with probability one, 
        \begin{equation}\label{bd_emp_quantile_2}
            \left| \wh F_{M,n}\left(\wh F^{-1}_{M, n}(\alpha_0)\right) - \alpha_0 \right| \le {1\over M}.
        \end{equation}
        Combining \eqref{bd_CDF_exp},  \eqref{bd_emp_quantile_1}, and \eqref{bd_emp_quantile_2} and invoking \cref{thm_range_stat_limit} completes the proof.        
	   \end{proof}

	\subsection{Proof of \cref{thm_IQR}}\label{app_proof_thm_IQR}
	
	\begin{proof}
		
	The proof largely follows a similar structure to that of \cref{thm_range_limit} and in the sequel we only emphasize the differences.\\
		
	\noindent{\bf Proof of Step 1:} We distinguish between two cases depending on which condition of \eqref{effectiveRank_IQR} is satisfied.
    
    {\bf Case 1:} Suppose that $\rho_1(\Sigma^2) = \omega(n)$. In the proof of {\bf Step 1} towards proving \cref{thm_range_limit}, recall that
		\[ 
		Y_i = \sum_{j=1}^d {\lambda_j\over \sqrt{2 \tr(\Sigma^2)}}\left(  {n\over n-1}(Z_{ij}-\oZ_j)^2 -1 \right) = V_i + Q_i
		\] 
		for each $i\in [n]$, 
		where we write
		\begin{align*}
			 V_i &:= \sum_{j=1}^d {\lambda_j\over \sqrt{2\tr(\Sigma^2)}}\left(
			Z_{ij}^2 - 1
			\right),\\
			Q_i &:= \sum_{j=1}^d {\lambda_j\over \sqrt{2\tr(\Sigma^2)}}\left(
			{n \over n-1} \oZ_j^2 - {2n\over n-1}Z_{ij}\oZ_j + {1\over n-1}Z_{ij}^2
			\right).
		\end{align*} 
	Note that the $V_i$ for $i \in [n]$ are i.i.d. copies of a random variable $V$ satisfying $\EE[V] = 0$ and $\EE[V^2] = 1$. Further, we have that for all $j\in [n]$,
		\[
			\EE\left[\left(
			{\lambda_j\over \sqrt{2\tr(\Sigma^2)}}\left(
			Z_{ij}^2 - 1
			\right)
			\right)^2\right] = {\lambda_j^2 \over \tr(\Sigma^2)}
		\]
		and 
		\[
			\EE \bigg |
			{\lambda_j\over \sqrt{2\tr(\Sigma^2)}}
			\left( Z_{ij}^2 - 1
			\right)
			\bigg |^3 \le  {C\lambda_j^{3} \over (\tr(\Sigma^2))^{3/2}} \le {C\over \sqrt{\rho_1(\Sigma^2)}} {\lambda_j^2 \over \tr(\Sigma^2)}.
		\]
		Thus, by the Berry-Esseen theorem, we have 
		\begin{align*}
			\sup_{t\in \RR} \left|\PP(V \le t) -  \PP(W \le t)\right| = \cO\left(
			1 \over \sqrt{\rho_1(\Sigma^2)}
			\right),
		\end{align*}
		where $W \sim \cN(0,1)$. Moreover, since each $Q_i$ is sub-exponential with sub-exponential constant $c/n$, 
		by taking a union bound over $i\in [n]$, we have 
		\[
		\PP\left(
		\max_{i\in [n]}|Q_i| \ge C\log(n) / n
		\right) \le  n^{-1}.
		\]
		Therefore, conditions (b) and (c) in \cref{thm_order_clt} hold with $V$, $V_i$, and $Q_i$ in lieu of $U$, $U_i$, and $R_i$ respectively, for $i \in [n]$, $\alpha_n = 1/\sqrt{\rho_1(\Sigma^2)}$, $\beta_n = \log(n) / n$, and $\gamma_n = 1/n$. Invoking \cref{thm_order_clt} with $(p_1,p_2) = (1/4,3/4)$, $(r_1, r_2) = (\lfloor n/4 \rfloor, \lfloor 3n/4 \rfloor)$, $F_W = \Phi$, and $f_W = \phi$, as well as using $ \Phi^{-1}(1/4)  = - \Phi^{-1}(3/4)$, we obtain 
		\begin{align*}
			\sqrt{n}\begin{pmatrix} 
				Y_{(\lfloor n/4 \rfloor)}  +  \Phi^{-1}(3/4)  \\
				\\
				Y_{(\lfloor 3n/4 \rfloor)} - \Phi^{-1}(3/4)
			\end{pmatrix} \distrto \cN_2\left(0_2,~ {1\over 16 \phi^2(\Phi^{-1}(3/4))}\begin{pmatrix}
				3 & 1 \\ 1 & 3
			\end{pmatrix}\right),
		\end{align*} 
		so that 
		\begin{equation}\label{distr_limit_Y_IQR}
			\sqrt{n}\left(
			Y_{(\lfloor 3n/4 \rfloor)} - Y_{(\lfloor n/4 \rfloor)} - 2\Phi^{-1}(3/4)
			\right) \distrto 
			\cN(0, \sigma_*^2),
		\end{equation}
		where $\sigma_* = [2 \phi(\Phi^{-1}(3/4))]^{-1}$. This completes the proof of \textbf{Step 1} for \textbf{Case 1}.\\

        {\bf Case 2:} Suppose that $\rho_3(\Sigma) = \omega(n^2 \log^2 n)$.  To establish the analog of {\bf Step 1} in the proof of \cref{thm_range_limit} for \textbf{Case 2}, we invoke the Yurinskii coupling result of \cref{lem_Yurinskii} in \cref{app_proof_tech_Yuri_lemmas} in conjunction with the 1-Lipschitz property of order statistics with respect to the sup-norm as established in \cref{lem_Lipschitz}. This coupling argument yields a Gaussian approximation from which the desired quantile convergence properties follow. 
        
        In particular, let $S \sim \cN_n(0, \bI_n)$, and define $Y_{\pi} := (Y_{(q_1)}, \ Y_{(q_2)})^\T$ and $S_{\pi} := (S_{(q_1)}, S_{(q_2)})^\T$, where $q_1 = \lfloor 3n/4 \rfloor$ and $q_2 = \lfloor n/4 \rfloor$. Recall the decomposition $Y_i = \sum_{j = 1}^d \xi_{ij}$ for each $i \in [n]$, as defined by \eqref{def_xi} in the proof of \cref{thm_range_limit}. To invoke \cref{lem_Yurinskii}, for any $\epsilon > 0$, let $t  = 2 \sqrt{\log n}$ and $\delta  = \epsilon$. In conjunction with \cref{lem_Lipschitz}, \cref{lem_xi}, and the bound established in \cref{lem_beta} for $\beta$ as defined in \cref{lem_Yurinskii}, this yields 
		\begin{align}\label{Yurinskii App2}\nonumber
				\PP (  \| Y_{\pi} - S_{\pi} \|_{\infty} > \epsilon) &  \le \PP (\| Y - S \|_{\infty} > \epsilon) \\ \nonumber
				& \lesssim  \PP \bigl(\| S \|_{\infty} > 2\sqrt{\log n} \bigr) \ + \ { n  \log n  \over  \epsilon^3\sqrt{\rho_3(\Sigma)}} \\ & \lesssim {1\over n} + {n \log n \over \epsilon^3\sqrt{\rho_3(\Sigma)}}.
		\end{align} 
		 Since $n^2\log^2n = o({\rho_3(\Sigma)})$, this entails that 
		\[
			 \| Y_{\pi} - S_{\pi} \|_{\infty} = o_\PP(1).
		\]
		We thus have 
		\begin{align}\label{distr_limit_Y_q1}
			&\sup_{t\in \RR} \left| \PP\left(
			Y_{(q_1)} \le t
			\right) - \PP\left(
			S_{(q_1)} \le t
			\right) \right| ~ \le~  C  \epsilon_n,\\\label{distr_limit_Y_q2}
			&\sup_{t\in \RR} \left| \PP\left(
			Y_{(q_2)} \le t
			\right) - \PP\left(
			S_{(q_2)} \le t
			\right) \right| ~ \le~  C \epsilon_n,\\\label{distr_limit_Y_range}
			&\sup_{t\in \RR} \left| \PP\left(
			Y_{(q_1)} - Y_{(q_2)}\le t
			\right) - \PP\left(
			S_{(q_1)} - S_{(q_2)} \le t
			\right) \right| ~ \le~  C \epsilon_n,
		\end{align}
		where we may take $\epsilon_n = o(1)$. For future reference for the proof of \textbf{Step 3}, we again note that the asymptotic properties of the joint distribution of a finite number of central order statistics for \text{i.i.d.} standard Gaussian samples \cite[Theorem 10.3]{David} imply
		\begin{equation}\label{Joint Limit 2} 
			\sqrt{n} \begin{pmatrix}
				S_{(q_1)} - 	\Phi^{-1}(3/4)  \vspace{1.5mm}\\
				S_{(q_2)} + 	\Phi^{-1}(3/4)
			\end{pmatrix}  \distrto  \cN_{2}(0, \Sigma_{\pi}),  
		\end{equation}
		where we use $\Phi^{-1}(3/4) = -\Phi^{-1}(1/4)$ and 
		\[
		\	\Sigma_{\pi} := \begin{pmatrix}
			\sigma_{11} & \sigma_{12} \\  \sigma_{12} & \sigma_{22} 
		\end{pmatrix} ={1 \over 16\phi^2(\Phi^{-1}(3/4))}\begin{pmatrix}
			3   & 1
			\\ 1 & 3
		\end{pmatrix}.
		\]
        This completes the proof of \textbf{Step 1} for \textbf{Case 2}.\\

	\noindent{\bf Proof of Step 2:} In the proof of  {\bf Step 2}, we largely repeat the arguments analogous to that found in the proof of \cref{thm_range_limit}, except that in this context we take $a_n \equiv  \sqrt{n}$ and $b_n \equiv \Phi^{-1}(3/4)$, and use 
		\begin{align}\label{rho2_IQR}
		    \rho_2(\Sigma)   \ge  \rho_1(\Sigma^2) \vee \rho_3(\Sigma) = \omega(n) = \omega(b_n^2),
		\end{align}
as per \cref{lem_ranks}, and 
		\[
		\max\left\{\left|Y_{(\lfloor n/4 \rfloor)} + b_n\right|,~  \left|Y_{(\lfloor 3n/4 \rfloor)} - b_n\right|\right\} = \cO_\PP(1)
		\]
		 in analogy to the analysis of the event $\cE_{(n)} \cap \cE_{(1)}$ in the proof of \cref{thm_range_limit}. We can thus similarly deduce that
		\[
		\sqrt{n\over n-1}{R_{(q)} \over \sqrt{\tr(\Sigma)}}  =  1 + \cO_\PP\left(
		{1\over  \sqrt{\rho_2(\Sigma)}}  \right),\quad \forall~  q\in \left\{\lfloor n/4 \rfloor, \lfloor 3n/4 \rfloor\right\},
		\]
		so as to obtain that, under the event $\cE_{(n)} \cap \cE_{(1)}$, 
		\begin{equation}\label{bd_zeta_n_IQR}
			 \left|{1\over \zeta_n} -1 \right| =  \left|{n \over n-1}{R_{(\lfloor n/4 \rfloor)} + R_{(\lfloor 3n/4 \rfloor)} \over 2\sqrt{\tr(\Sigma)}} -1 \right| =  \cO\left(
			{1\over  \sqrt{\rho_2(\Sigma)}} +{1\over n}\right) := \eta_n.
		\end{equation}

        \noindent{\bf Proof of Step 3:} For the proof of {\bf Step 3}, define
		\[
		\bar T_* := 2 a_n   \Delta^{-1/2}\left(R_{(\lfloor 3n/4 \rfloor)} - R_{(\lfloor n/4 \rfloor)}\right) - 2 a_n  b_n
		\]
		and 
		let $U \sim \cN(0, \sigma_*^2)$.
		Repeating similar arguments to that of the proof of \cref{thm_range_limit} yields that, for all $t\in \RR$,
		\begin{align*}
			& \PP\left(
			\bar T_* \le t\right) -  
			\PP\left( U \le  t\right)\\ 
			& =    \PP\left(
			Y_{(\lfloor 3n/4 \rfloor)} - Y_{(\lfloor n/4 \rfloor)} \le  {t + 2a_nb_n\over a_n}(1+\eta_n)\right) -   \PP\left( U \le  t\right) +o(1) &&\text{by \eqref{eq_distr_range_Tn}}\\
			&=   \PP\left(
			U \le  t(1+\eta_n) + 2a_nb_n\eta_n\right) -  
			\PP\left( U  \le  t\right)   + o(1),
		\end{align*}  
		where the final equality is due to \eqref{distr_limit_Y_IQR} in \textbf{Case 1} and \eqref{distr_limit_Y_range} with \eqref{Joint Limit 2} in \textbf{Case 2}, under the same case separation considered in the preceding proof of \textbf{Step 1}. Invoking \cref{lem_anti} and \cref{lem_anti_ratio} with $p=1$, $t_0 =  a_n$ and $\xi = 1/(1+\eta_n)$ gives 
		\begin{align*}
			&\sup_{t\in \RR} \left| \PP\left(
			U \le  t(1+\eta_n) + 2a_nb_n\eta_n \right) -  
			\PP\left( U  \le  t\right)\right|   \\
			&\le ~ \sup_{t\in \RR}  \left| \PP\left(
			U \le  t(1+\eta_n) + 2a_nb_n\eta_n \right) -  
			\PP\left( U  \le  t(1+\eta_n)\right)\right|\\
            &\qquad + \sup_{t\in \RR}  \left| \PP\left(
			U \le  t(1+\eta_n)  \right) -  
			\PP\left( U  \le  t\right)\right| \\
			&\le ~   C a_n \eta_n +  2 \exp\left(-{a_n^2 / C'}\right) 
		\end{align*}
		for some constants $C, C'$ depending only on $\Phi^{-1}(3/4)$ and $\sigma_*^2$. In conjunction with \eqref{bd_zeta_n_IQR} and $
		\rho_2(\Sigma) = \omega(n)
		$
		as per \eqref{rho2_IQR}, using symmetric arguments to upper-bound the reverse direction, we obtain
		\[
		\sup_{t\in \RR}\left| \PP\left(
		\bar T_* \le t\right) -  
		\PP\left( U \le  t\right)\right|  = o(1)  + \cO\left(
		\sqrt{n \over \rho_2(\Sigma)} +  {1\over \sqrt n}\right).
		\] 
		Finally, \cref{prop_Delta_Null} yields 
		\[
		a_n b_n \sqrt{\frac{\Delta}{\wh \Delta}} = a_n b_n + o_\PP\left(\frac{a_n b_n}{\sqrt{n}} \right) = a_n b_n + o_\PP\left( 1 \right),
		\]
		again considering $a_n \equiv   \sqrt{n}$ and $b_n \equiv \Phi^{-1}(3/4)$ in this setting. Invoking Slutsky's theorem completes the proof. 
	\end{proof}

    \subsubsection{Generalized Theory for the Limiting Distribution of Central Order Statistics used in the Proof of \cref{thm_IQR}}

		For any fixed percentile $p\in (0,1)$, the following lemma is the key result that proves the limiting distribution of the $r$-th order statistics $Y_{(r)}$ with  any $r /n - p = o(n^{-1/2})$. It generalizes the classical result on empirical quantile statistics in \cite{David} by relaxing the assumption of independence of the samples and allowing the random samples to not be distributed according to a given absolutely continuous distribution but instead be approximated by this distribution in the limit. 
		
		\begin{lemma}\label{lem_order_clt}
			Let $Y_i := U_i + R_i$, for $i\in[n]$, be a sequence of random variables satisfying 
			\begin{enumerate}
				\item[(a)] $U_1,\ldots, U_n$ are i.i.d. copies of some random variable $U$,
				\item[(b)] the random variable $U$ satisfies 
				\begin{equation}\label{Kolm_bd_UW}
					\sup_{t\in \RR} \left| \PP(U \le t) - \PP(W\le t)\right| = \cO(\alpha_n)
				\end{equation}
				for some random variable $W$ that has c.d.f. $F_W$ and quantile function $F^{-1}_W$.
				\item[(c)]  the random variables $R_1,\ldots, R_n$ satisfy
				\begin{equation}\label{eq_tail_Ri}
					\PP\left(\max_{i\in [n]} |R_i| \ge \beta_n\right)  \le \gamma_n.
				\end{equation}
			\end{enumerate}  
            The deterministic sequences $\alpha_n, \beta_n$ and $\gamma_n$ satisfy 
            \[
                    (\alpha_n + \beta_n + \gamma_n)\sqrt{n} = o(1).
            \]
			For any fixed percentile $0<p<1$ with any order $r\in[n]$ satisfying  $r /n - p=  o(n^{-1/2})$, assume that $F_W$ is differentiable at its $p^\text{th}$ quantile, $\xi_p = F_W^{-1}(p)$, with the derivative satisfying $f_W(\xi_p) > 0$. Further assume that the second-order derivative of $F_W$ at $x$, $f'_W(x)$, is bounded for all $ \xi_p - c\le x \le  \xi_p+ c$ with some small constant $c>0$. 
			Then we have 
			\[
			\sqrt{n}\left(
			Y_{(r)} - \xi_p
			\right) = \sqrt{n}~ {p - \wh F_Y(\xi_p) \over f_W(\xi_p)} + o_\PP(1)
			\]
			where  $\wh F_Y$ is the empirical c.d.f. of $Y_1,\ldots, Y_n$. 
		\end{lemma}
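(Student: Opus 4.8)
The plan is to follow the classical Bahadur-type argument for empirical quantiles (as in \cite{David}), but to carry it out while tracking the error incurred by (i) the non-i.i.d.\ perturbation terms $R_i$ and (ii) the fact that the $U_i$ are only approximately $F_W$-distributed in the Kolmogorov sense. The starting point is the usual order-statistic/empirical-process identity: for any $x\in\RR$, the event $\{Y_{(r)}\le x\}$ coincides with $\{\widehat F_Y(x)\ge r/n\}$. Hence, writing $t_n := \xi_p + s/\sqrt n$ for a fixed $s\in\RR$, one has
\[
\PP\bigl(\sqrt n\,(Y_{(r)}-\xi_p)\le s\bigr) \;=\; \PP\Bigl(\widehat F_Y(t_n)\ge r/n\Bigr).
\]
First I would analyze the mean of $\widehat F_Y(t_n)$. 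Since $Y_i = U_i + R_i$, on the high-probability event $\{\max_i|R_i|\le\beta_n\}$ we sandwich $\mathbbm 1\{Y_i\le t_n\}$ between $\mathbbm 1\{U_i\le t_n-\beta_n\}$ and $\mathbbm 1\{U_i\le t_n+\beta_n\}$; taking expectations and using \eqref{Kolm_bd_UW} together with a Taylor expansion of $F_W$ about $\xi_p$ (valid by the assumed differentiability, with the second-order term controlled by the bound on $f'_W$ near $\xi_p$) gives
\[
\EE\bigl[\widehat F_Y(t_n)\bigr] \;=\; F_W(\xi_p) + f_W(\xi_p)\,\frac{s}{\sqrt n} + O\!\Bigl(\alpha_n + \beta_n + \frac{1}{n}\Bigr),
\]
so that, because $(\alpha_n+\beta_n+\gamma_n)\sqrt n = o(1)$ and $r/n-p=o(n^{-1/2})$, the centering $r/n$ and the mean differ by $f_W(\xi_p)s/\sqrt n + o(1/\sqrt n)$.

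Next I would handle the fluctuation of $\widehat F_Y$ around its mean by an \emph{oscillation/stochastic-equicontinuity} step, which is the heart of the argument. The key claim is that
\[
\sqrt n\,\bigl[\widehat F_Y(t_n) - \widehat F_Y(\xi_p)\bigr] - \sqrt n\,\bigl[F_W(t_n)-F_W(\xi_p)\bigr] \;=\; o_\PP(1),
\]
i.e.\ the localized empirical process is asymptotically constant over the $1/\sqrt n$-neighborhood of $\xi_p$. I would prove this by again sandwiching with the $U_i$'s (shifting the argument by $\pm\beta_n$), reducing to the i.i.d.\ empirical process $\widehat F_U$ built from $U_1,\dots,U_n$, for which the increment over an interval of length $O(1/\sqrt n + \beta_n)$ has variance $O(1/\sqrt n + \beta_n) = o(1/\sqrt n)$ times $n$; a Chebyshev (or Ottaviani/maximal-inequality) bound then yields the $o_\PP(1)$ oscillation. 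Combining this with the mean computation and the definition of $\widehat F_Y(\xi_p)$ gives
\[
\PP\Bigl(\widehat F_Y(t_n)\ge r/n\Bigr) \;=\; \PP\Bigl(\sqrt n\,\bigl(p-\widehat F_Y(\xi_p)\bigr) \ge -f_W(\xi_p)\,s + o_\PP(1)\Bigr),
\]
which, upon rearranging and using $f_W(\xi_p)>0$, is exactly the asserted Bahadur representation
\[
\sqrt n\,(Y_{(r)}-\xi_p) \;=\; \sqrt n\,\frac{p-\widehat F_Y(\xi_p)}{f_W(\xi_p)} + o_\PP(1).
\]

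The main obstacle I anticipate is the oscillation step: establishing that the empirical process increments are uniformly negligible at scale $1/\sqrt n$ \emph{without} assuming the $Y_i$ are i.i.d.\ or that $U$ has a density. The perturbations $R_i$ are not independent of one another and need not be independent of the $U_i$, so one cannot directly invoke a classical Donsker or bracketing bound for $\widehat F_Y$. The remedy is to never work with $\widehat F_Y$ directly in the fluctuation bound, but to pass to the sandwiching i.i.d.\ empirical measure $\widehat F_U$ evaluated at the deterministic shifted points $\xi_p\pm\beta_n$, $t_n\pm\beta_n$; the $R_i$-dependence is thereby absorbed into the deterministic $\pm\beta_n$ shifts, whose contribution is $o(1/\sqrt n)$ in both mean and variance by hypothesis. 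One must be slightly careful that the Kolmogorov bound \eqref{Kolm_bd_UW} only controls $\PP(U\le t)$ pointwise in $t$ rather than giving a density for $U$ — but since all the points at which we must evaluate are fixed (not data-dependent) and lie in a compact neighborhood of $\xi_p$ where $F_W$ is twice differentiable, the Taylor expansion of $F_W$ plus the $O(\alpha_n)$ Kolmogorov slack suffices throughout. Assembling the mean estimate, the oscillation bound, and the elementary inversion of the monotone relation between $Y_{(r)}$ and $\widehat F_Y$ then completes the proof.
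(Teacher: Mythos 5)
Your proposal is correct and follows essentially the same route as the paper: the event identity $\{Y_{(r)}\le x\}=\{\widehat F_Y(x)\ge r/n\}$, the $\pm\beta_n$ sandwich reducing $\widehat F_Y$ to the i.i.d.\ process $\widehat F_U$ at deterministic points, the Kolmogorov-plus-Taylor control of the mean, and a Chebyshev/binomial-variance bound on the local empirical-process increment are exactly the paper's steps. The only piece you describe informally ("the elementary inversion of the monotone relation") is formalized in the paper via Ghosh's lemma (\cref{lem_diff_conv_prob}), whose two conditions are precisely your $O_\PP(1)$ bound on $W_n$ and your oscillation claim, so the argument closes as you anticipate.
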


        The proof of \cref{lem_order_clt} uses the following lemma,  proved in  \cite{David}, for convergence in probability between two sequences of random variables.
		\begin{lemma}\label{lem_diff_conv_prob}
			Let $V_n$ and $ W_n$ be two sequences of random variables such that 
			\begin{enumerate}
				\item[(a)] $W_n = \cO_\PP(1)$; 
				\item[(b)] For every $y$ and every $\epsilon>0$, 
				\begin{align*}
					&(i) \quad \lim_{n\to \i} \PP\left(
					V_n \le y, W_n \ge y  + \epsilon 
					\right) = 0,\\
					&(ii) \quad \lim_{n\to \i} \PP\left(
					V_n \ge  y+\epsilon, W_n \le \epsilon
					\right) = 0.
				\end{align*}
			\end{enumerate}
			Then,
			\[
			V_n - W_n = o_\PP(1).
			\]
		\end{lemma}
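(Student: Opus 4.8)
The plan is to reduce the claim to showing that $\PP\bigl(|V_n - W_n| > 2\epsilon\bigr) \to 0$ for each fixed $\epsilon > 0$ — which is plainly equivalent to $V_n - W_n = o_\PP(1)$ since $\epsilon$ is arbitrary — and to establish this by a standard finite-grid discretization argument that exploits the tightness of $(W_n)$ supplied by (a) together with the pointwise estimates in (b). I read hypothesis (b)(ii) in its natural symmetric form, namely $\PP(V_n \ge y + \epsilon,\, W_n \le y) \to 0$, since the apparent ``$W_n \le \epsilon$'' is a typo and the symmetric reading is what makes the two halves of the argument dual.

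First I would fix $\epsilon, \eta > 0$ and, invoking (a), choose $M > 0$ (independent of $n$) with $\sup_n \PP(|W_n| > M) < \eta$. Partition $[-M, M]$ by a finite grid $-M = y_0 < y_1 < \cdots < y_K = M$ of mesh $\max_k (y_k - y_{k-1}) < \epsilon$. The key elementary observation is that on the event $\{|W_n| \le M\}$ the ``bad'' event $\{|V_n - W_n| > 2\epsilon\}$ is contained in a finite union of events controlled by (b): if $W_n \in [y_{k-1}, y_k]$ and $V_n > W_n + 2\epsilon$, then with $y := y_{k-1} + \epsilon$ one has $V_n > y + \epsilon$ and $W_n \le y_k < y$, landing in the event from (b)(ii); if instead $W_n \in [y_{k-1}, y_k]$ and $V_n < W_n - 2\epsilon$, then with $y := y_k - 2\epsilon$ one has $V_n \le y$ and $W_n \ge y_{k-1} > y + \epsilon$, landing in the event from (b)(i). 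The mesh bound $y_k - y_{k-1} < \epsilon$ is exactly what is needed for these two inclusions.

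Assembling these pieces gives $\PP(|V_n - W_n| > 2\epsilon) \le \PP(|W_n| > M) + \sum_{k=1}^K \bigl[ \PP(V_n \le y_k - 2\epsilon,\, W_n \ge y_k - \epsilon) + \PP(V_n \ge y_{k-1} + 2\epsilon,\, W_n \le y_{k-1} + \epsilon) \bigr]$. Each of the $2K$ summands tends to $0$ as $n \to \infty$ by (b), and $K$ is finite, so $\limsup_n \PP(|V_n - W_n| > 2\epsilon) \le \eta$; letting $\eta \downarrow 0$ completes the proof.

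There is no genuine analytic difficulty in this argument; the only point requiring care is the bookkeeping among the three scales involved — the grid mesh, $\epsilon$, and $2\epsilon$ — so that each component of the bad event provably lands inside one of the events that (b) controls, and, secondarily, interpreting hypothesis (b)(ii) correctly so that both symmetric cases go through.
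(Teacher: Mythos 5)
Your proof is correct. The paper does not reprove this lemma (it is quoted from the cited reference on order statistics, where it appears as Ghosh's lemma), and the argument you give -- tightness of $(W_n)$ to localize to $[-M,M]$, a finite grid of mesh less than $\epsilon$, and the two inclusions placing each half of $\{|V_n-W_n|>2\epsilon\}$ inside an event controlled by (b)(i) or (b)(ii) -- is exactly the classical proof. Your diagnosis that ``$W_n\le\epsilon$'' in (b)(ii) is a typo for ``$W_n\le y$'' is also right; the paper's own verification of condition (b) in the proof of Lemma \ref{lem_order_clt} treats the two parts symmetrically, which only makes sense under that reading.
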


        \medskip 
        
		\begin{proof}[Proof of \cref{lem_order_clt}]
			Define two sequences of random variables
			\begin{align}\label{def_Vn_Wn}
				V_n :=	\sqrt{n}\left(
				Y_{(r)} - \xi_p
				\right),\qquad W_n := \sqrt{n} ~ {p - \wh F_Y(\xi_p) \over f_W(\xi_p)}.
			\end{align}
			We aim to invoke \cref{lem_diff_conv_prob} by verifying the conditions in (a) and (b). 
			
			\paragraph{Verification of (a).}
			To verify condition (a), we first note that 
			\[
			W_n =  \sqrt{n} ~ {F_W(F_W^{-1}(p)) - \wh F_Y(\xi_p) \over f_W(\xi_p)} = \sqrt{n} ~ {F_W(\xi_p) - \wh F_Y(\xi_p) \over f_W(\xi_p)}  
			\] 
			by the fact that $F_W$ is differentiable at $\xi_p$. By adding and subtracting terms,  we have $W = W_{n,1} + W_{n,2} + W_{n,3}$ with 
			\begin{align*}
				&W_{n,1} := \sqrt{n} ~ {\wh F_U(\xi_p) - \wh F_Y(\xi_p) \over f_W(\xi_p)} \\
				&W_{n,2} := \sqrt{n} ~ {F_U(\xi_p) - \wh F_U(\xi_p) \over f_W(\xi_p)}\\
				&W_{n,3}:= \sqrt{n} ~ {F_W(\xi_p) - F_U(\xi_p) \over f_W(\xi_p)}.
			\end{align*}
			Here $F_U$ denotes the c.d.f. of $U$ with  $\wh F_U$ being its empirical counterpart. We proceed to bound the three terms separately. 
			
			For $W_{n,3}$,  the Kolmogorov distance bound in  \eqref{Kolm_bd_UW} of part (b) and $\alpha_n = o(1/\sqrt n)$ gives 
			\begin{equation}\label{bd_Wn3}
				W_{n,3} = \cO\left(
				\alpha_n \sqrt{n} \over f_W(\xi_p)
				\right) = o(1).
			\end{equation}
			Regarding $W_{n,2}$, since for any $y \in \RR$,
			\begin{equation}\label{cdf_clt}
				\sqrt{n} ~ \left(F_U(y) - \wh F_U(y)\right)  \distrto \cN\left(0, F_U(y)(1- F_U(y))\right),
			\end{equation}
			we have $W_{n,2} = \cO_\PP(1)$. Finally, to bound $W_{n,1}$  from above, we note that for any $y\in \RR$,
			\begin{align}\label{diff_emp_F_U}\nonumber
				\wh F_U(y) - \wh F_Y(y)  &= {1\over n}\sum_{i=1}^n \left(
				\b1\{U_i \le y\} - \b1\{U_i  + R_i\le y\} 
				\right)\\
				&=  {1\over n}\sum_{i=1}^n \Bigl(
				\b1\{ y - R_i\le U_i \le y\} - \b1\{ y\le U_i  \le y-R_i\} 
				\Bigr).
			\end{align} 
			By using \eqref{eq_tail_Ri} in part (c), we obtain that 
			\begin{align*}
				\left|\wh F_U(y) - \wh F_Y(y) \right| &\le 
				{1\over n}\sum_{i=1}^n \b1\{ y - \beta_n \le U_i \le y + \beta_n \} +   \gamma_n\\
				&= \wh F_U(y+\beta_n) - \wh F_U(y-\beta_n) + \gamma_n.
			\end{align*} 
			It then follows that $W_{n,1} $ is bounded from above by
			\begin{align*}
				& {\gamma_n \sqrt{n} \over f_W(\xi_p)} + {\sqrt{n}\over f_W(\xi_p)}\left[
				\wh F_U(\xi_p+\beta_n) - \wh F_U(\xi_p-\beta_n)
				\right]\\
				&~ \le  ~ {\gamma_n \sqrt{n} \over f_W(\xi_p)} + {\sqrt{n}\over f_W(\xi_p)}\left|
				\wh F_U(\xi_p+\beta_n) - F_U(\xi_p+\beta_n) - \wh F_U(\xi_p-\beta_n) + F_U(\xi_p-\beta_n)   
				\right| \\
				&\qquad + {\sqrt{n}\over f_W(\xi_p)}\left|
				F_U(\xi_p+\beta_n) - F_W(\xi_p+\beta_n)  
				\right| +  {\sqrt{n}\over f_W(\xi_p)}\left|
				F_U(\xi_p-\beta_n) - F_W(\xi_p-\beta_n)  
				\right|\\
				&\qquad +  {\sqrt{n}\over f_W(\xi_p)}\left|
				F_W(\xi_p+\beta_n) - F_W(\xi_p-\beta_n)  
				\right|. 
			\end{align*} 
			By using \eqref{cdf_clt} and \eqref{Kolm_bd_UW} in part (b),  we conclude that  for some $\bar \xi \in[\xi_p - \beta_n, \xi_p + \beta_n]$,
			\begin{align}\label{bd_Wn1}\nonumber
				W_{n,1} &= \cO_\PP\left( \gamma_n \sqrt{n} +   \alpha_n \sqrt{n}\right)   + {\sqrt{n}\over f_W(\xi_p)} \left[\beta_n f_W(\xi_p) + \cO\left(\beta_n^2 f'_W(\bar \xi)\right)\right] + A(\xi_p, \beta_n)\\
				&=\cO_\PP\left( \gamma_n \sqrt{n} +   \alpha_n \sqrt{n} + \beta_n\sqrt{n}\right) + A(\xi_p, \beta_n)
			\end{align}
			where we write
			\[
			A(\xi_p, \beta_n) := 
			{\sqrt{n}\over f_W(\xi_p)}\left|
			\wh F_U(\xi_p+\beta_n) - F_U(\xi_p+\beta_n) - \wh F_U(\xi_p-\beta_n) + F_U(\xi_p-\beta_n)   
			\right|.
			\]
			By writing 
			\[
			L_i = \b1\left\{
			U_i \le \xi_p + \beta_n
			\right\} - \b1\left\{
			U_i \le \xi_p  -  \beta_n
			\right\},\qquad \text{for each }i\in [n],
			\]
			we know that $\sum_{i=1}^n L_i \sim \text{Binomial}(n, p_n^*)$ with 
			\begin{align}\label{bd_pn_star}\nonumber
				p_n^* &= F_U(\xi_p + \beta_n) - F_U(\xi_p-\beta_n)\\\nonumber
				&= F_U(\xi_p + \beta_n) -F_W(\xi_p+ \beta_n) - F_U(\xi_p-\beta_n) + F_W(\xi_p-\beta_n)\\\nonumber
				&\qquad  + F_W(\xi_p + \beta_n) - F_W(\xi_p-\beta_n)\\\nonumber
				& = \cO(\alpha_n )+ \beta_n \left( f_W(\xi_p) + o(1)\right) &&\text{by \eqref{Kolm_bd_UW} and \eqref{bd_Wn1}}\\
				& = \cO(\alpha_n + \beta_n).
			\end{align}
			It then follows that $\EE[\rII] = 0$ and 
			\[
			\EE\left[[A(\xi_p, \beta_n)]^2\right]= {1 \over nf_W^2(\xi_p)} \EE\left[
			\left( \sum_{i=1}^n 
			L_i - n p_n^*
			\right)^2
			\right] = {p^*_n(1-p^*_n) \over f_W^2(\xi_p)} =  \cO(\alpha_n +\beta_n),
			\]
			so that Chebyshev's inequality yields 
			\begin{equation}\label{bd_A_term}
				A(\xi_p, \beta_n)= \cO_\PP(\sqrt{\alpha_n + \beta_n}).
			\end{equation}
			In view of \eqref{bd_Wn3}, \eqref{cdf_clt}, \eqref{bd_Wn1}, and \eqref{bd_A_term}, we thus have verified condition (a) in \cref{lem_diff_conv_prob}. 
			
			\paragraph{Verification of (b).}
			We verify the part (i) of condition (b) as the same argument can be used to prove part (ii). Fix arbitrary $y\in \RR$ and $\epsilon > 0$. By recalling \eqref{def_Vn_Wn}, we note that 
			\begin{align*}
				V_n \le y & \quad \iff \quad   Y_{(r)} \le \xi_p + y/\sqrt{n} \\
				&\quad \iff \quad  \wh F_Y(\xi_p + y/\sqrt n) \ge r/n\\
				&\quad \iff \quad  Z_n \le y_n
			\end{align*}
			where 
			\begin{align*}
				Z_n &= {\sqrt{n} \over f_W(\xi_p)}\left[
				F_W(\xi_p + y/\sqrt n) - \wh F_Y (\xi_p + y/\sqrt n) 
				\right],\\
				y_n &= {\sqrt{n} \over f_W(\xi_p)}\left[
				F_W(\xi_p + y/\sqrt n) -{r\over n}
				\right].
			\end{align*}
			Further note that 
			\begin{align}\label{diff_y_yn}\nonumber
				y_n - y &= {\sqrt{n} \over f_W(\xi_p)}\left[
				F_W(\xi_p )  + {y\over \sqrt n} f_W(\xi_p) + \cO\left(
				{y^2\over n} f_W'(\bar \xi)
				\right)-{r\over n}
				\right] - y\\\nonumber
				&= {\sqrt{n} \over f_W(\xi_p)}\left[
				p-{r\over n}  + {y\over \sqrt n} \left( f_W(\xi_p) + o(1)
				\right)
				\right]- y\\
				& = o(1)
			\end{align}
			where the last step uses $r/n - p = o(n^{-1/2})$. 
			We find 
			\[
			\PP\left(
			V_n \le y, W_n \ge y  + \epsilon 
			\right)   =  \PP\left(
			Z_n \le y_n, W_n \ge y  + \epsilon 
			\right)   
			\]
			so that part (i) of condition (b) follows 
			\begin{equation}\label{target_ZW_diff}
				Z_n - W_n = {\sqrt{n} \over f_W(\xi_p)}\left[
				F_W(\xi_p + y/\sqrt n) - \wh F_Y (\xi_p + y/\sqrt n) 
				-  F_W(\xi_p) + \wh F_Y(\xi_p) \right]  = o_\PP(1).
			\end{equation} 
			By similar arguments, \eqref{target_ZW_diff} also ensures part (ii). It thus remains to show \eqref{target_ZW_diff}.
			
			Following the preceding arguments for bounding $W_n$,  we need to show 
			\begin{align*}
				\rI &=  {\sqrt{n} \over f_W(\xi_p)}\left[
				\wh F_U(\xi_p + y/\sqrt n) - \wh F_Y (\xi_p + y/\sqrt n) 
				-  \wh F_U(\xi_p) + \wh F_Y(\xi_p) \right] = o_\PP(1)\\
				\rII &=  {\sqrt{n} \over f_W(\xi_p)}\left[
				F_U(\xi_p + y/\sqrt n) - \wh F_U (\xi_p + y/\sqrt n) 
				-  F_U(\xi_p) + \wh F_U(\xi_p) \right]= o_\PP(1)\\
				\rIII &=  {\sqrt{n} \over f_W(\xi_p)}\left[
				F_W(\xi_p + y/\sqrt n) - F_U (\xi_p + y/\sqrt n) 
				-  F_W(\xi_p) + F_U(\xi_p) \right]= o_\PP(1).
			\end{align*}
			For $\rIII$, \eqref{Kolm_bd_UW} in part (b) ensures 
			\[
			\rIII = \cO\left(
			\alpha_n \sqrt n
			\right) = o(1)
			\]
			while for $\rII$, repeating the arguments for bounding $A(\xi_p,\beta_n)$ above gives
			\[
			\rII = \cO_\PP(n^{-1/4}).
			\]
			Finally,  by the decomposition of $\wh F_U(\cdot) - \wh F_Y(\cdot)$ in \eqref{diff_emp_F_U}, the term $	(f_W(\xi_p)~ \rI / \sqrt n)  $ equals 
			\begin{align*}
				& {1\over n}\sum_{i=1}^n \Bigl[
				\b1\{\xi_p + y/\sqrt n - R_i\le U_i \le \xi_p + y/\sqrt n\} - \b1\{ \xi_p + y/\sqrt n\le U_i  \le \xi_p + y/\sqrt n-R_i\} \Bigr]\\
				&\qquad   - 	{1\over n}\sum_{i=1}^n \Bigl[\b1\{ \xi_p  - R_i\le U_i \le \xi_p \}- \b1\{ \xi_p\le U_i  \le \xi_p -R_i\} 
				\Bigr].
			\end{align*}
			By analogous bounding of $W_{n,1}$ and using \eqref{eq_tail_Ri} in part (c), we obtain 
			\begin{align*}
				\rI &\le 	{2\gamma_n \sqrt{n} \over f_W(\xi_p)} + {\sqrt{n}\over f_W(\xi_p)} {1\over n}\sum_{i=1}^n  
				\b1\left\{\xi_p + y/\sqrt n - \beta_n \le U_i \le \xi_p + y/\sqrt n + \beta_n \right\}  \\
				&\qquad + {\sqrt{n}\over f_W(\xi_p)} {1\over n}\sum_{i=1}^n  
				\b1\left\{\xi_p - \beta_n \le U_i \le \xi_p   + \beta_n \right\}\\
				& = {2\gamma_n \sqrt{n} \over f_W(\xi_p)}  + 
				{2\sqrt{n}\over f_W(\xi_p)} \max_{x \in \{\xi_p, ~ \xi_p + y/\sqrt n\}} \left(
				\wh F_U(x+ \beta_n) - \wh F_U(x - \beta_n) 
				\right).
			\end{align*}
			For any $x \in \{\xi_p, ~ \xi_p + y/\sqrt n\}$, since
			\begin{align*}
				\wh F_U(x+\beta_n) - \wh F_U(x -\beta_n)  
				& \le \left| \wh F_U(x+\beta_n)  - F_U(x+\beta_n)- \wh F_U(x -\beta_n)+F_U(x - \beta_n)  \right|\\
				&\quad  +  F_U(x+\beta_n) - F_U(x-\beta_n),
			\end{align*}
			repeating the argument for bounding $p_n^*$ in \eqref{bd_pn_star} yields 
			\[
			\wh F_U(x+\beta_n) - \wh F_U(x -\beta_n)  = \cO_\PP\left(
			\alpha_n + \beta_n f_W(x)
			\right) = \cO_\PP(\alpha_n + \beta_n),
			\] 
			so that  the analogous arguments for bounding $\rII$ gives 
			\begin{align*}
				{\sqrt{n}\over f_W(\xi_p)}\left| \wh F_U(x+\beta_n)  - F_U(x+\beta_n)- \wh F_U(x -\beta_n)+F_U(x - \beta_n)  \right|= \cO_\PP\left(
				\sqrt{\alpha_n + \beta_n}
				\right).
			\end{align*} 
			We thus conclude that 
			\[
			\rI = \cO_\PP\left(
			\gamma_n \sqrt{n} + \sqrt{n}(\alpha_n + \beta_n) + \sqrt{\alpha_n + \beta_n}
			\right) = o_\PP(1).
			\]
			Combining the bounds of $\rI$, $\rII$, and $\rIII$ proves \eqref{target_ZW_diff}, thereby completing the proof.
		\end{proof}

        \medskip

		An immediate corollary of \cref{lem_order_clt} is the following multivariate central limit theorem for a fixed number of  order statistics, which generalizes the classical result, as described in the preceding (see, for instance, Theorem 10.3 of \cite{David}).
		
		\begin{theorem}\label{thm_order_clt}
			Grant conditions (a) -- (c) in \cref{lem_order_clt}.
			For any finite integer $s\ge 1$, let $0<p_1 < \cdots < p_s<1$ be fixed percentiles   with corresponding order $r_i\in [n]$ satisfying $(r_i/n - p_i) = o(n^{-1/2})$ for all $i\in [s]$. Assume $F_W$ is differentiable at $\xi_{p_i} := F_W^{-1}(p_i)$ for all $i\in [s]$ with $0<f_W(\xi_{p_i})<\i$ and its second-order derivative is bounded for all $\xi_{p_i} -c \le x \le \xi_{p_i} + c$ with some small constant $c>0$. Then we have 
			\[
			\sqrt{n}\begin{pmatrix} 
				Y_{(r_1)} - \xi_{p_1} \\
				\vdots \\
				Y_{(r_s)} - \xi_{p_s}
			\end{pmatrix} \distrto \cN_s(0_s, \Sigma)
			\]
			where $Y_{(r_i)}$ is the $r_i$-th order statistic and 
			\[
			\Sigma_{ij} = {p_i(1-p_j)  \over f_W(\xi_{p_i})f_W(\xi_{p_j})},\qquad\text{for all } i\le j.
			\]
		\end{theorem}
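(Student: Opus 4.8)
\textbf{Proof proposal for \cref{thm_order_clt}.}

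The plan is to derive the multivariate central limit theorem from the one-dimensional asymptotic linearization established in \cref{lem_order_clt}, via the Cramér--Wold device. First, I would apply \cref{lem_order_clt} coordinatewise: for each $i\in[s]$, since $F_W$ is differentiable at $\xi_{p_i}=F_W^{-1}(p_i)$ with $0<f_W(\xi_{p_i})<\infty$, its second derivative is bounded in a neighborhood of $\xi_{p_i}$, and $r_i/n-p_i=o(n^{-1/2})$, the lemma gives the Bahadur-type representation
\[
\sqrt{n}\bigl(Y_{(r_i)}-\xi_{p_i}\bigr) = \sqrt{n}\,\frac{p_i-\wh F_Y(\xi_{p_i})}{f_W(\xi_{p_i})} + o_\PP(1),\qquad \forall\,i\in[s].
\]
Stacking these, the random vector $\sqrt{n}\,(Y_{(r_1)}-\xi_{p_1},\ldots,Y_{(r_s)}-\xi_{p_s})^\T$ equals $\sqrt{n}\,M_n + o_\PP(1_s)$, where $M_n$ is the vector with $i$th coordinate $(p_i-\wh F_Y(\xi_{p_i}))/f_W(\xi_{p_i})$. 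By Slutsky's theorem it suffices to establish the joint asymptotic normality of $\sqrt{n}\,M_n$.

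Next I would reduce $\wh F_Y(\xi_{p_i})$ to $\wh F_U(\xi_{p_i})$ and then to the population c.d.f. Using conditions (b) and (c) of \cref{lem_order_clt}, the arguments already carried out in the proof of that lemma (bounding $\wh F_U(\cdot)-\wh F_Y(\cdot)$ via \eqref{diff_emp_F_U} and the tail bound \eqref{eq_tail_Ri}, plus the Kolmogorov bound \eqref{Kolm_bd_UW} relating $F_U$ to $F_W$) show that for each fixed $i$,
\[
\sqrt{n}\bigl(\wh F_Y(\xi_{p_i}) - F_W(\xi_{p_i})\bigr) = \sqrt{n}\bigl(\wh F_U(\xi_{p_i}) - F_U(\xi_{p_i})\bigr) + o_\PP(1) = \frac{1}{\sqrt{n}}\sum_{j=1}^n \Bigl(\b1\{U_j\le \xi_{p_i}\} - p_i\Bigr) + o_\PP(1),
\]
where I also use $F_W(\xi_{p_i})=p_i$ and $|F_U(\xi_{p_i})-p_i|=\cO(\alpha_n)=o(n^{-1/2})$. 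Hence $\sqrt{n}\,M_n$ is, up to $o_\PP(1_s)$, a normalized sum of i.i.d. mean-zero random vectors $\bigl((p_i-\b1\{U_j\le\xi_{p_i}\})/f_W(\xi_{p_i})\bigr)_{i\in[s]}$, $j\in[n]$. Applying the classical multivariate CLT to this i.i.d. sum, I compute the covariance: the $(i,\ell)$ entry is $\Cov(\b1\{U\le\xi_{p_i}\},\b1\{U\le\xi_{p_\ell}\})/(f_W(\xi_{p_i})f_W(\xi_{p_\ell}))$, and since $\Cov(\b1\{U\le\xi_{p_i}\},\b1\{U\le\xi_{p_\ell}\})\to \min(p_i,p_\ell)-p_ip_\ell = p_{i\wedge\ell}(1-p_{i\vee\ell})$ (the limiting value because $F_U\to F_W$ and $\PP(U\le \xi_{p_i}\wedge\xi_{p_\ell})\to p_{\min(i,\ell)}$ for $p_i$ distinct), this yields exactly $\Sigma_{i\ell}=p_i(1-p_\ell)/(f_W(\xi_{p_i})f_W(\xi_{p_\ell}))$ for $i\le\ell$.

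The only subtlety — and the place I would be most careful — is that the $U_j$ are \emph{not} exactly distributed according to $F_W$; their c.d.f. $F_U$ only satisfies $\sup_t|F_U(t)-F_W(t)|=\cO(\alpha_n)$ with $\alpha_n\sqrt n = o(1)$. I must therefore check that replacing $F_U$-based quantities by their $F_W$-limits in both the centering and the covariance introduces only $o_\PP(1)$ and $o(1)$ errors respectively; this is precisely what the $\alpha_n\sqrt n=o(1)$ hypothesis delivers. There is no serious analytic obstacle beyond bookkeeping: once the Bahadur representation of \cref{lem_order_clt} is in hand, the result is a routine Cramér--Wold / Slutsky argument, with the $R_j$ perturbations and the $F_U$-vs-$F_W$ discrepancy both absorbed using the estimates already developed in the proof of \cref{lem_order_clt}. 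I would close by invoking the Cramér--Wold device: for arbitrary $a\in\RR^s$, $a^\T\sqrt n M_n$ is an i.i.d. sum converging to $\cN(0,a^\T\Sigma a)$, which gives the claimed joint convergence.
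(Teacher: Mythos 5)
Your proposal is correct and follows essentially the route the paper intends: the paper presents \cref{thm_order_clt} as an immediate corollary of \cref{lem_order_clt}, obtained exactly as you do by stacking the coordinatewise Bahadur representations, absorbing the $\wh F_Y$-vs-$\wh F_U$ and $F_U$-vs-$F_W$ discrepancies via the $(\alpha_n+\beta_n+\gamma_n)\sqrt n=o(1)$ hypothesis, and applying the joint CLT for the empirical c.d.f. at the quantile points. The one point worth flagging is that the law of $U$ may vary with $n$, so the final step is formally a triangular-array (Lindeberg--Feller) multivariate CLT rather than the classical i.i.d. CLT; since the summands are bounded indicators and the covariances converge to $\Sigma$, this is immediate and does not affect the argument.
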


\subsubsection{Other Technical Lemmas used in the Proof of \cref{thm_IQR}}\label{app_proof_tech_Yuri_lemmas}

The following theorem is a variant of the Yurinskii Coupling with respect to the sup-norm. It is proven in \cite{Belloni}.

\begin{theorem}[Yurinksii Coupling in Sup-Norm]\label{lem_Yurinskii}
		
  Let $\xi_1, \ldots, \xi_d \in \RR^n$ be independent zero-mean random vectors, and suppose 
		\[ 
		\beta := \sum_{j = 1}^d \EE\|\xi_j \|^2_2 \| \xi_j \|_{\infty} \ + \  \sum_{j = 1}^d \EE\| g_j \|^2_2 \| g_j \|_{\infty} \]
		is finite, where $g_j$ are drawn independently from $\cN_n(0_n, \Cov(\xi_j))$.  Let $V_n = \sum_{j = 1}^d \xi_j$. Then for all $\delta > 0$, there exists a random vector $S_n \sim \cN_n(0_n, \Cov(V_n))$ such that 
		\begin{equation}\label{Yurinskii Sup Bound}
			\begin{split}
				\PP \left(\| V_n - S_n \|_{\infty} > 3 \delta   \right) \leq \min_{t \geq 0} \left\{2 \PP \left(\|  Z  \|_{\infty} > t   \right) + \beta t^2 \delta^{-3} \right\}        
			\end{split}
		\end{equation}
		where $ Z  \sim \cN_n(0_n, \bI_n)$.
	\end{theorem}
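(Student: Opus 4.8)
\emph{Proof proposal.} The final statement is the sup-norm counterpart of Yurinskii's classical coupling (quoted here from \cite{Belloni}), so the cleanest course in the paper is simply to cite it; if one wanted to reconstruct the proof, the natural route is the standard three-step recipe — Strassen's theorem, Gaussian mollification, and a Lindeberg/telescoping swap — with the single new ingredient being that the third-order remainder is bookkept in the mixed $\ell_2\times\ell_2\times\ell_\infty$ sense rather than the pure $\ell_2$ sense. First I would use the Strassen--Dudley theorem to reduce the coupling claim to a one-sided distributional comparison: to obtain a coupling of $V_n=\sum_{j=1}^d\xi_j$ with some $S_n\sim\cN_n(0_n,\Cov(V_n))$ satisfying $\PP(\|V_n-S_n\|_\infty>3\delta)\le\eta$, it suffices to establish $\PP(V_n\in A)\le\PP(S_n\in A^{3\delta})+\eta$ for every Borel $A\subseteq\RR^n$ (together with the symmetric bound), where $A^{3\delta}$ is the sup-norm $3\delta$-enlargement of $A$. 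So the real work is to bound $\PP(V_n\in A)-\PP(S_n\in A^{3\delta})$ uniformly over $A$.

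For a free parameter $t\ge0$, I would then replace $\b1_A$ by a Gaussian mollification $h$ at a bandwidth tied to $\delta$, arranging that $\b1_A\le h$, that $h$ is supported in $A^{3\delta}$ up to a spill-over event whose probability is $\PP(\|Z\|_\infty>t)$ with $Z\sim\cN_n(0_n,\bI_n)$, and that the third differential of $h$ obeys a bound of order $\delta^{-3}t^{2}$ when paired with increment triples in the $\|\cdot\|_2\cdot\|\cdot\|_2\cdot\|\cdot\|_\infty$ scaling. With such an $h$ in hand, a telescoping exchange of the $\xi_j$ for independent Gaussians $g_j\sim\cN_n(0_n,\Cov(\xi_j))$, one block at a time, plus a third-order Taylor expansion — the zeroth/first/second order terms cancelling because $\EE\xi_j=\EE g_j=0$ and $\Cov(\xi_j)=\Cov(g_j)$ — leaves a remainder bounded by a constant times $\delta^{-3}t^{2}\bigl(\sum_j\EE\|\xi_j\|_2^2\|\xi_j\|_\infty+\sum_j\EE\|g_j\|_2^2\|g_j\|_\infty\bigr)=C\beta t^{2}\delta^{-3}$. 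Collecting the two mollification errors (one for $V_n$, one for $S_n$, giving the $2\PP(\|Z\|_\infty>t)$ term) with this Lindeberg remainder, and minimizing over $t\ge0$, yields \eqref{Yurinskii Sup Bound}.

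The main obstacle, and the only genuinely non-routine part, is the sup-norm upgrade in the Lindeberg step. In the Euclidean version the remainder carries $\sum_j(\EE\|\xi_j\|_2^3+\EE\|g_j\|_2^3)$; obtaining the weaker $\sum_j\EE\|\xi_j\|_2^2\|\xi_j\|_\infty$ (and its Gaussian analogue) requires pairing the third differential of the mollified test function against the two swapped increments in $\ell_2$ and against the remaining direction in $\ell_\infty$, which is made rigorous through Gaussian integration-by-parts (Stein/Hermite) identities for the smoothing kernel together with the truncation of $Z$ at level $t$ — and it is in exactly this computation that the powers $t^{2}$ and $\delta^{-3}$, and the constant $3$ in the enlargement, must be tracked carefully. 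Everything else is the familiar coupling machinery, which is why in the paper we simply invoke \cite{Belloni}.
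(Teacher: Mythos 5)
Your proposal matches the paper exactly: the paper gives no proof of this theorem beyond the citation to \cite{Belloni}, which is precisely the course you recommend taking. Your supplementary sketch of the Strassen--Dudley reduction, Gaussian mollification, and Lindeberg telescoping swap with the third-order remainder tracked in the mixed $\ell_2\times\ell_2\times\ell_\infty$ sense is a faithful account of how the cited result is actually established, but none of that argument appears in the paper itself.
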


\begin{lemma}\label{lem_beta}
		Let $\xi_{\cdot j}\in \RR^n$, for $1\le j\le d$, be defined in \eqref{def_xi}. Let $g_{\cdot j}$, for $1\le j\le d$, be independent realizations from $\cN_n(0, \Cov(\xi_{\cdot j}))$. Then under $\cH_0$ and the conditions of \cref{thm_range_limit},
		\[
		\beta := \sum_{j = 1}^d \EE\left[\|\xi_{\cdot j} \|^2_2 \| \xi_{\cdot j} \|_{\infty}\right] +   \sum_{j = 1}^d \EE\left[\| g_{\cdot j}\|^2_2 \|g_{\cdot j} \|_{\infty}\right] = \cO\left(
		{n\log n\over \sqrt{\rho_3(\Sigma)}}
		\right)
		\]
	\end{lemma}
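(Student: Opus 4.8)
\textbf{Proof proposal for \cref{lem_beta}.}

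The plan is to bound the two sums in $\beta$ separately, treating the $\xi_{\cdot j}$-sum and the Gaussian $g_{\cdot j}$-sum by essentially the same moment computations, since $g_{\cdot j}$ has the same covariance structure as $\xi_{\cdot j}$ (which, by \cref{lem_xi}, is $\tfrac{\lambda_j^2}{\tr(\Sigma^2)}\bI_n$). First I would record the key scaling: each coordinate $\xi_{ij}$ equals $\tfrac{\lambda_j}{\sqrt{2\tr(\Sigma^2)}}$ times a centered quadratic form in the $Z$'s (see \eqref{def_xi}), so that under $\cH_0$ the variable $\xi_{ij}$ is sub-exponential with parameter of order $\lambda_j/\sqrt{\tr(\Sigma^2)}$. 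Consequently $\EE[\xi_{ij}^2]=\lambda_j^2/\tr(\Sigma^2)$ (already in \eqref{eq_moments_xi}), $\EE\|\xi_{\cdot j}\|_2^2 = n\lambda_j^2/\tr(\Sigma^2)$, and, via a union bound over $i\in[n]$ together with the sub-exponential tail, $\EE\|\xi_{\cdot j}\|_\infty \lesssim (\lambda_j/\sqrt{\tr(\Sigma^2)})\log n$. The same estimates hold verbatim for the Gaussian surrogates $g_{\cdot j}$, whose coordinates are $\cN(0,\lambda_j^2/\tr(\Sigma^2))$, so $\EE\|g_{\cdot j}\|_2^2 = n\lambda_j^2/\tr(\Sigma^2)$ and $\EE\|g_{\cdot j}\|_\infty \lesssim (\lambda_j/\sqrt{\tr(\Sigma^2)})\sqrt{\log n}$.

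Next I would combine these to control a single summand. The main subtlety is that $\|\xi_{\cdot j}\|_2^2$ and $\|\xi_{\cdot j}\|_\infty$ are dependent, so I would either apply Cauchy--Schwarz, $\EE[\|\xi_{\cdot j}\|_2^2\|\xi_{\cdot j}\|_\infty] \le (\EE\|\xi_{\cdot j}\|_2^4)^{1/2}(\EE\|\xi_{\cdot j}\|_\infty^2)^{1/2}$, or directly bound $\|\xi_{\cdot j}\|_2^2 \le n\|\xi_{\cdot j}\|_\infty^2$ and estimate $\EE\|\xi_{\cdot j}\|_\infty^3$ by integrating the sub-exponential tail of the maximum of $n$ variables, giving $\EE\|\xi_{\cdot j}\|_\infty^3 \lesssim (\lambda_j/\sqrt{\tr(\Sigma^2)})^3\log^3 n$. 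Either route yields, after absorbing the logarithmic factors, a per-coordinate bound of order $n\log n\,\lambda_j^3/\tr^{3/2}(\Sigma^2)$ for the $\xi$-term (the Gaussian term is smaller by a power of $\log n$ and is dominated). Summing over $j\in[d]$ produces
\[
\beta \;\lesssim\; \frac{n\log n}{\tr^{3/2}(\Sigma^2)}\sum_{j=1}^d \lambda_j^3
\;=\; \frac{n\log n\,\tr(\Sigma^3)}{\tr^{3/2}(\Sigma^2)}
\;=\; \frac{n\log n}{\sqrt{\rho_3(\Sigma)}},
\]
using the definition $\rho_3(\Sigma) = \tr^3(\Sigma^2)/\tr^2(\Sigma^3)$ in the last equality, which is exactly the claimed bound.

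The main obstacle I anticipate is not the structure of the argument but obtaining the moment bounds at the right order with the correct logarithmic exponents: specifically, controlling $\EE\|\xi_{\cdot j}\|_\infty$ and $\EE\|\xi_{\cdot j}\|_\infty^3$ sharply requires a careful sub-exponential maximal inequality (the $\xi_{ij}$ for fixed $j$ are i.i.d.\ across $i$ under $\cH_0$, which helps) and then handling the dependence between $\|\cdot\|_2$ and $\|\cdot\|_\infty$ without losing extra factors of $n$. One should also verify that the $\log n$ versus $\sqrt{\log n}$ discrepancy between the $\xi$- and $g$-contributions does not affect the final order, which it does not since both are absorbed into the single $\log n$ factor. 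The condition \eqref{cond_Sigma} from \cref{thm_range_limit} is only needed to ensure $\rho_3(\Sigma)$ is large enough that this bound is useful downstream (via \cref{lem_ranks}), but it plays no role in the computation of $\beta$ itself.
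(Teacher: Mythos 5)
Your proposal is correct and follows essentially the same route as the paper: Cauchy--Schwarz splits each summand into $(\EE\|\xi_{\cdot j}\|_2^4)^{1/2}(\EE\|\xi_{\cdot j}\|_\infty^2)^{1/2}$, the fourth moment of the $\ell_2$ norm is $\cO(n^2\lambda_j^4/\tr^2(\Sigma^2))$ by Minkowski/independence across $i$, the squared sup-norm is $\cO(\lambda_j^2\log^2 n/\tr(\Sigma^2))$ from the maximum of $n$ centered Gaussian quadratics, and summing over $j$ with $\sum_j\lambda_j^3=\tr(\Sigma^3)$ gives $n\log n/\sqrt{\rho_3(\Sigma)}$, with the Gaussian surrogate term smaller by a $\sqrt{\log n}$ factor exactly as you note. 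One minor caveat: your alternative route via $\|\xi_{\cdot j}\|_2^2\le n\|\xi_{\cdot j}\|_\infty^2$ and $\EE\|\xi_{\cdot j}\|_\infty^3$ would actually cost $\log^3 n$ rather than $\log n$, so only the Cauchy--Schwarz route delivers the stated order, but since that is your primary argument the proof stands.
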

	\begin{proof}
	We first bound $\EE\| \xi_{\cdot j}\|^2_2 \| \xi_{\cdot j} \|_{\infty} \leq \sqrt{\EE\| \xi_{\cdot j} \|_2^4} \sqrt{\EE\| \xi_{\cdot j} \|^2_{\infty}}$ from above. Note that
	\begin{equation}\label{Xi 4th Moment}
		\begin{split}
			\EE\| \xi_{\cdot j} \|_2^4 &= \frac{\EE\Bigl[\sum_{i = 1}^n \lambda^2_j \bigl( (Z_{ij} - \oZ_j)^2 - \frac{n - 1}{n} \bigr)^2   \Bigr]^2}{4 \tr^2(\Sigma^2) (\frac{n - 1}{n})^4}\\ & \leq \frac{ \lambda^4_j }{4 \tr^2(\Sigma^2) (\frac{n - 1}{n})^4}\left[ \sum_{i = 1}^n  \sqrt{\EE\left[(Z_{ij} - \oZ_j)^2 - \frac{n - 1}{n} \right]^4}   \right]^2 \\ & =  \frac{ \lambda^4_j n^2 }{4 \tr^2(\Sigma^2) (\frac{n - 1}{n})^4} \EE\left[(Z_{11} - \oZ_1)^2 - \frac{n - 1}{n} \right]^4  \\ & = \cO\left( \frac{ \lambda^4_j n^2}{ \tr^2(\Sigma^2)} \right),
		\end{split}
	\end{equation}
	where the second step uses  Minkowski's inequality and the last step uses \eqref{distr_W}. Furthermore, we find that
	\begin{equation}\label{Expected Sup-Norm}
		\begin{split}
			\EE\| \xi_{\cdot j} \|^2_{\infty} & = \frac{\lambda^2_j}{2 \tr(\Sigma^2) (\frac{n - 1}{n})^2} \EE\left[\max_{i \in [n]}  \left| (Z_{ij} - \oZ_j)^2 - \frac{n - 1}{n}\right| \right]^2 \\ & \leq \frac{ \lambda^2_j}{  \tr(\Sigma^2) (\frac{n - 1}{n})^2} \left(\EE\left[\max_{i \in [n]}  (Z_{ij} - \oZ_j)^4\right] + \left(\frac{n - 1}{n}\right)^2 \right)  
			\\ & \lesssim \frac{ \lambda^2_j}{  \tr(\Sigma^2) (\frac{n - 1}{n})^2} \left( \EE\left[\max_{i \in[n]} Z^4_{ij}\right]+ \EE\left[\oZ^4_j\right]\right) +   \frac{ \lambda^2_j}{  \tr(\Sigma^2)  }.
			\\ & = \cO\left(\frac{\lambda^2_j}{ \tr(\Sigma^2)} \log^2 n  \right).
		\end{split}
	\end{equation}
	Here, the last steps uses  
	$$
	\EE\left[\oZ^4_j \right]= \frac{1}{n^4}\EE\left(\sum_{i = 1}^n Z_{ij}\right)^4 = \frac{3}{n^2},$$
	a consequence of the fact that $\sum_{i = 1}^n Z_{ij} \sim \cN(0,n)$, as well as 
	$\EE[\max_{i \in[n]} Z^4_{ij}] = \cO (\log^2 n)$ for any $j\in [d]$
	from \cref{lemma_Z_moment}. Combining \eqref{Xi 4th Moment} and \eqref{Expected Sup-Norm} together with \cref{def_rhos} yields  
	\begin{equation}\label{bd_xi}
		\begin{split}
			\sum_{j = 1}^d \EE\| \xi_{\cdot j} \|^2_2 \| \xi_{\cdot j} \|_{\infty} &  = \cO\left(\frac{\tr(\Sigma^3)}{\sqrt{\tr^3(\Sigma^2)}} n \log n \right)    = \cO\Bigl(\frac{n \log n}{\sqrt{\rho_3(\Sigma)}} \Bigr).
		\end{split}
	\end{equation} 
We proceed to  bound $\EE\| g_{\cdot j} \|^2_2 \| g_{\cdot j} \|_{\infty} \leq \sqrt{\EE\| g_{\cdot j} \|_2^4} \sqrt{\EE\| g_{\cdot j} \|^2_{\infty}}$, where 
	\[
		g_{\cdot j} \sim \cN_n\left(
		0_n, {\lambda_j^2 \over \tr(\Sigma^2)}\bI_n 
		\right),
	\]
	due to \cref{lem_xi}. First, 
	\begin{equation*}
		\begin{split}
			\EE\| g_{\cdot j} \|_2^4 = \Var\left(\| g_{\cdot j} \|^2_2\right) + \Bigl( \EE\| g_{\cdot j} \|^2_2 \Bigr)^2 =\frac{2n \lambda^4_j}{\tr^2(\Sigma^2)} + \frac{n^2 \lambda^4_j}{\tr^2(\Sigma^2)}    = \cO\Bigl(\frac{n^2 \lambda^4_j}{\tr^2(\Sigma^2)} \Bigr).
		\end{split}
	\end{equation*}
	Secondly, for $Z\sim \cN_n(0, \bI_n)$, we have 
	\begin{equation*}
		\begin{split}
			\EE\| g_{\cdot j} \|^2_{\infty} =  {\lambda_j^2 \over 2\tr(\Sigma^2)} \EE\| Z \|^2_{\infty}    =   \frac{\lambda^2_j}{2 \tr(\Sigma^2)} \EE\max_{i \in [n]} Z^2_i  = \cO\Bigl(\frac{\lambda^2_j}{\tr(\Sigma^2)} \log n \Bigr),
		\end{split}
	\end{equation*}
	where the last step uses the classical result on the maximum of $n$ i.i.d.  $\chi^2_1$ random variables \cite{Boucheron}.
	These two facts imply 
	\begin{equation}\label{bd_g}
		\begin{split}
			\sum_{j = 1}^d \EE\| g_{\cdot j}\|^2_2 \| g_{\cdot j} \|_{\infty}  &  = \cO\Bigl(\frac{ \tr(\Sigma^3)}{\sqrt{\tr^3(\Sigma^2)}}n \sqrt{\log n} \Bigr)    =\cO\Bigl(\frac{n \sqrt{\log n}}{\sqrt{\rho_3(\Sigma)}} \Bigr).
		\end{split}
	\end{equation}
	Thus, combining \eqref{bd_xi} and \eqref{bd_g} completes the proof. 
	\end{proof}	
        
	\bigskip

    \begin{lemma}\label{lemma_Z_moment}
		Let $W_1,\ldots, W_n$ be i.i.d. from $\cN(0,1)$. Then 
		\[
			\EE\left[
			\max_{i \in[n]} W_i^4
			\right] = \cO(\log^2 n).
		\]
	\end{lemma}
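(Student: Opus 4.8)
The plan is to bound $\EE[\max_{i\in[n]} W_i^4]$ by splitting the expectation at a threshold of order $\sqrt{\log n}$ and controlling the Gaussian tail beyond it. Concretely, I would write
\[
    \EE\Bigl[\max_{i\in[n]} W_i^4\Bigr] = \int_0^\infty \PP\Bigl(\max_{i\in[n]} W_i^4 > s\Bigr)\,{\rm d}s,
\]
fix a cutoff $s_0 = C^2 \log^2 n$ for a suitable absolute constant $C>0$, and bound the integral over $[0,s_0]$ trivially by $s_0 = C^2\log^2 n$. For the tail $s > s_0$, a union bound gives $\PP(\max_i W_i^4 > s) \le n\,\PP(W_1^4 > s) = n\,\PP(|W_1| > s^{1/4}) \le 2n\exp(-s^{1/2}/2)$ using the standard Gaussian tail bound $\PP(|W_1|>t)\le 2e^{-t^2/2}$. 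Then
\[
    \int_{s_0}^\infty 2n\exp(-s^{1/2}/2)\,{\rm d}s
\]
is a convergent integral which, after the substitution $u = s^{1/2}$, becomes $4n\int_{\sqrt{s_0}}^\infty u\,e^{-u/2}\,{\rm d}u$; choosing $C$ large enough that $\sqrt{s_0} = C\log n \ge 2\log(2n) + 2$ (say) makes this term $o(1)$, in particular $\cO(1)$. Summing the two contributions yields $\EE[\max_i W_i^4] = \cO(\log^2 n)$.

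An alternative, slightly cleaner route is to invoke the classical fact that $\EE[\max_{i\in[n]} W_i^2] = \cO(\log n)$ (the maximum of $n$ i.i.d.\ $\chi^2_1$ variables, as already cited via \cite{Boucheron} in the proof of \cref{lem_beta}) together with concentration: write $M_n := \max_{i\in[n]} W_i^2$, so that $\EE[M_n] \lesssim \log n$, and use that $M_n - \EE[M_n]$ has sub-exponential upper tails $\PP(M_n - \EE[M_n] > t) \le e^{-ct}$ for $t$ up to order $\log n$ and Gaussian-type behavior thereafter, whence $\EE[(M_n - \EE M_n)^2] = \cO(1)$ — actually $\cO(\log n)$ suffices. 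Then
\[
    \EE\bigl[M_n^2\bigr] = \Var(M_n) + (\EE M_n)^2 = \cO(\log n) + \cO(\log^2 n) = \cO(\log^2 n),
\]
and $\EE[\max_i W_i^4] = \EE[M_n^2]$ gives the claim. I would likely present the first (direct tail-integration) argument since it is fully self-contained and requires no appeal to concentration inequalities for the maximum.

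The only mild subtlety — there is no real obstacle here — is making the constant bookkeeping in the tail integral explicit enough that the reader sees the $\cO(\log^2 n)$ comes entirely from the $[0, s_0]$ piece while the tail is $\cO(1)$; this just amounts to choosing $C$ in terms of the Gaussian tail constant and checking the elementary integral $\int_a^\infty u e^{-u/2}\,{\rm d}u = 2(a+2)e^{-a/2}$ is summable. Since the lemma is only invoked once, inside \cref{lem_beta}, to produce the bound $\EE[\max_{i\in[n]} Z_{ij}^4] = \cO(\log^2 n)$, no sharper constant is needed and the short proof above suffices.
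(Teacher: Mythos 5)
Your proposed proof is correct, but your primary argument takes a genuinely different route from the paper's. The paper writes $\EE[\max_i W_i^4] = \Var(\max_i W_i^2) + (\EE\max_i W_i^2)^2$, controls the second term by the classical $\EE\max_i W_i^2 = \cO(\log n)$, and handles the variance by comparing $\max_i W_i^2$ to $\max_i (W_i^2+V_i^2)$ for auxiliary independent $V_i\sim\cN(0,1)$ — the latter being the maximum of $n$ i.i.d.\ $\mathrm{Exp}(1/2)$ variables, whose variance is exactly $4\sum_{k=1}^n k^{-2}=\cO(1)$. Your direct tail-integration argument (split $\int_0^\infty \PP(\max_i W_i^4>s)\,{\rm d}s$ at $s_0\asymp\log^2 n$, union bound plus the Gaussian tail beyond $s_0$) reaches the same conclusion while being entirely self-contained: it needs no external facts about maxima of $\chi^2_1$ or exponential samples, and it sidesteps the paper's variance-comparison step $\Var(\max_i W_i^2)\le\Var(\max_i(W_i^2+V_i^2))$, which is the least transparent part of the paper's argument. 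The price is that you get no information about $\Var(\max_i W_i^2)$ itself, only the crude but sufficient bound $\EE[\max_i W_i^4]\le s_0+\cO(1)$. Your second sketch is essentially the paper's decomposition. One small bookkeeping point: the sufficient condition you state for the cutoff ($C\log n\ge 2\log(2n)+2$) by itself only forces the tail term to be $\cO(\log n)$; what makes it $o(1)$ is that $C$ is a fixed constant exceeding $2$, so that $8n(C\log n+2)n^{-C/2}\to 0$. Either way the tail contribution is dominated by the $s_0=\cO(\log^2 n)$ piece, so the lemma follows.
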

	\begin{proof}
		Start with 
		\[
			\EE\left[
			\max_{i \in[n]} W_i^4
			\right]   = \EE\left[
			\left( \max_{i \in[n]} W_i^2\right)^2
			\right]  = \Var\left(
			\max_{i \in[n]} W_i^2
			\right) + \left(
			\EE \max_{i \in[n]} W_i^2
			\right)^2.
		\]
		Notice that 
		\[
			\Var\left(
			\max_{i \in[n]} W_i^2
			\right)  \le \Var\left(
			\max_{i \in[n]} (W_i^2 + V_i^2)
			\right), 
		\]
		for some i.i.d. $V_i \sim \cN(0,1)$ with $i\in [n]$, that are also independent of $W_i$. 
		Since $W_i^2 \sim \chi^2_1$ and $W_i^2 + V_i^2 \sim \exp(1/2) \equiv 2 \exp(1)$ for $i \in [n]$, standard results on the maxima of independent samples generated from the unit exponential and $\chi^2_1$ distributions (see, for instance, \cite{Boucheron}) in conjunction with the preceding imply
		\begin{equation*}
			\begin{split}
			\EE\left[
			\max_{i \in[n]} W_i^4
			\right] =  \cO\left(\sum_{k = 1}^n \frac{1}{k^2} \right)   + \cO\left(\log^2 n \right)    = \cO\left(\log^2 n \right),
			\end{split}
		\end{equation*}
		thus yielding the desired result.  
	\end{proof}

    \bigskip 
 
	\begin{lemma}\label{lem_Lipschitz}
		Order statistics are 1-Lipschitz with respect to the sup-norm $\| \cdot \|_{\infty}$. That is, for any $x, y\in \mathbb{R}^n$,  
		\begin{equation}
			\begin{split}
				|x_{(k)} - y_{(k)}| \leq \|x - y \|_{\infty},\quad \text{for each } k = 1,\ldots,n.
			\end{split}
		\end{equation} 
	\end{lemma}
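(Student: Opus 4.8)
The final statement (Lemma~\ref{lem_Lipschitz}) asserts that the $k$-th order statistic map $x \mapsto x_{(k)}$ is $1$-Lipschitz with respect to the sup-norm on $\mathbb{R}^n$. The plan is to prove the two-sided inequality $|x_{(k)} - y_{(k)}| \le \|x-y\|_\infty$ directly, exploiting the monotonicity of the order-statistic functional.

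First I would record the key monotonicity fact: if $a, b \in \mathbb{R}^n$ satisfy $a_i \le b_i$ for every $i \in [n]$, then $a_{(k)} \le b_{(k)}$ for every $k$. This is immediate from the definition of order statistics — relabelling coordinates in increasing order preserves coordinatewise domination — or, if one prefers, from the variational characterization $a_{(k)} = \min_{|S| = n-k+1} \max_{i \in S} a_i$ over subsets $S \subseteq [n]$. Second, I would set $\varepsilon := \|x - y\|_\infty$, so that by definition $y_i - \varepsilon \le x_i \le y_i + \varepsilon$ for all $i \in [n]$. Writing $\mathbf{1}$ for the all-ones vector, this says $y - \varepsilon \mathbf{1} \le x \le y + \varepsilon \mathbf{1}$ coordinatewise.

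The third step applies the monotonicity fact to these inequalities and uses the trivial identity $(y + c\mathbf{1})_{(k)} = y_{(k)} + c$ for any constant $c$ (adding a constant to every coordinate shifts the whole sorted vector by that constant). This gives
\[
    y_{(k)} - \varepsilon = (y - \varepsilon \mathbf{1})_{(k)} \le x_{(k)} \le (y + \varepsilon \mathbf{1})_{(k)} = y_{(k)} + \varepsilon,
\]
i.e.\ $|x_{(k)} - y_{(k)}| \le \varepsilon = \|x - y\|_\infty$, which is exactly the claim. Since $k$ was arbitrary, this holds for all $k \in [n]$.

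There is essentially no hard step here — the result is elementary and the only thing to be careful about is stating the monotonicity of order statistics cleanly (and, if the variational formula route is taken, getting the min/max indices right). If one wanted to be fully self-contained one could instead argue by the sorted-permutation definition: let $\sigma, \tau$ be permutations sorting $x$ and $y$ respectively into nondecreasing order; then $x_{(k)} = x_{\sigma(k)}$ and, using that $x_{\sigma(k)} \le x_{\tau(j)}$ for at least $n-k+1$ indices $j$ while $y_{\tau(k)}$ exceeds at most $k-1$ of the $y_{\tau(j)}$, a pigeonhole comparison pins down $x_{(k)} \le y_{(k)} + \varepsilon$; the reverse inequality follows by symmetry. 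Either way the proof is a few lines, so I would present the monotonicity-plus-shift argument as the cleanest option.
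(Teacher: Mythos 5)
Your proof is correct, and it takes a genuinely different (and cleaner) route than the paper. The paper proves the lemma by an explicit case analysis: it first handles the minimum and maximum, then the remaining lower and upper order statistics separately, in each case using a pigeonhole argument to locate a coordinate where $x$ and $y$ are forced to be far apart whenever $|x_{(k)}-y_{(k)}|$ is large. Your argument instead combines two elementary facts --- monotonicity of $a\mapsto a_{(k)}$ under coordinatewise domination, and translation equivariance $(y+c\mathbf{1})_{(k)}=y_{(k)}+c$ --- and sandwiches $x$ between $y-\varepsilon\mathbf{1}$ and $y+\varepsilon\mathbf{1}$ with $\varepsilon=\|x-y\|_\infty$. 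This is the standard textbook proof, it is a few lines rather than a page, and it avoids the bookkeeping over index ranges entirely; the paper's version buys nothing extra here. One small correction to your optional aside: the variational characterization should read $a_{(k)}=\min_{|S|=k}\max_{i\in S}a_i$ (taking $S$ of size $n-k+1$ yields $a_{(n-k+1)}$, not $a_{(k)}$). This does not affect your proof, since the monotonicity fact is immediate from the definition and you only offered the variational formula as an alternative justification.
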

	
	\begin{proof}
		We begin by establishing the 1-Lipschitz property for the minimum and maximum order statistics. In the case of the minimum, without loss of generality consider $x_{(1)} \leq y_{(1)}$. If they occur at the same coordinate in the original $x$ and $y$ vectors, then the 1-Lipschitz property immediately holds. Otherwise, $x_{(1)}$ occurs at the same coordinate as $y_{(l)} \geq y_{(1)} \geq x_{(1)}$ for some $l = 2,\ldots,n$ in the original $x$ and $y$ vectors, implying the Lipschitz property
		\begin{equation}
			\begin{split}
				|x_{(1)} - y_{(1)}| \leq |x_{(1)} - y_{(l)}| \leq \|x - y \|_{\infty}
			\end{split}
		\end{equation}
		The property analogously holds for the maximum, where we consider $x_{(n)} \geq y_{(n)}$ also without loss of generality. Again, when $x_{(n)}$ and $y_{(n)}$ occur at the same coordinate in the original vectors, the property immediately holds. Otherwise, $x_{(n)}$ occurs at the same coordinate as $y_{(m)} \leq y_{(n)} \leq x_{(n)}$ for some $m = 1,\ldots, n - 1$, which entails
		\begin{equation}
			\begin{split}
				|x_{(n)} - y_{(n)}| \leq |x_{(n)} - y_{(m)}| \leq \|x - y \|_{\infty}
			\end{split}
		\end{equation}
	
		Next, consider the non-minimal lower order statistics $x_{(k)}$ for any $k = 2,\ldots,\lfloor n/2 \rfloor$. Further, since we have already established the result for the maximum and minimum, we can consider $n \geq 3$. As per the preceding, consider $x_{(k)} \leq y_{(k)}$ without loss of generality. As before, when $x_{(k)}$ and $y_{(k)}$ occur at the same coordinate in the original vectors, then the 1-Lipschitz property immediately holds. Otherwise, there are two possible cases: 
		\begin{enumerate}
			\item \textbf{Case 1:} $x_{(k)}$ occurs at the same coordinate as $y_{(l)} \geq y_{(k)} \geq x_{(k)}$ for some $l = k + 1, \ldots, n$. As with the case of the minimum order statistic, this immediately implies $|x_{(k)} - y_{(k)}| \leq |x_{(k)} - y_{(l)}| \leq \|x - y \|_{\infty}$.
			\item \textbf{Case 2:} $x_{(k)}$ occurs at the same coordinate as $y_{(m)} \leq y_{(k)}$ for some $m = 1, \ldots, k - 1$. In this case, the pigeonhole principle implies that at least one of the more extreme lower order statistics $x_{(M)} \leq x_{(k)}$, for some $M \in \{1,\ldots,k-1 \}$, must occur at the same coordinate as $y_{(l)} \geq y_{(k)} \geq x_{(k)} \geq x_{(M)}$, for some $l \in \{k,\ldots,n \}$, in the original $x$ and $y$ vectors. Thus, $|x_{(k)} - y_{(k)}| \leq |x_{(M)} - y_{(l)}| \leq \|x - y \|_{\infty}$.
		\end{enumerate}
		
		Finally, we verify that the 1-Lipschitz property holds for the non-maximal upper order statistics $x_{(k)}$, for each $k = \lfloor n/2 \rfloor + 1,\ldots,n - 1$. While this will hold in direct analogy with the preceding proof for the lower order statistics, we will explicitly verify it for the sake of completeness. Without loss of generality, consider $x_{(k)} \geq y_{(k)}$. As before, when $x_{(k)}$ and $y_{(k)}$ occur at the same coordinates in the original $x$ and $y$, the property immediately follows. Otherwise, as per the lower order statistics, there are two possible cases: 
		\begin{enumerate}
			\item \textbf{Case 1:} $x_{(k)}$ occurs at the same coordinate as $y_{(m)} \leq y_{(k)} \leq x_{(k)}$ for some $m = 1, \ldots, k - 1$. This immediately implies that $|x_{(k)} - y_{(k)}| \leq |x_{(k)} - y_{(m)}| \leq \|x - y \|_{\infty}$.
			
			\item \textbf{Case 2:} $x_{(k)}$ occurs at the same coordinate as $y_{(l)} \geq y_{(k)}$ for some $l = k + 1, \ldots, n$. As per the preceding, the pigeonhole principle implies that at least one of the more extreme upper order statistics $x_{(M)} \geq x_{(k)}$, for some $M \in \{k + 1,\ldots,n \}$, must occur at the same coordinate as $y_{(l)} \leq y_{(k)} \leq x_{(k)} \leq x_{(M)}$, for some $l \in \{1,\ldots,k \}$, in the original $x$ and $y$ vectors. Thus, $|x_{(k)} - y_{(k)}| \leq |x_{(M)} - y_{(l)}| \leq \|x - y \|_{\infty}$.
		\end{enumerate}
		In view of all cases above, the proof is complete.	 
	\end{proof}

\subsection{Proof of Consistency Results under \cref{model_subG} and \cref{finiteMixture_stochastic_rep}} 

    Under \cref{model_subG} and \cref{finiteMixture_stochastic_rep}, we use $C_* := (C_1,\ldots,C_n)^{\T}$ to denote the random allocations of the samples to the $K$ mixture components; that is,  $C_i$ for $i \in [n]$ are \text{i.i.d.} with $\PP(C_{i} = k) = \pi_k$ for each $k \in [K]$. Let 
    $$
        n_k := \sum_{i = 1}^n \b1\{C_i = k\},\qquad \text{for each }k\in [K].
    $$
    so that $(n_1,\ldots,n_K)^\T \sim \text{Multinomial}(n; \pi_1,\ldots,\pi_{K})$. The unconditional covariance matrix of $X$ under either \cref{model_subG} or  \cref{finiteMixture_stochastic_rep} satisfies  
    $$
        \Sigma = \sum_{k < m}^K \pi_k \pi_m (\mu_k - \mu_m) (\mu_k - \mu_m)^{\T} + \sum_{k = 1}^K \pi_k \Sigma_k.
    $$  
    For notational convenience, we define 
    \begin{alignat*}{2}
        &\delta := \max_{k, \ell \in [K]} \| \mu_k - \mu_\ell \|^2, \qquad   &&\tr(\oSigma^2) := \max_{k\in [K]}\tr(\Sigma_k^2),\qquad \|\oSigma\|_\op = \max_{k\in[K]}\|\Sigma_k\|_\op,\\
        &\tr(\oSigma) := \max_{k\in [K]}\tr(\Sigma_k),\qquad 
        &&\tr(\uSigma) := \min_{k\in [K]}\tr(\Sigma_k).
    \end{alignat*} 
    Similarly, we also write 
    \[
        \rho_r(\uSigma) = \min_{k \in [K]}\rho_r(\Sigma_k), \qquad \text{for }r = 1,2.
    \]

\subsubsection{Proof of \cref{thm_loc_mix_sG}: Consistency for Location-Type Sub-Gaussian Mixture Alternatives}\label{app_proof_thm_loc_mix_sG}

\begin{proof}
We prove \cref{thm_loc_mix_sG} under the following set of conditions  
\begin{align}\label{cond_D_sG}
    & {\delta \over \tr (\oSigma)} = \omega\left( {1 \over  \sqrt{\rho_2(\uSigma)}}\right),\\\label{cond_Sigma_diff}
    & \tr(\oSigma)  - \tr(\uSigma) = \cO\left(\min\left\{\tr(\uSigma),~  \delta  \right\}\right),\\\label{cond_rho_2_sG}
    & \rho_1(\uSigma^2) \ge \log n.
\end{align}
When $\Sigma_* = \Sigma_1 = \cdots = \Sigma_K$, \cref{lem_ranks} implies that both \eqref{cond_D_sG} and \eqref{cond_Sigma_diff} are satisfied under \eqref{cond_mean_sep_sG}. Meanwhile, \eqref{cond_rho_2_sG} 
reduces  to  $\rho_1(\Sigma_*^2) \ge \log n$.

We prove consistency of the range-based test associated with $T$, as this is sufficient to establish consistency of the combined test. Recall $\Delta$ from \eqref{def_Delta_null} and 
\[
    T =  2  a_n ~ \wh\Delta^{-1/2} \left(R_{(n)} - R_{(1)}\right) -  2a_nb_n
\]
from \eqref{def_T_range}.
Proof of \cref{thm_loc_mix_sG} involves establishing $
    T  \to -\infty,
$
in probability. 
This is accomplished by showing
\begin{equation}\label{Contrast Property}
   \Delta^{-1/2}\left( R_{(n)} - R_{(1)} \right)= o_\PP\left(\sqrt{\log n}\right)
\end{equation}
and invoking the ratio-consistency of $\wh\Delta$ for $\Delta$ as established in \cref{prop_Delta_Alternatives}. 

To prove \eqref{Contrast Property}, we first bound $\Delta$  from below via  
\begin{align}\label{Big Eps}\nonumber
        {\Delta\over 2}  & ~ =~    \frac{\sum_{k < l, m < q}^K \pi_k \pi_l \pi_m \pi_q [(\mu_k - \mu_l)^{\T} (\mu_m - \mu_q)]^2 + \sum_{k,l} \pi_k \pi_l \tr(\Sigma_k \Sigma_l)}{\sum_{k < l}^K \pi_k \pi_l \| \mu_k - \mu_l \|_2^2 + \sum_{k = 1}^K \pi_k \tr(\Sigma_k)} \\\nonumber 
        & \quad ~  +   \frac{\sum_{m = 1}^K \sum_{k \neq l}^K \pi_k \pi_l \pi_m (\mu_k - \mu_l)^{\T} \Sigma_m (\mu_k - \mu_l)}{\sum_{k < l}^K \pi_k \pi_l \| \mu_k - \mu_l \|_2^2 + \sum_{k = 1}^K \pi_k \tr(\Sigma_k)} \\ \nonumber
        & ~ \gtrsim ~ \frac{\max_{k,l} \| \mu_k - \mu_l \|_2^4 + \max_k \tr(\Sigma_k^2)}{\max_{k,l} \| \mu_k - \mu_l \|_2^2 + \max_{k} \tr(\Sigma_k)} \\ 
        & ~ =~  
        {\delta^2 + \tr(\oSigma^2) \over \delta +  \tr(\oSigma)}.
\end{align} 
We next bound $(R_{(n)} -R_{(1)})$ from above. Pick any $k\in [K]$ and $i\in [n]$ with $C_i = k$. 
Invoking \cref{lem_mean_var} yields
\begin{equation}\label{radii_mean_conditional}
    \begin{split}
        M_{ik}^2  &~ := ~ \EE(R_i^2 \mid  C_i = k, C_*)\\
        &~~  = ~  \left\|  \mu_k - \bar \mu \right\|^2 + \frac{n - 2}{n} ~ \tr(\Sigma_k) + \frac{1}{n} \sum_{\ell = 1}^K \frac{n_\ell}{n} ~ \tr(\Sigma_\ell)\\
        & ~ ~ \asymp ~ \left\|  \mu_k - \bar \mu \right\|^2  + \tr(\Sigma_k) + {\tr(\oSigma) \over n},
    \end{split}
\end{equation}
where we write
\begin{equation}\label{def_mu_bar}
    \bar \mu := \sum_{k = 1}^K {n_k \over n}\mu_k.
\end{equation} 
By invoking \cref{lem_radii_concentration} with $\rho_1(\uSigma^2)\ge \log n$ and using a union bound argument, we find that with probability at least $1- 5K/n^2$, the following holds uniformly over $k\in [K]$ and $i \in [n]$ with $C_i = k$:
\begin{align}\label{def_event_sG_init}
    |R_i^2 - M_{ik}^2| &~ \lesssim~  \sqrt{\tr(\Sigma_k^2)\log n}  + \|\mu_k - \bar \mu\|_2\sqrt{\|\Sigma_k\|_\op  \log n}  \\\nonumber
				&\quad ~ +  {1\over \sqrt n}\left(\sqrt{\tr(\oSigma^2) \log n}   + \|\mu_k - \bar \mu\|_2\sqrt{\|\oSigma\|_\op \log n }\right)\\\label{def_event_sG} 
    &~ \le~  \sqrt{\tr(\oSigma^2)\log n} + \|\mu_k - \bar \mu\|_2\sqrt{\|\oSigma\|_\op  \log n}.
\end{align}
In the rest of the proof, we work under the event that \eqref{def_event_sG_init} and \eqref{def_event_sG} hold. Since 
\begin{equation}\label{bd_centers}
    \|\mu_k - \bar\mu\|_2 = \Bigl\|
        \sum_{\ell = 1}^K {n_\ell \over n}(\mu_k - \mu_\ell)
    \Bigr\|_2 \le \sqrt{\delta} ~  \sum_{\ell = 1}^K {n_\ell \over n} = \sqrt{\delta},
\end{equation} 
and 
\eqref{cond_Sigma_diff} implies 
\begin{equation}\label{eq_ratio_tr}
     \tr(\oSigma) \le \tr(\uSigma) +   \tr(\oSigma-\uSigma) \lesssim \tr(\uSigma),
\end{equation} 
we obtain
\begin{align}\label{bd_M_diff} \nonumber
     M_{ik} - M_{j\ell} &= {\left\|  \mu_k - \bar \mu \right\|_2^2 - \left\|  \mu_\ell - \bar \mu \right\|_2^2 + \frac{n - 2}{n} \tr(\Sigma_k-\Sigma_\ell)  \over 
      M_{ik} +  M_{j\ell} }\\\nonumber
      &\lesssim {( \|  \mu_k - \bar \mu  \|_2  -  \|  \mu_\ell - \bar \mu  \|_2)\|\mu_k-\mu_\ell\|_2 +  \tr(\Sigma_k-\Sigma_\ell)  \over 
          \|  \mu_k - \bar \mu  \|_2  + \|  \mu_\ell - \bar \mu  \|_2  +  \sqrt{\tr(\uSigma)}}\\
      &\lesssim 
      {\delta + \tr(\Sigma_k - \Sigma_\ell) \over 
      \sqrt{\tr(\oSigma)}}
\end{align}
and 
\begin{align}\label{bd_var_R_sG}
    |R_i - M_{ik}| &\lesssim
     \sqrt{\tr(\oSigma^2) +\delta  \|\oSigma\|_\op \over \tr(\oSigma)}\sqrt{\log n}  .
\end{align}
We proceed to consider two cases: 

\paragraph{Case 1:} If $\delta \lesssim \tr(\oSigma)$, then  
$
    \Delta \gtrsim {\delta^2  /  \tr(\oSigma)}
$ 
from \eqref{Big Eps}. Since \eqref{bd_M_diff} and \eqref{cond_Sigma_diff} imply 
\[
    M_{ik} - M_{j\ell}  \lesssim 
        {\delta + \tr(\Sigma_k - \Sigma_\ell) \over \sqrt{\tr(\oSigma)}}  \lesssim 
        {\delta  \over \sqrt{\tr(\oSigma)}},
\] 
we find that
\begin{align*} 
    & \Delta^{-1/2}\left(\max_i R_i - \min_i R_i\right)\\ 
     &\lesssim {\sqrt{\tr(\oSigma)}\over \delta}\max_{k,\ell \in [K]}\max_{i,j: C_i = k, C_j = \ell} 
    \left(
            M_{ik} - M_{j\ell} + |R_i - M_{ik}| + |R_j - M_{j\ell}|
    \right) \\ \nonumber
    & \lesssim 
1   + {\tr(\oSigma) \over \delta}\sqrt{\tr(\oSigma^2) \log n\over \tr^2(\oSigma)} +  \sqrt{{\tr(\oSigma) \over \delta}
        {\|\oSigma\|_\op \log n
    \over   \tr(\oSigma)}}.
\end{align*}  
The claim \eqref{Contrast Property} follows from
 \begin{equation}\label{rho_bars}
 {\|\oSigma\|_\op \over \tr(\oSigma)} \le  \sqrt{\tr(\oSigma^2) \over \tr^2(\oSigma)} \le {1\over \sqrt{\rho_2(\uSigma)}} \overset{\eqref{cond_D_sG}}{=}  
    o\left({\delta\over \tr(\oSigma)} \right). 
\end{equation}

\paragraph{Case 2:} If $\tr(\oSigma) = o(\delta)$, then $
    \Delta \gtrsim \delta$ from \eqref{Big Eps}.
By using 
\[
    M_{ik} - M_{j\ell} \lesssim \sqrt{\delta} + {\tr(\oSigma) - \tr(\uSigma) \over \sqrt{\tr(\oSigma)}} 
\]
deduced from the intermediate steps of \eqref{bd_M_diff}, we have 
\begin{align*}
     &\Delta^{-1/2}\left(\max_i R_i - \min_i R_i\right)\\
     & \lesssim 
      1 + {\tr(\oSigma) - \tr(\uSigma) \over \sqrt{\delta \tr(\oSigma)}} +  \sqrt{\tr(\oSigma^2) + \delta \|\oSigma\|_\op \over \delta \tr(\oSigma)}\sqrt{\log n}\\
    &  \lesssim 
      1 + {\tr(\oSigma) - \tr(\uSigma) \over \tr(\oSigma)} +  \sqrt{{\tr(\oSigma^2) \over \tr^2(\oSigma)} + {\|\oSigma\|_\op \over \tr(\oSigma)}}\sqrt{\log n} &&\text{by }\tr(\oSigma) = o(\delta)\\
    & = o \left( \sqrt{\log n} \right),
\end{align*} 
where the last step uses \eqref{rho_bars} and $\tr(\oSigma^2) \le \tr(\oSigma)\|\oSigma\|_\op$ as well as 
\[
     {\tr(\oSigma)\over \|\oSigma\|_\op} \ge {\tr(\Sigma_{k^*})\over \|\Sigma_{k^*}\|_\op}   \ge  {\tr(\Sigma_{k^*}^2)\over \|\Sigma_{k^*}^2\|_\op}  = \rho_1(\uSigma^2)  \ge \log n
\]
where we choose $k^*$ such that $\|\Sigma_{k^*}\|_\op = \|\oSigma\|_\op$.  

Combining the two cases establishes the claim in \eqref{Contrast Property} as $\lim_{n \to \i} \mathbb{P}\left(\cE \right) = 1$, thereby completing the proof. 
\end{proof}

\subsubsection{Proof of  \cref{thm_cov_mix_sG}: Consistency for Covariance-Type Sub-Gaussian Mixture Alternatives}\label{app_proof_thm_cov_mix_sG}

\begin{proof}
 We prove \cref{thm_cov_mix_sG} under \eqref{cond_rho_2_sG} and the following set of conditions:
\begin{align}\label{cond_SigmaRoot_diff_sG}
    &{\sqrt{\tr(\oSigma)}  - \sqrt{\tr(\uSigma)} } = \omega\left( \sqrt{\tr(\oSigma)}  \over  \sqrt{\rho_2(\uSigma)/\log(n)} \right) ,\\\label{cond_separation}
    & \delta = o\left({\tr(\oSigma)- \tr(\uSigma) \over \sqrt{\log(n)}} \right).  
\end{align}
Note that \eqref{cond_SigmaRoot_diff_sG} is equivalent to the condition \eqref{cond_cov_sep_sG} from the theorem statement, and that when $\mu_1=\cdots = \mu_K$, \eqref{cond_separation} is satisfied automatically. 
We prove \cref{thm_cov_mix} by establishing $
    T  \to \infty,
$
in probability,
under the specified asymptotic regime. 
This is accomplished by showing
\begin{equation}\label{Contrast Property Cov Mix}
   \Delta^{-1/2}\left( R_{(n)} - R_{(1)} \right)= \omega_\PP\left(\sqrt{\log n}\right),
\end{equation}
and invoking the ratio-consistency of $\wh\Delta$ for $\Delta$ as established in \cref{prop_Delta_Alternatives}.

To prove \eqref{Contrast Property Cov Mix}, from \eqref{Big Eps} and by using $\tr(\Sigma_k \Sigma_{\ell}) \leq \sqrt{\tr(\Sigma^2_k) \tr(\Sigma^2_{\ell})}$ as well as
\[ 
    (\mu_k-\mu_l)^\T\Sigma_m (\mu_k-\mu_l) \le \delta \|\Sigma_m\|_\op \le {1\over 2}\left[\delta^2 + \tr(\Sigma_m^2)\right]
\]
for any $k,l,m\in [K]$,
we can deduce that 
\begin{align}\label{Delta2_CovMix_Bound} 
    \Delta & \lesssim {\delta^2 + \tr(\oSigma^2) \over   \tr(\oSigma)}.
\end{align}
Next, we bound $(R_{(n)} -R_{(1)})$ from below under the event $\cE$ in \eqref{def_event_sG}.
Note from \eqref{radii_mean_conditional} and \eqref{bd_centers} that 
\begin{equation}\label{lb_M_diff}
    \begin{split}
     &\max_{k,\ell \in [K]}\max_{i,j: C_i = k, C_j = \ell} \left\{M_{ik} - M_{j\ell} \right\} \\ 
      &= 
     \max_{k,\ell \in [K]}\max_{i,j: C_i = k, C_j = \ell} {\left\|  \mu_k - \bar \mu \right\|^2 - \left\|  \mu_\ell - \bar \mu \right\|^2 + \frac{n - 2}{n} \tr(\Sigma_k-\Sigma_\ell)  \over 
      M_{ik} +  M_{j\ell} }\\ 
      &\gtrsim 
      \max_{k,\ell \in [K]} { \left(\sqrt{\tr(\Sigma_k)} + \sqrt{\tr(\Sigma_\ell)}\right) \left(\sqrt{\tr(\Sigma_k)} - \sqrt{\tr(\Sigma_\ell)}\right) - \delta \over 
      \sqrt{\delta} + \sqrt{\tr(\Sigma_k)} + \sqrt{\tr(\Sigma_\ell)} + \sqrt{\tr(\oSigma)/n}}  \\ 
      &\gtrsim  {  \sqrt{\tr(\oSigma)} \left(\sqrt{\tr(\oSigma)} - \sqrt{\tr(\uSigma)}\right) - \delta \over 
      \sqrt{\delta} + \sqrt{\tr(\oSigma)} } \\
      & \gtrsim \sqrt{\tr(\oSigma)} - \sqrt{\tr(\uSigma)}.
    \end{split}
\end{equation}
The last step uses \eqref{cond_separation}. On the other hand, by \eqref{radii_mean_conditional} and \eqref{def_event_sG_init}, we have
\begin{align*}
    |R_i - M_{ik}|   & ~ \lesssim~  \sqrt{\tr(\Sigma_k^2) +\delta  \|\Sigma_k\|_\op \over \tr(\Sigma_k)}\sqrt{\log n}
    +   \sqrt{\tr(\oSigma^2) +\delta  \|\oSigma\|_\op \over \tr(\oSigma)}\sqrt{\log n}.
\end{align*}
Invoking \eqref{cond_SigmaRoot_diff_sG} \& \eqref{cond_separation} gives
\begin{align*}
    &\sqrt{\tr(\Sigma^2_k) \over \tr(\Sigma_k)}\sqrt{\log n} \le  \sqrt{\tr(\oSigma)\log n\over \rho_2(\Sigma_k)} \le \sqrt{\tr(\oSigma)\log n\over \rho_2(\uSigma)}  = o\left(
        \sqrt{\tr(\oSigma)}  - \sqrt{\tr(\uSigma)} 
    \right),\\
    & {\delta \|\Sigma_k\|_\op \over \tr(\Sigma_k)} {\log n} \overset{\eqref{rho_bars}}{\le} {\delta\sqrt{\log n} \over \sqrt{\tr(\oSigma)}} \sqrt{\tr(\oSigma) \log n\over \rho_2(\uSigma)}  = o\left(\left(
        \sqrt{\tr(\oSigma)}  - \sqrt{\tr(\uSigma)} \right)^2
    \right).
\end{align*}
Since the same bounds hold for the terms involving $\oSigma$, by \eqref{Delta2_CovMix_Bound}, we conclude that 
\begin{align*}
    \Delta^{-1/2}(R_{(n)} - R_{(1)}) &\gtrsim {\sqrt{\tr(\oSigma)} \over \delta + \sqrt{\tr(\oSigma^2)}} \left(
        \sqrt{\tr(\oSigma)}  - \sqrt{\tr(\uSigma)} \right).
\end{align*}
Observing that 
\[
    {\sqrt{\tr(\oSigma)} \over \delta} \left(
        \sqrt{\tr(\oSigma)}  - \sqrt{\tr(\uSigma)} \right) \overset{\eqref{cond_separation}}{=} \omega(\sqrt{\log n}),
\]
when $\delta \neq 0$, as well as 
\[
    {\sqrt{\tr(\oSigma)  \over  \tr(\oSigma^2)}} \left(
        \sqrt{\tr(\oSigma)}  - \sqrt{\tr(\uSigma)} \right) \ge  
        {\sqrt{\tr(\oSigma)}  - \sqrt{\tr(\uSigma)}\over \sqrt{\tr(\oSigma)}}\sqrt{\rho_2(\uSigma)} \overset{\eqref{cond_SigmaRoot_diff_sG}}{=} \omega(\sqrt{\log n}),
\]
we have proven \eqref{Contrast Property Cov Mix}, thereby completing the proof.
\end{proof}

\subsubsection{Proof of  \cref{thm_loc_mix}: Consistency for Location-Type Bai-Sarandasa Mixture Alternatives}\label{app_proof_thm_loc_mix}

\begin{proof}
The proof of \cref{thm_loc_mix} largely follows that of \cref{thm_loc_mix_sG}. We only state the main differences below. First,  \eqref{cond_D_sG} and \eqref{cond_rho_2_sG} are replaced by 
\begin{align}\label{cond_D}
    & \delta = \omega\left( {\tr (\uSigma) \over \min\{\rho_1(\uSigma)/n, \sqrt{\rho_2(\uSigma)/n}\}}\right),\\\label{cond_rho_2}
    & \rho_1(\uSigma) = \omega(n).
\end{align} 
Proof of \cref{thm_loc_mix} involves establishing $
    T  \to -\infty,
$
in probability, which is accomplished by proving \eqref{Contrast Property}
and invoking the ratio-consistency of $\wh\Delta$ for $\Delta$ as established in \cref{prop_Delta_Alternatives}. 

Pick any $k\in [K]$ and any $i\in [n]$ with $C_i = k$. In addition to $M_{ik}$ in \eqref{radii_mean_conditional}, by \cref{lem_mean_var}, we also  have  
\begin{equation}\label{radii_var_conditional}
    \begin{split}
    \sigma_{ik}^2   := \Var(R_i^2 \mid  C_i = k, C_*) 
    &\lesssim   \tr(\Sigma^2_k) + \max_{\ell \in [K]} (\mu_{\ell} -\mu_k)^\T\Sigma_k (\mu_{\ell} -\mu_k)\\&\quad  + {\tr(\oSigma^2) \over n} +  \max_{q,r \in [K]}  {(\mu_q - \mu_r)^{\T} \Sigma_q (\mu_q - \mu_r)  \over n}\\
    &\lesssim \tr(\oSigma^2) + \delta  \|\oSigma\|_\op  
    \end{split}
\end{equation}
An application of Chebyshev's inequality yields that, for any $\epsilon>0$,
\[
    \PP \left\{
            \left|R_i^2 - M_{ik}^2\right| \ge \epsilon ~ \sigma_{ik} \mid C_i = k, C_*
    \right\} \le {1 \over \epsilon^2}.
\]
Taking the union bound over $k \in [K]$ and $i \in \{i \in [n]: C_i = k\}$ and choosing $\epsilon = \sqrt{n\log n}$ gives 
$
    \lim_{n\to \infty}\PP(\cE) = 1, 
$
with
\begin{equation}\label{def_event}
    \cE := \bigcap_{k\in[K]}\bigcap_{i:C_i = k} \left\{
        |R_i^2 - M_{ik}^2| \le \sigma_{ik}\sqrt{n\log n}
    \right\}.
\end{equation}
On the event $\cE$, display \eqref{bd_var_R} gets replaced by  
\begin{align}\label{bd_var_R}
    |R_i - M_{ik}| &\le {\sigma_{ik}\sqrt{n\log n} \over M_{ik}} \lesssim
     \sqrt{\tr(\oSigma^2) +\delta  \|\oSigma\|_\op \over \tr(\oSigma)}\sqrt{n\log n}.
\end{align}
Consider the same two cases as in the proof of \cref{thm_loc_mix_sG}: 

\paragraph{Case 1:} If $\delta \lesssim \tr(\oSigma)$, then  
$
    \Delta \gtrsim {\delta^2  /  \tr(\oSigma)}
$ 
from \eqref{Big Eps}. Repeating the same arguments as in the proof of \cref{thm_loc_mix_sG}, we find that
\begin{align*} 
     \Delta^{-1/2}\left(\max_i R_i - \min_i R_i\right) 
    & \lesssim 
1   + {\tr(\oSigma) \over \delta}\sqrt{\tr(\oSigma^2) n\log n\over \tr^2(\oSigma)} +  \sqrt{{\tr(\oSigma) \over \delta}
        {\|\oSigma\|_\op n\log n
    \over   \tr(\oSigma)}}.
\end{align*}  
The claim \eqref{Contrast Property} follows by invoking \eqref{cond_D} in conjunction with 
\begin{equation}\label{rho_bars_BS}
{\tr(\oSigma^2) \over \tr^2(\oSigma)} \le {1\over \rho_2(\uSigma)},\qquad {\|\oSigma\|_\op \over \tr(\oSigma)} \le {1\over \rho_1(\uSigma)}.
\end{equation}

\paragraph{Case 2:} If $\tr(\oSigma) = o(\delta)$, then $
    \Delta \gtrsim \delta$ from \eqref{Big Eps}. We have 
\begin{align*}
     \Delta^{-1/2}\left(\max_i R_i - \min_i R_i\right)    & \lesssim 
      1 + {\tr(\oSigma) - \tr(\uSigma) \over \sqrt{\delta \tr(\oSigma)}} +  \sqrt{\tr(\oSigma^2) + \delta \|\oSigma\|_\op \over \delta \tr(\oSigma)}\sqrt{n\log n}  \\
    &  \lesssim 
      1 + {\tr(\oSigma) - \tr(\uSigma) \over \tr(\oSigma)} +  \sqrt{{\tr(\oSigma^2) \over \tr^2(\oSigma)} + {\|\oSigma\|_\op \over \tr(\oSigma)}}\sqrt{n\log n}  \\
    & = o( \sqrt{\log n}),
\end{align*} 
where the last step uses \eqref{rho_bars_BS}, $\tr(\oSigma^2) \le \tr(\oSigma)\|\oSigma\|_\op$ and 
$\rho_1(\uSigma) = \omega(n)$. 

Combining the two cases establishes the claim in \eqref{Contrast Property} as $\lim_{n \to \i} \mathbb{P}\left(\cE \right) = 1$, thereby completing the proof. 
\end{proof}

\subsubsection{Proof of  \cref{thm_cov_mix}: Consistency for Covariance-Type Bai-Sarandasa Mixture Alternatives}\label{app_proof_thm_cov_mix}

\begin{proof}
The proof of \cref{thm_cov_mix} largely follows that of \cref{thm_cov_mix_sG}. We only state the main differences below.
 We prove \cref{thm_cov_mix} under \eqref{cond_rho_2}, \eqref{cond_separation}, and the following condition:
\begin{align}\label{cond_SigmaRoot_diff}
    &\sqrt{\tr(\oSigma)}  - \sqrt{\tr(\uSigma)} = \omega\left(  \sqrt{\tr(\oSigma)\log(n)} \over \min\{\rho_1(\uSigma) / n, \sqrt{\rho_2(\uSigma)/n}\} \right).  
\end{align}
Note that \eqref{cond_SigmaRoot_diff} is equivalent to the condition \eqref{cond_cov_sep} from the theorem statement, and that when $\mu_1=\cdots = \mu_K$, \eqref{cond_separation} is satisfied automatically. 
We prove \cref{thm_cov_mix} by establishing $
    T  \to \infty,
$
in probability, which is accomplished by showing \eqref{Contrast Property Cov Mix} 
and invoking the ratio-consistency of $\wh\Delta$ for $\Delta$ 
as established in \cref{prop_Delta_Alternatives}.

To prove \eqref{Contrast Property Cov Mix}, recall \eqref{Delta2_CovMix_Bound}. We bound $(R_{(n)} -R_{(1)})$ from below under the event $\cE$ in \eqref{def_event}. Recall the expressions in \eqref{lb_M_diff}. By \eqref{radii_mean_conditional}, \eqref{radii_var_conditional} and \eqref{bd_var_R}, we have
\begin{align*}
    |R_i - M_{ik}| &\le {\sigma_{ik}\sqrt{n\log n} \over M_{ik}} \\ & \lesssim
    \sqrt{\tr(\Sigma^2_k) +  \delta \|\Sigma_k\|_\op   \over \tr(\Sigma_k)} \sqrt{n\log n}   + \sqrt{\tr(\oSigma^2) +  \delta \|\oSigma\|_\op \over \tr(\oSigma)} \sqrt{n\log n}.
\end{align*}
Since invoking \eqref{cond_SigmaRoot_diff} \& \eqref{cond_separation} gives
\begin{align*}
    &\sqrt{\tr(\Sigma^2_k) \over \tr(\Sigma_k)}\sqrt{n\log n} \le \sqrt{\tr(\oSigma)}\sqrt{n\log n\over \rho_2(\Sigma_k)} \le \sqrt{\tr(\oSigma)\log n\over \rho_2(\uSigma)/n}  = o\left(
        \sqrt{\tr(\oSigma)}  - \sqrt{\tr(\uSigma)} 
    \right),\\
    & {\delta \|\Sigma_k\|_\op \over \tr(\Sigma_k)} {n\log n} \le {\delta\sqrt{\log n} \over \sqrt{\tr(\oSigma)}} {\sqrt{\tr(\oSigma) \log n}\over \rho_1(\uSigma)/n}  = o\Bigl(\left(
        \sqrt{\tr(\oSigma)}  - \sqrt{\tr(\uSigma)} \right)^2
    \Bigr)
\end{align*}
and the same bounds hold for $\sqrt{\tr(\oSigma^2)/\tr(\oSigma)}\sqrt{n\log n}$ and $\delta \|\oSigma\|_\op n \log n/ \tr(\oSigma)$, respectively, in conjunction with \eqref{Delta2_CovMix_Bound}, we conclude that 
\begin{align*}
    \Delta^{-1/2}(R_{(n)} - R_{(1)}) &\gtrsim {\sqrt{\tr(\oSigma)} \over \delta + \sqrt{\tr(\oSigma^2)}} \left(
        \sqrt{\tr(\oSigma)}  - \sqrt{\tr(\uSigma)} \right).
\end{align*}
Repeating the same arguments as in the proof of \cref{thm_cov_mix_sG} proves \eqref{Contrast Property Cov Mix}, thereby completing the proof.
\end{proof}

\subsubsection{Technical Lemmas used in the Proofs of \cref{thm_loc_mix_sG,thm_loc_mix,thm_cov_mix,thm_cov_mix_sG}}

The following lemma states bounds for $\EE(R_i^2\mid C_i = k, C_*)$ and $\Var(R_i^2\mid C_i = k, C_*)$ for any $k\in [K]$ and $i\in[n]$. Recall $\bar \mu$ from \eqref{def_mu_bar}.
\begin{lemma}\label{lem_mean_var}
    Under either \cref{model_subG} or \cref{finiteMixture_stochastic_rep}, for any $i\in [n]$ and $k\in[K]$, we have
    \begin{align}\label{Conditional Exp}
         \EE(R_i^2 \mid C_i = k, C_*)   &  = \left\|  \mu_k - \bar \mu \right\|^2 + \frac{n - 2}{n} ~ \tr(\Sigma_k) + \frac{1}{n} \sum_{\ell = 1}^K \frac{n_\ell}{n} ~ \tr(\Sigma_\ell),
    \end{align}
    and, with probability one,
    \begin{equation}\label{Conditional Variance}
    \begin{split}
         \Var(R_i^2 \mid  C_i = k, C_*) & \lesssim  ~ \tr(\Sigma^2_k)  + \max_{\ell \in [K]} (\mu_k -\mu_\ell)^{\T} \Sigma_k (\mu_k - \mu_\ell) \\ & 
         \qquad +  {\tr(\oSigma^2) + \max_{q,r \in [K]} (\mu_q - \mu_r)^{\T} \Sigma_q (\mu_q - \mu_r) \over n}. 
    \end{split}
\end{equation}
\end{lemma}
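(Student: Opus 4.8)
\textbf{Proof plan for \cref{lem_mean_var}.}

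The strategy is a direct conditional moment computation. Condition on the label vector $C_*$ and on the event $C_i = k$, and write $X_j = \mu_{C_j} + \Sigma_{C_j}^{1/2} Z_j$ (or $\Gamma_{C_j} Z_j$ in the Bai-Sarandasa case) for $j \in [n]$, where the $Z_j$ are independent with mean zero and identity covariance. Decompose $X_i - \oX = (\mu_k - \bar\mu) + \bigl( \Sigma_k^{1/2} Z_i - \tfrac1n \sum_{j=1}^n \Sigma_{C_j}^{1/2} Z_j \bigr)$, where $\bar\mu = \sum_\ell (n_\ell/n)\mu_\ell$ is deterministic given $C_*$. Denote the random centered part by $W := \Sigma_k^{1/2} Z_i - \tfrac1n \sum_j \Sigma_{C_j}^{1/2} Z_j = \bigl(1 - \tfrac1n\bigr)\Sigma_k^{1/2} Z_i - \tfrac1n \sum_{j \neq i} \Sigma_{C_j}^{1/2} Z_j$, so that $R_i^2 = \|\mu_k - \bar\mu\|_2^2 + 2(\mu_k - \bar\mu)^\T W + \|W\|_2^2$, and $\EE[W \mid C_i=k, C_*] = 0$. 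For the expectation \eqref{Conditional Exp} I would compute $\EE[\|W\|_2^2] = \bigl(1-\tfrac1n\bigr)^2 \tr(\Sigma_k) + \tfrac1{n^2}\sum_{j\neq i}\tr(\Sigma_{C_j})$ using independence of the $Z_j$ and $\EE[Z_j Z_j^\T] = \bI$; collecting terms and using $\sum_{j \neq i}\tr(\Sigma_{C_j}) = \sum_\ell n_\ell \tr(\Sigma_\ell) - \tr(\Sigma_k)$ yields exactly $\tfrac{n-2}{n}\tr(\Sigma_k) + \tfrac1n\sum_\ell \tfrac{n_\ell}{n}\tr(\Sigma_\ell)$ after a short algebraic simplification.

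For the variance bound \eqref{Conditional Variance}, I would use $\Var(R_i^2 \mid \cdot) \lesssim \Var(\|W\|_2^2 \mid \cdot) + \Var\bigl(2(\mu_k-\bar\mu)^\T W \mid \cdot\bigr)$ via $(a+b)^2 \le 2a^2 + 2b^2$. The linear term contributes $4\,(\mu_k-\bar\mu)^\T \Cov(W)(\mu_k-\bar\mu)$; since $\Cov(W) = \bigl(1-\tfrac1n\bigr)^2\Sigma_k + \tfrac1{n^2}\sum_{j\neq i}\Sigma_{C_j}$, this is bounded by $(\mu_k-\bar\mu)^\T\Sigma_k(\mu_k-\bar\mu) + \tfrac1n \max_{q}(\mu_k-\bar\mu)^\T\Sigma_q(\mu_k-\bar\mu)$ up to constants, and expanding $\mu_k - \bar\mu = \sum_\ell (n_\ell/n)(\mu_k - \mu_\ell)$ with convexity of $v \mapsto v^\T\Sigma_q v$ gives the terms $\max_\ell (\mu_k-\mu_\ell)^\T\Sigma_k(\mu_k-\mu_\ell)$ and $\tfrac1n\max_{q,r}(\mu_q-\mu_r)^\T\Sigma_q(\mu_q-\mu_r)$ appearing in the statement. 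The quadratic term $\Var(\|W\|_2^2)$ is where the moment hypotheses enter: writing $\|W\|_2^2$ as a quadratic form in the independent vectors $(Z_j)_{j}$ with bounded fourth moments (sub-Gaussian entries in \cref{model_subG}, or the $\kappa$/$C$ moment and product-moment conditions of \cref{Bai-Sarandasa}), one bounds $\Var$ of a quadratic form $\sum_j Z_j^\T A_{jj} Z_j + \sum_{j \neq j'} Z_j^\T A_{jj'} Z_{j'}$ by $C\bigl(\sum_j \|A_{jj}\|_F^2 + \sum_{j\neq j'}\|A_{jj'}\|_F^2\bigr)$, where the blocks $A_{jj'}$ are built from $\Sigma_k^{1/2}, \Sigma_{C_j}^{1/2}$ and the factors $1-\tfrac1n$, $\tfrac1n$. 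The leading $j=i$ diagonal block gives $\|A_{ii}\|_F^2 \asymp \tr(\Sigma_k^2)$; the remaining diagonal blocks are $\cO(n^{-2})$ each and there are $n$ of them, contributing $\cO(n^{-1}\tr(\oSigma^2))$; the off-diagonal blocks involving index $i$ are $\cO(n^{-2})$ each with $n$ of them, contributing $\cO(n^{-1})$-order terms dominated by $\tfrac1n\tr(\oSigma^2)$ after Cauchy-Schwarz, and the purely $j,j' \neq i$ off-diagonal blocks are $\cO(n^{-4})$ each with $\cO(n^2)$ of them, hence negligible. Summing gives $\Var(\|W\|_2^2) \lesssim \tr(\Sigma_k^2) + \tfrac1n\tr(\oSigma^2)$, which combined with the linear-term bound yields \eqref{Conditional Variance}.

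The main obstacle is the bookkeeping in the variance-of-quadratic-form estimate: one must carefully track which blocks carry a factor $1-\tfrac1n$ versus $\tfrac1n$, invoke the product-moment hypothesis (Item (ii) of \cref{Bai-Sarandasa}, or independence-of-entries in \cref{model_subG}) to kill cross terms that would otherwise not be controlled by Frobenius norms alone, and verify that every $n^{-2}$ block with multiplicity $n$ contributes only at the $\tr(\oSigma^2)/n$ scale rather than something larger. The rest — the exact expectation identity and the linear-term bound — is routine. I would present the quadratic-form variance bound either by citing a standard Hanson-Wright-type second-moment inequality for quadratic forms in independent vectors with bounded fourth moments, or by a direct expansion keeping only the combinatorially surviving terms under the stated product-moment conditions.
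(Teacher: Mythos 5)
Your proposal is correct, and both halves land on the paper's conclusions, but your variance argument is organized differently from the paper's. The paper translates by $\mu_1$, writes $R_1^2=\|T_1-\overline T\|^2$ with $T_j=X_j-\mu_1$, and expands into four sums of inner products ($T_i^\T T_i$, $\sum_{i\ne j}T_i^\T T_j$, $T_1^\T\sum_i T_i$), bounding each variance and the surviving covariances separately; the mean-separation terms $(\mu_k-\mu_\ell)^\T\Sigma_q(\mu_k-\mu_\ell)$ emerge from the individual $\Var(T_i^\T T_i\mid C_i=k)$ and cross-covariance computations. You instead center at $\bar\mu$ so that the random part $W$ is conditionally mean-zero, which cleanly splits $R_i^2$ into a linear form $2(\mu_k-\bar\mu)^\T W$ and a quadratic form $\|W\|_2^2$: the linear term yields the separation terms directly from $\Cov(W)$ plus convexity of $v\mapsto v^\T\Sigma_q v$ over the convex combination $\mu_k-\bar\mu=\sum_\ell(n_\ell/n)(\mu_k-\mu_\ell)$, and the quadratic term is handled by one block-wise variance-of-quadratic-forms bound (whose diagonal-block case is exactly the $2\tr(A^2)+(\kappa-3)\sum_j A_{jj}^2$ formula the paper cites from Srivastava et al., so the same moment/product-moment hypotheses are consumed in the same place). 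Your bookkeeping of the block Frobenius norms is right: the $j=i$ diagonal block gives $\tr(\Sigma_k^2)$, the $n$ off-diagonal blocks touching $i$ give $\tr(\oSigma^2)/n$ via $\tr(\Sigma_k\Sigma_{C_j})\le\tr(\oSigma^2)$, and everything else is lower order. One cosmetic caveat: your convexity step produces a three-index quantity $\max_{q,r,s}(\mu_r-\mu_s)^\T\Sigma_q(\mu_r-\mu_s)$ rather than the two-index max in the statement, but since $(\mu_r-\mu_s)^\T\Sigma_q(\mu_r-\mu_s)\le 2(\mu_r-\mu_q)^\T\Sigma_q(\mu_r-\mu_q)+2(\mu_q-\mu_s)^\T\Sigma_q(\mu_q-\mu_s)$, the two are equivalent up to the absolute constant absorbed by $\lesssim$, so nothing is lost. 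Net: your route buys a tidier separation of where the mean signal and the covariance signal enter the bound, at the cost of having to state (or cite) the multi-block quadratic-form variance inequality once; the paper's route avoids stating that inequality in block form but pays with longer covariance case analysis.
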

\begin{proof}
    We only prove for \cref{finiteMixture_stochastic_rep} as the same proof holds for \cref{model_subG} with $\Gamma_k = \Sigma_k^{1/2}$ and $m_k = d$.  
    Notice that for any $k\in [K]$ and for any $i\in [n]$ with $C_i = k$, we have
    \[
        X_i \stackrel{\rm d}{=} \mu_k + \Gamma_k Z_i,
    \]
    where $Z_i$ is an isotropic random vector satisfying \cref{Bai-Sarandasa}. 
    For any $k \in [K]$, we find that 
    \begin{equation*}
        \begin{split}
            & \EE \Bigl(\| X_i - \oX \|^2 \mid  C_i = k, C_* \Bigr) \\ & = \EE\left(\Bigl\| \Gamma_k  Z_i - \frac{1}{n} \sum_{j = 1}^n \Gamma_{C_j}  Z_j +   \mu_k - \bar \mu  \Bigr\|^2  ~ \Big |~  C_i = k, C_*\right) \\  & = \EE\| \Gamma_k  Z_i \|^2 + \|  \mu_k - \bar \mu\|^2  + \frac{1}{n^2} \EE\left(\Bigl\| \sum_{j = 1}^n \Gamma_{C_j}  Z_j \Bigr \|^2 \ \Big | \ C_i = k, C_*\right)  \\ 
            &\qquad  - \frac{2}{n} \EE \left( \Bigl(\Gamma_k  Z_i \Bigr)^{\T} \sum_{j \neq i} \Gamma_{C_j}  Z_j \ \Big | \ C_*  \right)  - \frac{2}{n} \EE\| \Gamma_k  Z_i \|^2   + 2 (\mu_k - \bar \mu)^\T \EE(\Gamma_k  Z_i)   \\ 
            & \qquad - \frac{2}{n} (\mu_k - \bar \mu)^\T \sum_{j = 1}^n \EE\left( \Gamma_{C_j}  Z_j \mid C_i = k, C_* \right).
        \end{split}
    \end{equation*} 
    By using the fact that $(Z_1,\ldots, Z_n)$ are independent and individually satisfy \cref{Bai-Sarandasa} and $\sum_{j=1}^n \b1\{C_j = \ell\} = n_\ell$, the preceding equals 
    \begin{align*}
            &{n-2 \over n} \EE\| \Gamma_k  Z_i \|^2 + \| \mu_k - \bar \mu \|^2 + \frac{1}{n^2} \sum_{j = 1}^n \EE(\|\Gamma_{C_j}  Z_j \|^2 \mid C_i = k, C_*)  \\ & =  {n-2 \over n} \tr(\Gamma_k \Gamma_k^{\T}) + \| \mu_k - \bar \mu \|^2 + \frac{1}{n^2} \sum_{\ell = 1}^K  n_\ell  \tr(\Gamma_\ell \Gamma_\ell^{\T}),
    \end{align*}   
   thus proving the first result.\\

Regarding the  conditional variance, without loss of generality, we evaluate 
$\Var(R_1^2 \mid C_1 = 1, C_*)$. Since $\| X_1 - \oX \|^2$ is invariant to arbitrary location transformation, we center the data by $\mu_1$, and write
\begin{equation}\label{def_Ti}
    T_j := X_j - \mu_1,\qquad \forall j\in [n].
\end{equation}
Beginning with
\begin{align}\label{display_var_radii} \nonumber
        &\Var(\| X_1 - \oX \|^2 \mid   C_1 = 1, C_*)\\\nonumber
        & = \Var(\| T_1 - \overline{T} \|^2 \mid C_1 = 1, C_*) \\ \nonumber
        & = \Var\left(T_1^{\T} T_1 + \frac{1}{n^2} \sum_{i = 1}^n T_i^{\T} T_i + \frac{1}{n^2} \sum_{i \neq j} T_i^{\T} T_j - \frac{2}{n} T_1^{\T} \sum_{i = 1}^n T_i \ \Big | \ C_1 = 1, C_*\right)  \\ \nonumber
        & \lesssim  \Var\left(T_1^{\T} T_1 \mid  C_1 = 1\right) + \frac{1}{n^4} \sum_{k = 1}^K n_k \Var\left(T_i^{\T} T_i  \mid  C_i = k\right) \\ & \quad +  \frac{1}{n^4} \Var\left(\sum_{i \neq j} T_i^{\T} T_j  \mid  C_1 = 1, C_* \right) + \frac{1}{n^2} \Var\left(T_1^{\T} \sum_{i = 1}^n T_i  \mid  C_1 = 1, C_*\right),  
\end{align}
we proceed to bound each term separately. For the first term, we have
\begin{equation}\label{var_one}
    \begin{split}
        \Var \Bigl(T_1^{\T} T_1 \mid  C_1 = 1 \Bigr) &= \Var \Bigl( Z_1^{\T} \Gamma_1^{\T} \Gamma_1  Z_1 \Bigr)\\
        & = 2 \tr\Bigl((\Gamma_1^{\T} \Gamma_1)^2\Bigr) \ + \ (\kappa_1 - 3) \sum_{j = 1}^{m_1} [(\Gamma_1^{\T} \Gamma_1)_{jj}]^2 \\ & \leq 2 \tr(\Sigma^2_1) + (\kappa_1 - 3)_+ \| \Gamma_1^{\T} \Gamma_1 \|^2_{\text{F}} \\ & = 2 \tr(\Sigma^2_1) + (\kappa_1 - 3)_+ \tr(\Sigma^2_1)  \\ & \lesssim \tr(\Sigma^2_1),
    \end{split}
\end{equation}
where $(x)_+ := \max \{x, 0 \}$ and the second equality follows from Lemma 7.1 of \cite{Sri2014} for the variance of quadratic forms under the model defined by \cref{Bai-Sarandasa}, with $\kappa_1$ and $m_1$ corresponding to $\kappa$ and $m$, respectively. Based on \eqref{var_one}, the summands of the second term in \eqref{display_var_radii} can be bounded via
\begin{equation}\label{var_two}
    \begin{split}
        \Var \Bigl(T_i^{\T} T_i \mid  C_i = k \Bigr)  & = \Var \Bigl(\| \mu_k - \mu_1 \|^2 + \| \Gamma_k  Z_i \|^2 + 2 (\mu_k - \mu_1)^{\T} \Gamma_k  Z_i \Bigr) \\ & = \Var \Bigl(\| \Gamma_k  Z_i \|^2 + 2 (\mu_k - \mu_1)^{\T} \Gamma_k  Z_i \Bigr) \\ & \le 2\Var \Bigl(\| \Gamma_k  Z_i \|^2 \Bigr) + 8 \Var\left((\mu_k - \mu_1)^{\T} \Gamma_k  Z_i\right) \\ & \lesssim \tr(\Sigma^2_k) +  (\mu_k - \mu_{1})^\T \Sigma_k (\mu_k - \mu_{1}),
    \end{split}
\end{equation}
for each $k \in [K]$. For the third variance term, we find that 
\begin{align}\label{var_three} \nonumber
        &\frac{1}{n^4} \Var\left(\sum_{i \neq j} T_i^{\T} T_j \mid  C_1 = 1, C_*\right) \\\nonumber
        & = \frac{1}{n^4} \sum_{i \neq j}  \Var\left(T_i^{\T} T_j \mid  C_1 = 1, C_*\right) + \frac{1}{n^4} \sum_{i \neq j \neq k} \Cov\left(T_i^{\T} T_j, T_j^{\T} Y_k \mid  C_1 = 1, C_*\right) \\\nonumber
        & \lesssim \frac{1}{n} \max_{i \neq j} \Var\left(T_i^{\T} T_j \mid  C_1 = 1, C_* \right) \\ \nonumber
        & \lesssim \frac{1}{n} \max_{i \neq j} \{ \Var\left((\Gamma_{C_i} Z_i)^{\T} (\Gamma_{C_j} Z_j) \mid  C_1 = 1, C_* \right) + \Var\left((\Gamma_{C_i} Z_i)^{\T} \gamma_{C_j} \mid  C_1 = 1, C_* \right) \\\nonumber
        & \ \ \ \ \ \  \   + \Var\left((\Gamma_{C_j} Z_j)^{\T} \gamma_{C_i} \mid  C_1 = 1, C_* \right) \} \\\nonumber
        & = \frac{1}{n} \max_{i \neq j} \{ \EE\left([(\Gamma_{C_i} Z_i)^{\T} (\Gamma_{C_j} Z_j)]^2 \mid  C_1 = 1, C_* \right) + \gamma_{C_j}^{\T} \Sigma_{C_i} \gamma_{C_j} + \gamma_{C_i}^{\T} \Sigma_{C_j} \gamma_{C_i} \} \\\nonumber
        & = \frac{1}{n} \max_{i \neq j} \{ \sum_{q, r} \sigma^{(C_i)}_{qr} \sigma^{(C_j)}_{qr} + \gamma_{C_j}^{\T} \Sigma_{C_i} \gamma_{C_j} + \gamma_{C_i}^{\T} \Sigma_{C_j} \gamma_{C_i} \} \\\nonumber
        & = \frac{1}{n} \max_{i \neq j} \{ \tr(\Sigma_{C_i} \Sigma_{C_j}) + \gamma_{C_j}^{\T} \Sigma_{C_i} \gamma_{C_j} + \gamma_{C_i}^{\T} \Sigma_{C_j} \gamma_{C_i} \} \\
        & \leq {1 \over n}\left(\max_{k,l \in [K]} (\mu_k - \mu_l)^{\T} \Sigma_k (\mu_k - \mu_l) + \tr(\oSigma^2)\right), 
\end{align} 
where independence of the samples is invoked to reduce the $\cO(n^4)$ covariance terms to $\cO(n^3)$ non-zero summands. By  similar arguments, we find that the fourth term in \eqref{display_var_radii} is
\begin{align}\label{var_four}\nonumber
        &\frac{1}{n^2} \Var\left(T_1^{\T} \sum_{i = 1}^n T_i \mid  C_1 = 1, C_*\right)\\\nonumber
        & \le \frac{2}{n^2} \Var\left(T_1^{\T} T_1 \mid C_1 = 1\right) + \frac{2}{n^2} \Var\left(\sum_{j \neq 1}^n T_1^{\T} T_j \mid C_1 = 1, C_*\right)  \\ \nonumber
        &\lesssim \max_{k,l \in [K]}  { (\mu_k - \mu_l)^{\T} \Sigma_k (\mu_k - \mu_l) + \tr(\oSigma^2) \over n} + \frac{1}{n^2} \sum_{i \neq j \neq 1}^n \Cov\left(T_1^{\T} T_i, T_1^{\T} T_j \mid C_1 = 1, C_*\right) \\ \nonumber
        & = \max_{k,l \in [K]}  { (\mu_k - \mu_l)^{\T} \Sigma_k (\mu_k - \mu_l) + \tr(\oSigma^2) \over n} + \frac{1}{n^2} \sum_{i \neq j \neq 1}^n (\mu_{C_i} - \mu_1)^{\T} \Sigma_1 (\mu_{C_j} - \mu_1) \\
        & \lesssim \max_{k,l \in [K]}  {(\mu_k - \mu_l)^{\T} \Sigma_k (\mu_k - \mu_l) + \tr(\oSigma^2) \over n} + \max_{k \in [K]} (\mu_k - \mu_1)^{\T} \Sigma_1 (\mu_k - \mu_1),
\end{align}
where the final inequality is due to the Cauchy-Schwartz inequality. Combining \eqref{var_one}, \eqref{var_two}, \eqref{var_three}, and \eqref{var_four} yields 
\begin{align*}
     \Var(R_i^2 \mid  C_i = 1, C_*) &
     ~ \lesssim~   \tr(\Sigma^2_1) + \max_{k \in [K]} (\mu_k - \mu_1)^{\T} \Sigma_1 (\mu_k - \mu_1) \\ & \ \ \ \ \ \ + \max_{k,\ell \in [K]}  {1\over n}\left[(\mu_k - \mu_\ell)^{\T} \Sigma_k (\mu_k - \mu_\ell) + \tr(\oSigma^2) \right],
\end{align*}
thereby completing the proof.
\end{proof}

\medskip

 The following lemma establishes upper bounds of the quadratic forms of $|X^\T X - \EE[X^\T X]|$ and $|X^\T Y|$ where $X = \Sigma_X^{1/2}\wt X$ and $Y = \Sigma_Y^{1/2}\wt Y$ are independent random vectors with $\wt X$ and $\wt Y$ being $\gamma$-sub-Gaussian. It is proved in \citet[Lemma 9]{royer2017adaptive}.
	
\begin{lemma}\label{lem_quad_dev}
    Let $X = \Sigma_X^{1/2}\wt X$ and $Y = \Sigma_Y^{1/2}\wt Y$ be independent random vectors such that $\wt X$ and $\wt Y$ are $\gamma$-sub-Gaussian. 
    There exists some constant $c > 0$ that depends on $\gamma$ only such that for all $t\ge 0$,
    \begin{align*}
        &\PP\left\{
        |X^\T X- \EE[X^\T X]| \ge  \|\Sigma_X\|_F \sqrt{t} +  \|\Sigma_X\|_\op t
        \right\} \le 2e^{-c t};\\
        &\PP\left\{
        2|X^\T Y| \ge   \sqrt{2\tr(\Sigma_X\Sigma_Y) ~ t} +  \|\Sigma_X^{1/2}\Sigma_Y^{1/2}\|_\op ~ t
        \right\} \le 2e^{-c t}.
    \end{align*} 
\end{lemma}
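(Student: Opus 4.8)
<br>

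The statement to prove is \cref{lem_quad_dev}, the two-sided deviation bounds for the quadratic form $X^\T X$ and the bilinear form $X^\T Y$ when $X = \Sigma_X^{1/2}\wt X$, $Y = \Sigma_Y^{1/2}\wt Y$ with $\wt X, \wt Y$ independent $\gamma$-sub-Gaussian vectors. The plan is to reduce both statements to a single Hanson–Wright-type concentration inequality for quadratic forms in sub-Gaussian vectors, since the lemma is explicitly attributed to \citet[Lemma 9]{royer2017adaptive} and can be recovered from standard tools.

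\textbf{First statement.} I would write $X^\T X = \wt X^\T \Sigma_X \wt X$, so that $X^\T X - \EE[X^\T X] = \wt X^\T \Sigma_X \wt X - \tr(\Sigma_X)$ (using $\EE[\wt X \wt X^\T] = \bI$, which is implicit in the isotropy normalization of the sub-Gaussian vector). Then I would invoke the Hanson–Wright inequality (see, e.g., \cite{RudelsonVershynin}): there is an absolute constant $c_0>0$ depending only on $\gamma$ such that for all $u \ge 0$,
\[
    \PP\left\{ \left| \wt X^\T \Sigma_X \wt X - \tr(\Sigma_X) \right| \ge u \right\} \le 2\exp\left( - c_0 \min\left\{ \frac{u^2}{\|\Sigma_X\|_F^2}, \frac{u}{\|\Sigma_X\|_\op} \right\} \right).
\]
Setting $u = \|\Sigma_X\|_F \sqrt{t} + \|\Sigma_X\|_\op t$ and checking that $\min\{u^2/\|\Sigma_X\|_F^2, u/\|\Sigma_X\|_\op\} \gtrsim t$ — which follows because if the first term of $u$ dominates then $u^2/\|\Sigma_X\|_F^2 \ge t$, and if the second dominates then $u/\|\Sigma_X\|_\op \ge t$ — yields the claimed bound with $c = c_0/C$ for a suitable absolute constant $C$, absorbing constants into $c$.

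\textbf{Second statement.} For the bilinear form, I would use the standard decoupling trick: stack $(\wt X, \wt Y)$ into a single $\gamma$-sub-Gaussian vector in the product space and write $2 X^\T Y = 2\wt X^\T \Sigma_X^{1/2}\Sigma_Y^{1/2}\wt Y$ as a quadratic form $Z^\T A Z$ with $Z = (\wt X^\T, \wt Y^\T)^\T$ and $A$ the block-off-diagonal matrix with blocks $\Sigma_X^{1/2}\Sigma_Y^{1/2}$ and its transpose (zero on the diagonal blocks, so $\EE[Z^\T A Z] = 0$). By independence of $\wt X$ and $\wt Y$, $Z$ is sub-Gaussian with constant depending only on $\gamma$, and $\|A\|_F^2 = 2\|\Sigma_X^{1/2}\Sigma_Y^{1/2}\|_F^2 = 2\tr(\Sigma_X\Sigma_Y)$ while $\|A\|_\op = \|\Sigma_X^{1/2}\Sigma_Y^{1/2}\|_\op$. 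Applying Hanson–Wright again and setting $u = \sqrt{2\tr(\Sigma_X\Sigma_Y)\, t} + \|\Sigma_X^{1/2}\Sigma_Y^{1/2}\|_\op\, t$, the same elementary case split gives $\min\{u^2/\|A\|_F^2, u/\|A\|_\op\} \gtrsim t$, which produces the stated bound.

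\textbf{Main obstacle.} The only genuine subtlety is bookkeeping of the sub-Gaussian normalization: the isotropy assumption $\EE[\wt X \wt X^\T] = \bI$ must be made explicit (it is implied by the factorizations $X = \Sigma_X^{1/2}\wt X$, $Y = \Sigma_Y^{1/2}\wt Y$ being the intended covariance decompositions), and one must confirm that the Hanson–Wright constant depends only on the sub-Gaussian parameter $\gamma$, not on dimension. Beyond that, since the result is quoted verbatim from \citet[Lemma 9]{royer2017adaptive}, the cleanest route is simply to cite that lemma; the sketch above records how it follows from Hanson–Wright in case a self-contained argument is desired. I would keep the write-up to a single short paragraph invoking the reference, since nothing new is required.
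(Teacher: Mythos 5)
Your proposal matches the paper exactly: the paper gives no proof of \cref{lem_quad_dev} and simply cites \citet[Lemma 9]{royer2017adaptive}, which is also the route you ultimately recommend. Your supplementary Hanson--Wright sketch is sound (including the norm computations $\|A\|_F^2 = 2\tr(\Sigma_X\Sigma_Y)$ and $\|A\|_\op = \|\Sigma_X^{1/2}\Sigma_Y^{1/2}\|_\op$ for the block-off-diagonal decoupling, and the observation that $u = \|\Sigma_X\|_F\sqrt{t} + \|\Sigma_X\|_\op t$ makes both arguments of the min at least $t$ simultaneously, so no case split is even needed); the one point to be careful about if you write it out is that the classical Rudelson--Vershynin form of Hanson--Wright assumes independent sub-Gaussian coordinates, whereas the lemma is stated for general $\gamma$-sub-Gaussian vectors, so you would either need to add that hypothesis or use a version of Hanson--Wright valid for sub-Gaussian vectors without independent entries.
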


\bigskip 

The following lemma provides concentration inequalities of the squared radii with exponential tails under \cref{model_subG}.

\begin{lemma}\label{lem_radii_concentration}
		Under \cref{model_subG}, for any $i\in [n]$ and $k\in [K]$, by conditioning on $(C_i = k, C_*)$, the following holds  with probability at least $1-5n^{-3}:$
		\begin{align*}
				\left| R_i^2 -  M_{ik}^2\right| 
				&\lesssim   \sqrt{\tr(\Sigma_k^2)\log n} +  \|\Sigma_k\|_\op \log n+ \|\mu_k - \bar \mu\|_2\sqrt{\|\Sigma_k\|_\op  \log n}  \\
				&\quad +  {1\over \sqrt n}\left(\sqrt{\tr(\oSigma^2) \log n}  + {\|\oSigma\|_\op \log n} + \|\mu_k - \bar \mu\|_2\sqrt{\|\oSigma\|_\op \log n }\right).
		\end{align*} 
	Furthermore, if $\rho_1(\uSigma^2) \ge \log n$, the preceding bound simplifies to
    \begin{align*}
				\left| R_i^2 -  M_{ik}^2\right| 
				&\lesssim   \sqrt{\tr(\Sigma_k^2)\log n} +   \|\mu_k - \bar \mu\|_2\sqrt{\|\Sigma_k\|_\op  \log n}  \\
				&\quad +  {1\over \sqrt n}\left(\sqrt{\tr(\oSigma^2) \log n}  + \|\mu_k - \bar \mu\|_2\sqrt{\|\oSigma\|_\op \log n }\right). 
		\end{align*} 
	\end{lemma}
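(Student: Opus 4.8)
\textbf{Proof plan for \cref{lem_radii_concentration}.}

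The plan is to center the data at $\mu_1$ (or more precisely, for the generic $i$-th sample with $C_i = k$, to work with the shifted vectors $T_j := X_j - \bar\mu$ — or whatever reference point keeps the algebra cleanest) and expand $R_i^2 = \|X_i - \oX\|_2^2$ into a sum of quadratic and bilinear forms in independent sub-Gaussian vectors, exactly as in the conditional-variance computation of \cref{lem_mean_var}. Writing $X_i = \mu_k + \Sigma_k^{1/2} Z_i$ conditionally on $C_i = k$, we have $X_i - \oX = (\mu_k - \bar\mu) + \Sigma_k^{1/2} Z_i - \tfrac1n \sum_{j} \Sigma_{C_j}^{1/2} Z_j + (\bar\mu - \oX')$ where the last piece collects the deterministic-vs-random mean discrepancy; the upshot is a decomposition of $R_i^2 - M_{ik}^2$ into: (i) the centered quadratic form $\|\Sigma_k^{1/2} Z_i\|_2^2 - \tr(\Sigma_k)$; (ii) a linear term $(\mu_k - \bar\mu)^\T \Sigma_k^{1/2} Z_i$; (iii) the $n^{-1}$-scaled quadratic form $\tfrac1{n^2}\|\sum_j \Sigma_{C_j}^{1/2} Z_j\|_2^2$ minus its mean; (iv) the $n^{-1}$-scaled bilinear cross term $\tfrac1n (\Sigma_k^{1/2} Z_i)^\T \sum_{j\ne i} \Sigma_{C_j}^{1/2} Z_j$; and (v) the $n^{-1}$-scaled linear term $\tfrac1n (\mu_k - \bar\mu)^\T \sum_j \Sigma_{C_j}^{1/2} Z_j$, plus lower-order remainders of the same type.

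The next step is to bound each of these pieces with high probability using \cref{lem_quad_dev}. For the quadratic forms (i) and (iii), \cref{lem_quad_dev} with $t \asymp \log n$ gives deviations of order $\|\Sigma_k\|_F \sqrt{\log n} + \|\Sigma_k\|_\op \log n = \sqrt{\tr(\Sigma_k^2)\log n} + \|\Sigma_k\|_\op \log n$, with the $n^{-1}$-scaled version (iii) controlled by observing $\tfrac1n \sum_j \Sigma_{C_j}^{1/2} Z_j$ is sub-Gaussian with covariance $\preceq \tfrac1n \oSigma$ (after conditioning on $C_*$, using $n_\ell/n \le 1$), yielding a contribution of order $n^{-1}(\sqrt{\tr(\oSigma^2)\log n} + \|\oSigma\|_\op \log n)$. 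For the bilinear term (iv), the second inequality of \cref{lem_quad_dev} applied to $X = \Sigma_k^{1/2} Z_i$ and $Y = \tfrac1n \sum_{j\ne i}\Sigma_{C_j}^{1/2} Z_j$ (independent given $C_*$) gives a bound of order $n^{-1}(\sqrt{\tr(\Sigma_k \oSigma)\log n} + \|\Sigma_k^{1/2}\oSigma^{1/2}\|_\op \log n)$, which is absorbed into the stated bound via Cauchy–Schwarz $\tr(\Sigma_k\oSigma) \le \sqrt{\tr(\Sigma_k^2)\tr(\oSigma^2)}$ and the AM–GM splitting $n^{-1}\sqrt{\tr(\Sigma_k^2)\tr(\oSigma^2)\log n} \le \tfrac12 \sqrt{\tr(\Sigma_k^2)\log n} + \tfrac12 n^{-2}\sqrt{\tr(\oSigma^2)\log n}\cdot(\cdots)$, being careful that the resulting cross terms are of an order already present. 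For the linear terms (ii) and (v), a standard Gaussian/sub-Gaussian tail bound (which follows from, e.g., the bilinear inequality of \cref{lem_quad_dev} specialized to a rank-one covariance, or directly) gives $|(\mu_k-\bar\mu)^\T \Sigma_k^{1/2} Z_i| \lesssim \|\mu_k - \bar\mu\|_2 \sqrt{\|\Sigma_k\|_\op \log n}$ with probability $1 - n^{-3}$, and analogously for (v) with $\oSigma$ and an extra $n^{-1/2}$ (since the covariance of $\tfrac1{\sqrt n}\sum_j \Sigma_{C_j}^{1/2}Z_j$ is $\preceq \oSigma$). A union bound over the $O(1)$ pieces (there are $K$ choices of $k$ but $K$ is constant; and within a class the $i$'s are handled by the $n^{-3}$ tail times at most $n$ samples) yields the probability $1 - 5n^{-3}$, and then a further union bound over $k$ and $i$ gives the $1 - 5K/n^2$ statement used downstream.

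Finally, the simplified bound under $\rho_1(\uSigma^2) \ge \log n$ follows by noting that this condition means $\tr(\Sigma_k^2) \ge \|\Sigma_k^2\|_\op \log n = \|\Sigma_k\|_\op^2 \log n$ for every $k$ (since $\rho_1(\uSigma^2) = \min_k \tr(\Sigma_k^2)/\|\Sigma_k\|_\op^2$), hence $\|\Sigma_k\|_\op \log n = \|\Sigma_k\|_\op \sqrt{\log n}\cdot\sqrt{\log n} \le \sqrt{\tr(\Sigma_k^2)\log n}$, so the $\|\Sigma_k\|_\op \log n$ term is dominated by $\sqrt{\tr(\Sigma_k^2)\log n}$; likewise $\|\oSigma\|_\op \log n \le \sqrt{\tr(\oSigma^2)\log n}$, eliminating the operator-norm terms. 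I expect the main obstacle to be the bookkeeping for the bilinear cross term (iv): making sure the mixed quantity $n^{-1}\sqrt{\tr(\Sigma_k\oSigma)\log n}$ and the operator-norm piece $n^{-1}\|\Sigma_k^{1/2}\oSigma^{1/2}\|_\op\log n$ are correctly majorized by the terms displayed in the lemma without introducing any genuinely new term — this requires careful use of Cauchy–Schwarz, AM–GM, and (for the cleanup under $\rho_1(\uSigma^2)\ge\log n$) the effective-rank inequality — rather than any deep probabilistic input, since all the heavy lifting is done by the off-the-shelf concentration bound \cref{lem_quad_dev}.
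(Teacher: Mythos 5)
Your plan follows essentially the same route as the paper: the same three-way expansion of $R_i^2$ into the centered quadratic form in $Z_i$, the $\|\bar X-\bar\mu\|_2^2$ term, and the cross term, each controlled by \cref{lem_quad_dev} with $t\asymp\log n$, followed by a union bound over the five events and the same effective-rank argument for the simplified display. One quantitative slip: the bilinear cross term (iv) is of order $n^{-1/2}\bigl(\sqrt{\tr(\Sigma_k\oSigma)\log n}+\|\Sigma_k^{1/2}\oSigma^{1/2}\|_\op\log n\bigr)$, not $n^{-1}(\cdots)$, because $\Cov\bigl(\tfrac1n\sum_{j\ne i}\Sigma_{C_j}^{1/2}Z_j\bigr)\preceq\oSigma/n$ enters \cref{lem_quad_dev} through $\sqrt{\tr(\Sigma_X\Sigma_Y)}$ and $\|\Sigma_X^{1/2}\Sigma_Y^{1/2}\|_\op$, each contributing only one factor of $n^{-1/2}$; this term (together with the linear term (v)) is exactly the source of the $\tfrac{1}{\sqrt n}(\cdots)$ block in the lemma, and with the correct scaling the absorption is immediate via $\tr(\Sigma_k\oSigma)\le\tr(\oSigma^2)$, so no AM--GM splitting is needed. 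Also make sure the $j=i$ diagonal piece $\tfrac1n Z_i^\T\Sigma_k Z_i$ excluded from (iv) is accounted for (its mean sits inside $M_{ik}^2$ and its deviation is $n^{-1}$ times the already-controlled quadratic-form fluctuation), as the paper does explicitly.
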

	\begin{proof}
		Fix any $i\in [n]$ and $k\in [K]$. The whole proof conditions on $C_i = k$ and $C_*$. For simplicity, we drop the conditional notation in probabilities and expectations. Recall that
		\[
			\bar \mu =  \sum_{k=1}^K {n_k \over n}\mu_k,\quad \text{with }\quad n_k = \sum_{i=1}^n \b1\{C_i = k\}.
		\]
		By definition, we have 
		\begin{align*}
			R_i^2 = \| X_i - \bar \mu \|_2^2 + \|\bar \mu - \bar X\|_2^2 - 2 (X_i - \bar \mu)^\T (\bar X - \bar \mu).
		\end{align*}
		We proceed to analyze each term on the right hand side (RHS) separately. 
		
		For the first term, recall that conditioning on $C_i = k$, 
		\[
			 X_i - \bar \mu = \mu_k - \bar \mu + \Sigma_k^{1/2}  Z_i
		\]
		so that 
		\[
			\| X_i - \bar \mu \|_2^2 = \|\mu_k - \bar \mu\|_2^2 + \|\Sigma_k^{1/2} Z_i\|_2^2 + 2 (\mu_k - \bar \mu)^\T \Sigma_k^{1/2}  Z_i.
		\]
		Since $Z_i$ is $\gamma$-sub-Gaussian, we know that 
		\begin{equation}\label{bd_mu_Gamma_Z}
			\PP\left\{
				\left| (\mu_k - \bar \mu)^\T \Sigma_k^{1/2}  Z_i \right| \ge  t\sqrt{(\mu_k - \bar \mu)^\T \Sigma_k (\mu_k - \bar \mu)}
			\right\} \le 2e^{-\gamma^2 t^2/2},\quad \forall ~ t\ge 0.
		\end{equation}
		Moreover, invoking \cref{lem_quad_dev} with $X = \Sigma_k^{1/2} Z_i$ and $\|\Sigma_k\|_F^2 = \tr(\Sigma_k^2)$ gives
		\begin{equation}\label{bd_quad_Gamma_Z}
			\PP\left\{
			\left|  \|\Sigma_k^{1/2}  Z_i\|_2^2 -\EE[\|\Sigma_k^{1/2}  Z_i\|_2^2 ]  \right|  \ge  \sqrt{\tr(\Sigma_k^2)}~ t +  \|\Sigma_k\|_\op~  t^2
			\right\} \le 2e^{-c t^2}, \quad \forall ~ t\ge 0.
		\end{equation}
		By choosing $t = C\sqrt{\log n}$ for some large $C\ge 1$ and noting that 
		$
			\EE[\|\Sigma_k^{1/2}  Z_i\|_2^2 = \tr(\Sigma_k),
		$
		we obtain that with probability at least $1-n^{-3}$,
		\begin{align}\label{bd_Xi_centered}\nonumber
			&\left|\| X_i - \bar \mu \|_2^2  - \|\mu_k - \bar \mu\|_2^2 -  \tr(\Sigma_k)
				\right| \\&\quad \lesssim  \sqrt{\tr(\Sigma_k^2)\log n} +  \|\Sigma_k\|_\op \log n+ \sqrt{(\mu_k - \bar \mu)^\T \Sigma_k (\mu_k - \bar \mu) \log n}.
		\end{align}
		
		Regarding the term $\|\bar \mu - \bar X\|_2^2$, we first note that, conditioning on $C_*$,
		\begin{align}\label{eq_decomp_bar_X}
			\bar X = {1\over n}\sum_{i=1}^n X_i  =  {1\over n}\sum_{k=1}^K \sum_{i:C_i = k} \left(
			\mu_k + \Sigma_k^{1/2}  Z_i
			\right)  = \bar \mu +  \sum_{k=1}^K {n_k \over n} \Sigma_k^{1/2}  \bar Z_k,
		\end{align} 
		where we denote $\bar Z_k := n_k^{-1}\sum_{i:C_i = k}Z_i$.  Since $\bar Z_k$ is $(\gamma/\sqrt{n_k})$-sub-Gaussian, we find that for all $v\in \RR^d$,   
		\begin{align*}
			\EE\left[
			\exp\left(
			\sum_{k=1}^K {n_k \over n} v^\T  \Sigma_k^{1/2}  \bar Z_k
			\right)
			\right] &=  \prod_{k=1}^K \EE\left[
			\exp\left(
			{n_k \over n} v^\T  \Sigma_k^{1/2}  \bar Z_k
			\right)
			\right]\\
			&\le \prod_{k=1}^K 
			\exp\left(
			 {n_k^2 \over n^2} v^\T  \Sigma_k v  {\gamma^2\over n_k}
			\right) \\
			&= \exp\left(
			 \gamma^2  v^\T \left({1\over n} \sum_{k=1}^K {n_k \over n}  \Sigma_k \right) v 
			\right).
		\end{align*} 
		By writing 
		$$\Xi :=\sum_{k=1}^K {n_k \over n}  \Sigma_k,
		$$
		we can deduce that $\bar X - \bar \mu \overset{d}{=} \Xi^{1/2} Y/\sqrt{n}$ for some centered, isotropic $\gamma$-sub-Gaussian random vector $Y \in \RR^d$. Since
		\[
			\EE\left[
			\|\bar X - \bar \mu \|_2^2 
			\right] = \sum_{k=1}^K {n_k^2 \over n^2} 	\EE\left[ \|\Sigma_k^{1/2}  \bar Z_k\|_2 \right] = {\tr(\Xi) \over n},
		\]
		 invoking \cref{lem_quad_dev} with $\Sigma_X = \Xi/n$ and $t = C\sqrt{\log n}$ gives 
		\begin{equation}\label{bd_quad_barX}
			\PP\left\{
			\left|  \|\bar X - \bar \mu\|_2^2 -{ \tr(\Xi) \over n}  \right|  \lesssim   { \sqrt{\tr(\Xi^2)\log n} \over n}+  {\|\Xi\|_\op   \log n\over n}
			\right\} \ge 1 - n^{-3}.
		\end{equation}
	
		Finally, conditioning on $(C_i  = k, C_*)$, we analyze the cross-term
		$$
			 (X_i - \bar \mu)^\T (\bar X - \bar \mu) = Z_i^\T \Sigma_k^{1/2} (\bar X -\bar \mu) + (\mu_k - \bar \mu)^\T (\bar X -\bar \mu).
		$$
		By using the sub-Gaussianity of $(\bar X - \bar \mu)$, we have 
		\begin{equation}\label{bd_tail_crossterm_1}
			\PP\left\{
			 \left|(\mu_k - \bar \mu)^\T (\bar X -\bar \mu) \right| \ge t\sqrt{ (\mu_k - \bar \mu)^\T \Xi  (\mu_k - \bar \mu) \over n}
			\right\} \le 2e^{-\gamma^2 t^2/2},\quad \forall t\ge 0.
		\end{equation}
		Moreover, by  
		decomposing  
		\begin{align*}
			Z_i^\T \Sigma_k^{1/2} (\bar X -\bar \mu)   &= {1\over n}\sum_{k' =1}^K \sum_{j:C_j = k' } Z_i^\T \Sigma_k^{1/2}\Sigma_{k'}^{1/2}  Z_j  &&\text{by \eqref{eq_decomp_bar_X}}\\
			&=  {1\over n} Z_i^\T \Sigma_k Z_i +Z_i^\T \Sigma_k^{1/2} \left(\sum_{k' =1}^K {n'_k\over n}\Sigma_{k'}^{1/2} \bar Z_{k'} - {1\over n}\Sigma_k^{1/2}Z_i
			\right),
		\end{align*} 
		the first term one the RHS can be bounded by \eqref{bd_quad_Gamma_Z}. To control the second 
		term, we notice that $\Sigma_k^{1/2}Z_i$ is independent of the term within the parenthesis. Moreover,  it is easy to verify $$\sum_{k' =1}^K {n'_k\over n}\Sigma_{k'}^{1/2} \bar Z_{k'} - {1\over n}\Sigma_k^{1/2}Z_i ~ \overset{d}{=} ~ \Xi_{(-i)}^{1/2} {Y \over \sqrt n}$$ for some centered, isotropic sub-Gaussian random vector $Y\in \RR^p$ with sub-Gaussian constant $\gamma$, and for 
		\[
			\Xi_{(-i)}  =  {1\over n} \left[ 
			(n_k - 1)\Sigma_k + \sum_{\ell \ne k} {n_\ell }\Sigma_{\ell}
			\right].
		\]
		Invoking \cref{lem_quad_dev} with $t =C\sqrt{\log n}$, $\Sigma_X = \Sigma_k$, and $\Sigma_Y = \Xi_{(-i)}/n$ yields that with probability at least $1-n^{-3}$,  
		\begin{align}\label{bd_cross_term_LOO} \nonumber
				2 \left| Z_i^\T \Sigma_k^{1/2} \left(\sum_{k' =1}^K {n'_k\over n}\Sigma_{k'}^{1/2} \bar Z_{k'} - {1\over n}\Sigma_k^{1/2}Z_i
				\right)\right|  &\lesssim  {\sqrt{2\tr(\Sigma_k \Xi_{(-i)}) \log n \over n}} +  {\|\Sigma_k^{1/2}\Xi_{(-i)}^{1/2}\|_\op \log n\over \sqrt n} \\
				&\lesssim {\sqrt{\tr(\oSigma^2) \log n\over n} } + {\|\oSigma\|_\op \log n \over \sqrt n}.
		\end{align}
	We then conclude that with probability at least $1-3n^{-3}$ 
	\begin{align}\label{bd_cross_term}\nonumber
		 & \left| 2(X_i - \bar \mu)^\T (\bar X - \bar \mu) - {2\over n}\tr(\Sigma_k) \right|  \\\nonumber
		 &\quad \lesssim     \sqrt{\tr(\oSigma^2) \log n \over n}  + {\|\oSigma\|_\op \log n \over \sqrt n} + \sqrt{(\mu_k - \bar \mu)^\T \Xi  (\mu_k - \bar \mu) \log n \over n}\\
		 &\quad \lesssim     \sqrt{\tr(\oSigma^2) \log n \over n}  + {\|\oSigma\|_\op \log n \over \sqrt n} + \|\mu_k - \bar \mu\|_2\sqrt{ \|\oSigma\|_\op \log n \over n}
	\end{align}
	holds. The proof is complete in consideration of  \eqref{bd_Xi_centered}, \eqref{bd_quad_barX}, and \eqref{bd_cross_term}, in conjunction  with \cref{lem_mean_var}.
\end{proof}

\subsection{Proof of \cref{thm_ellip}: Consistency for Elliptical Alternatives}\label{app_elliptical_proof}

\begin{proof}
We prove \cref{thm_ellip} by establishing that $T\to \i$ in probability.
This is accomplished by demonstrating
\begin{align}\label{elliptical_contrastProperty}
    \Delta^{-1/2} \left( R_{(n)} - R_{(1)} \right) =  \omega_{\mathbb{P}}\left(\sqrt{\log n} \right),
\end{align}
and invoking the ratio-consistency of $\wh\Delta$ for $\Delta$ 
as established in \cref{prop_Delta_Alternatives}. Under \cref{elliptical_stochastic_rep},  it is easy to verify that
$$
    \Sigma = \Cov(X) = \EE[\eps^2] ~  \Sigma_*,
$$
so  that 
\begin{equation}\label{eq_Delta2_ellip}
    \Delta \equiv \Delta(\Sigma) := {2 \tr(\Sigma^2) \over \tr(\Sigma)} = \frac{2 \EE[\eps^2] ~ \tr(\Sigma^2_*)}{\tr(\Sigma_*)}.
\end{equation}
Note that by the invariance properties of the proposed test statistics relative to location shift and orthogonal transformation as well as the rotational invariance of standard Gaussian random vectors, we can without loss of generality consider
\begin{equation}\label{Elliptical Conditional Representation}
    X_i ~ \overset{\rm d}{=} ~ \eps_i~  \Lambda^{1/2}  Z_i
\end{equation}
where $Z_i$ for $i \in [n]$ are i.i.d. from $\cN_d(0_d, \bI_d)$  and $\Lambda$ is the diagonal matrix of non-increasing eigenvalues  of $\Sigma_*$.
Let $\eps_* := (\eps_1,\ldots,\eps_n)^{\T}$ and denote its order statistics by $\eps_{(n)} \ge \cdots \ge \eps_{(1)}$. We observe that for each $i \in [n]$,
\begin{equation}\label{distr_Ri_ellip}
    X_i - \oX \mid \eps_* \sim  \cN_d\Bigl(0, \ \nu_i~  \Lambda \Bigr)
\end{equation}
with 
\begin{equation}\label{def_nu}
    \nu_i \equiv \nu_i(\eps_*) :=  {(n - 1)^2\over n^2} \eps^2_i + {1\over n^2}\sum_{j \neq i}^n \eps^2_j.
\end{equation}
It then follows from \eqref{distr_Ri_ellip} that for all $i\in [n]$,
\begin{align}\label{moments_ellip}
    \EE(R^2_i \mid \eps_*) = \nu_i ~ \tr(\Sigma_*).
\end{align}
Invoking \cref{lem_quad_dev} with $\Sigma_X = \nu_i \Lambda$ and $\|\Sigma_X\|_F^2 = \nu_i^2\tr(\Sigma_*^2)$ gives that for every $t > 0$, 
\begin{align*}
     \PP \left(\left| R^2_i   - \nu_i \tr(\Sigma_*) \right|   \geq  \nu_i \sqrt{\tr(\Sigma_*^2) ~ t} + \nu_i  \|\Sigma_*\|_\op~  t ~ \mid \eps_*  \right) &  \leq  2e^{-ct}.
\end{align*}
By taking the union bound over $i\in [n]$, choosing $t = C \log n$, and invoking the dominated convergence theorem, we conclude that 
the event 
\begin{align*}
    \cE' = \bigcap_{i=1}^n \left\{
        \left|R_i^2 - \nu_i \tr(\Sigma_*)\right| \right.&\le     \nu_i \sqrt{\tr(\Sigma_*^2)}\sqrt{C\log n} + \nu_i  \|\Sigma_*\|_\op C\log n\\
        &\left.\le C' \nu_i \sqrt{\tr(\Sigma_*^2)}\sqrt{\log n}
    \right\} &&\text{by }\rho_1(\Sigma_*^2)\ge \log n
\end{align*}
holds with probability tending to one, as $n\to \i$. Thus, we work under the event $\cE'$ in the following to bound $(R_{(n)} - R_{(1)})$ from below. We begin by noting that 
\begin{align*}
    &R_{(n)} - R_{(1)}\\ 
    & \ge \max_{i,j\in [n]} \left[
        \left(\sqrt{\nu_i} - \sqrt{\nu_j}\right)\sqrt{\tr(\Sigma_*)} - {|R_i^2 - \nu_i\tr(\Sigma_*)| \over \sqrt{\nu_i \tr(\Sigma_*)}} - {|R_j^2 - \nu_j\tr(\Sigma_*)| \over \sqrt{\nu_j\tr(\Sigma_*)}}
    \right]\\
    &\ge \max_{i,j\in [n]} \left[
        {\nu_i - \nu_j \over \sqrt{\nu_i}+ \sqrt{\nu_j}}\sqrt{\tr(\Sigma_*)} - C'\left(\sqrt{\nu_i}+\sqrt{\nu_j}\right){\sqrt{\tr(\Sigma_*^2) \log n}    \over  \sqrt{\tr(\Sigma_*)}}\right].
\end{align*}
Since, for any $i,j\in[n]$, \eqref{def_nu} entails
\begin{align*}
    \nu_i - \nu_j &= {(n-1)^2\over n^2}(\eps_i^2-\eps_j^2) + {1\over n^2}\left(\eps_j^2 - \eps_i^2\right) = {n-2 \over n}(\eps_i-\eps_j)(\eps_i+\eps_j),
\end{align*}
and
\[
    \sqrt{\nu_i} \le \eps_i + \sqrt{{1\over n^2} \sum_{\ell =1 } \eps_{\ell}^2} \le  \eps_{(n)}\left( 1 + n^{-1/2}\right),
\]
we further conclude that, with probability tending to one,
\begin{align*}
    R_{(n)} - R_{(1)} &\gtrsim \left(\eps_{(n)} - \eps_{(1)}\right) \sqrt{\tr(\Sigma_*)} - \eps_{(n)} \sqrt{\tr(\Sigma_*^2)\log n \over \tr(\Sigma_*)}  \\
    &= \sqrt{\tr(\Sigma_*)}\left(\eps_{(n)} - \eps_{(1)} - \eps_{(n)} \sqrt{\log n \over \rho_2(\Sigma_*)} \right).
\end{align*}
By invoking \eqref{cond_W_gap} and \eqref{eq_Delta2_ellip}, the following holds with probability tending to one:
\begin{align*}
     \Delta^{-1/2}\left(R_{(n)} - R_{(1)}\right) & 
    \gtrsim \sqrt{\rho_2(\Sigma_*)\over \EE [\eps^2]}\left(\eps_{(n)} - \eps_{(1)}\right) \\ & \ge {\sqrt{\rho_2(\Sigma_*)} \over  \eps_{(n)}}\left(\eps_{(n)} - \eps_{(1)}\right)   \\ & = \omega\left(\sqrt{\log n} \right),
\end{align*}
where the second inequality uses the fact that 
\[
        \eps_{(n)}^2 \ge {1\over n}\sum_{i=1}^n \eps_i^2 \to \EE[\eps^2],\quad \text{almost surely, as }n\to \i.
\]
This establishes \eqref{elliptical_contrastProperty}, thereby completing the proof.
\end{proof}

\subsection{Proof of \cref{thm_kurtosis}: Consistency for Leptokurtic Alternatives}\label{app_kurtosis_proof}

	\begin{proof}
        The proof follows the same arguments as that of \cref{thm_range_limit} with modifications due to the excess kurtosis. 
        For future reference, we note that, as established in \cref{lem_xi_kurtosis},
        \begin{align}
             \Var\left((Z_{11} - \bar Z_1)^2\right) & = (2+\delta_n) \left({n - 1\over n} \right)^2 + \cO\left({1 \over n} \right) \\ & =: (\kappa_n - 1) \left({n - 1\over n} \right)^2 + \cO\left({1 \over n} \right) \\ & =:  \left({n - 1\over n} \right)^2 \nu_n,
        \end{align}
        where
        \[
            \kappa_n := 3 + \delta_n,\qquad \nu_n := (\kappa_n - 1) + \cO\left({1 \over n} \right).
        \]
        Further define 
        \[
            \Delta_{2, \delta_n} := \nu_n {\tr(\Sigma^2) \over \tr(\Sigma)}  
        \]
		Our proof consists of the following principal steps: 
		\begin{enumerate}
			\item  Define  and the random vector $Y = (Y_1, \ldots, Y_n)^\T$ via 
            \begin{align*}
                Y_i & := {1\over \sqrt{\Var\left((Z_{11} - \bar Z_1)^2\right) \tr(\Sigma^2)}}\left(R_i^2 - {n - 1 \over n}\tr(\Sigma) \right) \\ & = {1\over \sqrt{\left(\kappa_n - 1 + \cO(n^{-1}) \right) \tr(\Sigma^2)}}\left(  {n\over n-1}R_i^2 - \tr(\Sigma) \right) \\ & =: {1\over \sqrt{\nu_n \tr(\Sigma^2)}}\left(  {n\over n-1}R_i^2 - \tr(\Sigma) \right) \qquad \forall\ i\in [n].
            \end{align*} 
			We first establish the limiting distributions of $a_n (Y_{(n)} - b_n)$ and $a_n (Y_{(1)} + b_n)$, and bound
			\[
					\sup_{t\in \RR} \left| \PP\left(
				Y_{(n)} - Y_{(1)} \le t
				\right) - \PP\left(
				V_{(n)} - V_{(1)}\le t
				\right) \right|
			\]
            from above, where $V = (V_1,\ldots,V_n)^\T \sim \cN_n(0_n, \text{C}_n)$ is an exchangeable  random vector with $(\text{C}_n)_{ii} = 1$, for all $i \in [n]$, and $$
            (\text{C}_n)_{ii'} = {2 (n - 2) (\kappa_n - 3) \over n^3    \Var\left((Z_{11} - \bar Z_1)^2\right)} =  {2 (n - 2) (\kappa_n - 3) \over n^3    (\kappa_n - 1) ({n - 1\over n})^2 + \cO\left({1 \over n} \right)},\qquad \text{for all $i \neq i'$.}
            $$ 
			
			\item Secondly, we establish the ratio-consistency of $R_{(q)}$ for $\sqrt{\tr(\Sigma)}$, for each $q \in \{1, n\}$, as in \eqref{eq_radii_ratio_consistency}.
			
			\item Next, we use this ratio-consistency property to further bound 
			\[
					\sup_{t\in \RR} \left| \PP\left(
				\bar T_{\delta_n} \le t
				\right) - \PP\left(
			 \wt U_n \le t
				\right) \right|
			\]
			from above, where $\wt U_n := a_n (V_{(n)} - V_{(1)}) - 2a_nb_n$ and 
            $$
            \bar T_{\delta_n} := 2 a_n \Delta_{2, \delta_n}^{-1/2} \left(R_{(n)} - R_{(1)}  \right) - 2 a_n b_n.
            $$  
            From this, with $U_n$ as defined in \eqref{def_Un}, we can deduce that
            \[
            \sup_{t\in \RR} \left| \PP\left(
				\bar T_{\delta_n} \le t
				\right) - \PP\left(
			   U_n \le t
				\right) \right| \to 0 \ \ \ \ \text{and} \ \ \ \ \bar T_{\delta_n} \distrto E+E',
			\]
            using properties of the range of exchangeable Gaussian random vectors.
           
            \item Finally, invoking the ratio-consistency property of $\wh \Delta$ for $\Delta$ under \cref{kurtosis_stochastic_rep} as established by \cref{prop_Delta_Alternatives}, we establish $T \to \i$ in probability using \textbf{Step 3}.
			
		\end{enumerate}
		
		\paragraph{Proof of Step 1:} 
        
	Under \cref{kurtosis_stochastic_rep}, there exist $Z_1,\ldots, Z_n \in \RR^d$ which are \text{i.i.d.} realizations of an isotropic random vector $Z \in \RR^d$ with independent sub-Gaussian coordinates such that
		\begin{align}\label{def_xi_kurtosis}\nonumber
			Y_i  &=  {1\over \sqrt{\nu_n \tr(\Sigma^2)}}\left( {n-1 \over n} \| \Lambda^{1/2}(Z_i- \oZ) \|^2 -  \tr(\Sigma) \right)\\\nonumber
			& = \sum_{j=1}^d  {\lambda_j  \over \sqrt{\nu_n \tr(\Sigma^2)}}  \left( {n \over n-1}(Z_{ij} - \oZ_j)^2 - 1\right)\\
			& =: \sum_{j=1}^d \xi_{ij},
		\end{align}
		where $\oZ_j = n^{-1}\sum_{i=1}^n Z_{ij}$. 
		In \cref{lem_xi_kurtosis}, we verify that, for any $i,i'\in [n]$ and $j\in [d]$,
		\begin{equation}\label{eq_moments_xi_kurtosis}
		\EE[\xi_{ij}]= 0, \qquad \Cov(\xi_{ij},\xi_{i'j}) = \frac{ \lambda^2_j}{\tr(\Sigma^2)} \left( 1_{\{i = i'\}} + (\text{C}_n)_{12} 1_{\{i \neq i'\}} \right).
	\end{equation}
		Moreover, observe that  $\xi_{ij}$ is independent of $\xi_{ij'}$ for any $i\in [n]$ and any $j\ne j'$.  Since
		\[
			Y_{(n)} = \max_{i \in[n]} {1\over \sqrt{ d}} \sum_{j=1}^d  \xi_{ij}\sqrt{d},
		\]
		we seek to invoke \cref{thm_CCKK} to bound 
		$
			\sup_{t\in \RR} | \PP(
				Y_{(n)} \le t
			) - \PP(
			V_{(n)} \le t
			) |$.
		Thus, we first verify the Conditions E and M in Assumptions \ref{ass_E} \& \ref{ass_M}. Since $\sqrt{n / (n-1)}(Z_{ij}-\bar Z_j)$ can be expressed as a linear combination of independent sub-Gaussian random variables, we know that $(n/(n-1)) (Z_{ij}-\bar Z_j)^2$ is sub-exponential, which implies $
			\EE \exp(
			|\xi_{ij}| \sqrt{d} / B_d
			) \le 2
		$
		holds for 
		\begin{equation}\label{def_B_d_kurtosis}
			B_d = C \sqrt{d \lambda_1^2 \over \tr(\Sigma^2)} \overset{\eqref{def_rhos}}{=} C \sqrt{d   \over \rho_1(\Sigma^2)},
		\end{equation}
		where $C>0$ is an absolute constant. Moreover, 
		by \eqref{eq_moments_xi_kurtosis}, we have 
		\[
			{1\over d}\sum_{j=1}^d \EE\left[ 
			 d 	 ~ \xi_{ij}^2 
			\right]  = \sum_{j=1}^d  {\lambda_j^2 \over \tr(\Sigma^2)}= 1 
		\]
		and, by \eqref{def_xi_kurtosis} and the fact that $(Z_{ij}-\bar Z_j)$ is sub-Gaussian,
		\begin{align*}
			{1\over d}\sum_{j=1}^d \EE\left[ 
			d^2  \xi_{ij}^4
			\right] & ~ \lesssim ~ d \sum_{j=1}^d { \lambda_j^4  \over \nu^2_n \tr^2(\Sigma^2)}\EE\left[
			\left({n\over n-1}\right)^4(Z_{ij}-\bar Z_j)^8  + 1		\right]\\
			&~ \lesssim~  {d ~ \tr(\Sigma^4) \over \tr^2(\Sigma^2)} \\
			&~ \le ~ B_d^2 {\tr(\Sigma^4) \over \lambda_1^2 \tr(\Sigma^2)} &&\text{by \eqref{def_B_d_kurtosis}}\\
			&~  \le~  B_d^2.
		\end{align*}
		Therefore, invoking \cref{thm_CCKK} with $p = n$, $N = d$,  $X_{ij} = \xi_{ij}\sqrt{d}$, $b_1 \asymp b_2 \asymp 1$, and $B_N = B_d$ as per \eqref{def_B_d_kurtosis} yields 
		\begin{equation}\label{clt_Yn_kurtosis}
			\sup_{t\in \RR} \left| \PP\left(
			Y_{(n)} \le t
			\right) - \PP\left(
			V_{(n)} \le t
			\right) \right| ~ \le~  C \left(
				\log^5(nd)  \over \rho_1(\Sigma^2)
			\right)^{1/4}.
		\end{equation}
		Regarding $Y_{(1)}$, since $Y_{(1)} = -  \max_{i \in[n]} (-Y_i)$ and the above results also apply to  $(-\xi_{ij})$, we also have 
		\begin{equation}\label{clt_Y1_kurtosis}
			\sup_{t\in \RR} \left| \PP\left(
			Y_{(1)} \le t
			\right) - \PP\left(
			V_{(1)} \le t
			\right) \right| ~ \le ~  C \left(
			\log^5(nd)  \over \rho_1(\Sigma^2)
			\right)^{1/4}.
		\end{equation}
		Furthermore,  observe that 
		\[
			Y_{(n)} - Y_{(1)} = \max_{i,j \in [n]} (Y_i - Y_j) = \max_{i\ne j \in [n]} (Y_i - Y_j) =  \max_{i\ne j \in [n]}{1\over \sqrt{d}}\sum_{t=1}^d (\xi_{it} - \xi_{jt}) \sqrt{d}.
		\]
		By repeating the same arguments above in conjunction with use of the triangle inequality, one can verify that both Conditions E and M in Assumptions \ref{ass_E} \& \ref{ass_M} are satisfied by $(\xi_{it} - \xi_{jt}) \sqrt{d}$ for any $i\ne j\in [n]$ and $t\in [d]$ for $b_1 \asymp b_2 \asymp 1$ and $B_d$ as per \eqref{def_B_d_kurtosis}, and that these variates are independent across $t \in [d]$. Thus, invoking  \cref{thm_CCKK} again yields 
		\begin{equation}\label{clt_Range_kurtosis}
					\sup_{t\in \RR} \left| \PP\left(
				Y_{(n)} - Y_{(1)} \le t
				\right) - \PP\left(
				V_{(n)} - V_{(1)}\le t
				\right) \right| ~ \le~  C \left(
				\log^5(n^2 d)  \over \rho_1(\Sigma^2)
				\right)^{1/4}.
		\end{equation}
	

		\paragraph{Proof of Step 2:} Given \eqref{clt_Range_kurtosis}, the ratio consistency in \eqref{eq_radii_ratio_consistency} follows by the arguments as that in the proof of \cref{thm_range_limit}. In particular, displays \eqref{bd_En_comp} -- \eqref{bd_E1_comp} continue to hold. 

		\paragraph{Proof of Step 3:}
		We next relate the distribution of $Y_{(n)} - Y_{(1)}$ to that of $\bar T_{\delta_n}$.  With $\zeta_n$ given by \eqref{def_zeta}, recall from \eqref{rate_zeta_n} that under the event $\cE_{(n)} \cap \cE_{(1)}$, 
		\begin{equation}\label{rate_zeta_n_kurtosis}
			\left|{1\over \zeta_n} -1 \right| =  \left|{n \over n-1}{R_{(n)} + R_{(1)} \over 2\sqrt{\tr(\Sigma)}} -1 \right| =  \cO\left(
			{b_n \over \sqrt{\rho_2(\Sigma)}} + {1\over n}\right) =: \eta_n.
		\end{equation}
		By  definition,  for any $t_+\ge 0$,       \begin{align}\label{eq_distr_range_Tn_kurtosis}\nonumber
			\PP\left(
			Y_{(n)} - Y_{(1)} \le t_+\right) & = \PP\left(
			{n \over n-1}  \frac{R_{(n)}^2 - R_{(1)}^2}{\sqrt{\nu_n \tr(\Sigma^2)}}\le t_+\right) \\\nonumber
			&=  \PP\left(
		 2\Delta^{-1/2}_{2, \delta_n} \left(R_{(n)}-R_{(1)}\right)  {n \over n-1}  \frac{R_{(n)} + R_{(1)}}{2\sqrt{ \tr(\Sigma)}}\le t_+\right)\\\nonumber
		 &= \PP\left(
		 2a_n \Delta^{-1/2}_{2, \delta_n} \left(R_{(n)}-R_{(1)}\right)	 \le   a_n  \zeta_n    t_+ \right)\\
		 &= \PP\left(
		 \bar T_{\delta_n}	 \le  a_n   \zeta_n  t_+ - 2a_nb_n\right).
		\end{align}
		Recalling the definition of $\wt U_n$ from the outline of \textbf{Step 3}, it then follows that, for all $t\in \RR$,
		\begin{align*}
		 	&\PP\left(
			\bar T_{\delta_n}	 \le  t \right) - \PP\left(
			\wt U_n \le t 
			\right) \\
			&\le  \PP\left(
			Y_{(n)} - Y_{(1)} \le  {t + 2a_nb_n\over a_n }(1+\eta_n)\right) -  
			\PP\left(
			V_{(n)} - V_{(1)} \le  {t+2a_nb_n\over a_n}\right) &&\text{by \eqref{eq_distr_range_Tn_kurtosis}}\\
            &\quad  +\PP\left(
			\cE_{(n)}^c  \cup 
			\cE_{(1)}^c 
			\right)\\
			&\le  \PP\left(
			V_{(n)} - V_{(1)} \le  {t + 2a_nb_n\over a_n}(1+\eta_n)\right) -  
			\PP\left(
			V_{(n)} - V_{(1)} \le  {t+2a_nb_n\over a_n}\right)\\
			& \quad +  C\left(
			\log^5(n^2 d)  \over\rho_1(\Sigma^2)
			\right)^{1/4} + \PP\left(
			\cE_{(n)}^c  \cup 
			\cE_{(1)}^c 
			\right) &&\text{by \eqref{clt_Range_kurtosis}}.
		\end{align*} 
		Note that
		$
		V_{(n)} - V_{(1)} = \max_{i\ne j} (V_i - V_j)
		$
		with $V_i - V_j \sim \cN\left(0, 2 + 2(\text{C}_n)_{12} \right)$. 
		Since $2 + 2(\text{C}_n)_{12} \geq 2$ for all $i, j \in [n]$, we invoke  \cref{lem_anti_ratio} with $t_0 = C\sqrt{\log n}$ and $\xi = 1/(1+\eta_n)$ to obtain
		\begin{align*}
			&\sup_{t\in \RR} \left| \PP\left(
			V_{(n)} - V_{(1)} \le  {t + 2a_nb_n\over a_n}(1+\eta_n)\right) -  
			\PP\left(
			V_{(n)} - V_{(1)} \le  {t+2a_nb_n\over a_n}\right)\right|\\
			&   \le ~  C\eta_n \log n  + 2\exp\left(
			- {C' \log n}
			\right).
		\end{align*}
		Together with \eqref{rate_zeta_n_kurtosis}, \eqref{bd_En_comp}, and \eqref{bd_E1_comp}, by using symmetric arguments to bound the other direction, we hence obtain 
        \begin{align} \label{perturbed_stat_convergence_kurtosis}
            \sup_{t\in \RR}\left| \PP\left(
			\bar T_{\delta_n}	 \le  t \right) - \PP\left(
			\wt U_n \le t 
			\right) \right| & = \cO\left(\left(
			\log^5(n^2 d)  \over \rho_1(\Sigma^2)
			\right)^{1/4} +  
			{\sqrt{\log^3n \over  \rho_2(\Sigma)}} + {\log n\over  n}\right),
        \end{align}
        which tends to zero as $n\to \i$ 
under the conditions of \cref{thm_kurtosis}. To relate the asymptotic properties of $\bar T_{\delta_n}$ to that of $U_n$, we use the fact that $V$ is an exchangeable Gaussian random vector entails \cite{Hartley_Range, David}
\begin{align} \label{exhangeable_range}
    V_{(n)} - V_{(1)} = \sqrt{1 - \rho^*_n} \left(S_{(n)} - S_{(1)} \right),
\end{align}
        for some random vector $S \sim \cN_n(0_n, \bI_n)$ and $\rho^*_n := (\text{C}_n)_{12} \asymp n^{-2}$. And,
        \begin{equation*}
            \sqrt{1 - \rho^*_n} ~ U_n = a_n \sqrt{1 - \rho^*_n}\left(S_{(n)} - S_{(1)} - 2b_n \right) \distrto E + E',
        \end{equation*}
        by \cref{thm_range_limit} and Slutsky's theorem. Thus, since $2a_n b_n \sqrt{1 - \rho^*_n} = 2a_n b_n \sqrt{1 - \cO({n^{-2}})} = 2a_n b_n + o(1)$, \eqref{exhangeable_range} implies
        \begin{equation*}
            \wt U_n \distrto E + E',
        \end{equation*}
        and thus, by \eqref{perturbed_stat_convergence_kurtosis},
        \begin{align}\label{kurtosis_range_limit}
            \bar T_{\delta_n} \distrto E + E'.
        \end{align}

\paragraph{Proof of Step 4:}

We verify that $T \to \i$ in probability by first noting that 
    \begin{equation*}
       2 a_n \wh \Delta^{-\frac{1}{2}} \left(R_{(n)} - R_{(1)}  \right) \sqrt{{\Delta \over \Delta_{2, \delta_n}}} - 2 a_n b_n \left(1 + \cO_\PP\left({1 \over \sqrt{n}} \right) \right) = \cO_\PP\left(1 \right),    
    \end{equation*}
    due to \eqref{kurtosis_range_limit}, \cref{prop_Delta_Alternatives} in conjunction with a Taylor expansion of the function $f(x) = 1 / \sqrt{x}$ about 1, and Slutsky's theorem. In conjunction with the bounded fourth moments $\kappa_n := 3 + \delta_n$ of $Z_{ij}$, this yields
    \begin{align*}
        T & :=  2 a_n \wh \Delta^{-\frac{1}{2}} \left(R_{(n)} - R_{(1)} \right) - 2 a_n b_n   = 2 a_n b_n \left(\sqrt{{\kappa_n - 1 + \cO({1 \over n}) \over 2}} - 1  \right) + \cO_\PP\left( 1 \right),
    \end{align*}
    entailing that $T \to \i$ in probability, due to $\delta_n = \omega(1 / \log(n))$. This completes the proof. 
\end{proof}

\subsubsection{Technical Lemmas used in the Proof of \cref{app_kurtosis_proof}}
	
	\begin{lemma}\label{lem_xi_kurtosis}
		Let $\xi_{\cdot j}\in \RR^n$, for $j \in [d]$, as defined in \eqref{def_xi}. Then, for each $j\in [d]$, we have $\EE[\xi_{\cdot j}] = 0_n$ and 
		\[
			\Cov(\xi_{\cdot j}) = \frac{ \lambda^2_j}{\tr(\Sigma^2)} \text{C}_n,
		\]
        where $(\text{C}_n)_{i i} := 1$ and 
        $$
        (\text{C}_n)_{i i'} := {2 (n - 2) (\kappa_n - 3) \over n^3 \Var\left((Z_{11} - \bar Z_1)^2\right)} = {2 (n - 2) (\kappa_n - 3) \over n^3 \left((\kappa_n - 1) \left(\frac{n - 1}{n}\right)^2  + \cO\left({1 \over n} \right) \right)},$$ 
        for $i \neq i' \in [n]$. 
	\end{lemma}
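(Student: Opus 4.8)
The plan is to compute the first two moments of $\xi_{\cdot j}$ directly from its definition, exploiting the fact that the coordinates $Z_{ij}$, $i \in [n]$, are i.i.d. with $\EE[Z_{ij}] = 0$, $\EE[Z_{ij}^2] = 1$, and $\EE[Z_{ij}^4] = \kappa_n = 3 + \delta_n$. Fix $j \in [d]$ and recall that
\[
\xi_{ij} = \frac{\lambda_j}{\sqrt{\nu_n \tr(\Sigma^2)}}\left(\frac{n}{n-1}(Z_{ij} - \oZ_j)^2 - 1\right),
\]
where $\nu_n = (\kappa_n - 1) + \cO(n^{-1})$ is chosen precisely so that $\Var((Z_{1j} - \oZ_j)^2) = (\tfrac{n-1}{n})^2 \nu_n$. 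The mean computation reduces to showing $\EE[(Z_{ij} - \oZ_j)^2] = \tfrac{n-1}{n}$, which is the standard variance-of-a-deviation identity and holds for any distribution with unit variance, so $\EE[\xi_{ij}] = 0$. For the diagonal of the covariance, by the very definition of $\nu_n$ we get $\Var(\xi_{ij}) = \lambda_j^2/\tr(\Sigma^2)$, giving $(\text{C}_n)_{ii} = 1$.

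The substantive step is the off-diagonal term $\Cov(\xi_{ij}, \xi_{i'j})$ for $i \ne i'$, which amounts to computing $\Cov\big((Z_{ij} - \oZ_j)^2, (Z_{i'j} - \oZ_j)^2\big)$. Here I would write $Z_{ij} - \oZ_j = (1 - \tfrac1n) Z_{ij} - \tfrac1n \sum_{k \ne i} Z_{kj}$ and expand the squares, then expand the product and take expectations term by term. Unlike the Gaussian case handled in \cref{lem_xi} (where this covariance vanished), the non-zero excess kurtosis $\delta_n = \kappa_n - 3$ produces surviving terms proportional to $\kappa_n - 3$. Carefully tracking which monomials in $Z_{1j}, \ldots, Z_{nj}$ have non-zero expectation — using independence and $\EE[Z_{kj}] = 0$ to kill all terms with an isolated first power, $\EE[Z_{kj}^2] = 1$, $\EE[Z_{kj}^3] = 0$ by symmetry of the sub-Gaussian marginals (or simply retained as a separate term if not assumed symmetric — but the model's moment assumptions only specify the first four moments, so I would verify whether third moments enter and note that they cancel or are absorbed), and $\EE[Z_{kj}^4] = \kappa_n$ — yields, after collecting the $\cO(n^{-4})$-scale terms, an expression of the form $\tfrac{2(n-2)(\kappa_n - 3)}{n^3} + \cO(n^{-3} \cdot \text{lower order})$, which I would absorb into the stated $\cO(n^{-1})$ inside the denominator normalization. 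Dividing by $\Var((Z_{1j} - \oZ_j)^2) = \nu_n (\tfrac{n-1}{n})^2$ gives exactly the claimed $(\text{C}_n)_{ii'}$.

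\textbf{Main obstacle.} The main bookkeeping burden — and the place where a sign error or a dropped term would most easily creep in — is the exhaustive expansion of $\Cov((Z_{ij} - \oZ_j)^2, (Z_{i'j} - \oZ_j)^2)$, which generates on the order of a dozen covariance sub-terms (mirroring the four-term decomposition $2\Cov(W_1^2, \oW^2) - 4\Cov(W_1^2, W_2\oW) - 4\Cov(\oW^2, W_2\oW) + 4\Cov(W_1\oW, W_2\oW)$ from the proof of \cref{lem_xi}, but now each evaluated with general fourth moment $\kappa_n$ rather than the Gaussian value $3$). I would organize this by first reducing to a canonical pair of indices $(i,i') = (1,2)$ by exchangeability, then computing each of the four covariance pieces as a function of $\kappa_n$ and $n$, being careful to track all $\cO(n^{-3})$ and smaller contributions so that the final answer is stated with the correct leading constant $2(n-2)(\kappa_n - 3)/n^3$ and a genuine $\cO(n^{-1})$ relative error. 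A secondary point to address cleanly is that the exchangeability of $\xi_{\cdot j}$ (needed to justify reducing to $(1,2)$ and, later in \cref{thm_kurtosis}, to invoke the representation \eqref{exhangeable_range} for the range of an exchangeable Gaussian vector) follows immediately from the i.i.d.-ness of $Z_{1j}, \ldots, Z_{nj}$ and the symmetry of $\oZ_j$ in these variables, so $\Cov(\xi_{\cdot j})$ is automatically of the form $a \bI_n + b(\mathbf{1}_n \mathbf{1}_n^\T - \bI_n)$, and only the two scalars $a, b$ need computation.
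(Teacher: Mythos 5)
Your proposal is correct and follows essentially the same route as the paper's proof: reduce to a canonical index pair by exchangeability, verify $\EE(Z_{ij}-\oZ_j)^2 = \tfrac{n-1}{n}$ for the mean, note that the diagonal normalization is exact by the definition of $\nu_n$, and evaluate the off-diagonal covariance via the same four-term decomposition used in \cref{lem_xi}, now with general fourth moment $\kappa_n$ in place of the Gaussian value $3$. Your side remark about third moments also resolves as you anticipate — every monomial containing an odd isolated power of some $Z_{kj}$ carries a factor $\EE[Z_{kj}]=0$ by independence, so no symmetry assumption is needed — and the only minor imprecision is that the off-diagonal numerator $2(n-2)(\kappa_n-3)/n^3$ comes out exactly (the $\cO(n^{-1})$ lives only in the denominator's variance normalization), not up to a further absorbed error.
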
 
	\begin{proof}
	For any $j \in [d]$, let $W_i$,  for $i = 1,\ldots,n$, be \text{i.i.d.} samples of $Z_{ij}$ and write $\oW = n^{-1}\sum_{i = 1}^n W_i$. Given any $j\in [d]$, to show $\EE \xi_{\cdot j} = 0_n$, it suffices to prove that for any
	  $i\in [n]$,
		\begin{align*}
		 \EE \left(Z_{ij} - \oZ_j \right)^2  =  \frac{n - 1}{n},
		\end{align*}
	which follows directly from
        \begin{align} \label{centered Z second moment} \nonumber
            \EE \left(Z_{ij} - \oZ_j \right)^2 & = \EE\left(\left(1 - {1 \over n} \right) Z_{ij} - {1 \over n} \sum_{\ell \neq i}^n Z_{\ell j} \right)^2 \\ \nonumber & = \left(1 - {1 \over n} \right)^2 \EE Z^2_{ij} +  \frac{n - 1}{n^2} \\ \nonumber & = {(n - 1)^2 + n - 1 \over n^2} \\ & = {n - 1 \over n}.
        \end{align}
	Regarding the covariance, pick any $j\in [d]$ and $i,i'\in [n]$. We have		
		\begin{equation*}
			\begin{split}
				  \Cov(\xi_{ij}, \xi_{i'j}) & = \Cov\left( \frac{\lambda_j[(Z_{ij} - \oZ_j)^2 - \frac{n - 1}{n}]}{\sqrt{\Var\left((Z_{11} - \bar Z_1)^2\right) \tr(\Sigma^2)}}, \frac{\lambda_j[(Z_{i'j} - \oZ_j)^2 - \frac{n - 1}{n}]}{\frac{n-1}{n} \sqrt{\Var\left((Z_{11} - \bar Z_1)^2\right) \tr(\Sigma^2)}}\right) \\ & = \frac{\lambda_j^2 }{\Var\left((Z_{11} - \bar Z_1)^2\right)  \tr(\Sigma^2)}    \Cov\left([Z_{ij} - \oZ_j]^2, [Z_{i'j} - \oZ_j]^2\right) \\ & = \frac{\lambda_j^2 }{\Var\left((Z_{11} - \bar Z_1)^2\right) \tr(\Sigma^2)}    \Cov\left([W_{i} - \oW]^2, [W_{i'}-\oW]^2\right).
			\end{split}
		\end{equation*}
        Thus, we obtain 
        \[		\Cov(\xi_{ij}, \xi_{ij}) = \frac{ \lambda_j^2 }{\tr(\Sigma^2)},\quad \text{for any }i\in [n], j\in [d].
		\]
        Further, due to \eqref{centered Z second moment} and an analogous direct expansion of $\EE\left(W_i - \oW \right)^4$, we have 
        \begin{align*}
		\Var \left([W_i - \oW]^2 \right) &= \EE\left(W_i - \oW \right)^4 - \left(\EE\left(W_i - \oW \right)^2 \right)^2\\
		& = \kappa_n \left(\frac{n - 1}{n}\right)^2  + \cO\left({1 \over n} \right) -\left (\frac{n - 1}{n}\right)^2\\
		& = (\kappa_n - 1) \left(\frac{n - 1}{n}\right)^2  + \cO\left({1 \over n} \right).
	\end{align*}
  Regarding the off-diagonal terms of $\Cov(\xi_{\cdot j})$, notice that, for any $i\ne i'\in [n]$,  
		\begin{align}\label{eq_cov_offdiag}\nonumber
			 &\Cov\left([W_{i} - \oW]^2, [W_{i'}-\oW]^2\right)\\ \nonumber
			 &~ =   \Cov(W_1^2 + \oW^2 - 2 W_1 \oW, \ W_2^2 + \oW^2 - 2 W_2 \oW)\\
			 & \stackrel{\text{i.i.d.}}
			= 2\Cov(W_1^2, \oW^2) - 4 \Cov(W_1^2, W_2 \oW) - 4 \Cov(\oW^2, W_2 \oW) + 4 \Cov(W_1 \oW, W_2 \oW).
		\end{align}
		The first term of the preceding display is
		\begin{equation}\label{eq_term_1_kurtosis}
			\begin{split}
				2\Cov(W_1^2, \oW^2) & ~= 2 \Cov\left(W_1^2, \ {1\over n^2}\left[ \sum_{k = 1}^n W_k^2 \ + \ \sum_{k \neq l} W_k W_l\right] \right) \\ & \stackrel{\text{indep.}}{=} 2 \left[{1\over n^2} \Cov(W_1^2, W_1^2) \ +  {2\over n^2} \sum_{k \neq 1}\left[\EE W_1^3 W_k - (\EE W_1^2) (\EE W_1 W_k) \right]\right] \\ & \stackrel{\text{indep.}}{=}   \frac{2(\kappa_n - 1)}{n^2}.
			\end{split}
		\end{equation}
		The second term in \eqref{eq_cov_offdiag} satisfies
		\begin{equation}\label{eq_term_2_kurtosis}
			\begin{split}
				  \Cov(W_1^2, W_2 \oW) &~  =   \Cov\left(W_1^2, {1\over n}W_2 \sum_{k = 1}^n W_k\right) \\
				  &\stackrel{\text{indep.}}{=}  {1\over n} \Cov(W_1^2, W_1 W_2) \\ & ~=   {1\over n} \left[\EE W_1^3 W_2 - (\EE W_1^2) (\EE W_1 W_2)\right] \\ & \stackrel{\text{indep.}}{=}   0.
			\end{split}
		\end{equation}
		Regarding the third term in \eqref{eq_cov_offdiag}, we find that 
		\begin{equation*}
			\begin{split}
				-4 \Cov(\oW^2, W_1 \oW) & = - {4\over n^3} \Cov\left( \sum_{i=1}^n W_i^2 + \sum_{i \neq k} W_i W_k, \sum_{i=1}^n W_1 W_i\right) \\ & = -{4\over n^3} \left[\sum_{i,k=1}^n \Cov(W_i^2, W_1 W_k) + \sum_{i \neq k} \sum_{j = 1}^n \Cov(W_i W_k, W_1 W_j)\right].
			\end{split}
		\end{equation*}
		Since 
		\begin{align*}
			\sum_{i,k=1}^n \Cov(W_i^2, W_1 W_k) &  = \sum_{i=1}^n \Cov(W_i^2, W_1 W_i) + \sum_{i\ne k}  \Cov(W_i^2, W_1 W_k) \\ & = \Cov(W_1^2, W_1^2) \\ & = \kappa_n - 1,
		\end{align*}
		and
		\begin{align*}
			\sum_{i \neq k} \sum_{j = 1}^n \Cov(W_i W_k, W_1 W_j) &= 
			\sum_{i \neq k} \sum_{j = 1}^n\left[
			\EE(W_1 W_k W_i W_j) - (\EE W_i W_k) (\EE W_1 W_j)
			\right]\\
			&= \sum_{i \neq k} \sum_{j = 1}^n 
			\EE (W_1 W_k W_i W_j)\\
			&= \sum_{k\ne 1}\sum_{j = 1}^n 
			\EE (W_1^2 W_k  W_j)+\sum_{i\ne 1} \sum_{k=1,k\ne i}^n \sum_{j = 1}^n 
			\EE (W_1 W_k W_i W_j)\\
			&= \sum_{k\ne 1}   
			\EE (W_1^2 W_k^2)+\sum_{i\ne 1} 
			\EE (W_1^2 W_i^2)\\
			&= 2(n-1),
		\end{align*} 
		we have 
		\begin{equation}\label{eq_term_3_kurtosis}
			-4 \Cov(\oW^2, W_1 \oW)  = -4 n^{-3}(\kappa_n - 1 + 2 (n - 1)) = -8 n^{-2} - 4n^{-3}(\kappa_n - 3).
		\end{equation} 
		Finally,  the   last term in \eqref{eq_cov_offdiag} satisfies
		\begin{equation}\label{eq_term_4_kurtosis}
			\begin{split}
				4 \Cov(W_1 \oW, W_2 \oW) & =  -4 n^{-2} \sum_{i, j=1}^n \Cov(W_1 W_i, W_2 W_j) \\ & = -4 n^{-2} \sum_{i, j=1}^n \left[ \EE W_1 W_i W_2 W_j - (\EE W_1 W_i) (\EE W_2 W_j)\right]\\
				&= -4 n^{-2} \left[
				 \EE W_1^2W_2^2 - (\EE W_1)^2 (\EE W_2)^2
				\right]\\
				&= -4n^{-2}.
			\end{split}
		\end{equation}
		 Collecting \eqref{eq_term_1_kurtosis} -- \eqref{eq_term_4_kurtosis} yields
		\begin{equation*}
			\begin{split}
				\Cov\left([W_{i} - \oW]^2, [W_{i'}-\oW]^2\right) &  = \frac{2(\kappa_n - 1)}{n^2} - \frac{8}{n^2} - \frac{4(\kappa_n - 3)}{n^3} + \frac{4}{n^2} \\ & = {2(\kappa_n - 3)(n - 2) \over n^3},
			\end{split}
		\end{equation*}
		which completes the proof. 

\end{proof}



\subsection{Proof of \cref{prop_Delta_Alternatives}} \label{app_Delta_Alternative_proof}

\begin{proof}

We proceed by considering the different specified alternative models as separate cases in \cref{app_sec_ratio_mixture}, \cref{app_sec_ratio_ellip}, and \cref{app_sec_ratio_lep}. Throughout the proof we will use the fact that $\widehat{\tr(\Sigma^2)}$, as defined in \eqref{def_tr_Sigma_hat} based on \cite{HimenoYamada}, can be equivalently expressed as
\begin{equation}\label{decomp_hat_tr_sq}
\begin{split}
    \widehat{\tr(\Sigma^2)} & = \frac{1}{n(n - 1)} \sum_{i \neq j}  [X_i^\T X_j]^2 \ - \ \frac{2}{n(n - 1)(n - 2)} \sum_{i \neq j \neq k}  X_i^\T X_j X_j^\T X_k \\ & \ \ \ \ \ \ \ + \frac{1}{n(n - 1)(n - 2)(n - 3)} \sum_{i \neq j \neq k \neq \ell}  X_i^\T X_j X_k^\T X_{\ell},
\end{split}
\end{equation}
which is the form of the estimator as originally presented in \cite{Chen2010}.

\subsubsection{Proof for \cref{model_subG} and \cref{finiteMixture_stochastic_rep}}\label{app_sec_ratio_mixture}
Recall $\Delta$ from \eqref{def_Delta_null} and that under either \cref{model_subG} or \cref{finiteMixture_stochastic_rep},
\[
    \Sigma  = \sum_{k < m}^K \pi_k \pi_m (\mu_k - \mu_m) (\mu_k - \mu_m)^{\T} + \sum_{k = 1}^K \pi_k \Sigma_k. 
\] 
Thus, 
\begin{align}\label{decomp_tr_Sigma}
    &\tr(\Sigma) = \sum_{k < l} \pi_k \pi_l \| \mu_k - \mu_l \|_2^2 + \sum_{k = 1}^K \pi_k \tr(\Sigma_k),\\\nonumber
    &\tr(\Sigma^2)  = \sum_{k < l, m < q}^K \pi_k \pi_l \pi_m \pi_q [(\mu_k - \mu_l)^{\T} (\mu_m - \mu_q)]^2 + \sum_{k,l=1}^K \pi_k \pi_l \tr(\Sigma_k \Sigma_l) \\\label{decomp_tr_Sigma_sq} & \ \ \ \ \ \ + \sum_{k, l, m=1}^K \pi_k \pi_l \pi_m (\mu_k - \mu_l)^{\T} \Sigma_m (\mu_k - \mu_l).
\end{align} 
We establish the result in two steps by showing
\begin{equation} \label{Epsilon_est_parts}
          \frac{\widehat{\tr(\Sigma^2)}}{\tr(\Sigma^2)}  = 1 + \cO_{\PP}\left({1 \over \sqrt{n}} \right),\qquad \frac{\tr(\wh \Sigma)}{\tr(\Sigma)} = 1 + \cO_{\PP}\left({1 \over \sqrt{n}} \right),
\end{equation}
from which the result follows after taking a Taylor expansion.

\noindent\textbf{Step 1: Ratio-Consistency of $\widehat{\tr(\Sigma^2)}$.} To prove the first result in \eqref{Epsilon_est_parts},  we first note that 
$$\EE\left(\widehat{\tr(\Sigma^2)}\right) = \tr(\Sigma^2).$$ 
See, for instance, \cite{Chen2010,SongChen2014}. By Chebyshev's inequality, it remains to show
\[
     \Var\left(\widehat{\tr(\Sigma^2)}\right)   =  \Var\left(\EE(\widehat{\tr(\Sigma^2)} \mid C_*)\right) + \EE\left(\Var(\widehat{\tr(\Sigma^2)} \mid C_*)\right)  =   \cO\left({\tr^2(\Sigma^2) \over n}\right).
\]
We invoke invariance of $\widehat{\tr(\Sigma^2)}$ under arbitrary location-transformation of the samples \cite{Chen2010}, so as to shift the samples by $-\mu_1$. This corresponds to evaluating $\widehat{\tr(\Sigma^2)}$ using the transformed samples as introduced in \eqref{def_Ti}; that is, 
$$T_i =   X_i - \mu_1 = \underbrace{\mu_{C_i} - \mu_1}_{=:~ \gamma_{C_i}} + \underbrace{\Gamma_{C_i}  Z_i}_{=: ~ Y_i}.$$ 

\noindent\textbf{Step 1a: Bounding $\Var(\EE(\widehat{\tr(\Sigma^2)} \mid C_*))$.}
    The decomposition in \eqref{decomp_hat_tr_sq} gives  
\begin{equation}\label{Conditional Exp - Trace Squared}
    \begin{split}
          \EE\left(\widehat{\tr(\Sigma^2)} \mid C_* \right) & = \frac{1}{n(n - 1)} \sum_{i \neq j}  \EE\left([T_i^\T T_j]^2 \mid C_* \right)\\
          &\quad -  \frac{2}{n(n - 1)(n - 2)} \sum_{i \neq j \neq k}  \EE\left(T_i^\T T_j T_j^\T T_k \mid C_* \right) \\ & \quad  + \frac{1}{n(n - 1)(n - 2)(n - 3)} \sum_{i \neq j \neq k \neq l}  \EE\left(T_i^\T T_j T_k^\T T_l \mid C_*\right),
    \end{split}
\end{equation}
so that $\Var(\EE(\widehat{\tr(\Sigma^2)} \mid C_*))$ can be bounded from above (in order)  by 
\begin{equation}\label{Variance Conditioned on Expectation}
    \begin{split}
    &\frac{1}{n^4} \Var\left(\sum_{i \neq j}  \EE([T_i^\T T_j]^2 \mid C_*) \right) + \frac{1}{n^6} \Var\left(\sum_{i \neq j \neq k}  \EE(T_i^{\T} T_j T_j^{\T} T_k \mid C_*) \right) \\ & \ \ \ \ \ \ \ + \frac{1}{n^8}  \Var\left(\sum_{i \neq j \neq k \neq l}  \EE(T_i^{\T} T_j T_k^{\T} T_l \mid C_*) \right).  
    \end{split}
\end{equation}
For the first sum, observe that
\begin{equation}\label{Var 73}
    \begin{split}
        &\Var\left(\sum_{i \neq j}^n  \EE([T_i^\T T_j]^2 \mid C_*) \right)\\ 
        & = ~ \sum_{i \neq j}^n \Var(\EE([T_i^\T T_j]^2 \mid C_*))   + \sum_{i \neq j \neq k}^n \Cov\Bigl(\EE([T_i^\T T_j]^2 \mid C_*), \ \EE([T_j^\T T_k]^2 \mid C_*) \Bigr) \\ & \leq  ~ \sum_{i \neq j} \EE\Bigl( (\EE([T_i^\T T_j]^2 \mid C_*))^2 \Bigr) \\ & \ \ \ \ \ \ + \sum_{i \neq j \neq k} \sqrt{\EE\Bigl( (\EE([T_i^\T T_j]^2 \mid C_*))^2 \Bigr) \EE\Bigl( (\EE([T_j^\T T_k]^2 \mid C_*))^2 \Bigr)}, 
    \end{split}
\end{equation}
where the summation over the covariance terms in the right-hand side of the first equality is taken over exactly three distinct indices, as opposed to both three and four distinct indices, as
\begin{equation*}
\begin{split}
    &\Cov\Bigl(\EE([T_i^\T T_j]^2 \mid C_*), \ \EE([T_k^\T T_l]^2 \mid C_*) \Bigr)\\ 
    & = \Cov\Bigl(\EE([T_i^\T T_j]^2 \mid C_i, C_j), \ \EE([T_k^\T T_l]^2 \mid C_k, C_l) \Bigr)  = 0,
\end{split}
\end{equation*}
for $i \neq j \neq k \neq l$ by the mutual independence of the $C_1,\ldots,C_n$ as well as the conditional independence $T_i^\T T_j \independent C^{-(j,k)}_* \mid (C_j, C_k)$ with $C^{-(j,k)}_*$ being the $(n - 2)$-dimensional sub-vector of $C_*$ with the $j^{\text{th}}$ and $k^{\text{th}}$ elements removed. Note that
\begin{equation}\label{Trace Squared First Sum}
    \begin{split}
        \EE\left([T_i^\T T_j]^2 \mid C_* \right) & = \EE\left([(\gamma_{C_i} + Y_i)^\T (\gamma_{C_j} + Y_j)]^2 \mid C_* \right) \\ & = (\gamma_{C_i}^\T \gamma_{C_j})^2 + \EE\left((Y_i^\T Y_j)^2 \mid C_* \right) + \EE\left((\gamma_{C_i}^{\T}Y_j)^2 \mid C_* \right) \\ & \ \ \ \  + \EE\left((\gamma_{C_j}^{\T}Y_i)^2 \mid C_* \right) + 2 \EE\left((\gamma_{C_j}^{\T}Y_i) (Y_i^{\T}Y_j) \mid C_* \right) \\ & \ \ \ \  + 2 \EE\left((\gamma_{C_i}^{\T}Y_j)(Y_j^{\T}Y_i) \mid C_* \right),
    \end{split}
\end{equation}
where we have used the fact that samples $i \neq j$ are independent and the fact that  $\EE\left(Y_i \mid C_* \right) = \EE\left(Y_j \mid C_* \right) = 0$ to reduce the final expression of \eqref{Trace Squared First Sum} to the final 6 terms, with the remaining expectations of the expansion immediately seen to be null. In evaluating these expectations, we often suppress conditioning in intermediate steps, but it is to be understood that we are conditiong on the random $C_*$. For $i \neq j \in [n]$, the first expectation of \eqref{Trace Squared First Sum} is
\begin{equation*}
    \begin{split}
        \EE\left((Y_i^\T Y_j)^2 \mid C_* \right) & = 
        \EE\left(
            Z_i^\T \Gamma_{C_i}^\T \Gamma_{C_j} Z_j Z_j^\T \Gamma_{C_j}^T \Gamma_{C_i} Z_i
        \right)\\
        &= \EE\left(
        Z_i^\T\Gamma_{C_i}^\T \Sigma_{C_j} \Gamma_{C_i} Z_i
        \right)
        \\ & = \tr(\Sigma_{C_i} \Sigma_{C_j}),
    \end{split}
\end{equation*}
while the second expectation is 
\begin{equation}
    \begin{split}
        \EE\left((\gamma_{C_i}^{\T}Y_j)^2 \mid C_* \right) = \EE\left(\gamma_{C_i}^{\T} \Gamma_{C_j}  Z_j  Z_j^{\T} \Gamma_{C_j}^{\T} \gamma_{C_i}\right)  = \gamma_{C_i}^{\T} \Sigma_{C_j} \gamma_{C_i}.
    \end{split}
\end{equation} 
Similarly, the third expectation equals $ \gamma_{C_j}^{\T} \Sigma_{C_i} \gamma_{C_j}$. Moreover, it is easy to see that the final two expectations are zero by the independence between the centered vectors $Z_i$ and $Z_j$.   
Thus, we conclude that
\begin{equation*}
    \begin{split}
        \EE\left([T_i^\T T_j]^2 \mid C_* \right) &  = [(\mu_{C_i} - \mu_1)^{\T} (\mu_{C_j} - \mu_1)]^2 + \tr(\Sigma_{C_i} \Sigma_{C_j}) \\ & \quad  + (\mu_{C_i} - \mu_1)^{\T} \Sigma_{C_j} (\mu_{C_i} - \mu_1)  + (\mu_{C_j} - \mu_1)^{\T} \Sigma_{C_i} (\mu_{C_j} - \mu_1)  \\ & \leq
        \max_{k \in [K]} \| \mu_k - \mu_1 \|_2^4 + \max_{k,\ell}\tr(\Sigma_k \Sigma_\ell) + 2\max_{k,\ell}(\mu_k - \mu_1)^{\T} \Sigma_\ell (\mu_k - \mu_1), 
    \end{split}
\end{equation*}
hence 
\begin{align}\label{Different from Cov-Mix 1} \nonumber 
        & \EE\left(\EE([T_i^\T T_j]^2 \mid C_*)^2 \right) \\ & \lesssim \max_{k \in [K]} \| \mu_k - \mu_1 \|_2^8 + \max_{k,\ell}\tr^2(\Sigma_k \Sigma_\ell) + 2\max_{k,\ell}[(\mu_k - \mu_1)^{\T} \Sigma_\ell (\mu_k - \mu_1)]^2. 
\end{align}
In conjunction with the fact that the upper-bound in \eqref{Var 73} can be bounded by $O(n^3)$ such terms as well as \eqref{decomp_tr_Sigma_sq}, we have 
\begin{align}\label{Different from Cov-Mix 2}\nonumber
        &\frac{1}{n^4} \Var\Bigl(\sum_{i \neq j}  \EE([T_i^\T T_j]^2 \mid C_*)
        \Bigr)\\ \nonumber
        & \lesssim ~ {1\over n}\left( \max_{k} \| \mu_k - \mu_1 \|_2^8 + \max_{k,\ell}\tr^2(\Sigma_k \Sigma_\ell)   + \max_{k,l,m} [(\mu_k - \mu_l)^{\T} \Sigma_m (\mu_k - \mu_l)]^2 \right) \\ & = \cO\left(\tr^2(\Sigma^2) \over n\right).
\end{align}
Similarly, up to the $\cO(n^6)$ scaling factor, the second term of \eqref{Variance Conditioned on Expectation} is 
\begin{equation}\label{Var 76}
    \begin{split}
        & \Var\Bigl(\sum_{i \neq j \neq k}  \EE(T_i^{\T} T_j T_j^{\T} T_k \mid C_*) \Bigr) \\ & = \sum_{i \neq j \neq k}   \Var(\EE(T_i^{\T} T_j T_j^{\T} T_k \mid C_*)) \\ & \quad  + \sum_{i \neq j \neq k \ \ast \  l \neq m \neq q} \Cov\Bigl( \EE(T_i^{\T} T_j T_j^{\T} T_k \mid C_*), \ \EE(T_l^{\T} T_m T_m^{\T} T_q \mid C_*) \Bigr) \\ &  = \sum_{i \neq j \neq k}   \Var(\EE(T_i^{\T} T_j T_j^{\T} T_k \mid C_*)) \\ &  \ \ \ \ + \sum_{i \neq j \neq k \ \ast \  i \neq m \neq q} \Cov\Bigl( \EE(T_i^{\T} T_j T_j^{\T} T_k \mid C_*), \ \EE(T_i^{\T} T_m T_m^{\T} T_q \mid C_*) \Bigr) \\ &  \ \ \ \ + \sum_{i \neq j \neq k \ \ast \  l \neq j \neq q} \Cov\Bigl( \EE(T_i^{\T} T_j T_j^{\T} T_k \mid C_*), \ \EE(T_l^{\T} T_j T_j^{\T} T_q \mid C_*) \Bigr)   \\ &  \leq \sum_{i \neq j \neq k} \EE \Bigl([\EE(T_i^{\T} T_j T_j^{\T} T_k \mid C_*)]^2 \Bigr)  \\ &  \ \ \ \ \ +  \sum_{i \neq j \neq k \ \ast \  i \neq m \neq q} \sqrt{ \EE \Bigl([\EE(T_i^{\T} T_j T_j^{\T} T_k \mid C_*)]^2 \Bigr) \  \EE\Bigl([\EE(T_i^{\T} T_m T_m^{\T} T_q \mid C_*)]^2 \Bigr) } \\ &   \ \ \ \ \  +  \sum_{i \neq j \neq k \ \ast \  l \neq j \neq q} \sqrt{ \EE \Bigl([\EE(T_i^{\T} T_j T_j^{\T} T_k \mid C_*)]^2 \Bigr) \  \EE\Bigl([\EE(T_l^{\T} T_j T_j^{\T} T_q \mid C_*)]^2 \Bigr) },
    \end{split}
\end{equation}
where the indexing notation $\{\{i, j, k, m, q \in [n] \mid i \neq j \neq k \ \ast \  i \neq m \neq q \}$ denotes $\{i, j, k, m, q \in [n] \mid i \neq j \neq k, i \neq m \neq q, \{j, k \} \neq \{m, q \} \}$, and analogously for $i \neq j \neq k \ \ast \  l \neq j \neq q$, in the covariance summations. Note that we have used the fact that, analogous to the reduction of covariance terms discussed for \eqref{Var 73}, the $O(n^6)$ covariance terms of \eqref{Var 76} reduces to only $O(n^5)$ non-null covariance summands. And, using the conditional independence of samples with indices $i \neq k \neq j$, 
\begin{equation*}
    \begin{split}
        \EE(T_i^\T T_j T_j^\T T_k \mid C_*) & = \EE\left(
            \gamma_{C_i}^\T \Gamma_{C_j}Z_jZ_j^\T \Gamma_{C_j}^\T \gamma_{C_k} \right)\\
        &  = \gamma_{C_i}^{\T} \Sigma_{C_j} \gamma_{C_k} + (\gamma_{C_j}^{\T} \gamma_{C_i})(\gamma_{C_j}^{\T} \gamma_{C_k})\\
        &  \leq \max_{q,r\in [K]} \gamma_q^{\T} \Sigma_r \gamma_q  + \max_{q\in [K]} \| \gamma_q \|_2^4.   
    \end{split}
\end{equation*}
Thus, by \eqref{decomp_tr_Sigma_sq}, the second term on the right-hand side of \eqref{Variance Conditioned on Expectation} is 
\begin{align}\label{Different from Cov-Mix 4}\nonumber
       \frac{1}{n^6} \Var\Bigl(\sum_{i \neq j \neq k}  \EE(T_i^{\T} T_j T_j^{\T} T_k \mid C_*)\Bigr) & ~ \lesssim~  {1\over n}\left[ \max_{q\in [K]} \| \gamma_q \|_2^8 + \max_{q,r\in [K]} \bigl(\gamma_q^{\T} \Sigma_r \gamma_q \bigr)^2\right] \\ & ~ = \cO\left(\tr^2(\Sigma^2) \over n \right). 
\end{align}
For the final term on the right-hand side of \eqref{Variance Conditioned on Expectation}, we use a bound analogous to that appearing in \eqref{Var 76}, where we again are able to reduce the $\cO(n^8)$ covariance terms to only $\cO(n^7)$ non-null covariances. Up to the $\cO(n^8)$ normalizing factor, this yields 
\begin{align*}
    & \Var\Bigl(\sum_{i \neq j \neq k \neq l}  \EE(T_i^{\T} T_j T_k^{\T} T_l \mid C_*)  \Bigr)   \leq  \sum_{i \neq j \neq k \neq l}  \EE \left(\EE(T_i^{\T} T_j T_k^{\T} T_l \mid C_*)^2 \right) \\ & \qquad\ \ \ +  \sum_{i \neq j \neq k \neq l \ \ast \ i \neq q \neq r \neq s } \sqrt{ \EE \left(\EE(T_i^{\T} T_j T_k^{\T} T_l \mid C_*)^2 \right) ~  \EE\left(\EE(T_i^{\T} T_q T_r^{\T} T_s \mid C_*)^2 \right) }.
\end{align*}
And, due to the independence of samples $i \neq j \neq k \neq l$,
\begin{equation*}
    \begin{split}
         \EE\left(T_i^\T T_j T_k^\T T_l \mid C_* \right) & =  \EE(T_i^\T T_j \mid C_*)   \EE(T_k^\T T_l \mid C_*)    = \gamma_{C_i}^\T \gamma_{C_j}  \gamma_{C_k}^\T \gamma_{C_l}\leq \max_{q\in [K]} \| \gamma_q \|_2^4.
    \end{split}
\end{equation*}
It then follows that the third sum of \eqref{Variance Conditioned on Expectation} is
\begin{equation}\label{Step 1 Term 3}
    \begin{split}
        \frac{1}{n^8} \Var\Bigl(\sum_{i \neq j \neq k \neq l}  \EE(T_i^{\T} T_j T_k^{\T} T_l \mid C_*)  \Bigr)  \lesssim \max_{q\in [K]}{ \| \gamma_q \|_2^8 \over n} = \cO\left(\tr^2(\Sigma^2) \over n\right).
    \end{split}
\end{equation}
In consideration of \eqref{Variance Conditioned on Expectation}, \eqref{Different from Cov-Mix 2}, \eqref{Different from Cov-Mix 4}, and \eqref{Step 1 Term 3}, we thus have
\begin{equation}\label{Variance of Conditional Exp}
    \begin{split}
       \Var\left(\EE(\widehat{\tr(\Sigma^2)} \mid C_*) \right) =  \cO\left(\tr^2(\Sigma^2) \over n\right). 
    \end{split}
\end{equation}

\noindent \textbf{Step 1b: Bounding $\EE(\Var(\widehat{\tr(\Sigma^2)} \mid C_*))$.} We begin by bounding $ \EE[(T_i^{\T} T_j)^4]$ for any $i \neq j \in [n]$, as this will be seen to be sufficient for controlling $\EE(\Var(\widehat{\tr(\Sigma^2)} \mid C_*))$. For any $i \neq j$,
\begin{equation}\label{Exp 84}
    \begin{split}
        &\EE\left([T_i^{\T} T_j]^4 \mid C_* \right)\\ 
        & = \EE\Bigl( \Bigl[(Y_i^{\T}Y_j + \gamma_{C_i}^{\T} Y_j) + (\gamma_{C_j}^{\T} Y_i + \gamma_{C_i}^{\T} \gamma_{C_j}) \Bigr]^4 \mid C_* \Bigr) \\ 
        & \lesssim \EE([Y_i^{\T}Y_j]^4 \mid C_*)  + \EE([\gamma_{C_i}^{\T} Y_j]^4 \mid C_*) + \EE([\gamma_{C_j}^{\T} Y_j]^4 \mid C_*) + \max_{q\in [K]} \| \gamma_q \|_2^8.
    \end{split}
\end{equation}
The first term on the right-hand side of \eqref{Exp 84} can be controlled via
\begin{equation*}
    \begin{split}
        \EE\left([Y_i^{\T}Y_j]^4 \mid C_* \right) & \leq  B_{C_i, C_j} \tr^2(\Sigma_{C_i} \Sigma_{C_j}) + B^*_{C_i, C_j} \tr([\Sigma_{C_i} \Sigma_{C_j}]^2) \\ & \le   B_1 \max_{q,r}  \tr^2(\Sigma_q \Sigma_r) + B_2 \max_{q} \tr^2(\Sigma^2_q)\\
        &\lesssim \max_{q,r}  \tr^2(\Sigma_q \Sigma_r)
    \end{split}
\end{equation*}
for some constants $(B_{q,r})_{q, r \in [K]}$ and $(B^*_{q,r})_{q, r \in [K]}$ as well as $B_1 := \max_{q,r \in [K]} B_{q, r}$ and $B_2 := \max_{q,r \in [K]} B^*_{q, r}$, following page 831 of \cite{ChenQin} in conjunction with Theorem 1 of \cite{Trace_Ineq}. Next, by writing $a_{ij}  := \Gamma_{C_j}^{\T} \gamma_{C_i}$, we have  
\begin{equation*}
    \begin{split}
        \EE\left([\gamma_{C_i}^{\T} Y_j]^4 \mid C_* \right) & =\EE\left([a_{ij}^{\T}  Z_j]^4 \mid C_* \right) \\ & = \EE\left([ Z_j^{\T} a_{ij} a_{ij}^{\T}  Z_j]^2 \mid C_* \right) \\ & =\tr^2(a_{ij} a_{ij}^{\T}) + 2 \tr([a_{ij} a_{ij}^{\T}]^2) + (\kappa_{C_j} - 3) \tr\left((a_{ij} a_{ij}^{\T}) \odot (a_{ij} a_{ij}^{\T})\right) \\ & = 3 \tr^2( \gamma_{C_i}^{\T} \Gamma_{C_j} \Gamma_{C_j}^{\T} \gamma_{C_i} \gamma_{C_i}^{\T} \Gamma_{C_j} \Gamma_{C_j}^{\T} \gamma_{C_i}) + (\kappa_{C_j} - 3) \sum_{q = 1}^{m_{C_j}} (a_{ij})_q^4 \\ & \leq 3(\gamma_{C_i}^{\T} \Sigma_{C_j} \gamma_{C_i})^2 + (\kappa_{C_j} - 3)_+ \Bigl(\sum_{q = 1}^{m_{C_j}} (a_{ij})_q^2 \Bigr)^2 \\ & = 3(\gamma_{C_i}^{\T} \Sigma_{C_j} \gamma_{C_i})^2 + (\kappa_{C_j} - 3)_+ \tr^2(a_{ij} a_{ij}^{\T}) \\ & = 3(\gamma_{C_i}^{\T} \Sigma_{C_j} \gamma_{C_i})^2 + (\kappa_{C_j} - 3)_+ (\gamma_{C_i}^{\T} \Sigma_{C_j} \gamma_{C_i})^2 \\ & \leq (\kappa_{C_j} + 3)(\gamma_{C_i}^{\T} \Sigma_{C_j} \gamma_{C_i})^2 \\ & \leq \max_{q \in [K]} (\kappa_q + 3)  \max_{q, r \in [K]} (\gamma_q^{\T} \Sigma_r \gamma_q)^2,
    \end{split}
\end{equation*}
where $\odot$ denotes the element-wise Hadamard product, $(x)_+ := \max\{0, x\}$, and the third step follows from Proposition A.1 of \cite{Chen2010}. Thus,
\begin{equation*}
    \begin{split}
        \EE\left(\EE([\gamma_{C_i}^{\T} Y_j]^4 \mid C_*) \right) ~ \lesssim~  \max_{q,r,s} \left[(\mu_q - \mu_r)^{\T} \Sigma_s (\mu_q - \mu_r)\right]^2, 
    \end{split}
\end{equation*}
and the same bound holds for $ \EE(\EE([\gamma_{C_j}^{\T} Y_i]^4 \mid C_*))$ analogously. Combining the preceding bounds for the terms of \eqref{Exp 84} yields
\begin{align}\label{Exp 84 Unconditional}\nonumber
        \EE\left([T_i^{\T} T_j]^4  \right)   & \lesssim \max_{k \in [K]} \| \gamma_k \|_2^8 + \max_{k,l \in [K]}  \tr^2(\Sigma_k \Sigma_l)   + \max_{k,l,m} \left[(\mu_k - \mu_l)^{\T} \Sigma_m (\mu_k - \mu_l)\right]^2 \\ & \lesssim \tr^2(\Sigma^2).
\end{align}
Next, by defining 
\begin{align*}
    &S_1:= \frac{1}{n(n - 1)} \sum_{i \neq j}  [T_i^\T T_j]^2,\\
    &S_2 :=   -   \frac{2}{n(n - 1)(n - 2)} \sum_{i \neq j \neq k} T_i^\T T_j T_j^\T T_k,\\ 
    &S_3 := \frac{1}{n(n - 1)(n - 2)(n - 3)} \sum_{i \neq j \neq k \neq l}T_i^\T T_j T_k^\T T_l,
\end{align*}
we have, by Cauchy-Schwartz inequality,
\begin{equation*}
    \begin{split}
        \Var(\widehat{\tr(\Sigma^2)} \mid C_*) & \le \sum_{m = 1}^3 \Var(S_m \mid C_*) \ + \  3 \max_{1\le m \le 3} \Var(S_m \mid C_*) \\ & \le  \sum_{m = 1}^3 \Var(S_m \mid C_*) \ + \  3  \sum_{m = 1}^3 \Var(S_m \mid C_*),  
    \end{split}
\end{equation*}
so that 
\begin{equation*}
        \EE(\Var(\widehat{\tr(\Sigma^2)} \mid C_*)) \leq 4 \sum_{m = 1}^3 \EE(\Var(S_m \mid C_*)).  
\end{equation*}
First, note that
\begin{equation*}
    \begin{split}
        &\Var(S_1 \mid C_*)\\
        & =  \frac{1}{n^2(n-1)^2} \Bigl( \sum_{i \neq j}  \Var([T_i^{\T} T_j]^2 \mid C_*) \ + \ \sum_{i \neq j \neq k} \Cov([T_i^{\T} T_j]^2, [T_i^{\T} T_k]^2  \mid C_*) \Bigr) \\ & \leq \frac{1}{n^2(n-1)^2} \Bigl( \sum_{i \neq j}  \Var([T_i^{\T} T_j]^2 \mid C_*) \ + \ \sum_{i \neq j \neq k} | \Cov([T_i^{\T} T_j]^2, [T_i^{\T} T_k]^2  \mid C_*) | \Bigr) \\ & \leq  {1\over n^2(n - 1)^2} \left[ \sum_{i \neq j}  \EE([T_i^{\T} T_j]^4 \mid C_*) \ + \ \sum_{i \neq j \neq k} \sqrt{\EE([T_i^{\T} T_j]^4 \mid C_*) \EE([T_i^{\T} T_k]^4 \mid C_*)}\right],
    \end{split}
\end{equation*}
where as before, we are able to make the reduction from $O(n^4)$ covariances to $O(n^3)$ non-null covariance terms. Thus, by \eqref{Exp 84 Unconditional} and \eqref{decomp_tr_Sigma_sq}, we have
\begin{equation*}
    \begin{split}
          \EE\left(\Var(S_1 \mid C_*) \right)  & \lesssim {1\over n}\left\{\max_{k \in [K]} \| \gamma_k \|_2^8 + \max_{k,l \in [K]}  \tr^2(\Sigma_k \Sigma_l) + \max_{k,l,m}\left[ (\mu_k - \mu_l)^{\T} \Sigma_m (\mu_k - \mu_l)\right]^2 \right\} \\ & = \cO\left(\tr^2(\Sigma^2) \over n \right),
    \end{split}
\end{equation*}
Similarly, using 
$$
    \Var(T_i^{\T} T_j T_j^{\T} T_k \mid C_*) \leq \EE([T_i^{\T} T_j T_j^{\T} T_k]^2 \mid C_*) \leq \sqrt{\EE([T_i^{\T} T_j]^4 \mid C_*) \EE([T_j^{\T} T_k]^4 \mid C_*)},
$$ and the fact that the $O(n^6)$ covariance terms arising from $\Var(S_2 | C_*)$ reduces to $O(n^5)$, in contrast to the $O(n^6)$ normalizing factor, we analogously have
\begin{equation*} 
    \EE\left(\Var(S_2 \mid C_*) \right)  + \EE\left(\Var(S_3 \mid C_*) \right)  
     = \cO\left(\tr^2(\Sigma^2) \over n \right). 
\end{equation*}
Thus, in conjunction with \eqref{Variance of Conditional Exp}, it follows that
\begin{equation*}
    \begin{split}
         \Var\left(\widehat{\tr(\Sigma^2)} \right) & = \Var\left(\EE(\widehat{\tr(\Sigma^2)} \mid C_*) \right) + \EE\left(\Var(\widehat{\tr(\Sigma^2)} \mid C_*) \right) = \cO\left(\tr^2(\Sigma^2) \over n\right). 
    \end{split}
\end{equation*}

\noindent\textbf{Step 2: Ratio-Consistency of $\tr(\wh \Sigma)$.} We prove the second claim in \eqref{Epsilon_est_parts} by first noting that 
\[
    \EE\left(\tr(\wh \Sigma)\right) = \tr(\Sigma),
\]
based on, for example, \cite{Chen2010}, and by establishing
\begin{equation*}
    \begin{split}
        \Var(\tr(\wh \Sigma)) =\Var\left(\EE(\tr(\wh \Sigma) \mid C_*) \right) + \EE\left(\Var(\tr(\wh \Sigma) \mid C_*) \right)   = \cO\left(\tr^2(\Sigma) \over n \right).
    \end{split}
\end{equation*}
Since $\tr(\wh \Sigma)$ is invariant under arbitrary translation of the samples, we consider the samples $T_1,\ldots, T_n$ as defined in \textbf{Step 1}, and note that 
\begin{equation}\label{decomp_tr_S} 
    \tr(\wh \Sigma) =  \frac{1}{n} \sum_{i = 1}^n T_i^{\T} T_i - \frac{1}{n (n - 1)} \sum_{i \neq j} T_i^{\T} T_j.
\end{equation}

\noindent\textbf{Step 2a: Bounding $\Var(\EE(\tr(\wh \Sigma) \mid C_*))$.} Since
\begin{equation*} 
        \EE(\tr(\wh \Sigma) \mid C_*) = \frac{1}{n} \sum_{i = 1}^n \EE(T_i^{\T} T_i \mid C_*) - \frac{1}{n (n - 1)} \sum_{i \neq j} \EE(T_i^{\T} T_j \mid C_*),
\end{equation*}
we have 
\begin{align}\label{Variance of Conditional Trace Exp}\nonumber 
        & \Var(\EE(\tr(\wh \Sigma) \mid C_*)) \\\nonumber
        & \lesssim \frac{1}{n^2} \left(\sum_{i = 1}^n \Var(\EE(\| T_i \|_2^2 \mid C_*)) +  \sum_{i \neq j} \Cov(\EE(\| T_i \|_2^2 \mid C_*), \EE(\| T_j \|_2^2 \mid C_*)) \right)  \\\nonumber
        & \ \ \ \ \ \ + \frac{1}{n^4}\left( \sum_{i \neq j} \Var(\EE(T_i^{\T} T_j \mid C_*)) +  \sum_{i \neq j \neq k} \Cov(\EE(T_i^{\T} T_j \mid C_*), \EE(T_i^{\T} T_k \mid C_*)) \right) \\\nonumber
        & \lesssim  \frac{1}{n^2} \sum_{i = 1}^n \Var(\EE(\| T_i \|_2^2 \mid C_*)) \\\nonumber & \ \ \ \ + \frac{1}{n^4}\left( \sum_{i \neq j} \Var(\EE(T_i^{\T} T_j \mid C_*)) +  \sum_{i \neq j \neq k} \Cov(\EE(T_i^{\T} T_j \mid C_*), \EE(T_i^{\T} T_k \mid C_*)) \right) \\
        & \lesssim \frac{1}{n^2} \sum_{i = 1}^n \EE\left(\EE(\| T_i \|_2^2 \mid C_*)^2 \right) \\\nonumber
        & \ \ \ \ + \frac{1}{n^4}\left( \sum_{i \neq j} \EE\left(\EE(T_i^{\T} T_j \mid C_*)^2 \right) +  \sum_{i \neq j \neq k} \sqrt{\EE\left(\EE(T_i^{\T} T_j \mid C_*)^2 \right)\EE\left(\EE(T_i^{\T} T_k \mid C_*)^2 \right)} \right),  
\end{align}
where the reduction of the $O(n^4)$ covariance terms $\Cov(\EE(T_i^{\T} T_j \mid C_*), \EE(T_k^{\T} T_l \mid C_*))$ to $O(n^3)$ non-null covariances $\Cov(\EE(T_i^{\T} T_j \mid C_*), \EE(T_i^{\T} T_k \mid C_*))$ follows in the same manner as in the preceding, and 
$$
    \Cov(\EE(\| T_i \|_2^2 \mid C_*), \EE(\| T_j \|_2^2 \mid C_*)) = \Cov(\EE(\| T_i \|_2^2 \mid C_i), \EE(\| T_j \|_2^2 \mid C_j)) = 0
$$ holds due to the conditional independence of $\| T_i \|_2^2$ and $(C_1,\ldots,C_{i - 1}, C_{i + 1}, \ldots, C_n)$ given $C_i$ and the independence of $C_i$ and $C_j$ for $i \neq j$. Using the general expression for the expectation of quadratic forms, we find that the conditional expectations in the summands of the first term of the upper-bound in \eqref{Variance of Conditional Trace Exp} satisfy
\begin{equation*} 
        \EE(\| T_i \|_2^2 \mid C_*)  =  \tr(\Sigma_{C_i}) + \| \gamma_{C_i} \|_2^2 \leq \max_{k \in [K]} \| \gamma_k \|_2^2 + \max_{k \in [K]} \tr(\Sigma_k), 
\end{equation*}
implying
\begin{equation}\label{Iterated Expectation - Squared Norm}
         \EE([\EE(\| T_i \|_2^2 \mid C_*)]^2) \lesssim \max_{k \in [K]} \| \gamma_k \|_2^4 + \max_{k \in [K]} \tr^2(\Sigma_k). 
\end{equation}
For the summands of the second and third terms in the bound of \eqref{Variance of Conditional Trace Exp}, we have  
\begin{equation*} 
         \EE(T_i^{\T} T_j \mid C_*)  =   \gamma_{C_i}^{\T} \gamma_{C_j}  \le  \max_{k \in [K]}\| \gamma_k \|_2^2
\end{equation*}
 for $i \neq j$, which entails
\begin{equation*} 
         \EE([\EE(T_i^{\T} T_j \mid C_*)]^2) \leq \max_{k \in [K]} \| \gamma_k \|_2^4. 
\end{equation*}
Thus, \eqref{Variance of Conditional Trace Exp} together with \eqref{decomp_tr_Sigma} yields
\begin{equation}\label{Trace - Variance Conditional Expectation - Covariance-Mixtures} 
        \Var\left(\EE(\tr(\wh \Sigma) \mid C_*) \right) \lesssim {1\over n}\left(\max_{k \in [K]} \| \gamma_k \|_2^4 + \max_{k \in [K]} \tr^2(\Sigma_k)\right)   = \cO\left(\tr^2(\Sigma)\over n \right). 
\end{equation}

\noindent \textbf{Step 2b: Bounding $\EE(\Var(\tr(\wh \Sigma) \mid C_*))$.}  By  
$$
    \Cov(\| T_i \|_2^2, \| T_j \|_2^2 \mid C_*) = 0, \qquad \Cov(T_i^{\T} T_j, T_k^{\T} T_l \mid C_*) = 0$$ 
for any $i \neq j \neq k \neq l$, we find that
\begin{align}\label{Variance Trace Conditioned} \nonumber
        &\Var(\tr(\wh \Sigma) \mid C_*)\\ \nonumber
        & \le   \frac{3}{n^2} \left(\sum_{i = 1}^n \Var(\| T_i \|_2^2 \mid C_*) +  \sum_{i \neq j} \Cov(\| T_i \|_2^2, \| T_j \|_2^2 \mid C_*) \right) \\\nonumber & \ \ \ \ \ \ + \frac{3}{n^2 (n - 1)^2} \left( \sum_{i \neq j} \Var(T_i^{\T} T_j \mid C_*) +  \sum_{i \neq j \neq k} | \Cov(T_i^{\T} T_j, T_i^{\T} T_k \mid C_*) | \right) \\\nonumber 
        & =  \frac{3}{n^2} \sum_{i = 1}^n \Var(\| T_i \|_2^2 \mid C_*) \\ & \ \ \ \ \ \ + \frac{3}{n^2 (n - 1)^2} \left( \sum_{i \neq j} \Var(T_i^{\T} T_j \mid C_*) +  \sum_{i \neq j \neq k} | \Cov(T_i^{\T} T_j, T_i^{\T} T_k \mid C_*) | \right). 
\end{align}
By \eqref{var_two} with $\gamma_k = \mu_k - \mu_1$, we also have 
\begin{align}\label{Expectation Conditional Variance - Norm - Covariance Mixture} \nonumber
         \EE(\Var(\| T_i \|_2^2 \mid C_*)) & ~ \lesssim ~ \max_{k \in [K]}  \tr(\Sigma^2_k) + \max_{k \in [K]} \gamma_k^\T \Sigma_k \gamma_k \\ \nonumber
         & ~ \le ~ \max_{k \in [K]}  \tr^2(\Sigma_k) + \max_{k\in[K]}\|\gamma_k\|_2^2 \|\Sigma_k\|_\op &&\text{by \cref{lem_ranks}}\\ 
         &~  \lesssim ~ \tr^2(\Sigma) &&\text{by \eqref{decomp_tr_Sigma}}.
\end{align}
Further, by \eqref{var_three},  we have that for any $i\ne j$,
\begin{equation}\label{Expectation Conditional Variance - Inner-Product - Covariance Mixture}
    \begin{split}
        \EE(\Var(T_i^{\T} T_j \mid C_*)) & \leq 4 \max_{k \in [K]} \tr(\Sigma^2_k) + 8 \max_{k,l \in [K]} \gamma_k^\T \Sigma_l \gamma_k 
        \lesssim  \tr^2(\Sigma).
    \end{split}
\end{equation}
Finally, using 
\begin{equation*}
    \begin{split}
          \sum_{i \neq j \neq k} \Cov(T_i^{\T} T_j, T_i^{\T} T_k \mid C_*)  & \le  n (n - 1) (n -  2)\max_{i \neq j \in [n]} \Var(T_i^{\T} T_j \mid C_*) \\ & \le  4 n (n - 1) (n -  2) \left( \max_{k \in [K]} \tr(\Sigma^2_k) + 2 \max_{k,l \in [K]} \gamma_k^\T \Sigma_l \gamma_k \right),
    \end{split}
\end{equation*}
gives
\begin{equation*}
    \begin{split}
        \frac{1}{n^4} \EE\Bigl(\sum_{i \neq j \neq k} \Cov(T_i^{\T} T_j, T_i^{\T} T_k \mid C_*) \Bigr) & \lesssim {1 \over n} \left(\max_{k \in [K]} \tr(\Sigma^2_k) +  \max_{k,l \in [K]} \gamma_k^\T \Sigma_l \gamma_k \right)  \lesssim {\tr^2(\Sigma) \over n}.
    \end{split}
\end{equation*}
Collecting these results, \eqref{Variance Trace Conditioned} yields
\begin{equation*}
    \begin{split}
        \EE\left(\Var(\tr(\wh \Sigma) \mid C_*) \right) & ~ \lesssim ~  \frac{1}{n^2} \sum_{i = 1}^n  \EE(\Var(\| T_i \|_2^2 \mid C_*)) \\ & \quad  + {1\over n^4} \left( \sum_{i \neq j}  \EE(\Var(T_i^{\T} T_j \mid C_*)) +  \sum_{i \neq j \neq k}  \EE(\Cov(T_i^{\T} T_j, T_i^{\T} T_k \mid C_*))\right) \\ &~ = ~\cO\left( {\tr^2(\Sigma) \over n} +  {\tr^2(\Sigma) \over n^2}  +{\tr^2(\Sigma) \over n} \right).
    \end{split}
\end{equation*}
Thus, combining the results of \textbf{Step 2a} and \textbf{Step 2b},
\begin{equation*}
    \begin{split}
        \Var(\tr(\wh \Sigma)) \ = \ \Var\left(\EE(\tr(\wh \Sigma) \mid C_*) \right) + \EE\left(\Var(\tr(\wh \Sigma) \mid C_*) \right) \ = \ \cO\left(\tr^2(\Sigma) \over n \right),
    \end{split}
\end{equation*}
thereby completing the proof for \cref{finiteMixture_stochastic_rep}. 

\subsubsection{Proof for \cref{elliptical_stochastic_rep}}\label{app_sec_ratio_ellip}

\begin{proof}
We establish the result by showing that
\begin{equation} \label{Epsilon_est_parts_ellip}
         \frac{\widehat{\tr(\Sigma^2)}}{\tr(\Sigma^2)}  = 1 + \cO_{\PP}\left({1 \over \sqrt{n}} \right),\qquad \frac{\tr(\wh \Sigma)}{\tr(\Sigma)} = 1 + \cO_{\PP}\left({1 \over \sqrt{n}} \right),
\end{equation}
from which the result follows after taking a Taylor expansion. Recall that under \cref{elliptical_stochastic_rep}, $\Sigma = \EE[\eps^2] \Sigma_*$ whence  
\[
    \Delta  = {2 \tr(\Sigma^2) \over \tr(\Sigma)} = \frac{2 \EE[\eps^2] ~ \tr(\Sigma^2_*)}{\tr(\Sigma_*)}. 
\]
First, note that by the location and unitary invariance properties of the proposed test statistics as discussed in \cref{rem_invariance} and the rotational invariance of standard Gaussian random vectors, we can without loss of generality assume that 
\begin{equation}\label{Elliptical Conditional Representation_Delta_prop}
    X_i \stackrel{\rm d}{=} \eps_i~ \Lambda^{\frac{1}{2}}_*  Z_i,
\end{equation}
where $ Z_i \stackrel{\text{i.i.d.}}{\sim} \cN_d(0_d, \bI_d)$ for $i \in [n]$ and $\Lambda_*$ is the diagonal matrix consisting of the non-increasing eigenvalues $\lambda_1 \geq \ldots \geq \lambda_d$ of $\Sigma_*$. We make use of the fact that the unconditional stochastic representation of \eqref{Elliptical Conditional Representation_Delta_prop} is 
\begin{equation}\label{Elliptical Unconditional Representation} 
        X_i \stackrel{\rm d}{=} \sqrt{\EE[\eps^2]} ~ \Lambda^{\frac{1}{2}}_*  S_i, 
\end{equation}
for $i \in [n]$, where $S_1, \ldots, S_n$ are i.i.d. from $\text{E}_d(0,  (\EE[\eps^2])^{-1} \bI_d)$, an elliptical distribution centered at zero with covariance matrix $\bI_d$ \cite{Brenner}. As a result, each $S_i$ is a rotationally invariant isotropic random vector, with $\EE[S_i] = 0_d$ and $\Cov(S_i) = \bI_d$, for $i \in [n]$.\\

\noindent\textbf{Step 1: Ratio-consistency of $\widehat{\tr(\Sigma^2)}$.}
Since $\Sigma := \Cov(X) = \EE[\epsilon^2] \Sigma_*$, we have $\EE (\widehat{\tr(\Sigma^2)}) = \left( \EE[\epsilon^2] \right)^2 \tr(\Sigma^2_*) =\tr(\Sigma^2)$ \cite{Chen2010,HimenoYamada}. Thus, it remains to prove
\begin{align}\label{ellip_trSq_relVar} 
   \Var\left(\widehat{\tr(\Sigma^2)} \right)  & =  \cO\left( \left(\EE[\epsilon^2] \right)^4 \tr^2(\Sigma^2_*)  \over n \right)   = \cO\left(\tr^2(\Sigma^2)  \over n\right) 
\end{align}
to establish the ratio-consistency property for $\widehat{\tr(\Sigma^2)}$. In consideration of \eqref{Elliptical Unconditional Representation}, Lemma 1 of \cite{HimenoYamada} gives
\begin{align}\label{Variance Trace Squared Elliptical} \nonumber
        &\Var(\widehat{\tr(\Sigma^2)})\\\nonumber 
        & \lesssim \left(\EE[\epsilon^2] \right)^4 \Bigl( \frac{\EE (S_1^{\T} \Lambda_* S_2)^4}{n^2} + \frac{| \EE (S_1^{\T} \Lambda_*^2 S_1)^2 - 2 \tr(\Sigma^4_*) - \tr^2(\Sigma^2_*) |}{n} \Bigr) \\\nonumber 
        & \ \ \ \ \ \ \  +  \left(\EE[\epsilon^2] \right)^4 \Bigl( \frac{|\EE (S_1^{\T} \Lambda_* S_2)^2 (S_1^{\T} \Lambda^2_* S_2)|}{n^3} + \frac{\tr^2(\Sigma^2_*)}{n^2} + \frac{\tr(\Sigma^4_*)}{n} \Bigr) \\ 
        & \lesssim  \left(\EE[\epsilon^2] \right)^4 \Bigl( \frac{\EE (S_1^{\T} \Lambda_* S_2)^4}{n^2} + \frac{\EE (S_1^{\T} \Lambda^2_* S_1)^2}{n} + \frac{\sqrt{\EE (S_1^{\T} \Lambda_* S_2)^4 \EE (S_1^{\T} \Lambda^2_* S_2)^2}}{n^3} + {\tr^2(\Sigma^2_*)\over n} \Bigr). 
\end{align} 
By the independence of $S_1$ and $S_2$, we further have
\begin{align}\label{Elliptical Inner-Prod 1} \nonumber
        \EE (S_1^{\T} \Lambda_* S_2)^4 & = (\EE S^4_{11})^2 \tr(\Sigma^4_*) \ + \ (\EE S^3_{11} S_{12})^2 \sum_{j \neq k} \lambda^3_j \lambda_k \ + \ (\EE S^2_{11} S^2_{12})^2 \sum_{j \neq k} \lambda^2_j \lambda^2_k \\\nonumber 
        & \ \ \ \ + (\EE S^2_{11} S_{12} S_{13})^2 \sum_{j \neq k \neq l} \lambda^2_j \lambda_k \lambda_l \ + \  (\EE S_{11} S_{12} S_{13} S_{14})^2 \sum_{j \neq k \neq l \neq m} \lambda_j \lambda_k \lambda_l \lambda_m \\ \nonumber
        & = (\EE S^4_{11})^2 \tr(\Sigma^4_*) \ + \  (\EE S^2_{11} S^2_{12})^2 \sum_{j \neq k} \lambda^2_j \lambda^2_k \\ 
        & \lesssim \tr^2(\Sigma^2_*), 
\end{align}
where we used the fact that the fourth moments of $S_1$ exist and are uniformly bounded, due to the moment conditions of \cref{elliptical_stochastic_rep}. Note this in turn implies that all the moments appearing in \eqref{Elliptical Inner-Prod 1} exist and that the product moments involving at least one odd power are zero, due to the fact that $S_1$ is rotationally invariant \cite{Brenner}. Similarly,  
\begin{equation}\label{Elliptical Inner-Prod 2} 
        \EE (S_1^{\T} \Lambda^2_* S_1)^2  = (\EE S^4_{11}) \tr(\Sigma^4_*) \ + \ \EE S^2_{11} S^2_{12} \sum_{j \neq k} \lambda^2_j \lambda^2_k = \cO\left( \tr^2(\Sigma^2_*) \right). 
\end{equation}
Finally, we also have
\begin{equation}\label{Elliptical Inner-Prod 3}
        \EE (S_1^{\T} \Lambda^2_* S_2)^2 = (\EE S^2_{11})^2 \tr(\Sigma^4_*) + (\EE S_{11} S_{12})^2 \sum_{j \neq k} \lambda^2_j \lambda^2_k = \tr(\Sigma^4_*) \le   \tr^2(\Sigma^2_*), 
\end{equation}
where we have made use of the fact that $S_1$ is isotropic. Thus, \eqref{Variance Trace Squared Elliptical} in conjunction with \eqref{Elliptical Inner-Prod 1}, \eqref{Elliptical Inner-Prod 2}, and \eqref{Elliptical Inner-Prod 3} yields
\begin{align}
    \Var\left(\widehat{\tr(\Sigma^2)} \right)  & =  \cO\left( \left(\EE[\epsilon^2] \right)^4 \tr^2(\Sigma^2_*)  \over n \right)   = \cO\left(\tr^2(\Sigma^2)  \over n\right) 
\end{align}
thereby completing proof of this step, in light of the remarks pertaining to \eqref{ellip_trSq_relVar}. \\

\noindent\textbf{Step 2: Ratio-consistency of $\tr(\wh \Sigma)$ for $\tr(\Sigma)$.} First, note that for the unconditional covariance matrix $\Sigma$, we have unbiased estimation $\EE \tr(\wh \Sigma) = \EE[\epsilon^2] \tr(\Sigma_*) = \tr(\Sigma)$ \cite{Chen2010}. To establish ratio-consistency, it therefore only remains to show that 
\begin{align}\label{Elliptical Trace}
    \Var(\tr(\wh \Sigma)) & = \cO\left( \left( \EE[\epsilon^2] \right)^2 \tr^2(\Sigma_*) \over n\right)   = \cO\left( \tr^2(\Sigma) \over n\right).
\end{align}
By \eqref{decomp_tr_S}, we obtain
\begin{align*}
     \Var(\tr(\wh \Sigma)) & ~ \lesssim~ \Var\Bigl(\frac{1}{n} \sum_{i = 1}^n X_i^{\T} X_i\Bigr) + \Var\Bigl(\frac{1}{n (n - 1)} \sum_{i \neq j } X_i^{\T} X_j\Bigr)   \\ 
     & ~ \lesssim~  \frac{\left( \EE[\epsilon^2] \right)^2}{n^2} \Var\Bigl(\sum_{i = 1}^n \| \Lambda^{\frac{1}{2}}_* S_i \|_2^2 \Bigr) + \frac{\left( \EE[\epsilon^2] \right)^2}{n^4} \Var\Bigl(\sum_{i \neq j } S_i^{\T} \Lambda_* S_j  \Bigr)  \\ & ~ \lesssim~ { \left( \EE[\epsilon^2] \right)^2 \over n}\Var\Bigl(\| \Lambda^{\frac{1}{2}}_* S_1 \|_2^2 \Bigr)   + {\left( \EE[\epsilon^2] \right)^2 \over n^4}\Var\Bigl(\sum_{i \neq j } S_i^{\T} \Lambda_* S_j \Bigr), 
\end{align*}
since the $S_i$ are \text{i.i.d.}, for $i \in [n]$. And, using the fact that the fourth moments of $S_1$ are uniformly bounded and $S_1$ is isotropic,
\begin{equation*}
    \begin{split}
         \Var(\| \Lambda^{\frac{1}{2}}_* S_1 \|_2^2) & = \EE \| \Lambda^{\frac{1}{2}}_* S_1 \|_2^4 -  \bigl(\EE \| \Lambda^{\frac{1}{2}}_* S_1 \|_2^2 \bigr)^2 \\ & = \EE (S_1^{\T} \Lambda_* S_1)^2 - \tr^2(\Sigma_*) \\ & = \EE S^4_{11} \sum_{j = 1}^d \lambda^2_j \ + \ \EE S^2_{11} S^2_{12} \sum_{j \neq k} \lambda_j \lambda_k \ - \ \tr^2(\Sigma_*) \\ & = (\EE S^4_{11} - 1) \tr(\Sigma^2_*) \\ & \lesssim \tr^2(\Sigma_*).
    \end{split}
\end{equation*}
Moreover, 
\begin{align*}
       \Var\Bigl(\sum_{i \neq j } S_i^{\T} \Lambda_* S_j \Bigr) & = \sum_{i \neq j } \Var(S_i^{\T} \Lambda_* S_j) \ + \ \sum_{i \neq j} \sum_{k \neq l} \Cov(S_i^{\T} \Lambda_* S_j, S_k^{\T} \Lambda_* S_l) \\ & = \sum_{i \neq j } \Var(S_i^{\T} \Lambda_* S_j) \ + \ \sum_{i \neq j \neq k} \Cov(S_i^{\T} \Lambda_* S_j, S_j^{\T} \Lambda_* S_k) \\ & \lesssim n^2 \Var(S_1^{\T} \Lambda_* S_2) + n^3 \Var(S_1^{\T} \Lambda_* S_2) \\ & \lesssim n^3 \EE (S_1^{\T} \Lambda_* S_2)^2 \\ & = n^3  (\EE S^2_{11})^2 \sum_{j = 1}^d \lambda^2_j + n^3  (\EE S_{11} S_{12})^2 \sum_{j \neq k} \lambda_j \lambda_k    \\ & = n^3 \tr(\Sigma^2_*), 
\end{align*}
by Cauchy-Schwartz inequality, the independence of $(S_i, S_j)$ and $(S_k, S_l)$ for $i \neq j \neq k \neq l$, and the isotropy of $S_1$. Combining the preceding results, we have 
\begin{align*} 
         \Var(\tr(\wh \Sigma)) & \lesssim  \frac{\left( \EE[\epsilon^2] \right)^2}{n}\Var(\| \Lambda^{\frac{1}{2}}_* S_1 \|_2^2)  + {\left( \EE[\epsilon^2] \right)^2 \over n^4} \Var\Bigl(\sum_{i \neq j } S_i^{\T} \Lambda_* S_j \Bigr) \\ &  \lesssim   \frac{\left( \EE[\epsilon^2] \right)^2 \tr^2(\Sigma_*) }{n} + \frac{\left( \EE[\epsilon^2] \right)^2 \tr^2(\Sigma_*) }{n} \\ & \lesssim \frac{\tr^2(\Sigma) }{n}, 
\end{align*}
which establishes \eqref{Elliptical Trace} and completes the proof of the proposition for this case.
\end{proof}

\subsubsection{Proof for \cref{kurtosis_stochastic_rep}}\label{app_sec_ratio_lep} 
    Under the \cref{kurtosis_stochastic_rep} alternatives,  
    we adopt the same approach used to prove \cref{prop_Delta_Null}, as found in \cref{app_proof_prop_Delta_Null}. 
    It suffices to show that  
    \begin{align*}
        &\sqrt{\tr(\wh \Sigma) \over \tr(\Sigma)} = 1 +  \cO_\PP\left(\sqrt{ \frac{\tr(\Sigma^2)}{\tr^2(\Sigma)}}\frac{1}{\sqrt n} \right),\\
        &\sqrt{\frac{\tr(\Sigma^2)}{\widehat{\tr(\Sigma^2)}}} = 1 +\cO_\PP\left(
	 {\sqrt{\tr(\Sigma^4)  \over \tr^2(\Sigma^2)}} {1\over \sqrt n} +  {1\over n}
	 \right)
    \end{align*} 
    to establish the desired result
    \begin{align*}
        \sqrt{\Delta \over \wh \Delta} &= 1 + \cO_\PP\left( {1\over \sqrt{n \rho_2(\Sigma^2)}}  +  {1\over n}\right).
    \end{align*}  
    Due to the stochastic representation imposed by \cref{kurtosis_stochastic_rep} and the invariance of both the empirical and population trace quantities under orthogonal transformation of the samples, we can without loss of generality assume that $\Sigma = \Lambda$ in what follows. Moreover, we let $$\kappa_n := 3 + \delta_n$$ denote the marginal kurtosis parameter in the context of \cref{kurtosis_stochastic_rep}.
    
    In bounding the relative error for $\tr(\wh \Sigma) / \tr(\Sigma)$, by using the facts (see \cite{HimenoYamada}) 
	that  $\EE[\tr(\wh \Sigma)] = \tr(\Sigma)$ and
   \begin{align*}
       \Var(\tr(\wh \Sigma)) & = \EE\tr^2(\wh \Sigma) - [\EE\tr(\wh \Sigma)]^2 \\ & = \frac{1}{n} \left(\EE \| \Lambda^{\frac{1}{2}} Z \|_2^4 - 2 \tr(\Sigma^2) - \tr^2(\Sigma) \right) + \frac{2}{n - 1} \tr(\Sigma^2) +  \tr^2(\Sigma) -  \tr^2(\Sigma) \\ & = \frac{1}{n} \left( (\kappa_n - 1) \tr(\Sigma^2) + \tr^2(\Sigma) - 2 \tr(\Sigma^2) - \tr^2(\Sigma) \right) + \frac{2}{n - 1} \tr(\Sigma^2) \\ & =  \frac{(n - 1)\kappa_n - n + 3}{n(n - 1)} \tr(\Sigma^2),
   \end{align*}
    Chebyshev's inequality, the fact that $\kappa_n := 3 + \delta_n$ is uniformly bounded under \cref{kurtosis_stochastic_rep}, and \cref{lem_ranks} entail 
     \begin{align*}
         {\tr(\wh \Sigma) \over \tr(\Sigma)} & = 1 + \cO_\PP\left(\sqrt{\tr(\Sigma^2) \over \tr^2(\Sigma)}{1\over \sqrt n}\right)  = 1 + \cO_\PP\left({1\over \sqrt n}\right). 
     \end{align*}
   Taking the Taylor expansion of $f(x) = \sqrt{x}$ at $\tr(\wh \Sigma)/ \tr(\Sigma)$ about 1 yields the first result. 
 
 To control $\widehat{\tr(\Sigma^2)}/\tr(\Sigma^2)$, we first note that $\EE[\widehat{\tr(\Sigma^2)}] = \tr(\Sigma^2)$ and 
    \begin{align*}
        \Var\left(\wh{\tr(\Sigma^2)}\right) & = \cO\left(\frac{\tr(\Sigma^4)}{n} + \frac{\tr(\Lambda^2 \odot \Lambda^2)}{n} + \frac{\tr^2(\Sigma^2)}{n^2} \right) \\ & = \cO\left(\frac{\tr(\Sigma^4)}{n} + \frac{\tr(\Sigma^4)}{n} + \frac{\tr^2(\Sigma^2)}{n^2} \right),
    \end{align*}
    due to Proposition A.2 of \cite{Chen2010}, where $\odot$ denotes the element-wise Hadamard product. Chebyshev's inequality and \cref{lem_ranks} entail that
    \begin{align*}
      \frac{\wh{\tr(\Sigma^2)}}{\tr(\Sigma^2)} & = 1 +\cO_\PP\left(
	 {\sqrt{\tr(\Sigma^4)} \over \tr(\Sigma^2)} {1\over \sqrt n} + \frac{1}{\sqrt n} \right)   = 1 + \cO_\PP\left({1\over \sqrt n}\right).
    \end{align*}
Taking a Taylor expansion of the function $f(x) =\sqrt{1/x}$ at  $\widehat{\tr(\Sigma^2)}/\tr(\Sigma^2)$ about $1$ yields the desired result. 
\end{proof}

\end{document}